\DeclareMathAlphabet{\mathcalligra}{T1}{calligra}{m}{n}
\let\oldState\State% Store \State in \oldState
\RenewDocumentCommand{\State}{o}{% \State[<num>]
  \IfValueTF{#1}{\makeatletter\setcounter{ALG@line}{#1}\addtocounter{ALG@line}{-1}\def\verbatim@font{\linespread{1}\normalfont\ttfamily}
\preto{\@verbatim}{\topsep=0pt \partopsep=0pt }\patchcmd{\@verbatim}
  {\verbatim@font}
  {\verbatim@font\small}
  {}{}\makeatother}{}%
  \oldState\ignorespaces%
}%
\algrenewcommand{\algorithmiccomment}[1]{$//$ #1}
\newcommand{\punt}[1]{}
\newcommand{\cmnt}[1]{}
\newtheorem{theorem}{Theorem}
\newtheorem{lemma}[theorem]{Lemma}
\newtheorem{corollary}[theorem]{Corollary}
\newtheorem{observation}[theorem]{\textbf{Observation}}
\newtheorem{assumption}[theorem]{\textbf{Assumption}}
\newtheorem{definition}[theorem]{Definition}
\newcounter{history}
\newcounter{Step}[section]
\newenvironment{proof}[1][Proof]{\noindent\textbf{#1.} }{\hfill $\Box$\\[0.4mm]} %\rule{0.5em}{0.5em}\\}
\newcommand{\ignore}[1]{}
\newcommand{\secref}[1]{Section~\ref{sec:#1}}
\newcommand{\figref}[1]{Figure~\ref{fig:#1}}
\newcommand{\stref}[1]{Step~\ref{step:#1}}
\newcommand{\thmref}[1]{Theorem~\ref{thm:#1}}
\newcommand{\lemref}[1]{Lemma~\ref{lem:#1}}
\newcommand{\defref}[1]{Definition~\ref{def:#1}}
\newcommand{\obsref}[1]{Observation~\ref{obs:#1}}
\newcommand{\asmref}[1]{Assumption~\ref{asm:#1}}
\newcommand{\lineref}[1]{Line~\ref{lin:#1}}
\newcommand{\subsecref}[1]{Section~\ref{subsec:#1}}
\newcommand{\secref}[1]{Section~\ref{sec:#1}}
\newcommand{\figref}[1]{Figure~\ref{fig:#1}}
\newcommand{\stref}[1]{Step~\ref{step:#1}}
\newcommand{\thmref}[1]{Theorem~\ref{thm:#1}}
\newcommand{\lemref}[1]{Lemma~\ref{lem:#1}}
\newcommand{\defref}[1]{Definition~\ref{def:#1}}
\newcommand{\obsref}[1]{Observation~\ref{obs:#1}}
\newcommand{\asmref}[1]{Assumption~\ref{asm:#1}}
\newcommand{\lineref}[1]{Line~\ref{lin:#1}}
\newcommand{\subsecref}[1]{Subsection{\ref{subsec:#1}}}
\providecommand{\customgenericname}{}
\newcommand{\newcustomtheorem}[2]{%
  \newenvironment{#1}[1]
  {%
   \renewcommand\customgenericname{#2}%
   \renewcommand\theinnercustomgeneric{##1}%
   \innercustomgeneric
  }
  {\endinnercustomgeneric}
}
\newcommand{\state}[1] {#1.state\xspace}
\newcommand{\alls}[1] {#1.allStates\xspace}
\newcommand{\pres}[1] {PreS[#1]\xspace}
\newcommand{\posts}[1] {PostS[#1]\xspace}
\newcommand{\pree}[1] {PreE[#1]\xspace}
\newcommand{\poste}[1] {PostE[#1]\xspace}
\newcommand{\prem}[1] {PreM[#1]\xspace}
\newcommand{\subh}  {sub\text{-}history\xspace}
\newcommand{\mth} {method\xspace}
\newcommand{\upm} {updtMethod\xspace}
\newcommand{\cc} {correctness-criterion\xspace}
\newcommand{\eevts}[1] {#1.evts}
\newcommand{\mths}[1] {#1.mths}
\newcommand{\upms}[1] {#1.updtMethods}
\newcommand{\inv} {\emph{inv}\xspace}
\newcommand{\rsp} {\emph{rsp}\xspace}
\newcommand{\sspec} {sequential\text{-}specification\xspace}
\newcommand{\lble} {linearizable\xspace}
\newcommand{\lbty} {linearizability\xspace}
\newcommand{\legal} {legal\xspace}
\newcommand{\lp} {LP\xspace}
\newcommand{\cds} {CDS\xspace}
\newcommand{\pset} {partial-set\xspace}
\newcommand{\pres}[1] {#1.pre\text{-}state\xspace}
\newcommand{\posts}[1] {#1.post\text{-}state\xspace}
\newcommand{\subh}  {sub\text{-}history\xspace}
\newcommand{\mth} {method\xspace}
\newcommand{\cc} {correctness-criterion\xspace}
\newcommand{\eevts}[1] {#1.evts}
\newcommand{\mths}[1] {#1.mths}
\newcommand{\inv} {inv\xspace}
\newcommand{\rsp} {rsp\xspace}
\newcommand{\sspec} {sequential\text{-}specification\xspace}
\newcommand{\legal} {legal\xspace}
\newcommand{\lp} {LP\xspace}
\newcommand{\lazy}{lazy-list\xspace}
\newcommand{\hoh}{hoh-locking-list\xspace}
\newcommand{\abds} {AbDS\xspace}
\newcommand{\abs} {AbDS\xspace}
\newcommand{\node}{node}
\newcommand{\nodes}[1] {#1.nodes\xspace}
\newcommand{\head}{Head\xspace}
\newcommand{\tail}{Tail\xspace}
\newcommand{\add}{Add\xspace}
\newcommand{\rem}{Remove\xspace}
\newcommand{\con}{Contains\xspace}
\newcommand{\loct}{Locate\xspace}
\newcommand{\valid}{Validate\xspace}
\newcommand{\preel}[1]{PreE[E^H.m_{#1}.LP]\xspace}
\newcommand{\postel}[1]{PostE[E^H.m_{#1}.LP]\xspace}
\newcommand{\preeds}[1]{PreE[E^H.m_{#1}.LP].AbDS\xspace}
\newcommand{\premds}[1]{PreM[E^{\spl{S}}.m_{#1}].AbDS\xspace}
\newcommand{\posteds}[1]{PostE[E^H.m_{#1}.LP].AbDS\xspace}
\newcommand{\postmds}[1]{PostM[E^{\spl{S}}.m_{#1}].AbDS\xspace}
\newcommand{\prespmds}[1]{PreM[E^{\sh{H}}.m_{#1}].AbDS\xspace}
\newcommand{\postspmds}[1]{PostM[E^{\sh{H}}.m_{#1}].AbDS\xspace}
\newcommand{\retmds}[1]{E^{\spl{S}}.m_{#1}.rsp\xspace}
\newcommand{\reteds}[1]{E^H.m_{#1}.rsp\xspace}
\newcommand{\invmds}[1]{E^{\spl{S}}.m_{#1}.inv\xspace}
\newcommand{\inveds}[1]{E^H.m_{#1}.inv\xspace}
\newcommand{\lpmds}[1]{E^{\spl{S}}.m_{#1}.LP\xspace}
\newcommand{\lpeds}[1]{E^H.m_{#1}.LP\xspace}
\newcommand{\retspmds}[1]{E^{\sh{H}}.m_{#1}.rsp\xspace}
\newcommand{\sh}[1]{CS(#1)\xspace}
\newcommand{\spl}{\mathbb}
\newcommand{\prees}[1]{PreE[E^H.{#1}.LP].AbS\xspace}
\newcommand{\prems}[1]{PreM[E^{\spl{S}}.{#1}].AbS\xspace}
\newcommand{\postes}[1]{PostE[E^H.{#1}.LP].AbS\xspace}
\newcommand{\postms}[1]{PostM[E^{\spl{S}}.{#1}].AbS\xspace}
\newcommand{\retms}[1]{E^{\spl{S}}.{#1}.rsp\xspace}
\newcommand{\retes}[1]{E^H.{#1}.rsp\xspace}
\newcommand{\invms}[1]{E^{\spl{S}}.{#1}.inv\xspace}
\newcommand{\inves}[1]{E^H.{#1}.inv\xspace}
\newcommand{\hnode}{node\xspace}
\newcommand{\hnodes}[1] {#1.nodes\xspace}
\newcommand{\hhead}{Head}
\newcommand{\htail}{Tail}
\newcommand{\habs} {AbS\xspace}
\newcommand{\hadd}{HoHAdd\xspace}
\newcommand{\hrem}{HoHRemove\xspace}
\newcommand{\hcon}{HoHContains\xspace}
\newcommand{\hloct}{HoHLocate\xspace}
\newcommand{\cdse}{\emph{\cds Specific Equivalence}\xspace}
\newcommand{\llse}{\emph{\lazy Specific Equivalence}\xspace}
\newcommand{\hohse}{\emph{\hoh Specific Equivalence}\xspace}
\def\@copyrightspace{\relax}
\begin{document}
	
\title{\bf Proving Correctness of Concurrent Objects by Validating Linearization Points\thanks{A preliminary version of this work was accepted in AADDA 2018 as work in progress.} \footnote{Author sequence follows lexical order of last names.}}

\author{
      Sathya Peri, Muktikanta Sa, Ajay Singh, Nandini Singhal, Archit Somani \\
     Department of Computer Science \& Engineering \\
      Indian Institute of Technology Hyderabad, India \\
    \{sathya\_p, cs15resch11012, cs15mtech01001, cs15mtech01004, \\ cs15resch01001\}@iith.ac.in
}

\date{}

\maketitle

\begin{abstract}
\emph{Concurrent data structures} or \emph{\cds} such as concurrent stacks, queues, sets etc. have become very popular in the past few years partly due to the rise of multi-core systems. Such concurrent \cds{s} offer great performance benefits over their sequential counterparts. But one of the greatest challenges with \cds{s} has been developing correct structures and then proving correctness of these structures. We believe that techniques that help prove correctness of these \cds{s} can also guide in developing new \cds{s}. 

An intuitive \& popular techniques to prove correctness of \cds{s} is using \emph{Linearization Points} or \emph{\lp{s}}. A \lp is an (atomic) event in the execution interval of each \mth such that the execution of the entire \mth seems to have taken place in the instant of that event.
%The well accepted criterion for proving correctness of these concurrent objects is \emph{Linearizability}. Showing that these concurrent structures are linearizable is non-trivial in many cases. The standard technique to show correctness adopted by several researchers and programmers in this case is to use \emph{Linearization Points (LPs)} - an atomic event within each method between the invocation and response events where the effect of the entire method seems to have taken place. In many of these cases, the LPs intuitively prove the correctness of the object.
One of the main challenges with the \lp based approach is to identify the correct \lp{s} of a \cds. Identifying the correct \lp{s} can be deceptively wrong in many cases. In fact in many cases, the \lp identified or even worse the \cds itself could be wrong. To address these issues, several automatic tools for verifying linearizability have been developed. But we believe that these tools don't provide insight to a programmer to develop the correct concurrent programs or identify the LPs.

Considering the complexity of developing a \cds and verifying its correctness, we address the most basic problem of this domain in this paper: given the set of \lp{s} of a \cds, how to show its correctness? We assume that we are given a \cds and its \lp{s}. We have developed a hand-crafted technique of proving correctness of the \cds{s} by validating its \lp{s}. As observed earlier, identifying the correct \lp{s} is very tricky and erroneous. But since our technique is hand-crafted, we believe that the process of proving correctness might provide insight to identify the correct \lp{s}, if the currently chosen \lp is incorrect. We also believe that this technique might also offer the programmer some insight to develop more efficient variants of the \cds. 

The proposed proof technique can be applied to prove the correctness of several commonly used \cds{s} developed in literature such as Lock-free Linked based Sets, Skiplists etc. Our technique will also show correctness of \cds{s} in which the \lp{s} of \mth might lie outside the \mth{s} (may seem to take effect in code of other \mth{}) such as \lazy based set. To show the efficacy of this technique, we show the correctness of \lazy and \hoh based set. 
\end{abstract}

\textbf{Keywords: }linearizability; concurrent data structure; linearization points; correctness;

\section{Introduction}
\emph{Concurrent data structures} or \emph{\cds} such as concurrent stacks, queues, lists etc. have become very popular in the past few years due to the rise of multi-core systems and due to their performance benefits over their sequential counterparts. This makes the concurrent data structures highly desirable in big data applications such data structures in combination with multi-core machines can be exploited to accelerate the big data applications. But one of the greatest challenges with \cds{s} is developing correct structures and then proving their correctness either through automatic verification or through hand-written proofs \cite{Derrick-fm2011}. Also, the techniques which help to prove correctness of \cds{s} can also guide in developing new \cds{s}.

A \cds exports \mth{s} which can be invoked concurrently by different threads. A \emph{history} generated by a \cds is a collection of \mth invocation and response events. Each invocation or \inv event of a method call has a subsequent response or \rsp event which can be interleaved with invocation, responses from other concurrent \mth{s}.

To prove a concurrent data structure to be correct, \textit{\lbty} proposed by Herlihy \& Wing \cite{HerlWing:1990:TPLS} is the standard correctness criterion used. They consider a history generated by the \cds which is collection of \mth invocation and response events. Each invocation of a method call has a subsequent response which can be interleaved with invocation, responses from other concurrent \mth{s}. A history is \lble if (1) The invocation and response events can be reordered to get a valid sequential history. (2) The generated sequential history satisfies the object's sequential specification. (3) If a response event precedes an invocation event in the original history, then this should be preserved in the sequential reordering.  

\cmnt {
\begin{enumerate}
\item The invocation and response events can be reordered to get a valid sequential history.
\item The generated sequential history satisfies the object's sequential specification.
\item If a response event precedes an invocation event in the original history, then this should be preserved in the sequential reordering.  
\end{enumerate}
}

A concurrent object is linearizable if each of their histories is linearizable. Linearizability ensures that every concurrent execution simulates the behavior of some sequential execution while not actually executing sequentially and hence leveraging on the performance. 

One of the intuitive techniques to prove correctness of \cds{s} is using \emph{Linearization Points} or \emph{\lp{s}}. A \lp is an (atomic) event in the execution interval of each \mth such that the execution of the entire \mth seems to have taken place in the instant of that event. %Being intuitive, we believe that the technique \lp{s} can be used to design new \cds as well.

Several techniques have been proposed for proving \lbty: both hand-written based and through automatic verification. Many of these techniques consider lazy linked-list based concurrent set implementation,  denoted as \emph{\lazy}, proposed by Heller at al \cite{Heller-PPL2007}. This is one of the popular \cds{s} used for proving correctness due to the intricacies of \lp{s} of its \mth{s} in their execution. The \lp of an unsuccessful \emph{contains} \mth can sometimes be outside the code of its \mth{s} and depend on an concurrently executing \emph{add} \mth (refer \figref{lpcase}). This is illustrated in \figref{lpcase} of \subsecref{con-lazy-list}.  Such scenarios can also occur with other \cds{s} as well, to name a few Herlihy and Wing queue\cite{HerlWing:1990:TPLS}, the optimistic queue\cite{ladan2004optimistic}, the elimination
queue\cite{moir2005using}, the baskets queue\cite{hoffman2007baskets}, the flat-combining queue\cite{hendler2010flat}

Vafeiadis et al. \cite{Vafeiadis-ppopp2006} hand-crafted one of the earliest proofs of linearizability for \lazy using the rely-guarantee approach \cite{Jones:RG:IFIP:1983} which can be generalized to other \cds{s} as well. O'Hearn et al. \cite{O'Hearn-PODC2010} have developed a generic methodology for linearizability by identifying new property known as \textit{Hindsight} lemma. Their technique is non-constructive in nature. Both these techniques don't depend on the notion of \lp{s}. 

Recently Lev-Ari et al. \cite{Lev-Aridisc2014, Lev-Aridisc2015} proposed a constructive methodology for proving correctness of \cds{s}. They have developed a very interesting notion of base-points and base-conditions to prove \lbty. Their methodology manually identifies the base conditions, commuting steps, and base point preserving steps and gives a roadmap for proving correctness by writing semi-formal proofs. Their seminal technique, does not depend on the notion of \lp{s}, can help practitioners and researchers from other fields to develop correct \cds{s}.

In spite of several such techniques having been proposed for proving \lbty, \lp{s} continue to remain most popular guiding tool for developing efficient \cds{s} and illustrating correctness of these \cds{s} among practitioners. \lp{s} are popular since they seem intuitive and more importantly are constructive in nature. In fact, we believe using the notion of \lp{s}, new \cds can be designed as well.

But one of the main challenges with the \lp based approach is to identify the correct \lp{s} of a \cds. Identifying the correct \lp{s} can be deceptively wrong in many cases. For instance, it is not obvious to a novice developer that the \lp of an unsuccessful \emph{contains} \mth of \lazy could be outside the \emph{contains} \mth. In fact in many cases, the \lp identified or even worse the \cds could be wrong. 

The problem of proving correctness of \cds using \lp{s} has been quite well explored in the verification community in the past few years. Several efficient automatic proving tools and techniques have been developed \cite{Liufm09, Amitcav07, Vafeiadiscav10, Zhangicse11, Zhu+:Poling:CAV:2015, Bouajjanipopl15} to address this issue. In fact, many of these tools can also show correctness even without the information of \lp{s}. But very little can be gleaned from these techniques to identify the correct \lp{s} of a \cds by a programmer. Nor do they provide any insight to a programmer to develop new \cds{s} which are correct. The objective of the most of these techniques has been to efficiently automate proving correctness of already developed \cds{s}. 

Considering the complexity of developing a \cds and verifying its correctness, we address the most basic problem of this domain in this paper: given the set of \lp{s} of a \cds, how to show its correctness? We assume that we are given a \cds and its \lp{s}. We have developed a hand-crafted technique of proving correctness of the \cds{s} by validating its \lp{s}. We believe that our technique can be applied to prove the correctness of several commonly used \cds{s} developed in literature such as Lock-free Linked based Sets \cite{Valoispodc1995}, \hoh \cite{Bayerai1977, MauriceNir} , \lazy \cite{Heller-PPL2007, MauriceNir}, Skiplists \cite{Levopodis2006} etc. Our technique will also work for \cds{s} in which the \lp{s} of a \mth might lie outside the \mth such as \lazy. To show the efficacy of this technique, we show the correctness of \lazy and hand-over-hand locking list (\emph{\hoh}) \cite{Bayerai1977, MauriceNir}. 

As observed earlier, identifying the correct \lp{s} is very tricky and erroneous. But since our technique is hand-crafted, we believe that the process of proving correctness might provide insight to identify the correct \lp{s}, if the currently chosen \lp is incorrect. We also believe that this technique might also offer the programmer some insight to develop more efficient variants of the \cds. 

%As mentioned earlier, the technique of proving \lbty of a \cds given their has already been explored by several automatic verification tools such as \cite{Zhu+:Poling:CAV:2015}. But it is not clear 

Our technique is inspired from the notion of rely-guarantee approach \cite{Jones:RG:IFIP:1983} and Vafeiadis et al. \cite{Vafeiadis-ppopp2006}. For the technique to work, we make some assumptions about the \cds and its \lp{s}. We describe the main idea here and the details in the later sections. 

\vspace{1mm}
\noindent
\textbf{Main Idea: Proving Correctness of \lp{s}.}  In this technique, we consider executions corresponding to the histories. For a history $H$, an \emph{execution} $E^H$ is a totally ordered sequence of atomic events which are executed by the threads invoking the \mth{s} of the history. Thus an execution starts from an initial \emph{global state} and then goes from one global state to the other as it executes atomic events.

With each global state, we associate the notion of \emph{abstract data-structure} or \emph{\abds}. This represents the state of the \cds if it had executed sequentially. Vafeiadis et al. \cite{Vafeiadis-ppopp2006} denote it as \emph{abstract set} or \emph{\abs} in the context of the \lazy. 

We assume that each \mth of the \cds has a unique atomic event as the \lp within its execution. Further, we assume that only a (subset) of \lp events can change the \abs. We have formalized these assumptions in \subsecref{lps}.

With these assumptions in place, to show the correctness of a history $H$, we first construct a sequential history $\sh{H}$: we order all the \mth{s} of $H$ by their \lp{s} (which all are atomic and hence totally ordered). Then based on this \mth ordering, we invoke the \mth{s} (using a single thread) with the same parameters on the \cds sequentially. The resulting history generated is sequential. The details of this construction is described in \subsecref{csh}.

Since $\sh{H}$ is generated sequentially, it can be seen that it satisfies the \sspec of the \cds. All the \mth invocations of \sh{H} respect the \mth ordering of $H$. If we can show that all the response events in $H$ and $\sh{H}$ are the same then $H$ is \lble. 

The proof of this equivalence naturally depends on the properties of the \cds being considered. We have identified a \cdse (\defref{pre-resp} of \subsecref{generic}) as a part of our proof technique, which if shown to be true for all the \mth{s} of the \cds, implies \lbty of the \cds. In this definition, we consider the pre-state of the \lp of a \mth $m_i$ in a history $H$. As the name suggests, pre-state is the global state of the \cds just before the \lp event. This definition requires that the \abds in the pre-state to be the result of some sequential execution of the \mth{s} of the \cds. Similarly, the \abds in the post-state of the \lp must be as a result of some sequential execution the \mth{s} with $m_i$ being the final \mth in the sequence. We show that if the \cds ensures these conditions then it is \lble. 

%\lp to be  This property has been formally explained in \subsecref{generic}. We show that if this  lemma is true for any \cds then the history $H$ is \lble. This is because we show if the \abds is same in the pre-states, \mth invocation are the same then the corresponding post-states would also be the same. Here the post-states as the name suggests is global state after an event. We can show that that the post-state of the \lp of $m_i$ in $H$ is the same as post-state of $m_i$ in $\spl{H}$. We can then extend this lemma to $\sh{H}$ and show that the return values of all the \mth{s} will be same as $H$ thus implying \lbty of $H$.

%The lemma that we have identified is generic. We show that if this lemma is true for any \cds, satisfying our assumptions on the \lp{s}, is \lble. Thus, we would like to view this lemma as an abstract class in a language like C++. It is specific to each \cds and has to be proved (like instantiation in C++). In \secref{ds-proofs}, we demonstrate this technique by giving a high-level overview of the correctness of this lemma for \lazy and of hand-over-hand locking list \cite{Bayerai1977, MauriceNir}.

The definition that we have identified is generic. We show that any \cds for which this definition is true and satisfies our assumptions on the \lp{s}, is \lble. Thus, we would like to view this definition as an abstract class in a language like C++. It is specific to each \cds and has to be proved (like instantiation of the abstract class in C++). In \secref{ds-proofs}, we demonstrate this technique by giving a high-level overview of the correctness of this definition for \lazy and of \hoh. %hand-over-hand locking list \cite{Bayerai1977, MauriceNir}.

\vspace{1mm}
\noindent
\textbf{Roadmap.} In \secref{model}, we describe the system model. In \secref{gen-proof}, we describe the proof technique. In \secref{ds-proofs}, we illustrate this technique by giving outline of the proof for \lazy and \hoh. Finally, we conclude in \secref{conc}. 
%We give all the details of the proofs in the Appendix and also in the technical report \cite{Peri+:LPs:TR:2017}. 

\ignore{

%For any concurrent implementations of data structures (like set, queue, stack, BST,etc.) packed into a library in different platforms, like Parallel Extensions in .Net, java.util.concurrent in Java, etc. It is necessary that every library operations have to be atomic, which means the loading and storing of the memory carried out by writing and reading of computer hardware, so that another processor cannot perform any memory writing or reading until the currently running atomic operations has completed.  And there concurrent library operations should satisfy a given functional correctness specification.

}

\section{System Model \& Preliminaries}
\label{sec:model}
%\subsection{}
In this paper, we assume that our system consists of finite set of $p$ processors, accessed by a finite set of $n$ threads that run in a completely asynchronous manner and communicate using shared objects. The threads communicate with each other by invoking higher-level \emph{methods} on the shared objects and obtaining the corresponding responses. Consequently, we make no assumption about the relative speeds of the threads. We also assume that none of these processors and threads fail. We refer to a shared objects as a \emph{concurrent data-structure} or \emph{\cds}.  

\vspace{1mm}
\noindent 
\textbf{Events \& Methods.} We assume that the threads execute atomic \emph{events}. Similar to Lev{-}Ari et. al.'s work, \cite{Lev-Aridisc2015, Lev-Aridisc2014} we assume that these events by different threads are (1) atomic \emph{read, write} on shared/local memory objects; (2) atomic \emph{read-modify-write} or \emph{rmw} operations such compare \& swap etc. on shared memory objects (3) method invocation or \emph{\inv} event \& response or \emph{\rsp} event on \cds{s}. 

A thread executing a \mth $m_i$, starts with the \inv event, say $inv_i$, executes the events in the $m_i$ until the final \rsp event $rsp_i$. The \rsp event $rsp_i$ of $m_i$ is said to \emph{match} the \inv event $inv_i$. On the other hand, if the \inv event $inv_i$ does not have a \rsp event $rsp_i$ in the execution, then we say that both the \inv event $inv_i$ and the \mth $m_i$ are \emph{pending}.

The \mth \inv \& \rsp events are typically associated with invocation and response parameters. The invocation parameters are passed as input while response parameters are obtained as output to and from the \cds respectively. For instance, the invocation event of the enqueue \mth on a queue object $Q$ is denoted as $\inv(Q.enq(v))$ while the \rsp event of a dequeue \mth can be denoted as $\rsp(Q.deq(v))$. We combine the \inv and \rsp events to represent a \mth as follows: $m_i(inv\text{-}params, rsp\text{-}params)$ where $\inv(m_i(inv\text{-}params))$ and $\rsp(m_i(rsp\text{-}params))$ represent the \inv, \rsp events respectively. For instance, we represent enqueue as $enq(v, ok)$, or a successful add to a set as $add(k, T)$. If there are multiple invocation or response parameters, we use delimiters to differentiate them. In most cases, we ignore these invocation and response parameters unless they are required for the context and denote the \mth as $m_i$. In such a case, we simply denote $m_i.inv, m_i.rsp$ as the \inv and \rsp events. 
%We represent the \mth $m_i$ with invocation and response parameters as: $m_i(inv\text{-}params\uparrow, rsp\text{-}params\downarrow)$ where $inv\text{-}params\uparrow$ are all the parameters passed during invocation and $rsp\text{-}params\downarrow$ are all the parameters returned by the data-structure. The enqueue \mth can be represented as $enq(v, ok)$ In most cases, we ignore these invocation and response parameters unless they are required for the context.

\vspace{1mm}
\noindent
\textbf{Global States, Execution and Histories.} We define the \emph{global state} or \emph{state} of the system as the collection of local and shared variables across all the threads in the system. The system starts with an initial global state. Each event changes possibly the global state of the system leading to a new global state. The events read, write, rmw on shared/local memory objects change the global state. The $\inv$ \& $\rsp$ events on higher level shared-memory objects do not change the contents of the global state. Although we denote the resulting state with a new label in this case. 

We denote an \emph{execution} of a concurrent threads as a finite sequence of totally ordered atomic events. We formally denote an execution $E$ as the tuple $\langle evts, <_E \rangle$, where $\eevts{E}$ denotes the set of all events of $E$ and $<_E$ is the total order among these events. A \emph{history} corresponding to an execution consists only of \mth $\inv$ and $\rsp$ events (in other words, a history views the \mth{s} as black boxes without going inside the internals). Similar to an execution, a history $H$ can be formally denoted as $\langle evts, <_H \rangle$ where $evts$ are of type $\inv$ \& $\rsp$ and $<_H$ defines a total order among these events. 
With this definition, it can be seen that an execution uniquely characterizes a history. For a history $H$, we denote the corresponding execution as $E^H$. %Given an execution $E$, we denote the corresponding history $H$ as $\hst{E}$. Similarly, for a history $H$, we denote the corresponding execution as $\exec{H}$. 

We denote the set of \mth{s} invoked by threads in a history $H$ (and the corresponding execution $E^H$) by $\mths{H}$ (or $\mths{E^H}$). Similarly, if a \mth $m_x$ is invoked by a thread in a history $H$ ($E^H$), we refer to it as $H.m_x$ ($E^H.m_x$). Although all the events of an execution are totally ordered in $E^H$, the \mth{s} are only partially ordered. We say that a \mth $m_x$ is ordered before \mth $m_y$ in \emph{real-time} if the \rsp event of $m_x$ precedes the invocation event of $m_y$, i.e. $(m_x.rsp <_{H} m_y.inv)$. We denote the set of all real-time orders between the \mth{s} of $H$ by $\prec^{rt}_H$. 

%We define $E^{S}$ as the sequential execution of methods. It is the partial order of all the methods such that $\forall m_i, m_j \in Method$ $\rsp_i$ $<$ $\inv_j$, where $m_i$ $<$ $m_j$. 
%Similarly, we define $E^{H}$ to be the concurrent execution of methods .It is the random execution of all the methods with interleaving of $\inv$ and $\rsp$, such that there exists a partial order among Linearizable Points.

Next, we relate executions (histories) with global states. An execution takes the system through a series of global states with each event of the execution starting from the initial state takes the global state from one to the next. We associate the state of an execution (or history) to be the global state after the last event of the execution. We denote this final global state $S$ of an execution E as $S = \state{E}$ (or $\state{H}$). We refer to the set of all the global states that a system goes through in the course of an execution as $\alls{E}$ (or $\alls{H}$). It can be seen that for $E$, $\state{E} \in \alls{E}$. \figref{exec/hist} shows a concurrent execution $E^H$ and its corresponding history $H$. In the figure, the curved line represents an $event$ and the vertical line is a $state$. The open([) \& close(]) square brackets simply demarcate the methods of a thread and have no specific meaning in the figure.

Given an event $e$ of an execution $E$, we denote global state just before the $e$ as the pre-state of $e$ and denote it as $\pree{e}$. Similarly, we denote the state immediately after $e$ as the post-state of $e$ or $\poste{e}$. Thus if an event $e$ is in $\eevts{E}$ then both $\pree{e}$ and $\poste{e}$ are in $\alls{E}$. 

\begin{figure}[!htbp]
\captionsetup{font=scriptsize}
	\centerline{\scalebox{0.65}{\input{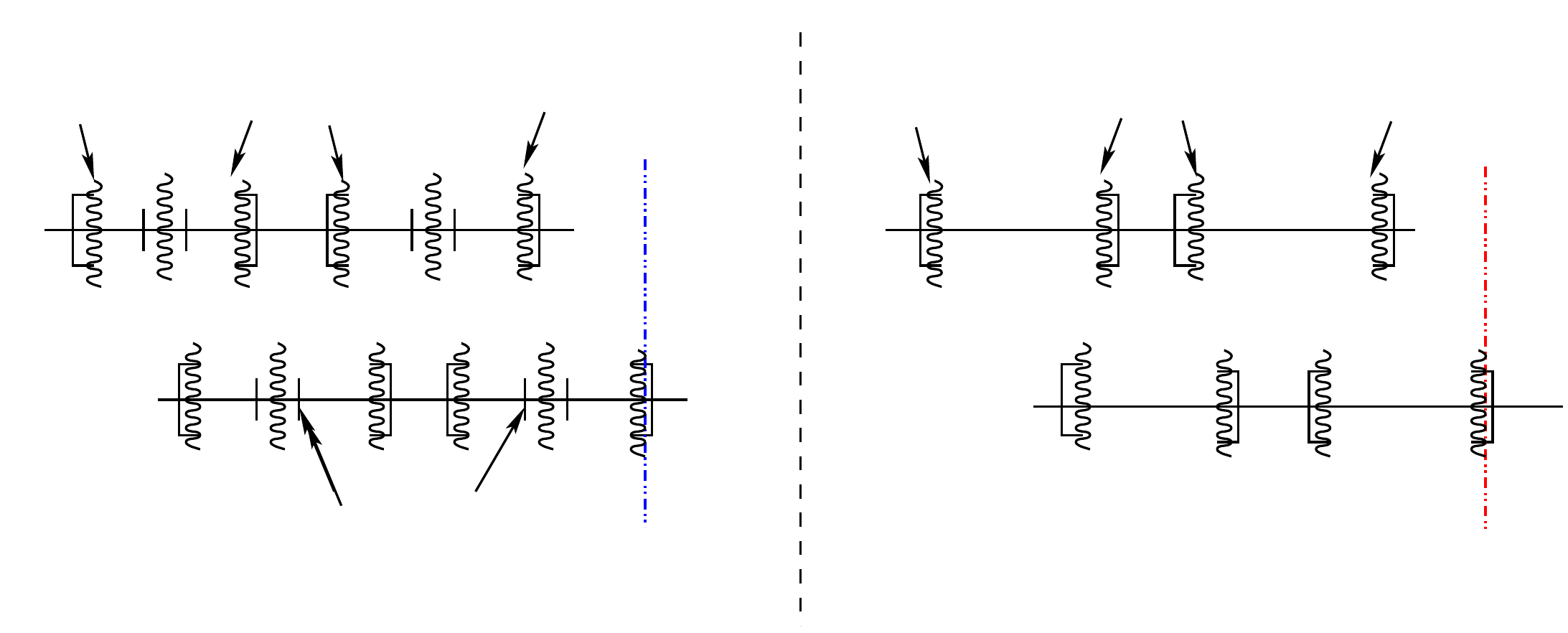_t}}}
	\caption{Figure (a) depicts a Concurrent Execution $E^H$ comprising of multiple events $\eevts{E}$. $\state{E}$ denotes the global state after the last event of the execution. Consider a $read/write$ event $e$, then pre-state of event $e$ is $\pree{e}$ and the post-state is $\poste{e}$ and both belong to $\alls{E}$.  Figure (b) depicts the corresponding concurrent history $H$ consisting only of $inv$ and $resp$ events. $H.state$ denotes the global state after the last event in the history. }
	\label{fig:exec/hist}
\end{figure}

The notion of pre \& post states can be extended to \mth{s} as well. We denote the pre-state of a \mth $m$ or $\prem{m}$ as the global state just before the invocation event of $m$ whereas the post-state of $m$ or $\prem{m}$ as the global state just after the return event of $m$. 
\figref{conc-seq} illustrates the global states immediately before and after $m_i.LP$ which are denoted as $\preel{i}$  and $\postel{i}$ respectively in the execution $E^H$. 

\begin{figure}[!htbp]
\captionsetup{font=scriptsize}
	\centerline{\scalebox{0.65}{\input{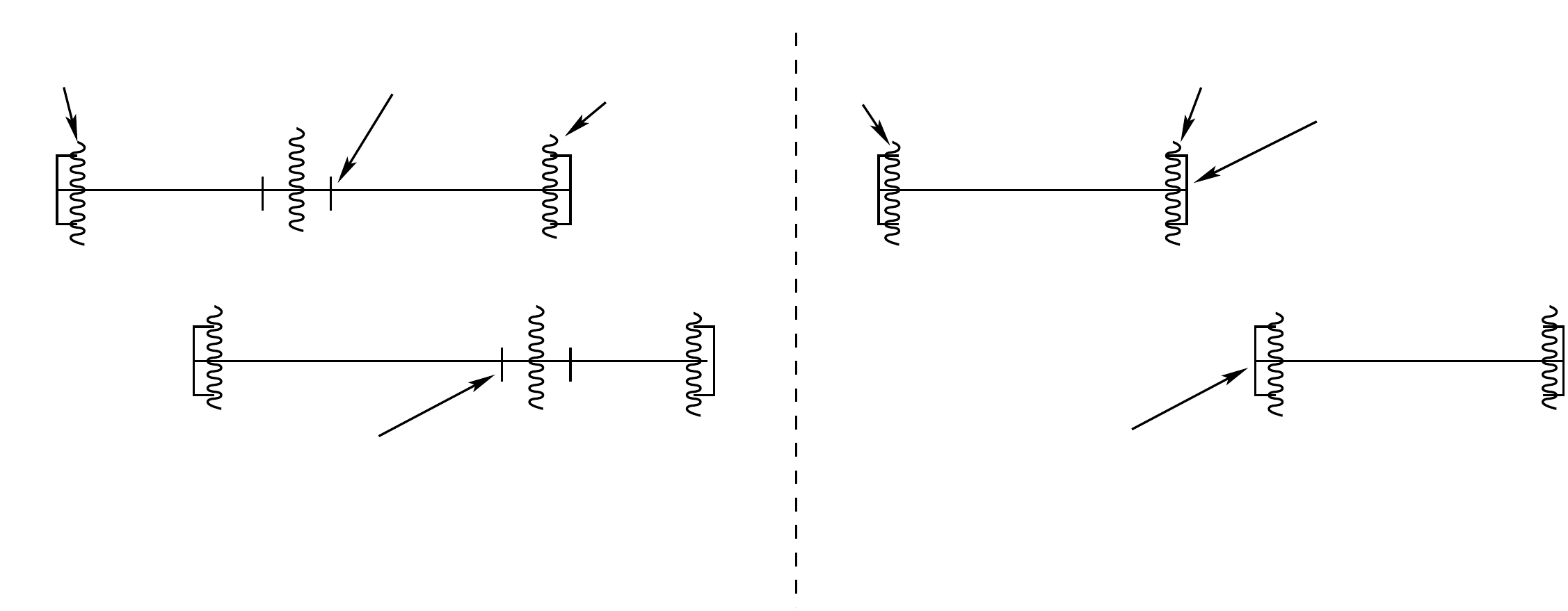_t}}}
	\caption{Figure (a) illustrates an example of a concurrent execution $E^H$. Then, $m_i.LP$ is the $\lp$ event of the method $m_i$. The global state immediately after this event is represented as Post-state of ($E^H.m_i.LP$). Figure (b) represents sequential execution $E^{\spl{S}}$ corresponding to (a) with post-state of method $m_i$ as the state after its $resp$ event.}
	\label{fig:conc-seq}
\end{figure}
%\vspace{1mm}

\noindent
\textbf{Notations on Histories.} We now define a few notations on histories which can be extended to the corresponding executions. We say two histories $H1$ and $H2$ are \emph{equivalent} if the set of events in $H1$ are the same as $H2$, i.e., $\eevts{H1} = \eevts{H2}$ and denote it as $H1 \approx H2$. We say history $H1$ is a \emph{\subh} of $H2$ if all the events of $H1$ are also in $H2$ in the same order, i.e., $\langle (\eevts{H1} \subseteq \eevts{H2}) \land (<_{H1} \subseteq <_{H2}) \rangle$. 
Let a thread $T_i$ invoke some \mth{s} on a few \cds{s} (shared memory objects) in a history $H$ and $d$ be a \cds whose \mth{s} have been invoked by threads in $H$. Using the notation of \cite{HerlWing:1990:TPLS}, we denote $H|T_i$ to be the \subh of all the events of $T_i$ in $H$. Similarly, we denote $H|d$ to be the \subh of all the events involving $d$.
%We say that a $\rsp$ event \emph{matches} an $\inv$ event in an execution (as the name suggests) if the both $\rsp$ and $\inv$ events are on the same \mth of the object, have been invoked by the same thread and the $\rsp$ event follows $\inv$ event in the execution. 

We define that a history $H$ is \emph{well-formed} if a thread $T_i$ does not invoke the next \mth on a \cds until it obtains the matching response for the previous invocation. We assume that all the executions \& histories considered in this paper are well-formed. Note that since an execution is well-formed, there can be at most only one pending invocation for each thread. 

We say the history $H$ is \emph{complete} if for every \mth $\inv$ event there is a matching $\rsp$ event, i.e., there are no pending \mth{s} in $H$. The history $H$ is said to be \emph{sequential} if every $\inv$ event, except possibly the last, is immediately followed by the matching $\rsp$ event. In other words, all the \mth{s} of $H$ are totally ordered by real-time and hence $\prec^{rt}_H$ is a total order. Note that a complete history is not always sequential and the vice-versa. It can be seen that in a well-formed history $H$, for every thread $T_i$, we have that $H|T_i$ is sequential. \figref{seqhist} shows the execution of a sequential history $\spl{S}$. 
\begin{figure}[!htbp]
\captionsetup{font=scriptsize}
	\centerline{\scalebox{0.65}{\input{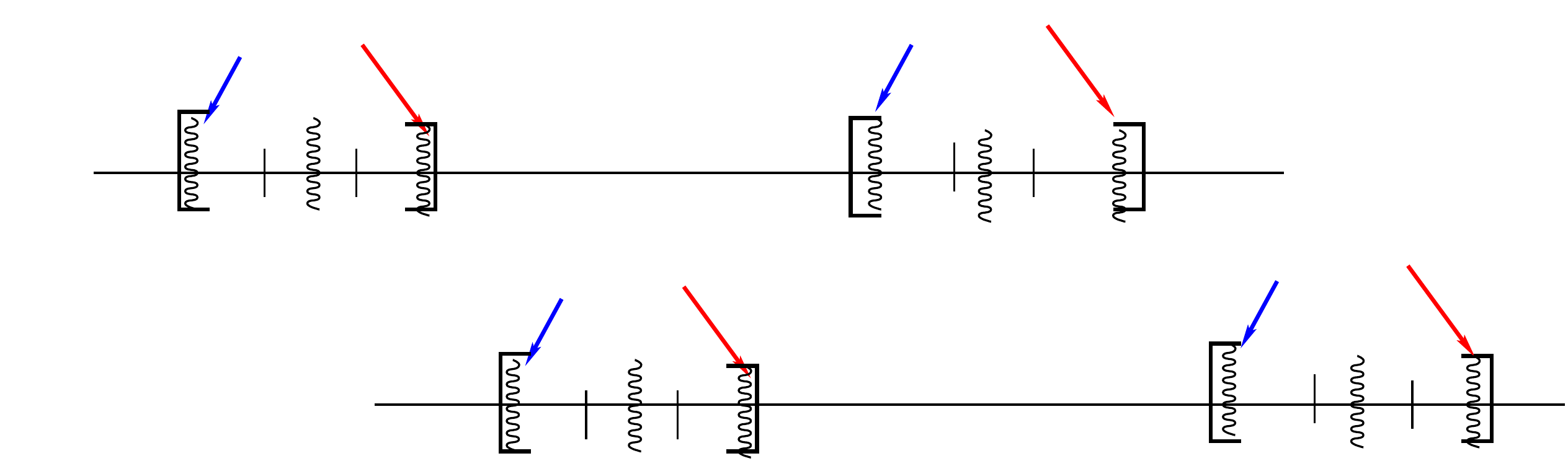_t}}}
	\caption{An illustration of a sequential execution $E^{\spl{S}}$.}
	\label{fig:seqhist}
\end{figure}
\noindent
\\
\textbf{Sequential Specification.} 
We next discuss about \emph{\sspec} \cite{HerlWing:1990:TPLS} of \cds{s}. The \sspec of a \cds $d$ is defined as the set of (all possible) sequential histories involving the \mth{s} of $d$. Since all the histories in the \sspec of $d$ are sequential, this set captures the behavior of $d$ under sequential execution which is believed to be correct. A sequential history $\spl{S}$  is said to be \emph{\legal} if for every \cds $d$ whose \mth is invoked in $\spl{S}$, $\spl{S}|d$ is in the \sspec of $d$. 
%We next discuss about \emph{\sspec} \cite{HerlWing:1990:TPLS} of shared memory objects. The \sspec of a shared memory object $o$ is defined as the set of (all possible) sequential histories involving the \mth{s} of $o$. Since all the histories in the \sspec of $o$ are sequential, this set captures the behavior of $o$ under sequential execution which is believed to be correct. A sequential history $\spl{S}$  is said to be \emph{\legal} if for every shared memory object $o$ whose \mth is invoked in $\spl{S}$, $\spl{S}|o$ is in the \sspec of $o$. 
\vspace{1mm}

\noindent
\textbf{Safety:} A \emph{safety} property is defined over histories (and the corresponding executions) of shared objects and generally states which executions of the shared objects are acceptable to any application. The safety property that we consider is \emph{linearizability} \cite{HerlWing:1990:TPLS}. A history $H$ is said to be linearizable if (1) there exists a completion $\overline{H}$ of $H$ in which some pending $\inv$ events are completed with a matching response and some other pending $\inv$ events are discarded; (2) there exists a sequential history $\spl{S}$ such that $\spl{S}$ is equivalent to $\overline{H}$, i.e., $\overline{H} \approx \spl{S}$; (3) $\spl{S}$ respects the real-time order of $H$, i.e., $\prec^{rt}_H \subseteq \prec^{rt}_{\spl{S}}$; (4) $\spl{S}$ is legal.
Another way to say that history $H$ is linearizable if it is possible to assign an atomic event as a \emph{linearization point} or \emph{\lp} inside the execution interval of each \mth such that the result of each of these \mth{s} is the same as it would be in a sequential history $\spl{S}$ in which the \mth{s} are ordered by their $\lp${s} \cite{MauriceNir}. In this document, we show how to prove the correctness of \lp{s} of the various \mth{s} of a data-structure.

\ignore{
A concurrent object is linearizable if all the histories generated by it are linearizable.  We prove the linearizability of a concurrent history by defining a $\lp${s} for each method. The $\lp$ of a \mth implies that the method appears to take effect instantly at its $\lp$. 

\textbf{Progress:} The \emph{progress} properties specifies when a thread invoking \mth{s} on shared objects completes in presence of other concurrent threads. Some progress conditions used in this paper are mentioned here which are based on the definitions in Herlihy \& Shavit \cite{opodis_Herlihy}. The progress condition of a method in concurrent object is defined as: (1) Blocking: In this, an unexpected delay by any thread (say, one holding a lock) can prevent other threads from making progress. (2) Deadlock-Free: This is a \textbf{blocking} condition which ensures that \textbf{some} thread (among other threads in the system) waiting to get a response to a \mth invocation will eventually receive it. (3) Wait-Free: This is a \textbf{non-blocking} condition which ensures that \textbf{every} thread trying to get a response to a \mth, eventually receives it.
It can be seen that wait-free \mth{s} are desirable since they can complete regardless of the execution of other concurrent \mth{s}. On the other hand, deadlock-free \mth{s} are system (or underlying scheduler) dependent progress condition since they involve blocking. It ensures that among multiple threads in the system, at least one of them will make progress. 
}

\ignore {
\section{System Model \& Preliminaries}
\label{sec:System-Model-Preliminaries}
%\subsection{}
In this paper, we assume that our system consists of finite set of $p$ processors, accessed by a finite set of $n$ threads that run in a completely asynchronous manner and communicate using shared objects. The threads communicate with each other by invoking higher-level \emph{methods} on the shared objects and getting corresponding responses. Consequently, we make no assumption about the relative speeds of the threads. We also assume that none of these processors and threads fail.
\vspace{1mm}
\noindent
\textbf{Events \& Methods.} We assume that the threads execute atomic \emph{events}. Similar to Lev{-}Ari et. al.'s work, \cite{Lev-Aridisc2015, Lev-Aridisc2014} we assume that these events by different threads are (1) atomic \emph{read, write} on shared/local memory objects; (2) atomic \emph{read-modify-write} or \emph{rmw} operations such compare \& swap etc. on shared memory objects (3) method invocations or \emph{\inv} event \& responses or \emph{\rsp} event on higher level shared-memory objects. 
The \mth \inv \& \rsp events are typically associated with invocation and response parameters. For instance, the invocation event of the enqueue \mth on a queue object $Q$ is denoted as $\inv(Q.enq(v))$ while the \rsp event of a dequeue \mth can be denoted as $\rsp(Q.deq(v))$. In most cases, we ignore these parameters unless they are required for the context. We assume that all the events executed by different threads are totally ordered. 
A method consists of all the events from its invocation until response. A representation of a \mth $m_i$ is: $m_i(inv\text{-}params\uparrow, rsp\text{-}params\downarrow)$ where $inv\text{-}params\uparrow$ are all the parameters passed during invocation and $rsp\text{-}params\downarrow$ are all the parameters returned by the data-structure. But again in most case, we ignore these parameters unless they are required for the context.
\vspace{1mm}
\noindent
\textbf{Global States, Execution and Histories.} We define the \emph{global state} or \emph{state} of the system as the collection of local and shared variables across all the threads in the system. The system starts with an initial global state. Each event changes possibly the global state of the system leading to a new global state. The events read, write, rmw on shared/local memory objects change the global state. The $\inv$ \& $\rsp$ events on higher level shared-memory objects do not change the contents of the global state. Although we denote the resulting state with a new label in this case. 
We denote an \emph{execution} of a concurrent threads as a finite sequence of totally ordered atomic events. We formally denote an execution $E$ as the tuple $\langle evts, <_E \rangle$, where $\eevts{E}$ denotes the set of all events of $E$ and $<_E$ is the total order among these events. A \emph{history} corresponding to an execution consists only of \mth $\inv$ and $\rsp$ events (in other words, a history views the \mth{s} as black boxes without going inside the internals). Similar to an execution, a history $H$ can be formally denoted as $\langle evts, <_H \rangle$ where $evts$ are of type $\inv$ \& $\rsp$ and $<_H$ defines a total order among these events. We denote the set of \mth{s} invoked by threads in a history $H$ (and the corresponding execution $E$) by $\mths{H}$ (or $\mths{E}$). 
With this definition, it can be seen that a history uniquely characterizes an execution and vice-versa. Thus we use these terms interchangeably in our discussion. For a history $H$, we denote the corresponding execution as $E^H$. %Given an execution $E$, we denote the corresponding history $H$ as $\hst{E}$. Similarly, for a history $H$, we denote the corresponding execution as $\exec{H}$. 
%We define $E^{S}$ as the sequential execution of methods. It is the partial order of all the methods such that $\forall m_i, m_j \in Method$ $\rsp_i$ $<$ $\inv_j$, where $m_i$ $<$ $m_j$. 
%Similarly, we define $E^{H}$ to be the concurrent execution of methods .It is the random execution of all the methods with interleaving of $\inv$ and $\rsp$, such that there exists a partial order among Linearizable Points.
Next, we relate executions (histories) with global states. An execution takes the system through a series of global states with each event of the execution stating from the initial state takes the global state from one to the next. We associate the state of an execution (or history) to be global state after the last event of the execution. We denote this final global state $S$ of an execution E as $S = \state{E}$ (or $\state{H}$). 
We refer to the set of all the global states that a system goes through in the course of an execution as $\alls{E}$ (or $\alls{H}$). It can be seen that for $E$, $\state{E} \in \alls{E}$. \figref{exec/hist} shows a concurrent execution $E^H$ and its corresponding history $H$. In the figure, the curved line represents an $event$ and the vertical line is a $state$. The open([) \& close(]) square brackets demarcate the methods of a thread and have no specific meaning in the figure.
\vspace{-1mm}
\begin{figure}[!htbp]
\captionsetup{font=scriptsize}
	\centerline{\scalebox{0.65}{\input{figs/ExeHist.pdf_t}}}
	\caption{Figure (a) depicts a Concurrent Execution $E^H$ comprising of multiple events. $E.state$ denotes the global state after the last event of the execution. Figure (b) depicts the corresponding concurrent history $H$ consisting only of $inv$ and $resp$ events. $H.state$ denotes the global state after the last event in the history. \Comment{We need to add $\pres{e}$, $\posts{e}$, $\eevts{E}$, $\alls{E}$, etc. }}
	\label{fig:exec/hist}
\end{figure}
Given an event $e$ of an execution $E$, we denote global state just before the $e$ as the pre-state of $e$ and denote it as $\pres{e}$. Similarly, we denote the state immediately after $e$ as the the post-state of $e$ or $\posts{e}$. Thus if an event $e$ is in $\eevts{E}$ then both $\pres{e}$ and $\posts{e}$ are in $\alls{E}$. 
The notion of pre \& post states can be extended to \mth{s} as well. We denote the pre-state of a \mth $m$ or $\pres{m}$ as the global state just before the invocation event of $m$, whereas the post-state of $m$ or $\pres{m}$ as the global state just after the return event of $m$. 

\vspace{1mm}
\noindent
\textbf{Notations on Histories.} We now define a few notations on histories which can be extended to the corresponding executions. We say two histories $H1$ and $H2$ are \emph{equivalent} if the set of events in $H1$ are the same as $H2$, i.e., $\eevts{H1} = \eevts{H2}$ and denote it as $H1 \approx H2$. We say history $H1$ is a \emph{\subh} of $H2$ if all the events of $H1$ are also in $H2$ in the same order, i.e., $\langle (\eevts{H1} \subseteq \eevts{H2}) \land (<_{H1} \subseteq <_{H2}) \rangle$. 
Let a thread $T_i$ invoke some \mth{s} on a few shared memory objects in a history $H$ and $o$ be a shared memory object whose \mth{s} have been invoked by threads in $H$. Using the notation of \cite{HerlWing:1990:TPLS}, we denote $H|T_i$ to be the \subh all the events of $T_i$ in $H$. Similarly, we denote $H|o$ to be the \subh all the events involving $o$.
We say that a $\rsp$ event \emph{matches} an $\inv$ event in an execution (as the name suggests) if the both $\rsp$ and $\inv$ events are on the same \mth of the object, have been invoked by the same thread and the $\rsp$ event follows $\inv$ event in the execution. 
We assume that a history $H$ is \emph{well-formed} if a thread $T_i$ does not invoke a \mth on a shared object until it obtains the matching response for the previous invocation. We assume that all the executions \& histories considered in this paper are well-formed. We denote an $\inv$ event as \emph{pending} if it does not have a matching $\rsp$ event in the execution. Note that since an execution is well-formed, there can be at most only one pending invocation for each thread. 
We say a history $H$ is \emph{complete} if for every \mth $\inv$ event there is a matching $\rsp$ event. The history $H$ is said to be \emph{sequential} if every $\inv$ event, except possibly the last, is immediately followed by the matching $\rsp$ event. Note that a complete history is not sequential and the vice-versa. It can be seen that in a well-formed history $H$, for every thread $T_i$, we have that $H|T_i$ is sequential. \figref{seqhist} shows a the execution of a sequential history $\spl{S}$. 
\begin{figure}[!htbp]
	\centerline{\scalebox{0.6}{\input{figs/seqspec.pdf_t}}}
	\caption{An illustration of a sequential execution $E^{\spl{S}}$.}
	\label{fig:seqhist}
\end{figure}

\vspace{1mm}
\noindent
\textbf{Sequential Specification.} 
We next discuss about \emph{\sspec} \cite{HerlWing:1990:TPLS} of shared memory objects. The \sspec of a shared memory object $o$ is defined as the set of (all possible) sequential histories involving the \mth{s} of $o$. Since all the histories in the \sspec of $o$ are sequential, this set captures the behavior of $o$ under sequential execution which is believed to be correct. A sequential history $\spl{S}$  is said to be \emph{\legal} if for every shared memory object $o$ whose \mth is invoked in $\spl{S}$, $\spl{S}|o$ is in the \sspec of $o$. 

\vspace{1mm}
\noindent
\textbf{Safety:} A \emph{safety} property is defined over histories (and the corresponding executions) of shared objects and generally states which executions of the shared objects are acceptable to any application. The safety property that we consider is \emph{linearizability} \cite{HerlWing:1990:TPLS}. A history $H$ is said to be linearizable if (1) there exists a completion $\overline{H}$ of $H$ in which some pending $\inv$ events are completed with a matching response and some other pending $\inv$ events are discarded; (2) there exists a sequential history $\mathbb{S}$ such that $\mathbb{S}$ is equivalent to $\overline{H}$; (3) $\mathbb{S}$ is legal.
Another way to say that history $H$ is linearizable if it is possible to assign an atomic event as a \emph{linearization point} (\emph{\lp}) inside the execution interval of each \mth such that the result of each of these \mth{s} is the same as it would be in a sequential history $\mathbb{S}$ in which the \mth{s} are ordered by their $\lp${s} \cite{MauriceNir}. In this document, we show how to prove the correctness of \lp{s} of the various \mth{s} of a data-structure. 
}

\ignore{
A concurrent object is linearizable if all the histories generated by it are linearizable.  We prove the linearizability of a concurrent history by defining a $\lp${s} for each method. The $\lp$ of a \mth implies that the method appears to take effect instantly at its $\lp$. 
\textbf{Progress:} The \emph{progress} properties specifies when a thread invoking \mth{s} on shared objects completes in presence of other concurrent threads. Some progress conditions used in this paper are mentioned here which are based on the definitions in Herlihy \& Shavit \cite{opodis_Herlihy}. The progress condition of a method in concurrent object is defined as: (1) Blocking: In this, an unexpected delay by any thread (say, one holding a lock) can prevent other threads from making progress. (2) Deadlock-Free: This is a \textbf{blocking} condition which ensures that \textbf{some} thread (among other threads in the system) waiting to get a response to a \mth invocation will eventually receive it. (3) Wait-Free: This is a \textbf{non-blocking} condition which ensures that \textbf{every} thread trying to get a response to a \mth, eventually receives it.
It can be seen that wait-free \mth{s} are desirable since they can complete regardless of the execution of other concurrent \mth{s}. On the other hand, deadlock-free \mth{s} are system (or underlying scheduler) dependent progress condition since they involve blocking. It ensures that among multiple threads in the system, at least one of them will make progress. 
}

\section{Generic Proof Technique}
\label{sec:gen-proof}

In this section, we develop a generic framework for proving the correctness of a \cds based on \lp events of the methods. Our technique of proving is based on hand-crafting and is not automated. We assume that the developer of the \cds has also identified the \lp{s} of the \mth{s}. We assume that the \lp{s} satisfy a few properties that we outline in the course of this section. 

In \secref{ds-proofs}, we illustrate this technique by showing the correctness at a high level of two structures (1) lazy-list based concurrent set implementation \cite{Heller-PPL2007} denoted as \lazy in \subsecref{con-lazy-list} and hand-over-hand locking based concurrent set implementation denoted as \hoh in \subsecref{con-hoh}.

%\vspace{1mm}
%\noindent
%\textbf{Linearization Points Details.} 
\subsection{Linearization Points Details}
\label{subsec:lps}

Intuitively, \lp is an (atomic) event in the execution interval of each \mth such that the execution of the entire \mth seems to have taken place in the instant of that event. As discussed in \secref{model}, the \lp of each \mth is such that the result of execution of each of these \mth{s} is the same as it would be in a sequential history $\mathbb{S}$ in which the \mth{s} are ordered by their $\lp${s} \cite{MauriceNir}. 

Given, the set of \lp{s} of all the \mth{s} of a concurrent data-structure, we show how the correctness of these \lp{s} can be verified. We show this by proving the correctness of the \cds assuming that it is \lble and the \lp{s} are chosen correctly in the first place. 

Consider a \mth $m_i(inv\text{-}params\downarrow, rsp\text{-}params\uparrow)$ of a \cds $d$. Then the precise \lp of $m_i$ depends on $rsp\text{-}params\uparrow$. For instance in the \lazy \cite{Heller-PPL2007}, the \lp of $contains(k \downarrow, true\uparrow)$ \mth is different from $contains(k\downarrow, false\uparrow)$. Furthermore, the \lp of a \mth also depends on the execution. For instance, considering the contains \mth of the \lazy again, the \lp of $contains(k\downarrow, false\uparrow)$ depends on whether there is an $add(k\downarrow, true\uparrow)$ \mth concurrently executing with it or not. The details of the \lp{s} of the lazy-list are described in the original paper by Heller et. al \cite{Heller-PPL2007} and also in \subsecref{con-lazy-list}. 

% Most of the notations used in this section are derived from \cite{Vafeiadis-ppopp2006}. by proving the correctness of the corresponding \cds. In this document, we show how to prove the correctness of \lp{s} of the various \mth{s} of a data-structure. We show that if the concurrent data-structure is \lble and the \lp{s} are chosen correctly, then we show how to prove the correctness of the \cds. 

We denote the LP event of $m_i$ in a history $H$ as $E^H.m_i(inv\text{-}params\downarrow, rsp\text{-}params\uparrow).LP$ or $E^H.m_i.LP$ (depending on the context). The global state in the execution $E^H$ immediately before and after $m_i.LP$ is denoted as $\preel{i}$  and $\postel{i}$ respectively. %as shown in \figref{conc-seq}.

\ignore {
%--------------------------------------------

\begin{figure}[H]
\centerline{\scalebox{0.6}{\input{figs/ConExe.pdf_t}}}
\caption{Figure (a) illustrates an example of a concurrent execution $E^H$. Then, $m_i.LP$ is the $\lp$ event of the method $m_i$. The global state immediately after this event is represented as Post-state of ($E^H.m_i.LP$). Figure (b) represents sequential execution $E^S$ corresponding to (a) with post-state of method $m_i$ as the state after its $resp$ event.}
\label{fig:conc-seq1}
\end{figure}

%The Abstract state of a data structure ($\abds$) is defined as the set of all the values of the actual nodes (apart from sentinels) which are reachable from the head of the data structure in any given concrete state. Reachability is defined specific to every data structure. In the case of Lazy concurrent list, all unmarked nodes are reachable.
}

\subsection{Abstract Data-Structure \& LP Assumptions}
\label{subsec:abds}

To prove correctness of a \cds $d$, we associate with it an \emph{abstract data-structure} or \emph{\abds}. The \abds captures the behavior of \cds if it had executed sequentially. Since sequential executions are assumed to be correct, it is assumed that \abds is correct. In fact, the \sspec of $d$ can be defined using \abds since in any global state the internal state of \abds is the result of sequential execution. Thus, we can say that \cds $d$ refines \abds \cite{He+:DRR:ESOP:1986}.

The exact definition of \abds depends on the actual \cds being implemented. In the case of \lazy, \abds is the set of unmarked nodes reachable from the head while the \cds is the set of all the nodes in the system. Vafeiadis et. al \cite{Vafeiadis-ppopp2006} while proving the correctness of the \lazy refer to \abds as \emph{abstract set} or \abs. In the case of \hoh, \abds is the set nodes reachable from the head while the \cds is the set of all nodes similar to \lazy. Normally the \cds maintains more information (such as sentinel nodes) than \abds to implement the desired behavior. For a given global state $S$, we use the notation $S.\abds$ and $S.\cds$ to refer to the contents of these structures in $S$. 

%Every \cds has two aspects: an \emph{abstract data-structure} or \emph{\abds} and a \emph{concrete data-structure} or \emph{\crds}. As the name suggests the abstract behavior of the \cds is defined by the elements of \abds whereas the \crds implements this behavior. In the case of \lazy, \abds is the set of unmarked nodes reachable from the head while \crds is the set of all the nodes in the system. In the case of \hoh, \abds is the set nodes reachable from the head while the \crds is the set of all nodes similar to \lazy. Normally \crds maintains more information (such as sentinel nodes) than \abds to implement the desired behavior. Vafeiadis et. al \cite{Vafeiadis-ppopp2006} while proving the correctness of the \lazy refer to \abds of the concurrent set data-structure as \abs. 

Now we state a few assumptions about the \cds and its \lp{s} that we require for our proof technique to work. 

\begin{assumption}
\label{asm:determ-cds}
The design of \cds and its \abds is deterministic and its \lp{s} are known.
\end{assumption}

\begin{assumption}
\label{asm:mth-total}
In any sequential execution, any method of the \cds can be invoked in any global state and yet get a response.
\end{assumption}

Intuitively, \asmref{mth-total} states that if threads execute the \mth{s} of the \cds sequentially then every \mth invocation will have a matching response. Such \mth{s} are called as \emph{total} \cite[Chap 10]{MauriceNir}. 

\begin{assumption}
\label{asm:seq-legal}
Every sequential history $\spl{S}$ generated by the \cds is legal.
\end{assumption}

\asmref{seq-legal} says that sequential execution of the \cds is correct and does not result in any errors. We next make the following assumptions based on the \lp{s}. This fundamentally comes from the definition of sequential execution.

\begin{assumption}
\label{asm:lp-evt}
%Each \mth $m_i$ of the \cds in a concurrent execution $E^H$ has a unique \lp which is an atomic event within the \inv and \rsp of the \mth $m_i$. 

Consider a \mth $m_i(inv\text{-}params\downarrow, rsp\text{-}params\uparrow)$ of the \cds in a concurrent execution $E^H$. Then $m_i$ has a unique \lp which is an atomic event within the \inv and \rsp events of $m_i$ in $E^H$. The \lp event can be identified based on the $inv\text{-}params\downarrow$, $rsp\text{-}params\uparrow$ and the execution $E^H$. 

%Consider a \mth $m_i(inv\text{-}params\uparrow, rsp\text{-}params\downarrow)$ of the \cds in a concurrent execution $E^H$. Then 
%\begin{enumerate}
%	\item $m_i$ has a unique \lp which is an atomic event within the \inv and \rsp events of $m_i$ in $E^H$. \label{step:lp-asm1}
%	\item Among the several possible \lp events in $m_i$, the final \lp event executed decides the return value, i.e., $rsp\text{-}params\downarrow$. \label{step:lp-asm2}
%\end{enumerate}
\end{assumption}

\ignore{
\begin{assumption}
\label{asm:lp-rsp}
Consider a \mth $m_i(inv\text{-}params\uparrow, rsp\text{-}params\downarrow)$ of the \cds in a concurrent execution $E^H$. The exact \lp of $m_i$ is decided by the $rsp\text{-}params\downarrow$ of the \rsp event. 
\end{assumption}
}

\begin{assumption}
\label{asm:change-abds}
%Only the $\lp$ events of a method in an execution can change the \abds of a given \cds. 
Consider an execution $E^H$ of a \cds $d$. Then only the $\lp$ events of the \mth{s} can change the contents \abds of the given \cds $d$. 
\end{assumption}

Assumptions \ref{asm:lp-evt} \& \ref{asm:change-abds} when combined imply that there is only one event in each \mth that can change the \abds. As per \asmref{change-abds}, only the \lp{s} can change the contents of \abds. But this does not imply that all the \lp{s} change the \abds. It implies that if an event changes \abds then it must be a \lp event. For instance in the case of \lazy, the \lp{s} of $add(k, false), remove(k, false)$ and the \lp{s} of the contains \mth{s} do not change the \abds. 

\ignore{
We denote the \mth{s} that change the \abds as \upm. In case of \lazy, they are $add(k, true)$ and $remove(k, true)$. We next make an assumption about \upm{s} in which we compare the execution of an \upm in a concurrent and a sequential history. 

\begin{assumption}
\label{asm:resp-lp}
Consider a concurrent history $H$ and a sequential history $\spl{S}$. Let $m_x, m_y$ be \upm{s} in $H$ and $\spl{S}$ respectively. If the \inv and the \rsp events of both $m_x, m_y$ are the same then the \lp{s} of both these \mth{s} are the same in $H$ \& $\spl{S}$. Formally, 
$\langle \forall m_x \in \upms{E^{H}}, \forall m_y \in \upms{E^{\spl{S}}}: (\inveds{x} = \invmds{y}) \wedge (\reteds{x} = \retmds{y}) \Longrightarrow (\lpeds{x} = \lpmds{y}) \rangle$.
%If for any \upm, the invocation and method response are the same in both concurrent and sequential history then the method LP is the same in both concurrent \& sequential histories as well. Formally, 
\end{assumption}
This assumption implies that an \upm in a concurrent history has a fixed \lp and does not depend on other \mth{s} concurrently executing. This is unlike the $contains(k, false)$ \mth of \lazy which has its \lp dependent on concurrently executing successful $add(k, true)$ which we explained in \subsecref{lps}. 
}

We believe that the assumptions made by us are generic and are satisfied by many of the commonly used \cds{s} such as Lock-free Linked based Sets \cite{Valoispodc1995}, \hoh \cite{Bayerai1977, MauriceNir} , \lazy \cite{Heller-PPL2007, MauriceNir}, Skiplists \cite{Levopodis2006} etc. In fact, these assumptions are similar in spirit to the definition of Valid LP by Zhu et al \cite{Zhu+:Poling:CAV:2015}. 

It can be seen that the Assumptions \ref{asm:lp-evt} \& \ref{asm:change-abds} characterize the \lp events. Any event that does not satisfy these assumptions is most likely not a \lp (please refer to the discussion section \secref{conc} more on this). 

%Proving \legality of $\sh{H}$ depends on the properties of the \cds $d$ being implemented and is specific to $d$. Now we give a generic outline for proving \legality of $\sh{H}$ for any \cds which satisfies the Assumptions \ref{asm:mth-total}, \ref{asm:seq-legal}, \ref{asm:lp-evt}, \ref{asm:change-abds}.
%\vspace{-3mm}
\subsection{Constructing Sequential History}
\label{subsec:csh}

To prove \lbty of a \cds $d$ which satisfies the Assumptions \ref{asm:mth-total}, \ref{asm:seq-legal}, \ref{asm:lp-evt}, \ref{asm:change-abds} we have to show that every history generated by $d$ is \lble. To show this, we consider an arbitrary history $H$ generated by $d$. First we complete $H$, to form $\overline{H}$ if $H$ is incomplete. We then construct a sequential history denoted as $\sh{H}$ (constructed sequential history). $H$ is \lble if (1) $\sh{H}$ is equivalent to a completion of $H$; (2) $\sh{H}$ respects the real-time order of $H$ and (3) $\sh{H}$ is \legal. 

\noindent We now show how to construct $\overline{H}$ \& $\sh{H}$. We then analyze some properties of $\sh{H}$. 

%give the outline of the generic proof technique. 

\vspace{1mm}
\noindent
\textbf{Completion of ${H}$.} Suppose $H$ is not complete. This implies $H$ contains some incomplete \mth{s}. Note that since these \mth{s} are incomplete, they could have executed multiple possible \lp events. Based on these \lp events, we must complete them by adding appropriate \rsp event or ignore them. We construct the completion $\overline{H}$ and $E^{\overline{H}}$ as follows: 

\begin{enumerate}

\item Among all the incomplete \mth{s} of $E^H$ we ignore those \mth{s}, say $m_i$, such that: (a) $m_i$ did not execute a single \lp event in $E^H$; (b) the \lp event executed by $m_i$ did not change the \abds. 

%\item The remaining incomplete \mth{s} must have executed at least one possible \lp event in $E^H$. For these \mth{s}, we consider the latest \lp event executed in $E^H$ as the \lp of the \mth. We build an ordered set consisting of all these incomplete \mth{s} which is denoted as \emph{\pset}. The \mth{s} in \pset are ordered by their \lp{s}. 

\item The remaining incomplete \mth{s} must have executed an \lp event in $E^H$ which changed the \abds. Note from Assumptions \ref{asm:lp-evt} \& \ref{asm:change-abds}, we get that each \mth has only one event which can change the \abds and that event is the \lp event. We build an ordered set consisting of all these incomplete \mth{s} which is denoted as \emph{\pset}. The \mth{s} in \pset are ordered by their \lp{s}. 

\item To build $\overline{H}$, for each incomplete \mth $m_i$ in \pset considered in order, we append the appropriate \rsp event to $H$ based on the \lp event of $m_i$ executed. Since the \mth{s} in \pset are ordered by their \lp events, the appended \rsp events are also ordered by their \lp events. Here, we assumed that once a \mth executes a \lp event that changes the \abds, its \rsp event can be determined. 

\item To construct $E^{\overline{H}}$, for each incomplete \mth $m_i$ in \pset considered in order, we sequentially append all the remaining events of $m_i$ (after its \lp) to $E^H$. All the appended events are ordered by the \lp{s} of their respective \mth{s}. 
\end{enumerate}

\noindent From this construction, one can see that if $\overline{H}$ is \lble then $H$ is also \lble. Formally, $\langle (\overline{H} \text{ is \lble}) \implies (H \text{ is \lble}) \rangle$. 

For simplicity of presentation, we assume that all the concurrent histories \& executions that we consider in the rest of this document are complete unless stated otherwise. Given any history that is incomplete, we can complete it by the transformation mentioned here. Next, we show how to construct a $\sh{H}$ for a complete history $H$. 

\vspace{1mm}
\noindent
\textbf{Construction of $\sh{H}$.} Given a complete history $H$ consisting of \mth{} \inv \& rsp events of a \cds $d$, we construct $\sh{H}$ as follows: We have a single (hypothetical) thread invoking each \mth of $H$ (with the same parameters) on $d$ in the order of their \lp events. Only after getting the response for the currently invoked \mth, the thread invokes the next \mth. From \asmref{mth-total}, which says that the \mth{s} are total, we get that for every \mth invocation $d$ will issue a response. 

Thus we can see that the output of these \mth invocations is the sequential history $\sh{H}$. From \asmref{seq-legal}, we get that $\sh{H}$ is \legal. The histories $H$ and $\sh{H}$ have the same \inv events for all the \mth{s}. But, the \rsp events could possibly be different. Hence, the two histories may not be equivalent to each other unless we prove otherwise. 

In the sequential history $\sh{H}$ all the \mth{s} are totally ordered. So we can enumerate all its \mth{s} as: $m_{1}(inv\text{-}params, 
rsp\text{-}params)~ m_{2}(inv\text{-}params, rsp\text{-}params)~\dots~ 
m_{n}(inv\text{-}params, \\
rsp\text{-}params)$. On the other hand, the \mth{s} in a concurrent history $H$ are not ordered. From our model, we have that all the events of the execution $E^H$ are ordered. In \asmref{lp-evt}, we have assumed that each complete \mth has a unique \lp event which is atomic. All the \mth{s} of $H$ and $E^H$ are complete. Hence, we can order the \lp{s} of all the \mth{s} in $E^H$. Based on \lp ordering, we can enumerate the corresponding \mth{s} of the concurrent history $H$ as $m_{1} (inv\text{-}params, rsp\text{-}params),~ m_{2} (inv\text{-}params, rsp\text{-}params),\\~\dots~
m_{n} (inv\text{-}params, rsp\text{-}params)$. Note that this enumeration has nothing to do with the ordering of the \inv and \rsp events of the \mth{s} in $H$. 

Thus from the construction of $\sh{H}$, we get that for any \mth $m_i$, $H.\inv(m_{i}(inv\text{-}params)) = \sh{H}.\inv(m_{i}(inv\text{-}params))$ but the same need not be true for the \rsp events. 

\ignore{
In the sequential history $\sh{H}$ all the \mth{s} are totally ordered. So we represent it as: $m_{s1}(inv\text{-}params_{s1}, rsp\text{-}params_{s1})~ m_{s2}(inv\text{-}params_{s2}, rsp\text{-}params_{s2})~ ...~ m_{sn}(inv\text{-}params_{sn}, 
rsp\text{-}params_{sn})$ where $s1...sn$ are indices. On the other hand, the \mth{s} in a concurrent history $H$ are not ordered. From our model, we have that all the events of the execution $E^H$ are ordered. In \asmref{lp-evt}, we have assumed that each complete \mth has a unique \lp event which is atomic. All the \mth{s} of $H$ and $E^H$ are complete. Hence, we can order the \lp{s} of all the \mth{s} in $E^H$. Based on this ordering, we can enumerate all the \mth{s} of the concurrent history $H$ as $m_{h1}(inv\text{-}params_{h1}, rsp\text{-}params_{s1}),~ m_{h2}(inv\text{-}params_{h2}, rsp\text{-}params_{s2}),~ ...~ m_{hn}(inv\text{-}params_{hn}, 
rsp\text{-}params_{hn})$. Note that this enumeration has nothing to do with the ordering of the \inv and \rsp events of the \mth{s}. 
}

For showing $H$ to be \lble, we further need to show $\sh{H}$ is equivalent to $H$ and respects the real-time order $H$. Now, suppose $\sh{H}$ is equivalent to $H$. Then from the construction of $\sh{H}$, it can be seen that $\sh{H}$ satisfies the real-time order of $H$. The following lemma proves it. 

\begin{lemma}
\label{lem:sh-rt}
Consider a history $H$ be a history generated by a \cds $d$. Let $\sh{H}$ be the constructed sequential history. If $H$ is equivalent to $\sh{H}$ then $\sh{H}$ respects the real-time order of $H$. Formally, $\langle \forall H: (H \approx \sh{H}) \implies (\prec^{rt}_H \subseteq \prec^{rt}_{\sh{H}}) \rangle$. 
\end{lemma}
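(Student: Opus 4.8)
The plan is to reduce real-time precedence in both $H$ and $\sh{H}$ to the order of the $\lp$ events in the single underlying execution $E^H$, and then to exploit the fact that each $\lp$ lies inside its own method's execution interval. Concretely, I would fix two methods $m_x, m_y$ with $m_x \prec^{rt}_H m_y$, so that by definition $m_x.rsp <_H m_y.inv$. Since $E^H$ contains exactly the same \inv and \rsp events as $H$ in the same total order, this precedence carries over to the execution: $m_x.rsp <_{E^H} m_y.inv$.

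Next I would invoke \asmref{lp-evt}, which guarantees that the $\lp$ of every method is a single atomic event lying between that method's \inv and \rsp events. Applied to $m_x$ and $m_y$ this gives $m_x.LP \leq_{E^H} m_x.rsp$ and $m_y.inv \leq_{E^H} m_y.LP$. Chaining these with the precedence from the previous step yields $m_x.LP \leq_{E^H} m_x.rsp <_{E^H} m_y.inv \leq_{E^H} m_y.LP$, hence $m_x.LP <_{E^H} m_y.LP$: whenever $m_x$ precedes $m_y$ in real time in $H$, its $\lp$ strictly precedes that of $m_y$ in $E^H$.

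It remains to translate the $\lp$ order back into real-time order in $\sh{H}$. By the construction of \subsecref{csh}, the single hypothetical thread invokes the methods of $H$ in increasing order of their $\lp$ events, and, because $\sh{H}$ is sequential, each method returns before the next is invoked. Consequently the real-time order of $\sh{H}$ is exactly the $\lp$ order of $E^H$: distinct methods occupy disjoint intervals in $\sh{H}$, so $m_x.LP <_{E^H} m_y.LP$ forces $m_x.rsp <_{\sh{H}} m_y.inv$, i.e. $m_x \prec^{rt}_{\sh{H}} m_y$. The hypothesis $H \approx \sh{H}$ is what makes the claimed inclusion well-posed: it identifies each method call of $H$ (including its response) with the corresponding call in $\sh{H}$, so that $\prec^{rt}_H$ and $\prec^{rt}_{\sh{H}}$ are relations over the same set of method calls.

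The argument is short, and I expect the only real pitfall to be bookkeeping rather than depth: one must carefully distinguish the event order $<_H$ on the history from the event order $<_{E^H}$ on the execution (the $\lp$ events live only in the latter), and justify explicitly that in the sequential history $\sh{H}$ ordering methods by their $\lp${s} coincides with ordering them by real time. This last coincidence, which rests on the non-overlap of method intervals in a sequential execution, is the step most easily taken for granted, so I would state it deliberately rather than assert it in passing.
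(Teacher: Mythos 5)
Your proof is correct, and it follows the same overall route as the paper's: real-time precedence in $H$ forces $m_x$ to be placed before $m_y$ in the construction of $\sh{H}$, and sequentiality of $\sh{H}$ then turns that placement into real-time precedence. The one substantive difference is the middle step. The paper argues that $m_x.rsp <_H m_y.inv$ implies $m_x$ is \emph{invoked} before $m_y$ in $H$, and then appeals to the construction of $\sh{H}$ to conclude that $m_x$ is invoked first there too. But $\sh{H}$ is built by ordering methods by their \lp{s}, not by their invocation events, so invocation order in $H$ alone does not justify that step (two overlapping methods can have their invocations and their \lp{s} in opposite orders). You instead derive $m_x.LP <_{E^H} m_x.rsp <_{E^H} m_y.inv <_{E^H} m_y.LP$ from \asmref{lp-evt}, which is exactly the fact the construction actually needs, and it is available precisely because real-time-ordered methods have disjoint intervals. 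So your argument is, if anything, the tighter of the two; your explicit remark that in a sequential history the \lp order and the real-time order coincide is also a point the paper's proof takes for granted.
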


\begin{proof}
This lemma follows from the construction of $\sh{H}$. Here we are given that for every \mth $m_i$, $H.m_i.inv = \sh{H}.m_i.inv$ and $H.m_i.rsp = \sh{H}.m_i.rsp$. 

Now suppose two \mth{s}, $m_i, m_j$ are ordered by real-time. This implies that $m_i.rsp <_H m_j.inv$. Hence, we get that $m_i.inv <_H m_i.rsp <_H m_j.inv$ which means that $m_i$ is invoked before $m_j$ in $H$. Thus, from the construction of $\sh{H}$, we get that $m_i$ is invoked before $m_j$ in $\sh{H}$ as well. Since $\sh{H}$ is sequential, we get that $m_i.rsp <_{\sh{H}} m_j.inv$. Thus $\sh{H}$ respects the real-time order of $H$. 
\end{proof}
%\newline

\ignore {
\begin{restatable}{lemma}{shrt}
\label{lem:sh-rt0}
	Consider a history $H$ be a history generated by a \cds $d$. Let $\sh{H}$ be the constructed sequential history. If $H$ is equivalent to $\sh{H}$ then $\sh{H}$ respects the real-time order of $H$. Formally, $\langle \forall H: (H \approx \sh{H}) \implies (\prec^{rt}_H \subseteq \prec^{rt}_{\sh{H}}) \rangle$. 
\end{restatable}

\begin{lemma}
\label{lem:sh-rt2}
Consider a history $H$ be a history generated by a \cds $d$. Let $\sh{H}$ be the constructed sequential history. If $H$ is equivalent to $\sh{H}$ then $\sh{H}$ respects the real-time order of $H$. Formally, $\langle \forall H: (H \approx \sh{H}) \implies (\prec^{rt}_H \subseteq \prec^{rt}_{\sh{H}}) \rangle$. 
\end{lemma}

\begin{proof}
This lemma follows from the construction of $\sh{H}$. Here we are given that for every \mth $m_i$, $H.m_i.inv = \sh{H}.m_i.inv$ and $H.m_i.rsp = \sh{H}.m_i.rsp$. 

Now suppose for two \mth{s}, $m_i, m_j$ are ordered by real-time. This implies that $m_i.rsp <_H m_j.inv$. Hence, we get that $m_i.inv <_H m_i.inv <_H m_j.inv$ which means that $m_i$ is invoked before $m_j$ in $H$. Thus, from the construction of $\sh{H}$, we get that $m_i$ is invoked before $m_j$ in $\sh{H}$ as well. Since $\sh{H}$ is sequential, we get that $m_i.rsp <_{\sh{H}} m_j.inv$. Thus $\sh{H}$ respects the real-time order of $H$.
\end{proof}
}

%Now given a concurrent history $H$, we construct the sequential history $\mathbb{S}$ by feeding the invoke events of every method sequentially and getting appropriate responses. By construction we know that, $\mathbb{S}$ follows real time order of the original history $H$ and will surely be legal. We now need to verify that the return values of \rsp event of every method are same in $\mathbb{S}$ and $H$. Below we describe the proof roadmap for establishing that if the pre-state of $\lp$ of every method in $E^H$ and pre-state of method in $E^S$ is same then the return values of these methods will be same in $E^H$ and $E^S$.
Now it remains to prove that $H$ is equivalent to $\sh{H}$ for showing \lbty of $H$. But this proof depends on the properties of the \cds $d$ being implemented and is specific to $d$. Now we give a generic outline for proving the equivalence between $H$ and $\sh{H}$ for any \cds. As mentioned earlier, later in \secref{ds-proofs}, we illustrate this technique by showing at a high level the correctness of  \lazy \& \hoh. %We start the proof outline by showing a series of lemmas. %which satisfies the Assumptions \ref{asm:mth-total}, \ref{asm:seq-legal}, \ref{asm:lp-evt}, \ref{asm:change-abds}. 

\subsection{Details of the Generic Proof Technique} 
\label{subsec:generic}
As discussed above, to prove the correctness of a concurrent (\& complete) history $H$ representing an execution of a \cds $d$, it is sufficient to show that $H$ is equivalent to $\sh{H}$. To show this, we have developed a generic proof technique. 

It can be obviously seen that to prove the correctness, this proof depends on the properties of the \cds $d$ being considered. To this end, we have identified a \cdse which captures the properties required of the \cds $d$. Proving this definition for each \cds would imply equivalence of $H$ between $\sh{H}$ and hence \lbty of the \cds. %To achieve this, we first prove a set of  \cds-independent observations and lemmas which would lead us to the \cds-specific lemma. 

In the following lemmas, we assume that all the histories and execution considered here are generated from the \cds $d$. The \cds $d$ satisfies the Assumptions \ref{asm:mth-total}, \ref{asm:seq-legal}, \ref{asm:lp-evt}, \ref{asm:change-abds}. Since we are only considering \cds $d$, we refer to its abstract data-structure as $\abds$ and refer to its state in a global state $S$ as $S.\abds$. 

%In a sequential history, all the \mth{s} are totally ordered. So we enumerate them as $m_1, m_2 ... m_n$. On the other hand, the \mth{s} in a concurrent history $H$ are not ordered. From our model, we have that all the events of the execution $E^H$ are ordered. In \asmref{lp-evt}, we have assumed that each complete \mth has a unique \lp event which is atomic. All the \mth{s} of $H$ and $E^H$ are complete. Hence, we can order the \lp{s} of all the \mth{s} in $E^H$. Based on this ordering, we can enumerate all the \mth{s} of the concurrent history $H$ as $m_1, m_2 ... m_n$. Note that this enumeration has nothing to do with the ordering of the \inv and \rsp events of the \mth{s}. 

In the following lemmas, as described in \subsecref{csh}, we enumerate all the \mth{s} of a sequential history $\spl{S}$ as: $m_1, m_2 ... m_n$. We enumerate all the \mth{s} of the concurrent history $H$ as $m_1, m_2 ... m_n$ based on the order of their \lp{s}.

%We enumerate all the \mth{s} of a history \& its

\begin{lemma}
\label{lem:seqe-post-pre}
The \abds of $d$ in the global state after the \rsp event of a method $m_x$ is the same as the \abds before the \inv event of the consecutive method $m_{x+1}$ in an execution $E^{\spl{S}}$ of a sequential history $\spl{S}$. Formally, $\langle \forall m_x \in \mths{E^{\spl{S}}}: \postmds{x} = \premds{x+1}  \rangle$.
\end{lemma}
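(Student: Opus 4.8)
The plan is to treat this as an almost immediate consequence of $\spl{S}$ being \emph{sequential}, combined with the fact that $\inv$ and $\rsp$ events do not alter the contents of the global state. First I would recall the structure of the execution $E^{\spl{S}}$. Since $\spl{S}$ is sequential, every $\inv$ event (except possibly the last) is immediately followed by its matching $\rsp$ event, so the methods of $\spl{S}$ appear one after another with no overlap. In particular, the final event of $m_x$ is its response event $m_x.rsp$, the first event of the consecutive method $m_{x+1}$ is its invocation event $m_{x+1}.inv$, and---because $E^{\spl{S}}$ is the single-threaded execution built in \subsecref{csh}, where the hypothetical thread does nothing between receiving a response and issuing the next invocation---these two events are consecutive in $E^{\spl{S}}$: no event of the execution lies strictly between $m_x.rsp$ and $m_{x+1}.inv$.

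Next I would unfold the two quantities in the statement. By definition, $\postmds{x}$ is the \abds in the global state immediately after $m_x.rsp$, while $\premds{x+1}$ is the \abds in the global state immediately before $m_{x+1}.inv$. Since there is no event of $E^{\spl{S}}$ between $m_x.rsp$ and $m_{x+1}.inv$, the global state immediately after $m_x.rsp$ is identical to the global state immediately before $m_{x+1}.inv$; hence these two states coincide, and so do their \abds components, giving $\postmds{x} = \premds{x+1}$. Equivalently, one may argue via \asmref{change-abds}: only $\lp$ events change the \abds, and there is no event---and in particular no $\lp$ event---between $m_x.rsp$ and $m_{x+1}.inv$, so the \abds cannot change across this gap.

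I expect no real obstacle here, since the statement is close to definitional. The only points requiring care are to justify rigorously that in the sequential execution there genuinely are no intervening events between the response of $m_x$ and the invocation of $m_{x+1}$ (this uses both the definition of a sequential history and the single-threaded construction of $\sh{H}$), and to recall explicitly that $\inv$ and $\rsp$ events leave the contents of the global state unchanged, so that crossing these two boundary events does not perturb the \abds.
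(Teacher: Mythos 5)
Your proposal is correct and takes essentially the same route as the paper, whose entire proof is the one-liner ``From the definition of Sequential Execution''; you have simply unfolded that definition, noting that in the single-threaded execution $E^{\spl{S}}$ no event lies between $m_x.rsp$ and $m_{x+1}.inv$ and that $\inv$/$\rsp$ events do not alter the global state. Your version is a more explicit rendering of the same definitional argument.
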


\begin{proof}
From the definition of Sequential Execution. 
\end{proof}

\begin{lemma}
\label{lem:conce-post-pre}
Consider a concurrent execution $E^H$ of the \mth{s} of $d$. Then, the contents of $\abds$ in the post-state of \lp of $m_x$ is the same as the $\abds$ in pre-state of the next \lp belonging to $m_{x+1}$. Formally, $ \langle \forall m_x\in\mths{E^{H}}: \posteds{x}  = \preeds{x+1} \rangle$.
\end{lemma}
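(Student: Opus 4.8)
The plan is to prove this as the concurrent analogue of \lemref{seqe-post-pre}, with the crucial difference that in a concurrent execution there can be many events sitting between two consecutive \lp{s}, so I cannot simply appeal to the definition of sequential execution as was done there. Instead I would lean on \asmref{change-abds}, which guarantees that only \lp events can alter the contents of $\abds$.

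First I would fix the enumeration. As set up in \subsecref{csh}, the \mth{s} $m_1, m_2, \dots, m_n$ of $E^H$ are indexed by the total order of their \lp events, so that $\lpeds{x} <_{E^H} \lpeds{x+1}$, and by \asmref{lp-evt} each of these is a unique atomic event. The key observation is that $m_{x+1}$ is, by construction of this enumeration, the \mth whose \lp is the immediate successor of $\lpeds{x}$ in the \lp ordering; hence no \lp event of any \mth lies strictly between $\lpeds{x}$ and $\lpeds{x+1}$ in $<_{E^H}$.

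Next I would consider the (possibly empty) sequence of events of $E^H$ occurring strictly after $\lpeds{x}$ and strictly before $\lpeds{x+1}$. By the previous paragraph none of these is an \lp event. Applying \asmref{change-abds}, which states that only \lp events can change $\abds$, I conclude that none of these intervening events modifies the contents of $\abds$. Since $\posteds{x}$ is the global state immediately after $\lpeds{x}$ and $\preeds{x+1}$ is the global state immediately before $\lpeds{x+1}$, the $\abds$ component is carried unchanged across all the intervening events, yielding $\posteds{x} = \preeds{x+1}$.

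The argument is short, and the one thing to get right is the justification that no \lp sits strictly between the two consecutive ones; this is purely bookkeeping about the enumeration together with the uniqueness of \lp{s} from \asmref{lp-evt}, so I expect no real obstacle there. A subtlety worth stating explicitly is that $\posteds{x}$ and $\preeds{x+1}$ may well be \emph{different} global states, since the intervening read/write/rmw and \inv/\rsp events do change the global state; the claim is only about their $\abds$ projection, which is precisely what \asmref{change-abds} controls.
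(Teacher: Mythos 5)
Your proposal is correct and takes essentially the same route as the paper: the paper's proof is a one-line appeal to \asmref{change-abds}, noting that no event between $m_x$'s \lp and $m_{x+1}$'s \lp can change the \abds. Your version merely spells out the bookkeeping (the enumeration by \lp order and the absence of intervening \lp events) that the paper leaves implicit.
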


\begin{proof}
\cmnt{Let us prove by contradiction.\\
Assume that there exists a method $m_k$ which modifies the $\abds$ between method $m_i$ and $m_{i+1}$.\\}
From the assumption \ref{asm:change-abds}, we know that any event between the post-state of $m_i.LP$ and the pre-state of $m_{i+1}.LP$ will not change the $\abds$. Hence we get this lemma. 
\end{proof}

\cmnt {
\begin{assumption}
\label{asm:resp-lp1}
If for any method, the invocation and method response are the same in both concurrent and sequential history then the method LP is the same in both concurrent \& sequential histories as well. Formally, 
$\langle \forall m_x \in E^{H}, m_y \in E^{\spl{S}}: (\inveds{x} = \invmds{y}) \wedge (\reteds{x} = \retmds{y}) \Longrightarrow (\lpeds{x} = \lpmds{y}) \rangle$.
\end{assumption}

\asmref{resp-lp} holds true for the update methods of all variants of the concurrent set based list implementation. However, it does not hold for the wait-free contains (read-only) method.
}

Now, we describe a \cdse. This definition can be considered to be generic template. Based on the \cds involved, this has to be appropriately proved. 
\begin{tcolorbox}
\begin{definition}{\textbf{\cdse:}}
\label{def:pre-resp}
%Consider a concurrent history $H$ and a sequential history $\spl{S}$. Let $m_x, m_y$ be \upm{s} in $H$ and $\spl{S}$ respectively. If the \inv and the \rsp events of both $m_x, m_y$ are the same then the \lp{s} of both these \mth{s} are the same in $H$ \& $\spl{S}$. Formally, 
%$\langle \forall m_x \in \upms{E^{H}}, \forall m_y \in \upms{E^{\spl{S}}}: (\inveds{x} = \invmds{y}) \wedge (\reteds{x} = \retmds{y}) \Longrightarrow (\lpeds{x} = \lpmds{y}) \rangle$.

Consider a concurrent history $H$ and a sequential history $\spl{S}$. Let $m_x, m_y$ be \mth{s} in $H$ and $\spl{S}$ respectively. Suppose the following are true (1) The \abds in the pre-state of $m_x$'s \lp in $H$ is the same as the \abds in the pre-state of $m_y$ in $\spl{S}$;  (2) The \inv events of $m_x$ and $m_y$ are the same. Then (1) the $\rsp$ event of $m_x$ in $H$ must be same as $\rsp$ event of $m_y$ in $\spl{S}$; (2) The \abds in the post-state of $m_x$'s \lp in $H$ must be the same as the \abds in the post-state of $m_y$ in $\spl{S}$. Formally, $\langle \forall m_x \in \mths{E^{H}}, \forall m_y \in \mths{E^{\spl{S}}}: (\preeds{x} = \premds{y}) \wedge (\inveds{x} = \invmds{y}) \Longrightarrow (\posteds{x} = \postmds{y}) \wedge (\reteds{x} = \retmds{y}) \rangle$.
\end{definition}
\end{tcolorbox}
\begin{figure}[H]
\captionsetup{font=scriptsize}
	\centerline{\scalebox{0.5}{\input{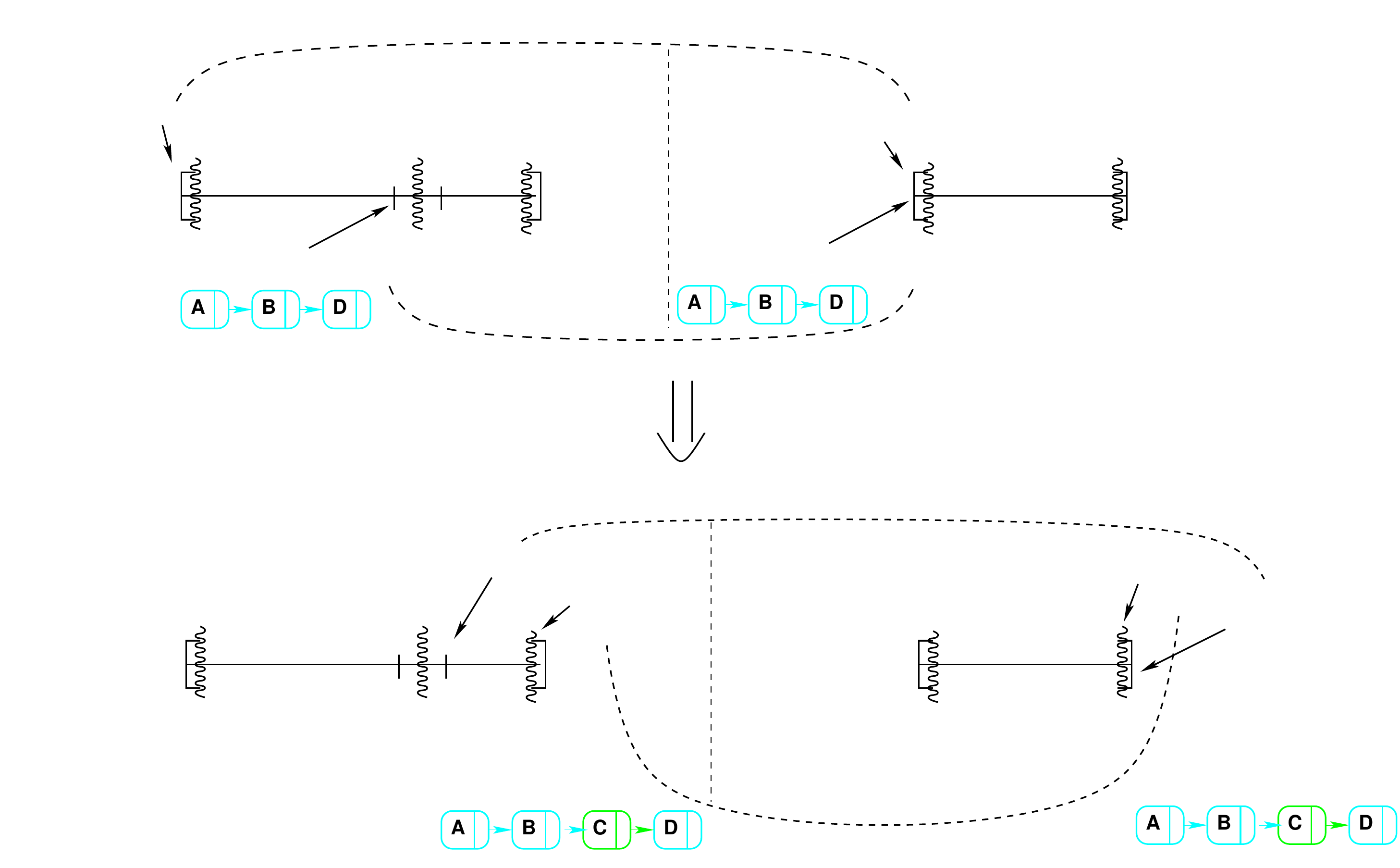_t}}}
	\caption{The pictorial representation of the \cdse over the lazy list. Assume \mth $add(C)$ executes over the initial list $A, B, D$. Figure (a) \& (b) represent the same  \inv and pre-state for $add(C)$ in concurrent and sequential execution respectively. Then for $add(C)$ execution to be correct its respective post-state and \rsp should be same in concurrent and sequential executions as depicted in Figure (c) \& (d). Note, wlog $add(C)$ is $m_x$ and $m_y$ in $E^H$ \& $E^S$ respectively. }
	\label{fig:aadda3}
\end{figure}

Readers familiar with the work of Zhu et. al \cite{Zhu+:Poling:CAV:2015} can see that \cdse is similar to Theorem 1 on showing \lbty of \cds $d$. In \subsecref{con-lazy-list} and in \subsecref{con-hoh} we prove \cdse specifically for \lazy and \hoh. 

\ignore{
\begin{lemma}
\label{lem:conce-seqe-mi}
Consider a concurrent history $H$ and a sequential history $\spl{S}$. Let $m_x, m_y$ be \mth{s} in $H$ and $\spl{S}$ respectively. Suppose the following are true (1) The \abds in the pre-state of $m_x$'s \lp in $H$ is the same as the \abds in the pre-state of $m_y$ in $\spl{S}$; (2) The \inv events of $m_x$ and $m_y$ are the same. Then the \abds in the post-state of the $m_x$'s \lp in $H$ must be same as the \abds in post-state of $m_y$ in $\spl{S}$. Formally, $\langle \forall m_x \in \mths{E^{H}}, \forall m_y \in \mths{E^{\spl{S}}}: (\preeds{x} = \premds{y}) \wedge (\inveds{x} = \invmds{y}) \Longrightarrow \\ 
(\posteds{x} = \postmds{y}) \rangle$.

\end{lemma}

%\ignore{
\begin{proof}
From the \asmref{change-abds} we know that there is only one event which can change \abds in a \mth which is its \lp. Given that the \abds{s} are the same in the pre-states and the \mth invocation are the same, from \lemref{pre-resp} we get that the response events for both $m_x, m_y$ must be the same. Since the invocation and responses are the same, we get that either both $m_x, m_y$ are both \upm{s} or non-\upm{s}.

Suppose $m_x, m_y$ are both \upm{s}. Then from \asmref{resp-lp}, we get that they have the same \lp event. Suppose in the pre-state of $m_x$'s \lp in $E^H$, let the contents of \abds be denoted as $\alpha$. On executing the \lp suppose the content of the \abds changes to $\beta$ in the post-state of $m_x$'s \lp. Now, let us consider the pre-state of $m_y$ in $E^{\spl{S}}$. In this state, the contents of \abds is $\alpha$. The only event in $m_y$ that changes \abds is its \lp (which is same as $m_x$'s \lp). Hence in the post-state of $m_y$, the state of \abds will also be $\beta$. 

Suppose $m_x, m_y$ are both \upm{s}. Then, there will not be any change in the \abds which continues to remain same in post-states. Hence proved. 
\end{proof}
%}
}
\ignore{
\begin{assumption}
Formally, $\langle (\preeds{x} = \premds{y}) \wedge (\inveds{x} = \invmds{y}) \wedge (\reteds{x} = \retmds{y}) \Longrightarrow (\posteds{x} = \postmds{y}) \rangle$.
\end{assumption}
}

Next, in the following lemmas we consider the \mth{s} of $H$ and $\sh{H}$. As observed in \subsecref{csh}, for any \mth $m_x$ in $\sh{H}$ there is a corresponding \mth $m_x$ in $H$ having the same \inv event, i.e., $H.m_x.inv = \sh{H}.m_x.inv$. We use this observation in the following lemma.

%Next, in the following lemmas we consider the \mth{s} of $H$ and $\sh{H}$. From the construction $\sh{H}$, as observed in \subsecref{csh} we get the following observation which we use in the lemmas later. 
%\begin{observation}
%\label{obs:x-inv}
%For any \mth $m_x$ in $\sh{H}$ there is a corresponding \mth $m_x$ in $H$ having the same \inv event, i.e., $H.m_x.inv = \sh{H}.m_x.inv$.
%\end{observation}

\begin{lemma}
\label{lem:conce-seqe-pre}
For any \mth $m_x$ in $H, \sh{H}$ the \abds in the pre-state of the \lp of $m_x$ in $H$ is the same as the \abds in the pre-state of $m_x$ in $\sh{H}$. Formally, $\langle \forall m_x \in \mths{E^H}, \mths{E^{\sh{H}}}: \preeds{x} = \prespmds{x} \rangle$.
%In every execution, we know that pre-state of \lp event must be same as the pre-state of the method in the sequential execution. Formally, $\langle \forall m_x \in E^H, E^{\sh{H}}: \preeds{x} = \prespmds{x} \rangle$.
\end{lemma}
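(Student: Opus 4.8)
The plan is to prove the claim by induction on the position $x$ of the \mth in the \lp ordering $m_1, m_2, \dots, m_n$ of $H$ --- the same order in which the single hypothetical thread invokes the \mth{s} while constructing $\sh{H}$ (see \subsecref{csh}). The inductive invariant is precisely the asserted equality $\preeds{x} = \prespmds{x}$.

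For the base case $x=1$, I would observe that $m_1.LP$ is the first \lp event occurring anywhere in $E^H$, so no \lp event precedes it. By \asmref{change-abds} only \lp events can modify the \abds, so the \abds in the pre-state of $m_1.LP$ still equals the initial \abds. On the sequential side, $m_1$ is the very first \mth invoked in $\sh{H}$, so its pre-state \abds is also the initial \abds; the two coincide.

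For the inductive step I would assume $\preeds{x} = \prespmds{x}$ and argue as follows. By the construction of $\sh{H}$ the \inv event of $m_x$ is identical in $H$ and $\sh{H}$, so both hypotheses of the \cdse (\defref{pre-resp}) --- equal pre-state \abds and equal \inv events --- hold when we instantiate that definition with $\spl{S} = \sh{H}$ and $m_y = m_x$. Its conclusion then gives $\posteds{x} = \postspmds{x}$ (the accompanying response equality is not needed here). Now I would chain three equalities: $\preeds{x+1} = \posteds{x}$ by \lemref{conce-post-pre}; $\posteds{x} = \postspmds{x}$ by the equivalence just applied; and $\postspmds{x} = \prespmds{x+1}$ by \lemref{seqe-post-pre} instantiated on the sequential history $\sh{H}$. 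Composing them yields $\preeds{x+1} = \prespmds{x+1}$, which advances the induction.

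I expect the main obstacle to be purely a matter of justifying the one substantive step --- the invocation of the \cdse --- rather than any hard calculation: everything else is bookkeeping via \lemref{conce-post-pre} and \lemref{seqe-post-pre}. Concretely, I would take care to verify that $m_x \in \mths{E^{\sh{H}}}$ legitimately plays the role of $m_y \in \mths{E^{\spl{S}}}$ with $\spl{S}=\sh{H}$, and that \lemref{seqe-post-pre} applies to $\sh{H}$ --- which it does, since $\sh{H}$ is a (legal) sequential history by the construction in \subsecref{csh}. It is this single \cds-independent application of the \cdse that carries the equal-pre-state invariant from index $x$ to index $x+1$.
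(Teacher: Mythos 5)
Your proposal is correct and follows essentially the same route as the paper: induction on the \lp ordering, with the inductive step obtained by applying the \cdse (\defref{pre-resp}) to get equality of post-states and then chaining through \lemref{conce-post-pre} and \lemref{seqe-post-pre} to reach the next pre-states. Your base case is slightly more explicit than the paper's (it names \asmref{change-abds} directly), but the argument is the same.
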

\begin{proof}
	We prove by Induction on events which are the linearization points of the methods,\\[0.1cm]
	\textbf{Base Step:} Before the $1^{st}$ $\lp$ event, the initial $\abds$ remains same because all the events in the concurrent execution before the $1^{st}$ $\lp$ do not change $\abds$.\\[0.3cm]
	\textbf{Induction Hypothesis}: Let us assume that for $k$ $\lp$ events, we know that,\\[0.2cm]  
	$\langle \preeds{k} = \prespmds{k} \rangle$. \\[0.3cm]
	\textbf{Induction Step:} We have to prove that: $\preeds{k+1} = \prespmds{k+1}$ holds true.\\[0.3cm]
	We know from Induction Hypothesis that for $k^{th}$ method,\\[0.3cm]
	$\preeds{k} = \prespmds{k}$\\[0.3cm]
	From the construction of $\sh{H}$, we get that $H.m_x.inv = \sh{H}.m_x.inv$. Combining this with \defref{pre-resp} we have,
	\begin{equation} 
	\begin{split}
	\label{lab:eq:conce-seqe-mi}
	(H.m_x.inv = \sh{H}.m_x.inv) \land  (\preeds{k} = \prespmds{k}) \\ \xRightarrow[]{\text{\defref{pre-resp}}}  (\posteds{k} = \postspmds{k})
	\end{split}
	\end{equation} 

	From the \lemref{seqe-post-pre}, we have,
	
	\begin{equation} 
	\label{lab:eq:seqe-post-pre}
	\postspmds{k} \xRightarrow[]{\text{Lemma \ref{lem:seqe-post-pre}}}  \prespmds{k+1} 
	\end{equation} 
	
	From the equation \ref{lab:eq:conce-seqe-mi} we have,
	\begin{equation} 
	\label{lab:eq:seqe-mi-post}
	\posteds{k} =  \postspmds{k}
	\end{equation} 
	
	By combining the equation \ref{lab:eq:seqe-mi-post} and \ref{lab:eq:seqe-post-pre} we have,
	
	\begin{equation} 
	\label{lab:eq:seqe-mi-pre}
	\posteds{k} =  \prespmds{k+1}
	\end{equation} 
	
	And from the Lemma \ref{lem:conce-post-pre} we have, 
	\begin{equation} 
	\label{lab:eq:conce-pre}
	\posteds{k} \xRightarrow[]{\text{Lemma \ref{lem:conce-post-pre}}}   \preeds{k+1}
	\end{equation}
	
	So, by combining equations \ref{lab:eq:conce-pre} and \ref{lab:eq:seqe-mi-pre} we get,
	
	\begin{equation} 
	\label{lab:eq:conce-seuq-pre}
	\preeds{k+1} =  \prespmds{k+1}
	\end{equation}
	
	%Finally, $\preeds{k+1} = \premds{k+1}$.\\[0.3cm]
	This holds for all $m_i$ in $E^H$. Hence the lemma. 	
\end{proof}

\begin{figure}[H]
	\centerline{\scalebox{0.5}{\input{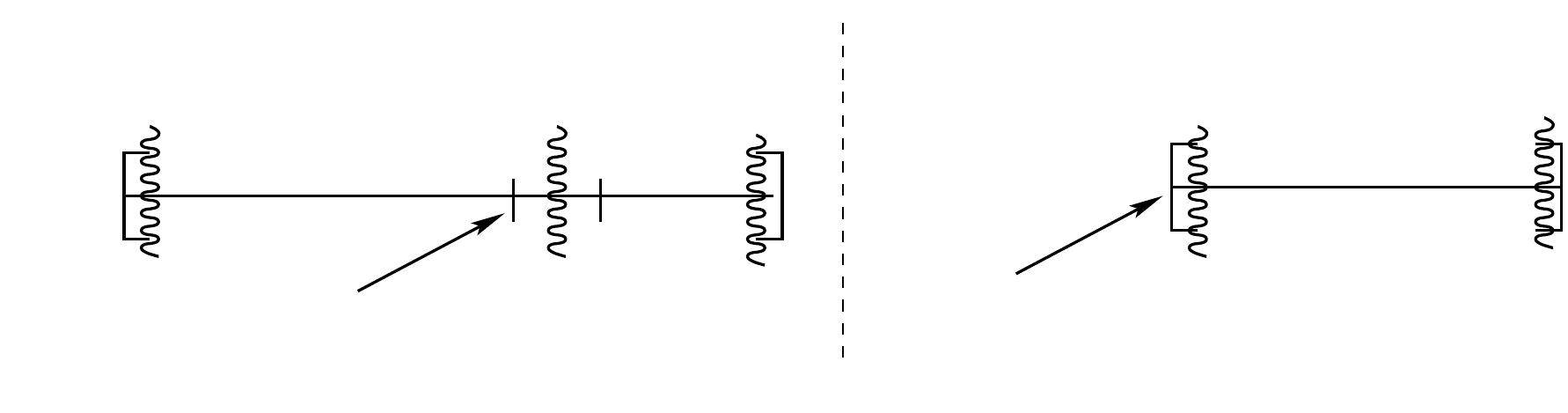_t}}}
	\caption{Pictorial representation of pre-state in $E^H$ and $E^{CS(H)}$}
	\label{fig:aadda1}
\end{figure}

\begin{lemma}
\label{lem:ret-lin}
The return values for all the \mth{s} in $H$ \& $\sh{H}$ are the same. Formally, $\langle \forall m_x \in \mths{E^H}, \mths{E^{\sh{H}}}: \reteds{x} = \retspmds{x} \rangle$.
\end{lemma}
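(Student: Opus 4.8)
The plan is to obtain this lemma as an almost immediate consequence of \lemref{conce-seqe-pre} and the \cdse (\defref{pre-resp}), applied one \mth at a time. I would fix an arbitrary \mth $m_x \in \mths{E^H}$ and pair it with its counterpart in $\sh{H}$ — namely the \mth having the same \inv. By the construction of $\sh{H}$ described in \subsecref{csh}, this counterpart exists and is invoked with identical parameters, so that $H.m_x.inv = \sh{H}.m_x.inv$.

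With this pairing in hand, I would verify that the two antecedents of \defref{pre-resp} hold when the generic sequential history $\spl{S}$ is instantiated to $\sh{H}$ and $m_y$ is taken to be the copy of $m_x$ in $\sh{H}$. The first antecedent is exactly the content of \lemref{conce-seqe-pre}: the \abds in the pre-state of $m_x$'s \lp in $H$ equals the \abds in the pre-state of $m_x$ in $\sh{H}$, i.e. $\preeds{x} = \prespmds{x}$. The second antecedent, equality of the \inv events, is the invocation matching just noted. Since the \cds $d$ is assumed to satisfy the \cdse, applying \defref{pre-resp} yields both of its conclusions, and in particular the equality of responses $\reteds{x} = \retspmds{x}$. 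As $m_x$ was arbitrary, this establishes $\langle \forall m_x: \reteds{x} = \retspmds{x} \rangle$, which is the claim.

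The only genuine content in this argument is the appeal to \lemref{conce-seqe-pre} to align the pre-states of the two executions; everything else is bookkeeping once one records that $H$ and $\sh{H}$ share \inv events by construction. I therefore do not expect a real obstacle here: the inductive alignment of pre-states has already been discharged in \lemref{conce-seqe-pre}, and the response-matching is precisely what \defref{pre-resp} delivers. The single point requiring care is that the \cdse is a per-\cds proof obligation rather than a theorem proved in general, so this derivation is valid exactly under the standing assumption that \defref{pre-resp} has been (or will be) established for the particular \cds $d$ — which is done for \lazy and \hoh in \secref{ds-proofs}. Combined with the invocation matching, this lemma gives $H \approx \sh{H}$, which together with \lemref{sh-rt} completes the linearizability argument.
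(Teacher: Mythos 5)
Your proposal is correct and is essentially the paper's own proof: both establish the two antecedents of \defref{pre-resp} via the construction of $\sh{H}$ (matching \inv events) and \lemref{conce-seqe-pre} (matching pre-states), then read off the response equality from the \cdse. The paper states this more tersely, but the argument and its dependencies are the same.
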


\begin{proof}
From the construction of $\sh{H}$, we get that for any \mth $m_x$ in $H$, $\sh{H}$ the invocation parameters are the same. From \lemref{conce-seqe-pre}, we get that the pre-states of all these \mth{s} are the same. Combining this result with \defref{pre-resp}, we get that the responses parameters for all these \mth{s} are also the same. 
\end{proof}

\begin{theorem}
\label{thm:hist-lin}
All histories ${H}$ generated by the \cds $d$ are \lble.
\end{theorem}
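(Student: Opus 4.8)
The plan is to show that an arbitrary history $H$ generated by $d$ satisfies all four conditions of the \lbty definition, using the constructed sequential history $\sh{H}$ as the witnessing sequential history. First I would dispose of incompleteness: if $H$ is not complete, the completion construction of \subsecref{csh} produces $\overline{H}$ together with the implication $(\overline{H} \text{ is \lble}) \implies (H \text{ is \lble})$, so it suffices to treat complete histories. Hence I assume $H$ is complete and form $\sh{H}$ by invoking the \mth{s} of $H$ with the same parameters, sequentially, in the order of their \lp{s} (which is well defined since each complete \mth has a unique atomic \lp by \asmref{lp-evt}).

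The core of the argument is to establish the equivalence $H \approx \sh{H}$. By construction the two histories share every \inv event, so all that remains is that they share every \rsp event, which is exactly the content of \lemref{ret-lin}. Consequently their event sets coincide, $\eevts{H} = \eevts{\sh{H}}$, i.e. $H \approx \sh{H}$. With equivalence in hand, the remaining conditions follow from facts already established: $\sh{H}$ is \legal by \asmref{seq-legal}, since it is a sequential history generated by $d$; $\sh{H}$ respects the real-time order of $H$ by \lemref{sh-rt}, whose hypothesis $H \approx \sh{H}$ has just been verified; and $\overline{H}$ supplies the required completion of $H$ that is equivalent to $\sh{H}$. Assembling these four pieces shows $H$ is \lble, and since $H$ was arbitrary, every history generated by $d$ is \lble.

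The theorem itself is therefore essentially an assembly of the preceding lemmas, and the real work has already been discharged upstream. The one point worth flagging is that \lemref{ret-lin}, and through it \lemref{conce-seqe-pre}, rests on \defref{pre-resp}, the \cdse, which is \emph{not} proved in the generic framework but must be verified separately for each concrete \cds. Thus the genuine obstacle lies not in this theorem but in establishing \defref{pre-resp} for a given structure, where the inductive matching of pre-states supplied by \lemref{conce-seqe-pre} (anchored at the initial \abds and propagated across consecutive \lp{s} via \lemref{conce-post-pre} and \lemref{seqe-post-pre}) does the heavy lifting, and where I would expect subtleties to surface, such as \lp{s} lying outside a \mth's own execution interval, as with the unsuccessful contains of \lazy. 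Within the present proof I would only take care to invoke the completion step before the equivalence argument, so that \lemref{ret-lin} and \lemref{sh-rt}, which are stated for complete histories, apply.
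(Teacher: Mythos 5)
Your proposal is correct and follows essentially the same route as the paper: it derives $H \approx \sh{H}$ from \lemref{ret-lin}, then combines this with \lemref{sh-rt} for the real-time order and \asmref{seq-legal} for legality. The only addition is your explicit handling of the completion step up front, which the paper disposes of earlier in \subsecref{csh} by assuming all histories are complete thereafter.
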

\vspace{-3mm}
\begin{proof}
From \lemref{ret-lin}, we get that for all the \mth{s} $m_x$, the responses in $H$ and $\sh{H}$ are the same. This implies that $H$ and $\sh{H}$ are equivalent to each other. Combining this with \lemref{sh-rt}, we get that $\sh{H}$ respects the real-time order of $H$. We had already observed from \asmref{seq-legal} that $\sh{H}$ is \legal. Hence $H$ is $\lble$. 
\end{proof}

\vspace{1mm}
\noindent
\textbf{Analysis of the Proof Technique:} \thmref{hist-lin} shows that proving \cdse (\defref{pre-resp}) implies that the \cds $d$ under consideration is linearizable. \cdse states that the contents of the \abds in the pre-state of the \lp event of a \mth $m_x$ should be the same as the result of sequential execution of the \mth{s} of $d$. Thus if the contents of \abds in the pre-state of the \lp event (satisfying the assumptions \ref{asm:lp-evt} \& \ref{asm:change-abds}) cannot be produced by some sequential execution of the \mth{s} of $d$ then it is most likely the case that either the \lp or the algorithm of the $d$ is incorrect. 

Further \cdse requires that after the execution of the \lp, the \abds in the post-state must again be same as the sequential execution of some \mth{s} of $d$ with the final \mth being $m_x$. If this is not the case, then it implies that some other events of the \mth are also modifying the \abds and hence indicating some error in the analysis. %If this is not true then it can be seen that most likely the \lp and possibly even the \cds design is incorrect. 

Extending this thought, we also believe that the intuition gained in proving \cdse for $d$ might give the programmers new insights in the working of the \cds which can result in designing new variants of it having some desirable properties. %while helping them identify the correct \lp{s}. 

\section{Data-Structure Specific Proofs}
\label{sec:ds-proofs}

In this section, we prove the proposed \cdse for \lazy \& \hoh as described in \defref{pre-resp} and therefore we show that they are linearizable. In the Section \ref{subsec:con-lazy-list} and \ref{subsec:con-hoh} we show this for the state of art \lazy (Algorithm \ref{alg:validate}-\ref{alg:contains}) \cite{Heller-PPL2007, MauriceNir} \& \hoh(Algorithm \ref{alg:hlocate}-\ref{alg:hremove}) \cite{Bayerai1977, MauriceNir}, respectively. This we achieve by showing that both the \cds satisfy the requirements of the \cdse. It is renamed as \llse{} or \hohse{} according to the \cds under consideration.

%In this section, we prove the \cdse described in the \defref{pre-resp} of \secref{gen-proof}. In the \subsecref{con-lazy-list}, we give the proof for the Algorithm \ref{alg:validate}-\ref{alg:contains} of lazy list satisfies the requirements of the ds-specific lemma \ref{lem:pre-resp} which implies that it is linearizable. Similarly, in the \subsecref{con-hoh}, we provide the proof outline for the Algorithm \ref{alg:hlocate}-\ref{alg:hremove} of hand-over-hand locking also satisfies the requirements of the ds-specific lemma  \ref{lem:pre-resp} and it is linearizable. 
%Table \ref{tab:struct:node}. The
\subsection{Lazy List}
\label{subsec:con-lazy-list}
%\subsubsection{Overview}
%\label{sec:con-lazy-list:overview}
In this section, we define the lazy list data structure. It is implemented as a set of nodes - concurrent set which is dynamically being modified by a fixed set of concurrent threads. In this setting, threads may perform insertion or deletion of nodes to the set. We describe lazy list based set algorithm based on Heller et al.\cite{Heller-PPL2007}. %We assume that all the nodes have unique identification key. 
%Table \ref{tab:struct:node}. The
This is a linked list of nodes of type \emph{Node} and it has four fields. The \emph{val} field is a unique identifier of the node. The nodes are sorted in order of the $val$ field. The $marked$ field is of type boolean which indicates whether that \textit{\node} is logically present in the list or not. The $next$ field is a reference to the next \textit{\node} in the list. The $lock$ field is for ensuring access to a shared \textit{\node} which happens in a mutually exclusive manner. We say a thread  acquires a lock and releases the lock  when it executes a \textit{lock.acquire()} and \textit{lock.release()} method call respectively. We assume the $next$ and $marked$ of the $\node$ are atomic. This ensures that operations on these variables happen atomically. In the context of a particular application, the node structure can be easily modified to carry useful data (like weights etc).

%\begin{table*}[th]
%\centering
%\scriptsize
%\noindent\begin{tabular}{ m{6cm} } 
\begin{tcolorbox}
\begin{verbatim}
class Node{
    int val;
    Node next;
    boolean marked;
    Lock lock;
    Node(int key){  
        val = key;
        marked = false;
        next = null;
        lock = new Lock();
    }
};
\end{verbatim}
\end{tcolorbox}
%\end{tabular}
%\caption{Structure of Node.}
%\label{tab:struct:node}
%\end{table*}
\ignore{
\noindent Given a global state $S$, we define a few structures and notations as follows: 
\begin{enumerate}
	\item We denote node, say $n$, as a $\node$ class object. 
    \item $\nodes{S}$ as a set of $\node$ class objects that have been created so far in $S$. Its structure is defined above. Each $\node$ object $n$ in $\nodes{S}$ is initialized with $key$, $next$ to $null$, $marked$ field initially set to $false$.
    
    \item $S.\head$ is a $\node$ class object (called sentinel head node), which is initialized with a val $-\infty$. This sentinel node is never removed from the list. Similarly, $S.\tail$ is a $\node$ class object (called sentinel tail  node), which is initialized with a val $+\infty$. This sentinel node is never removed from the list.
   
    \item The contents \& the status of a $\node$ $n$ keeps changing with different global states. For a global state $S$, we denote $S.n$ as its current state and the individual fields of $n$ in $S$ as $S.n.val, S.n.next, ...$ etc.  
    
\end{enumerate}
  }
 \noindent 
 \ignore{Having defined a few notions on $S$, we now define the notion of an abstract set, $\abs$ for a global state $S$ which we will use for guiding us in correctness of the \mth{s}.} 
 \ignore{
 \begin{definition}
\label{def:abs}
	$S.\abs \equiv \{n | (n \in \nodes{S}) \land (S.\head \rightarrow^* S.n) \land (\neg S.n.marked)\}$.
\end{definition}
\noindent This definition of $\abs$ captures the set of all nodes of $\abs$ for the global state $S$. It consists of all the $\node{s}$ that are reachable from $\head$ of the list and are not marked for deletion. 
}
\subsubsection{\textbf{Methods Exported \& Sequential Specification}}
\label{sec:con-lazylist:methods}
%\newline

\noindent \\ In this section, we describe the \mth{s} exported by the lazy list data structure.
\noindent 
\begin{enumerate}
\item The $\add(n)$ method adds a node $n$ to the list, returns $true$ if the node is not already present in the list else returns $false$. 

\item The $\rem(n)$ method removes a node $n$ from the list, if it is present and returns $true$. If the node is not present, it returns $false$. 
\item The $\con(n)$ returns $true$, if the list contains the node $n$; otherwise returns $false$.
\end{enumerate}

\begin{table}[!htbp]
 \centering
 \scriptsize
 \captionsetup{font=scriptsize}
  \captionof{table}{Sequential Specification of the Lazy list } \label{tab:seq-spe} 
\begin{tabular}{ |m{1.5cm}|m{1cm}|m{2.5cm}|m{2.5cm}| } 

\hline
  \textbf{Method} & \textbf{Return Value} & \textbf{Pre-state}($S$: Pre-State of the method)  & \textbf{Post-state}( $S'$: Post-State of the method)  \\ 
  \hline
	$\add(n)$  & $true$ &$ S: \langle n \notin S.\abs \rangle $ &$ S':\langle n \in S'.\abs \rangle$	 \\
\hline 
	$\add (n)$  & $false$ &$ S: \langle n \in S.\abs \rangle $ & $ S':\langle n \in S'.\abs \rangle$	
	\\
\hline 
	$\rem(n)$  & $true$ &$ S: \langle n \in S.\abs \rangle $ & $ S':\langle n \notin S'.\abs \rangle$	
	\\
	\hline 
$\rem(n)$  & $false$ &$ S: \langle n \notin S.\abs \rangle $ & $ S':\langle n \notin S'.\abs \rangle$
\\
\hline 
$\con(n)$  & $true$ &$ S: \langle n \in S.\abs \rangle$ &$ S':\langle n \in S'.\abs \rangle$	\\
	\hline 
	$\con(n)$  & $false$ &$ S: \langle n \notin S.\abs \rangle$ & $ S':\langle n \notin S'.\abs \rangle$	\\
	\hline 
\end{tabular}
\end{table}

  \noindent Table \ref{tab:seq-spe} shows the sequential specification of the lazy-list. As the name suggests, it shows the behaviour of the list when all the \mth{s} are invoked sequentially. The \textit{Pre-state} of each method is the shared state before $\inv$ event and the \textit{Post-state} is also the shared state just after the $\rsp$ event of a method (after executing it sequentially), as depicted in the Figure \ref{fig:exec/hist}.
 
 %We assume that a typical application invokes significantly more $\con$ \mth{s} than the update \mth{s} (\textit{\add, \rem}).
 
\subsubsection{\textbf{Working of Lazy List Methods}}
\label{sec:working-con-lazy-methods}
\vspace{1mm}
\noindent \\
In this section, we describe the implementation of the lazy list based set algorithm based on Heller et al.\cite{Heller-PPL2007} and the working of the various \mth{s}.\\[0.2cm]

\noindent \textbf{Notations used in PseudoCode:}\\[0.2cm]
$\downarrow$, $\uparrow$ denote input and output arguments to each method respectively. The shared memory is accessed only by invoking explicit \emph{read()} and \emph{write()} methods. The $flag$ is a local variable which returns the status of each operation. We use nodes $n_1$, $n_2$, $n$ to represent $\node$ references.

%\begin{flushleft}
%\begin{minipage}[t]{6cm}
%\null 
 \begin{algorithm}[!htb]
 \captionsetup{font=scriptsize}
	\caption{\valid Method: Takes two nodes, $n_1, n_2$, each of type $\node$ as input and validates for presence of nodes in the list and returns $true$ or $false$}
	\label{alg:validate}
	\begin{algorithmic}[1]
	\scriptsize
		\Procedure{\valid($n_1 \downarrow, n_2\downarrow, flag \uparrow$)}{}
	%	\begin{flushleft}
		\If {($read(n_1.marked) = false) \wedge (read(n_2.marked) = false) \wedge (read(n_1.next) = n_2$) )}
	%	\end{flushleft}
		\State {$flag$ $\gets$ $true$;}  %\hspace{.5cm} {   // validation successful} 
		\Else 
		\State {$flag$ $\gets$ $false$;} %\hspace{.5cm} {    // validation fails} 
		\EndIf
		\State {$return$; }
		\EndProcedure	
	    \algstore{valid}
	\end{algorithmic}
  \end{algorithm}
%\end{minipage}%
%\hspace{0.5cm}
%\flushright
%\begin{minipage}[t]{6cm}
%\null
 \begin{algorithm}[!htb]
   \captionsetup{font=scriptsize}
	\caption{\loct Method: Takes $key$ as input and returns the corresponding pair of neighboring $\node$ $\langle n_1, n_2 \rangle$. Initially $n_1$ and $n_2$ are set to $null$.}
	\label{alg:locate}
	\begin{algorithmic}[1]
	\scriptsize
	\algrestore{valid}
	\Procedure{\loct ($key\downarrow, n_1\uparrow, n_2\uparrow$)}{}
	\While{(true)}
	\State {$n_1 \gets read(\head)$;} \label{lin:loc3} 
	\State {$n_2 \gets read(n_1.next)$;} \label{lin:loc4}
		\While{$(read(n_2.val) < key)$} \label{lin:loc5}
		\State {$n_1 \gets n_2$;} \label{lin:loc6}
		\State {$n_2 \gets read(n_2.next)$;} \label{lin:loc7}
		\EndWhile
		\State {$lock.acquire(n_1)$;} \label{lin:loc9}
		\State {$lock.acquire(n_2)$;} \label{lin:loc10}
		\If{($\valid(n_1\downarrow, n_2\downarrow, flag\uparrow$))} \label{lin:loc11}% {/* If Validate returns true, then it returns $\langle n_1, n_2 \rangle$ */} 
		\State {return;} \label{lin:loc-ret}
		\Else
		\State {$lock.release(n_1)$;}
		\State {$lock.release(n_2)$;}
		\EndIf
		\EndWhile
		\EndProcedure
		\algstore{locate}
	\end{algorithmic}
  \end{algorithm}
%\end{minipage}
%\end{flushleft}
%\hspace{0.5cm}

%\begin{flushleft}
%\begin{minipage}[t]{6cm}
%\null
%\flushright
 \begin{algorithm}[!htb]
    \captionsetup{font=scriptsize}
	\caption{\add Method: $key$ gets added to the set if it is not already part of the set. Returns $true$ on successful add and returns $false$ otherwise.	}
	\label{alg:add}
	\begin{algorithmic}[1]
	\scriptsize
	\algrestore{locate}
		\Procedure{\add ($key\downarrow, flag \uparrow$)}{}
		\State {$\loct(key\downarrow, n_1 \uparrow, n_2\uparrow)$;}   \label{lin:add2}
		\If {$(read(n_2.val) \neq key$)} \label{lin:add3}
		\State {$write(n_3, \text{new \node}(key))$;} \label{lin:add4}
		\State {$write(n_3.next, n_2)$;} \label{lin:add5}
		\State {$write(n_1.next, n_3)$;} \label{lin:add6}
	%	\State {$lock.release(n_1)$;} \label{lin:add7}
	%	\State {$lock.release(n_2)$;} \label{lin:add8}
		\State {$flag$ $\gets$ $true$;}
		\Else
	%	\State {$lock.release(n_1)$;}
	%	\State {$lock.release(n_2)$;}
		\State {$flag$ $\gets$ $false$;}    
		\EndIf
	    \State {$lock.release(n_1)$;} \label{lin:add7}
		\State {$lock.release(n_2)$;} \label{lin:add8}
		\State{ $return$;}
		\EndProcedure
		\algstore{add}
	\end{algorithmic}
  \end{algorithm}
%\end{minipage}
%\hspace{0.5cm}
%\begin{minipage}[t]{6cm}
%\null 
 \begin{algorithm}[!htb]
  \captionsetup{font=scriptsize}
	\caption{Remove Method: $key$ gets removed from the set if it is already part of the set. Returns $true$ on successful remove otherwise returns $false$.}
	\label{alg:remove}
	\begin{algorithmic}[1]
	\scriptsize
	\algrestore{add}
		\Procedure{\rem ($key\downarrow, flag\uparrow$)}{}
		\State {$\loct(key\downarrow, n_1 \uparrow, n_2\uparrow);$}   \label{lin:rem2}   
		\If {$(read(n_2.val) = key)$} \label{lin:rem3} 
		\State {$write(n_2.marked, true)$;} \label{lin:rem4}
		\State {$write(n_1.next, n_2.next)$;} \label{lin:rem5}
	%	\State {$lock.release(n_1)$;}\label{lin:rem6}
	%	\State {$lock.release(n_2)$;}\label{lin:rem7}
		\State {$flag$ $\gets$ $true$;}\label{lin:rem8}
		\Else
	%	\State {$lock.release(n_1)$;}
	%	\State {$lock.release(n_2)$;}
	    \State {$flag$ $\gets$ $false$;}
		\EndIf
		\State {$lock.release(n_1)$;} \label{lin:rem6}
		\State {$lock.release(n_2)$;} \label{lin:rem7}
		\State {$return$;}
		\EndProcedure
		\algstore{rem}
	\end{algorithmic}
  \end{algorithm}
%\end{minipage}

%\begin{minipage}[t]{6cm}
%\null
 \begin{algorithm}[!htb]
    \captionsetup{font=scriptsize}
		\caption{\con Method: Returns $true$ if $key$ is part of the set and returns $false$ otherwise.}
	\label{alg:contains}
	\begin{algorithmic}[1]
	\scriptsize
	\algrestore{rem}
		\Procedure{\con ($key\downarrow, flag\uparrow$)}{}
		\State {$n \gets read(\head);$}  \label{lin:con2} 
		\While {$(read(n.val) < key)$} \label{lin:con3} 
		\State {$n \gets read(n.next);$} \label{lin:con4} 
		\EndWhile \label{lin:con5}
		\If {$(read(n.val) \neq key) \vee (read(n.marked))$} \label{lin:cont-tf} \label{lin:con6}
    	\State {$flag$ $\gets$ $false$;}
		\Else
		\State {$flag$ $\gets$ $true$;}
		\EndIf
		\State {$return$;}
		\EndProcedure
		\algstore{con}
	\end{algorithmic}
  \end{algorithm}
%\end{minipage}
%\end{flushleft}
%\flushright
%\hspace{0.5cm}

%--------------------------------------------------
\noindent
\subsubsection{\textbf{\emph{Working of the methods}}}
%\vspace{0.2cm}

\noindent \textbf{\\Working of the \add() method:}
%\noindent
When a thread wants to add a node $n$ to the list, it traverses the list from $\head$ without acquiring any locks until it finds a node with its key greater than or equal to $n$, say $ncurr$ and it's predecessor $\node$, say $npred$. It acquires locks on the nodes $npred$ and $ncurr$ itself. It validates to check if $ncurr$ is reachable from $npred$, and if both the nodes have not been deleted (marked). The algorithm maintains an invariant that all the unmarked nodes are reachable from $\head$. If the validation succeeds, the thread adds the $\node(key)$ between $npred$ and $ncurr$ in the list and returns true after unlocking the nodes. If it fails, the thread starts the traversal again after unlocking the locked nodes. This is described in Algorithm \ref{alg:add}.\\[0.2cm]
%---------------------------------------------------
\noindent \textbf{Working of the \rem() method:}
\noindent Each $\node$ of list has a boolean $marked$ field. The removal of a $\node$ $n$ happens in two steps: (1) The $\node$ $n$'s marked field is first set to $true$. This is referred to as logical removal. This ensures that if any node is being added or removed concurrently corresponding to that node, then $\add$ method will fail in the validation process after checking the marked field. (2) Then, the pointers are changed so that $n$ is removed from the list. This is referred to as physical deletion which involves changing the pointer of the predecessor of the marked node to its successor so that the deleted node is no longer reachable from the $\head$ in the list. To achieve this, $\rem(n)$ method proceeds similar to the $\add(n)$. The thread iterates through the list until it identifies the node $n$ to be deleted. Then after $n$ and its predecessor have been locked, logical removal occurs by setting the marked field to true. This is described in Algorithm \ref{alg:remove}. \\[0.2cm]
%-------------------------------------------------
%---------------------------------------------------
\noindent \textbf{Working of the \con() method:}
\noindent Method $\con(n)$ traverses the list without acquiring any locks. This method returns $true$ if the node it was searching for is present and unmarked in the list, otherwise returns $false$. This is described in Algorithm \ref{alg:contains}.

\subsubsection{\textbf{The Linearization Points of the Lazy list methods}}
\noindent \\ Here, we list the linearization points (\lp{s}) of each method. Note that each method of the list can return either $true$ or $false$. So, we define the $\lp$ for six methods:

\begin{enumerate}
\item $\add(key, true)$: $write(n_1.next, n_3)$ in Line \ref{lin:add6} of $\add$ method.
\item $\add(key, false)$: $read(n_2.val)$ in Line \ref{lin:add3} of $\add$ method. 
\item $\rem(key, true)$: $write(n_2.marked, true)$ in Line \ref{lin:rem4} of $\rem$ method.
\item $\rem(key, false)$: $read(n_2.val)$ in Line \ref{lin:rem3} of $\rem$ method.
\item $\con(key, true)$: $read(n.marked)$ in \lineref{cont-tf} of $\con$ method \label{step:contT}.
\item $\con(key, false)$: $\lp$ is the last among the following lines executed. There are three cases here: 
\begin{itemize}
	\item [(a)] $read(n.val) \neq key$ in \lineref{cont-tf} of $\con$ method is the $\lp$, in case of no concurrent $\add(key, true)$.
	\item [(b)] $read(n.marked)$ in \lineref{cont-tf} of $\con$ method is the $\lp$, in case of no concurrent $\add(key, true)$ (like the case of \stref{contT}).
    \item [(c)] in case of concurrent $\add(key, true)$ by another thread, we add a dummy event just before Line \ref{lin:add6} of $add(key, true)$. This dummy event is the $\lp$ of $\con$ method if: (i) if in the post-state of $read(n.val)$ event in \lineref{cont-tf} of \con method, $n.val \neq key$ and $write(n_1.next, n_3)$ (with $n_3.val = key$) in Line \ref{lin:add6} of \add method executes before this $read(n.val)$. (ii) if in the post-state of $read(n.marked)$ event in \lineref{cont-tf} of \con method, $n.marked = true$ and $write(n_1.next, n_3)$ (with $n_3.val = key$) in Line \ref{lin:add6} of \add method executes before this $read(n.marked)$. An example is illustrated in Figure \ref{fig:lpcase}.
	
\end{itemize}
\end{enumerate}

Another important point to consider is that the \mth $m_i$ in an execution can go through several possible \lp events before returning a value. We then assume that the final \lp event executed decides the return parameters of the \mth. Let us illustrate this again with the case of contains \mth of the \lazy \cds. Consider an execution $E^H$ having the contains \mth $m_i$ concurrently executing with $add(k, true)$ \mth. In this case, the \lp of $m_i$ depends on the \lp of $add(k, true)$ if $m_i$ returns false. Suppose $m_i$ executes the event, say $e_x$, that corresponds to the \lp of $contains(k, false)$. Then later, the contains \mth also executes the event, say $e_y$ corresponding to the \lp of $contains(k, true)$ which is reading of a shared memory variable $n.marked$ of node $n$. If $n.marked$ is false then the contains \mth $m_i$ returns true and $e_y$ is the \lp. Otherwise, $m_i$ returns false and $e_x$ is \lp. Thus $m_i$ executes both $e_x$ and $e_y$. Either of them can be the \lp depending on the system state. 
\cmnt{
\begin{figure}[H]
	\centerline{\scalebox{0.5}{\input{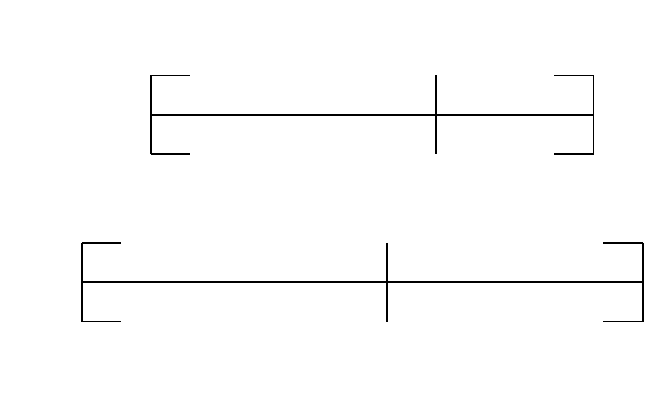_t}}}
	\caption{$Contains(7, false)$ LP is depending on $Add(7, true)$ method LP}
	\label{fig:aadda2}
\end{figure}
    }

\begin{figure*}[th]
\captionsetup{font=scriptsize}
	\centerline{\scalebox{0.55}{\input{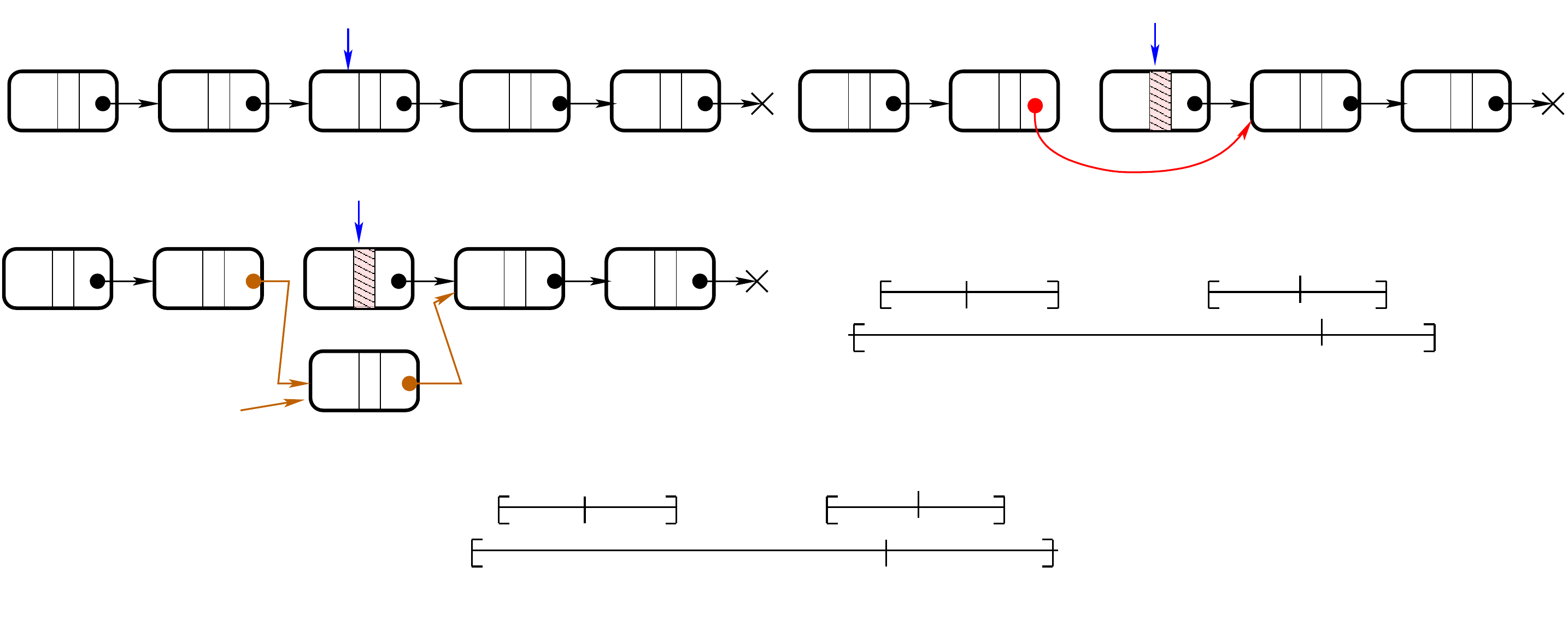_t}}}
	\caption{An illustration of a concurrent set based linked list where the LP of the \con method does not lie in the code of the method. (a) Thread $T_3$ begins executing $\con(7)$ by traversing the list until it finds a node with key greater than or equal to 7 (Line \ref{lin:con6}). At the same time, thread $T_2$ starts the process of deletion of node $7$. (b) depicts that $T_2$ successfully performs deletion of $7$. (c) After this, Thread $T_1$ tries to add a new node with key $7$ and upon not encountering it in the list already; adds it successfully. Here thread $T_3$ has become slow and is still pointing to the deleted node $7$. It now executes Line \ref{lin:con6} and returns false; even though the node with key $7$ is present in the list, thus resulting in a illegal sequentialisation. The correct LP order is obtained by linearising \con just before the LP of the \add method. (d) shows the correct sequential history: $T_2.\rem(7, true) <_H T_3.\con(7, false) <_H T_1.\add(7, true)$.}
	\label{fig:lpcase}
\end{figure*}

\subsubsection{\textbf{Proof of Concurrent Lazy Linked List}}
\noindent \\ In this subsection, we describe the lemmas to prove the correctness of the concurrent lazy list structure. We say a node $n$ is a \emph{public} node if it has a incoming link, which makes it reachable from the head of the linked list. We assume that \emph{\head} and \emph{\tail} node are \emph{public} nodes.

%and they neither change the key value node remove from the list.
\begin{observation}
\label{obs:node-forever}
	Consider a global state $S$ which has a node $n$. Then in any future state $S'$ of $S$, $n$ is node in $S'$ as well. Formally, $\langle \forall S, S': (n \in \nodes{S}) \land (S \sqsubset S') \Rightarrow (n \in \nodes{S'}) \rangle$. 
\end{observation}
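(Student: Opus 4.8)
The plan is to prove this by induction on the number of atomic events separating $S$ from its future state $S'$ in the underlying execution, relying on the single fact that the implementation never deallocates a node. Intuitively, $\nodes{S}$ is the set of all \hnode{} objects that have been \emph{created} up to the global state $S$, and inspecting Algorithms~\ref{alg:validate}--\ref{alg:contains} shows that the only event type that ever alters this set is the allocation $write(n_3, \text{new \node}(key))$ in \lineref{add4} of \add, which can only \emph{add} a new object. Since nothing in the code frees, recycles, or otherwise destroys a node, the set $\nodes{\cdot}$ is monotonically non-decreasing as the execution proceeds, and the observation is immediate.

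Concretely, I would fix $S, S'$ with $n \in \nodes{S}$ and $S \sqsubset S'$, and let $e_1, e_2, \dots, e_k$ be the totally ordered events of $E^H$ that carry the system from $S$ to $S'$, with intermediate states $S = S_0, S_1, \dots, S_k = S'$ where $S_{j}$ is the post-state of $e_j$. The base case $k = 0$ gives $S = S'$ and is trivial. For the inductive step, assuming $n \in \nodes{S_{j}}$, I would argue $n \in \nodes{S_{j+1}}$ by a case analysis on the type of $e_{j+1}$: the $\inv$ and $\rsp$ events do not touch the shared memory contents and hence leave $\nodes{\cdot}$ unchanged; a $read$ event leaves it unchanged; and every $write$/rmw event either leaves $\nodes{\cdot}$ unchanged or (only in the case of the allocation in \lineref{add4}) enlarges it by one fresh node, in neither case discarding $n$.

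The only case that deserves explicit attention — and what I would flag as the sole (minor) obstacle — is the \rem{} method, which is superficially the natural candidate for ``removing'' a node. Here I would point out that the logical removal $write(n_2.marked, true)$ (\lineref{rem4}) and the physical removal $write(n_1.next, n_2.next)$ (\lineref{rem5}) only modify the $marked$ flag and the predecessor's $next$ pointer, respectively; neither deletes the object $n_2$ from $\nodes{\cdot}$, so the removed node merely becomes unreachable from \head{} while persisting as an allocated object. Thus no event decreases $\nodes{\cdot}$, the inductive step goes through for all event types, and by induction $n \in \nodes{S_k} = \nodes{S'}$, establishing $\langle \forall S, S': (n \in \nodes{S}) \land (S \sqsubset S') \Rightarrow (n \in \nodes{S'}) \rangle$. $\hfill\Box$
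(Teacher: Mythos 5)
Your proof is correct: the paper itself offers no proof of this observation, merely following it with the remark that ``nodes once created do not get deleted (ignoring garbage collection),'' and your induction over the events between $S$ and $S'$ — with the key points that only \lineref{add4} of \add allocates and that \rem only flips the $marked$ flag and redirects the predecessor's $next$ pointer without deallocating — is exactly a formalization of that remark. Nothing is missing; you have simply written out in full the justification the paper leaves implicit.
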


\noindent With this observation, we assume that nodes once created do not get deleted (ignoring garbage collection). 

\begin{observation}	\label{obs:node-val}
Consider a global state $S$ which has a node $n$ and it is initialized to $n.val$.
	\begin{enumerate}[label=\ref{obs:node-val}.\arabic*]
	    \item \label{obs:node-val-future}
	     Then in any future state $S'$, where node $n$ exists, the value of $n$ does not change. Formally, $\langle \forall S, S': (n \in \nodes{S}) \land (S \sqsubset S') \land (n \in \nodes{S'})  \Rightarrow (S.n.val = S'.n.val) \rangle$. 
	    \item \label{obs:node-val-past}
	    Then in any past state $S''$, where node $n$ existed, the value of $n$ was the same. Formally, $\langle \forall S, S'': (n \in \nodes{S}) \land (S'' \sqsubset S) \land (n \in \nodes{S''}) \Rightarrow (S.n.val = S''.n.val) \rangle$. 
	\end{enumerate}
\end{observation}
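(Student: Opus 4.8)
The plan is to establish that the $val$ field of a node is \emph{write-once}: it is assigned exactly once, at the instant the node is constructed, and is never modified by any subsequent event. The first step is to locate the unique event that writes $n.val$. Inspecting the \node constructor, the field $val$ is set to $key$ only when a fresh node is allocated, which in the code occurs solely at \lineref{add4} of the \add method; no other line of any method creates a node.

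Next I would enumerate every $write$ event across all the methods (\valid, \loct, \add, \rem, \con) and verify that none of them assigns to the $val$ field of an already-existing node. The only writes are to $next$ pointers (Lines \ref{lin:add5}, \ref{lin:add6}, \ref{lin:rem5}) and to the $marked$ flag (\lineref{rem4}); every other access to $val$ is a $read$. Hence, once a node $n$ belongs to $\nodes{S}$ --- that is, once it has been constructed --- no later event can change the contents of $S.n.val$.

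With this immutability in hand, both parts follow directly. For \ref{obs:node-val-future}, given $n \in \nodes{S}$, $S \sqsubset S'$, and $n \in \nodes{S'}$, the node already exists in $S$, so its construction is at or before $S$; since no event strictly between $S$ and $S'$ writes $val$, we obtain $S.n.val = S'.n.val$. For \ref{obs:node-val-past}, I would invoke the future case with the two states interchanged: treating $S''$ as the earlier state and $S$ as the later one (valid because $S'' \sqsubset S$, and by \obsref{node-forever} the node $n$ persists in every intermediate state), \ref{obs:node-val-future} yields $S''.n.val = S.n.val$, which is exactly the claimed equality.

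The argument is an invariant-by-inspection, so the one real obstacle is exhaustiveness: I must be certain the enumeration of write events is complete and that the atomicity stipulations (stated for $next$ and $marked$) conceal no implicit update to $val$. Once the write-once property is pinned down, each half of the observation is a one-line consequence.
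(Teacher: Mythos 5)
Your argument is correct: the paper states this as an \emph{Observation} with no accompanying proof, and your write-once analysis of the $val$ field (set only in the \node{} constructor at node creation, with every other write in \add, \rem, \loct, \valid, and \con touching only $next$ or $marked$) is precisely the code-inspection justification the paper leaves implicit. Your enumeration of write events is exhaustive for the given pseudocode (the only other node creations are the sentinel \head{} and \tail, which are likewise initialized once), and deriving the past-state case \ref{obs:node-val-past} from the future-state case \ref{obs:node-val-future} by swapping the roles of the two states is sound.
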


\begin{observation}
	\label{lem:node-mark}
	Consider a global state $S$ which has a node $n$ and it is marked. Then in any future state $S'$ the node $n$ stays marked. Formally, $\langle \forall S, S': (n \in \nodes{S}) \land (S.n.marked) \land (S \sqsubset S') \Rightarrow (S'.n.marked) \rangle$. 
\end{observation}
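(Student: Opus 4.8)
The plan is to prove this by identifying which atomic events in the lazy-list implementation can modify the $marked$ field of a node, and then showing that such events are monotone: they can only change $marked$ from $false$ to $true$, never the reverse. First I would scan every method exported by the structure --- \valid, \loct, \add, \rem, and \con (Algorithms \ref{alg:validate}--\ref{alg:contains}) --- for writes to the $marked$ field of any node. The constructor initialises $marked$ to $false$, but once a node exists the only $write$ to a $marked$ field occurs at \lineref{rem4} of \rem, namely $write(n_2.marked, true)$, which stores the constant value $true$. Every other access to $marked$ (for instance the $read$ at \lineref{cont-tf} of \con and the read inside \valid) only reads the field and leaves it unchanged.

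With this inventory in hand, I would argue by induction on the atomic events of the execution lying between $S$ and $S'$. Since $S \sqsubset S'$, a finite totally ordered sequence of atomic events carries the global state from $S$ to $S'$, and I take the induction to be on the length of this sequence. The base case $S = S'$ is immediate. For the inductive step, suppose the invariant holds at some intermediate state $\widehat{S}$ with $\widehat{S}.n.marked = true$, and let $e$ be the next event, producing the successor state. Either $e$ is not a write to $n.marked$, in which case the value of $n.marked$ is untouched and stays $true$; or $e$ writes to $n.marked$, which by the code inspection above can only be the event $write(n.marked, true)$, again leaving the field equal to $true$. Using the assumption that $marked$ is atomic, each such write is a single indivisible event, so in both cases $n.marked$ remains $true$, completing the induction.

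The step carrying the real content is the exhaustive code inspection showing that \lineref{rem4} is the unique site writing the $marked$ field and that it writes $true$; the monotonicity argument then follows mechanically, so I do not expect a genuine obstacle here. The one subtlety worth flagging is the implicit appeal to \obsref{node-forever}: it guarantees that $n$ remains a node in $\nodes{\widehat{S}}$ at every intermediate state, so that the expression ``$n.marked$'' is well defined throughout the induction.
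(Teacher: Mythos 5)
Your proof is correct. The paper states this result as an Observation and supplies no proof of its own, and your argument --- identifying $write(n_2.marked, true)$ at \lineref{rem4} of the \rem{} method as the unique write to the $marked$ field of an already-existing node and then inducting on the events between $S$ and $S'$ --- is exactly the monotonicity justification the authors implicitly rely on, with your appeals to \obsref{node-forever} (well-definedness of $n.marked$ at intermediate states) and to the fact that the constructor runs only once per node correctly closing the only subtle points.
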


\begin{observation}
	\label{lem:node-marknext}
Consider a global state $S$ which has a node $n$ which is marked. Then in any future state $S'$, $n.next$ remains unchanged. Formally, $\langle \forall S, S': (n \in \nodes{S}) \land (S.n.marked) \land (S \sqsubset S') \Longrightarrow (S'.n.next = S.n.next) \rangle$. 
\end{observation}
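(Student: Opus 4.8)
The plan is to argue by contradiction, pinpointing exactly which atomic events in the algorithm can modify a node's $next$ field and showing that none of them can fire on a node that is already marked. First I would scan the pseudocode (Algorithms~\ref{alg:validate}--\ref{alg:contains}) for every write to a $next$ pointer. There are exactly three: $write(n_3.next, n_2)$ at \lineref{add5} and $write(n_1.next, n_3)$ at \lineref{add6} in \add, and $write(n_1.next, n_2.next)$ at \lineref{rem5} in \rem. The write at \lineref{add5} targets the freshly allocated node $n_3$ created at \lineref{add4}, which is unmarked by its constructor, so it never applies to a node that is already marked. Hence the only events that can change the $next$ field of a pre-existing node are \lineref{add6} and \lineref{rem5}, and in both the node whose $next$ changes plays the role of the predecessor $n_1$.

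Next I would establish the locking invariant that guards these two writes. In both \add and \rem the pair $\langle n_1, n_2 \rangle$ is obtained from \loct, which acquires $lock(n_1)$ and $lock(n_2)$ (\lineref{loc9}, \lineref{loc10}) and then calls \valid (\lineref{loc11}); \loct returns the pair (\lineref{loc-ret}) only when \valid succeeds, and \valid succeeds only if $read(n_1.marked) = false$. Crucially, the executing thread holds $lock(n_1)$ continuously from the successful validation through the subsequent write to $n_1.next$, since the locks are released only afterwards (at \lineref{add7}/\lineref{add8} and \lineref{rem6}/\lineref{rem7}) with no intervening release. Moreover, the only event that marks a node is $write(n_2.marked, true)$ at \lineref{rem4}, which a thread executes only while holding $lock(n_2)$, i.e.\ while holding the lock on the node it marks.

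With these two facts I would close the argument. Suppose $n$ is marked in $S$ and, toward a contradiction, some event $e$ occurring after $S$ (and no later than $S'$) changes $n.next$. By the first paragraph, $e$ is a \lineref{add6} or \lineref{rem5} write with $n$ in the role of $n_1$, so the thread performing $e$ holds $lock(n)$ from a preceding successful \valid (which read $n.marked = false$) up to and including $e$. Since marking $n$ also requires $lock(n)$, no thread can mark $n$ during this held-lock interval, and therefore $n$ is still unmarked at the instant of $e$. But $n$ is marked in $S$ and $e$ occurs after $S$, so by Observation~\ref{lem:node-mark} (a marked node stays marked) $n$ is marked at the instant of $e$ as well --- a contradiction. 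Hence no such $e$ exists and $S'.n.next = S.n.next$.

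The main obstacle will be stating the held-lock interval cleanly: I must argue that the validation and the guarded $next$-write occur under one uninterrupted ownership of $lock(n_1)$, so that the ``unmarked at validation'' fact can be transported to ``unmarked at the write'', and then combined with the monotonicity of the marked bit (Observation~\ref{lem:node-mark}) to reach the contradiction. Everything else is a routine case analysis over the three $next$-writes identified above.
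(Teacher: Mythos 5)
The paper states this as an \emph{Observation} and, as with its other observations, supplies no proof at all, so there is no in-paper argument to compare against. Your proposal is correct and fills that gap: the exhaustive enumeration of the three $next$-writes (\lineref{add5}, \lineref{add6}, \lineref{rem5}), the dismissal of \lineref{add5} because $n_3$ is a fresh unpublished node, and the lock-interval argument transporting ``unmarked at \valid'' to ``unmarked at the write'' before colliding with the monotonicity of the marked bit (Observation~\ref{lem:node-mark}) is exactly the style of reasoning the paper itself deploys for the neighbouring results (e.g.\ Lemma~\ref{lem:n2-next-marked} and Lemma~\ref{lem:val-change} induct over the same three $next$-changing events and invoke the same locking discipline from Observation~\ref{obs:locate}). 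One small point worth making explicit in a final write-up: in the \rem case the thread itself does mark a node (at \lineref{rem4}) inside the held-lock interval, but the node it marks is $n_2$ while the node whose $next$ it rewrites is $n_1$, and these are distinct since $n_1.next = n_2$; your argument implicitly relies on this and it deserves a sentence.
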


 \begin{definition}
\label{def:abs}
	$S.\abs \equiv \{n | (n \in \nodes{S}) \land (S.\head \rightarrow^* S.n) \land (\neg S.n.marked)\}$.
\end{definition}

\noindent This definition of $\abs$ captures the set of all nodes of $\abs$ for the global state $S$. It consists of all the $\node{s}$ that are reachable from $\head$ of the list (\emph{public}) and are not marked for deletion. 

\begin{observation}	\label{obs:locate}
 Consider a global state $S$ which is the post-state of return event of the method $\loct(key)$ invoked in the $\add$ or $\rem$ methods. Suppose the $\loct$ method returns $\langle n_1, n_2\rangle$. Then in the state $S$, we have, 
 \begin{enumerate}[label=\ref{obs:locate}.\arabic*]
    \item \label{obs:locate1}
    $\langle (n_1, n_2) \in \nodes{S} \rangle$. 
    \item \label{obs:locate2}
    $\langle (S.lock.acquire(n_1) = true) \wedge (S.lock.acquire(n_2) = true) \rangle$
 %	\item ($S.n_1.locked$ = $true$) and ($S.n_2.locked$ = $true$) 
    \item \label{obs:locate3}
    $\langle S.n_1.next = S.n_2 \rangle$ 
      \item \label{obs:locate4}
    $\langle \neg(S.n_1.marked) \land \neg(S.n_2.marked) \rangle$ 
\end{enumerate} 
\end{observation}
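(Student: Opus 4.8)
The plan is to read the four claims directly off the control flow of the \loct method (Algorithm~\ref{alg:locate}) together with the success condition of \valid (Algorithm~\ref{alg:validate}), and then to argue that every property verified at the moment of validation is \emph{preserved} up to the return-state $S$ because the executing thread holds the relevant locks throughout. The key structural fact I would establish first is a \emph{lock-discipline invariant}: in the entire lazy-list code, a write to $n.next$ or to $n.marked$ is performed only by a thread currently holding $lock(n)$. This is verified by inspecting the only writers of these fields, namely \lineref{add6} ($write(n_1.next, n_3)$) of \add, and \lineref{rem4} ($write(n_2.marked, true)$) and \lineref{rem5} ($write(n_1.next, n_2.next)$) of \rem; in each case the writing thread acquired the corresponding lock inside its own \loct call (\lineref{loc9}, \lineref{loc10}) and has not yet released it.

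Given this invariant, I would proceed claim by claim. For \obsref{locate}.1, the references $n_1$ and $n_2$ are obtained by reading \head and following $next$ pointers, so both denote nodes that already exist when they are read; by \obsref{node-forever} they remain in $\nodes{S}$ at the return-state $S$. For \obsref{locate}.2, observe that on the path reaching the return at \lineref{loc-ret} the thread executed $lock.acquire(n_1)$ and $lock.acquire(n_2)$ (\lineref{loc9}, \lineref{loc10}), and that the only $lock.release$ calls inside \loct lie on the \emph{else} branch taken when \valid fails; since the return is reached only when \valid succeeds (\lineref{loc11}), neither lock has been released, so both are held in $S$.

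For \obsref{locate}.3 and \obsref{locate}.4 I would use that \valid returned $true$, which by Algorithm~\ref{alg:validate} means that at the instants of its reads it observed $n_1.marked = false$, $n_2.marked = false$, and $n_1.next = n_2$. It then remains to show these values are unchanged at $S$. Because the thread already held $lock(n_1)$ and $lock(n_2)$ before invoking \valid and holds them continuously through the return (by \obsref{locate}.2), and because by the lock-discipline invariant no other thread can write $n_1.next$, $n_1.marked$, or $n_2.marked$ without these locks, no concurrent modification of these fields can occur in the interval from the \valid reads to $S$; the executing thread itself performs no such writes in this interval. Hence $S.n_1.next = S.n_2$ and $\neg(S.n_1.marked) \wedge \neg(S.n_2.marked)$.

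The main obstacle is precisely the non-atomicity of \valid together with the gap between validation and the return event: a naive reading would worry that a concurrent \add or \rem invalidates the checked conditions before $S$ is reached. The whole weight of the argument therefore rests on justifying the lock-discipline invariant and on the continuity of lock ownership, which is why I single out the lock-protection property as the first thing to nail down; once it is in place, claims~3 and~4 reduce to simple ``frozen-field'' arguments.
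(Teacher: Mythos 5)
The paper offers no proof of \obsref{locate} at all: it is stated as an unproved observation and then used as an axiom in \lemref{loc-ret} and later lemmas. Your argument therefore supplies a justification the paper simply omits, and it is essentially correct: the control-flow argument for claims 1 and 2 (the return at \lineref{loc-ret} is reached only on the success branch of \valid, after the acquires at \lineref{loc9}--\ref{lin:loc10} and before any release), and the ``frozen-field'' argument for claims 3 and 4 (the values observed by \valid persist to $S$ because the observing thread holds both locks across that interval and every writer of $next$/$marked$ holds the corresponding lock) is exactly the right way to discharge the worry about the gap between validation and return. One small repair is needed in your lock-discipline invariant as literally stated: \lineref{add5} of \add writes $n_3.next$ for a freshly allocated node $n_3$ that is not locked, so the claim ``every write to $n.next$ is performed while holding $lock(n)$'' is false as phrased. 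The invariant should be restricted to \emph{public} (reachable) nodes, or equivalently one should observe that a private, just-created node cannot be the $n_1$ or $n_2$ returned by \loct (those are reached by a traversal from \head), so the offending write cannot interfere with the fields being protected. With that qualification the proof is complete and, if anything, more careful than the paper, which also leaves the analogous freshness fact implicit.
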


\begin{lemma}
\label{lem:loc-ret}
Consider the global state $S$ which is the post-state of return event of the method $\loct(key)$ invoked in the $\add$ or $\rem$ methods. Say, the $\loct$ method returns $(n_1, n_2)$. Then in the state $S$, we have that $(S.n_1.val < key \leq S.n_2.val)$.
%for all nodes whose $next \neq null$.
\end{lemma}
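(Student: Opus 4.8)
The plan is to argue directly from the control flow of the $\loct$ method (Algorithm \ref{alg:locate}) together with the immutability of node values established in \obsref{node-val}. Since $\loct$ returns only from \lineref{loc-ret}, inside a \emph{successful} iteration of the outer \textbf{while} loop, I would fix that final iteration and trace how the local references $n_1$ and $n_2$ acquire their values within it. The two inequalities $S.n_1.val < key$ and $key \leq S.n_2.val$ will each be read off from the guard of the inner traversal loop at \lineref{loc5}, and then transported to the return state $S$ using the fact that a node's $val$ field never changes once set.

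For the bound $key \leq S.n_2.val$, observe that the inner \textbf{while} loop at \lineref{loc5} terminates precisely when its guard $read(n_2.val) < key$ fails, i.e.\ when the thread reads $n_2.val \geq key$ for the node currently referenced by $n_2$. This node is exactly the one returned in the pair $(n_1, n_2)$, and by \obsref{node-val-future} its value is unchanged between that read and the return state, so $key \leq S.n_2.val$. For the bound $S.n_1.val < key$, I would split on whether the inner loop body ever executes. If it does not, then $n_1$ still holds the reference read from $\head$ at \lineref{loc3}, whose value is $-\infty < key$. If the body executes at least once, then in its last execution the guard held ($read(n_2.val) < key$) immediately before the assignment $n_1 \gets n_2$ at \lineref{loc6}; hence $n_1$ is assigned a node whose value was read to be strictly less than $key$, and again \obsref{node-val-future} keeps this value stable up to $S$. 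In either case $S.n_1.val < key$, giving the claim.

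The main point requiring care is that the reads of $n_1.val$ and $n_2.val$ during the traversal occur at states strictly earlier than $S$, while concurrent threads may be inserting or marking nodes in the interim; a priori one might worry that the inequalities read off during the walk no longer hold at the return state. The lemma is nonetheless immediate because \obsref{node-val} guarantees that the $val$ field is effectively write-once, so the strict and non-strict inequalities established at the traversal reads persist verbatim in $S$. Notably, I would not need the locking, adjacency, or unmarkedness facts of \obsref{locate} for this argument: those are relevant to the structural claims $S.n_1.next = S.n_2$ and $\neg(S.n_i.marked)$, whereas the ordering of the keys follows purely from the loop guard and value immutability.
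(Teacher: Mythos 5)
Your proof is correct and follows essentially the same route as the paper's: trace the final iteration of the traversal loop, read $key \leq n_2.val$ off the failed inner-loop guard and $n_1.val < key$ off the last successful guard (or the $-\infty$ sentinel), and transport both to the return state via the value-immutability observation. If anything, your derivation of $S.n_1.val < key$ directly from the guard at the last execution of \lineref{loc6} is slightly cleaner than the paper's phrasing, which only records $n_1.val < n_2.val$ and $n_2.val \geq key$ at that point.
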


\begin{proof}
Line \ref{lin:loc3} of $\loct$ method initialises $S.n_1$ to $\head$ and $S.n_2$ $=$ $S.n_1.next$ by Line \ref{lin:loc4}. The last time Line \ref{lin:loc6} in the while loop was executed, we know that $S.n_1.val$ $<$ $S.n_2.val$. The value of node does not change, from \obsref{node-val}. So, before execution of Line \ref{lin:loc9}, we know that $S.n_2.val$ $\geq$ $key$ and $S.n_1.val$ $<$ $S.n_2.val$. These nodes $\in$ $\nodes{S}$ and $S.n_1.val < key \leq S.n_2.val$. Also, putting together Observation \ref{obs:locate2}, \ref{obs:locate3} and \ref{obs:node-val} that node $n_1$ and $n_2$ are locked (do not change), hence, the lemma holds when $\loct$ returns. 
\end{proof}

\begin{observation}
\label{obs:n-loct-marked}
Consider a global state $S$ which has a node $n$ that is marked. Then there will surely be some previous state $S'$ ($S'\sqsubset S$) such that $S'$ is the state after return of \loct($n.val$) method.
\end{observation}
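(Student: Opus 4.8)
The plan is to trace the marked status of $n$ backward to the unique event that could have set it, and then to the \loct call that must have preceded that event inside the same \rem invocation.

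First I would pin down the only event that can turn on a marked field. By the \node constructor every node is created with $marked$ set to $false$, and inspecting the pseudocode (Algorithms~\ref{alg:validate}--\ref{alg:contains}) the sole statement that writes $true$ into a $marked$ field is $write(n_2.marked, true)$ at \lineref{rem4} of the \rem method. Since $n$ is born unmarked yet is marked in $S$, and since a marked node stays marked in every future state (Observation~\ref{lem:node-mark}), there is a well-defined earliest state $S_m$ in which $n$ becomes marked, and the atomic event producing $S_m$ is an execution of \lineref{rem4} whose local variable $n_2$ is bound to the node $n$. In particular $S_m \sqsubseteq S$.

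Next I would locate the \loct call within that \rem invocation. The method opens (\lineref{rem2}) by invoking \loct($key \downarrow, n_1 \uparrow, n_2 \uparrow$), and \lineref{rem4} is reached only after the guard $read(n_2.val) = key$ at \lineref{rem3} evaluates to $true$. This \loct call always returns (it is total, looping until \valid succeeds), so it has a post-state $S'$ of its return event. Because $n_2$ is bound to $n$ throughout this execution, the guard yields $n.val = key$ in the state where \lineref{rem3} is evaluated; by the value-invariance of nodes (\obsref{node-val}) the value of $n$ never changes, so the key supplied to \loct is exactly $n.val$. Hence $S'$ is the post-state of a \loct($n.val$) call, as required by the statement.

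Finally I would establish the ordering. Within a single thread the events of one \rem invocation are totally ordered, so the return of \loct (following \lineref{rem2}) precedes the marking write (\lineref{rem4}); therefore $S' \sqsubset S_m \sqsubseteq S$, giving $S' \sqsubset S$. The main obstacle I anticipate is purely the bookkeeping of tying the locally-scoped \rem variable $n_2$ to the globally-identified node $n$ and then concluding that the \loct argument equals $n.val$; this rests entirely on \lineref{rem4} being the unique writer of the $marked$ field together with the value-invariance observation, after which the rest is a routine appeal to the per-thread total order of events and the ``stays marked'' observation.
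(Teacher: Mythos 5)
Your argument is correct. The paper states this as an \emph{observation} and supplies no proof at all, so there is nothing to compare against; your reasoning --- that \lineref{rem4} of \rem is the unique writer of the $marked$ field, that reaching it requires the guard at \lineref{rem3} to bind $key$ to $n.val$, and that the \loct return at \lineref{rem2} therefore precedes the marking event in the thread's program order --- is exactly the justification the paper leaves implicit. One minor remark: you do not actually need the totality of \loct; the fact that the marking event occurred already implies the earlier \loct call returned.
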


\begin{observation}
\label{obs:unmarked-marked-lock}
Consider the global state $S$ which has a node n. If S.n is unmarked and S.n.next is marked, then n and n.next are surely locked in the state S.
%for all nodes whose $next \neq null$.
\end{observation}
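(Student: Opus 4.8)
The plan is to exploit the removal protocol of the \rem method: a node can acquire its \emph{marked} flag only at \lineref{rem4}, and the removing thread holds locks on that node \emph{and} on its (unmarked) predecessor from the return of \loct up until it swings the predecessor's \emph{next} pointer past the node at \lineref{rem5}. So I would trace backwards from $S$ to the event that marked $m := S.n.next$ and argue that $S$ must still lie inside this locked window, with $n$ being exactly that predecessor.

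Concretely, first I would invoke \obsref{n-loct-marked}: since $m$ is marked in $S$, there is an earlier state $S'$ that is the post-state of the return of $\loct(m.val)$ in the \rem method that marked $m$. Let $T$ be that removing thread and $p$ the predecessor it obtained. By \obsref{locate} (parts \ref{obs:locate2}, \ref{obs:locate3}, \ref{obs:locate4}), at $S'$ we have that $p$ and $m$ are locked by $T$, that $p.next = m$, and that both are unmarked. Thread $T$ then executes \lineref{rem4} (marking $m$), \lineref{rem5} (writing $p.next \gets m.next$), and only afterwards releases the locks at Lines~\ref{lin:rem6}--\ref{lin:rem7}. Hence the locks on $p$ and $m$ are held exactly over the interval from the return of \loct through \lineref{rem5}.

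The core of the argument is to establish the invariant that in \emph{every} state an unmarked node references a marked node only through this locked window: if $n$ is unmarked and $n.next = m$ with $m$ marked, then $n = p$ and $T$ has not yet executed \lineref{rem5}, so both $n$ and $m$ are locked. I would prove this by induction over the events of the execution, checking the only events that can create such a configuration: the pointer writes at \lineref{add5} and \lineref{add6} (which always link to a freshly created or \loct-validated \emph{unmarked} node, so they never point an unmarked node at a marked one), the marking at \lineref{rem4} (handled by the window above, using that $p$ is the unique unmarked predecessor of $m$), and the pointer swing at \lineref{rem5}. Throughout I would use Observation~\ref{lem:node-marknext} (a marked node's \emph{next} is frozen) and the freezing of $val$ fields to keep the predecessor relation stable.

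The main obstacle is the \lineref{rem5} case: when $T$ sets $p.next \gets m.next$ and $m.next$ itself happens to be marked, this newly makes the unmarked $p$ reference a marked node, and I must show that $p$ (still locked by $T$) and $m.next$ are both locked. The delicate point is ruling out that $m.next$ was marked by a concurrent remover while $m$, its predecessor, was locked by $T$: any thread removing $m.next$ would, by \obsref{locate}, have to lock $m$ as its predecessor, which is impossible while $T$ holds it; and by the induction hypothesis any pre-existing unmarked-to-marked link out of $m$ would already force $m$ to be locked by that remover, contradicting $T$'s ownership. Resolving this interleaving cleanly, and thereby proving that the removing predecessor is the \emph{unique} unmarked node that can reference a marked node, is where the real work lies.
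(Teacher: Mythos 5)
The paper states this as a bare observation and supplies no proof at all, so there is nothing to compare your argument against step by step; what you have written is, in effect, the first actual proof of the claim. Your route --- (i) locate the marking event of $m = n.next$ via \obsref{n-loct-marked}, (ii) use \obsref{locate} to obtain the removing thread's locked window from the return of \loct through \lineref{rem5}, and (iii) induct over the pointer-writing and marking events to show that an unmarked-to-marked link can only ever be the $p \rightarrow m$ link inside that window --- is sound, and the case analysis over Lines~\ref{lin:add5}, \ref{lin:add6}, \ref{lin:rem4} and \ref{lin:rem5} is exhaustive because marking is monotone (\obsref{node-forever}, Observation~\ref{lem:node-mark}). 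The two pieces you defer can both be discharged with material already in the paper: the uniqueness of the unmarked predecessor $p$ at the marking event follows from \lemref{mark-reach} (every unmarked node lies on the chain from \head) together with \lemref{val-change} (that chain is strictly sorted), which force any unmarked node whose $next$ field equals $m$ to be the immediate physical predecessor that \loct returned and locked; and the \lineref{rem5} case is exactly \lemref{n2-next-marked} extended through the window, which your lock argument yields, since both marking $m.next$ and rewriting $m.next$ require first acquiring $m$'s lock, held by $T$. One point worth making explicit if you write this up: the invariant you actually carry through the induction is strictly stronger than the stated observation (it identifies $n$ as the remover's predecessor and names the lock holder), and that strengthening is what makes the induction close --- the observation as literally stated is not inductive on its own, because at \lineref{rem5} the node $m$ is already marked and the bare statement then says nothing about the link $m \rightarrow m.next$.
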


\begin{lemma}
\label{lem:n2-next-marked}
 Consider a global state $S$ which is the post-state of return event of the $\loct(key)$ method (invoked by the $\add$ or $\rem$ methods). Say, the $\loct$ method returns $\langle n_1, n_2\rangle$. Then in the state $S$, we have that the successor node of $n_2$ (if it exists) is unmarked i.e. $\neg(S.n_2.next.marked)$.
\end{lemma}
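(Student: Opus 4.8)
The plan is to argue by contradiction, using mutual exclusion of the per-node locks. Suppose the successor $n_3 := S.n_2.next$ exists and is marked. By \obsref{locate4} the node $n_2$ is unmarked in $S$, so in state $S$ we have an unmarked node $n_2$ whose $next$ field points to the marked node $n_3$; \obsref{unmarked-marked-lock} then forces both $n_2$ and $n_3$ to be locked in $S$. By \obsref{locate2}, $n_2$ is locked by the thread $T$ that invoked this \loct call, and this lock is held continuously up to the return state $S$. In particular no other thread can alter $n_2.next$, so the equality $n_2.next = n_3$ is stable throughout $T$'s hold.

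First I would pin down which thread holds the lock on $n_3$ and what it is doing. Since $n_3$ is marked while $n_1, n_2$ are unmarked (\obsref{locate4}), $n_3$ is distinct from both $n_1$ and $n_2$, so $T$ does not hold $n_3$'s lock and some other thread $T' \neq T$ does. A node is only ever locked inside a \loct, \add, or \rem call, and the \valid method accepts a located pair only when both nodes are unmarked; hence $T'$ cannot have acquired $n_3$'s lock after $n_3$ became marked. Therefore $T'$ locked $n_3$ while it was still unmarked, passed validation, and $n_3$ became marked only while $T'$ held the lock. Because the marking write (\lineref{rem4}) is executed only by a thread holding that node's lock, $T'$ is the thread that marked $n_3$: it is executing \rem with target $n_3$, holding locks on $n_3$ and on its located predecessor $pred'$, where by \obsref{locate3} applied to $T'$'s \loct we have $pred'.next = n_3$ at $T'$'s return, and this persists in $S$ as long as $T'$ has not yet performed the physical unlink in \lineref{rem5}.

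The contradiction is obtained by showing $pred' = n_2$. Both $pred'$ and $n_2$ are unmarked nodes whose $next$ field equals $n_3$ in state $S$. The structural fact I would invoke is that any node has at most one unmarked predecessor: the only events installing $x.next = n_3$ for an already-existing $n_3$ are the redirection writes $write(pred.next,\cdot)$ of \rem (an \add always installs a freshly created, hence unique, successor), and such a redirection supplants the node being removed, which is by then marked. Consequently at most one unmarked node can point to $n_3$, forcing $pred' = n_2$. But then $T$ and $T'$ both hold the lock on $n_2$, contradicting mutual exclusion. Hence $n_3$ cannot be marked, i.e. $\neg(S.n_2.next.marked)$.

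The main obstacle is exactly this uniqueness-of-unmarked-predecessor step used to identify $pred'$ with $n_2$; everything else is bookkeeping about which thread holds which lock. I would discharge it as a separate invariant proved by induction over the pointer-modifying writes (the \add writes and \lineref{rem5}), showing that each such event preserves the property that every node has at most one unmarked node pointing to it: the \add case is immediate since the installed successor is brand new, and the \rem case holds because the redirected predecessor replaces an already-marked node. As a fallback that avoids stating a global invariant, one can instead track the timing of $T'$'s unlink: if $T'$ had already executed \lineref{rem5} then $pred'.next = n_3.next \neq n_3$, so $n_2$ is the only node pointing to $n_3$ in $S$, and the fact that $n_2.next$ points to the marked $n_3$ must have been installed by a \rem-style redirection that held $n_2$'s lock, which is impossible while $T$ holds it.
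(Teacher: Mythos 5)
Your proof is correct in substance and rests on the same core mechanism as the paper's: whoever marked $n_2.next$ must, at the moment of marking, have held the lock on the located predecessor of that node, and identifying that predecessor with $n_2$ collides with the lock your thread $T$ already holds. The difference is one of explicitness. The paper's proof (framed as an induction on the return events of \loct) simply asserts, via \obsref{n-loct-marked} and \obsref{locate}, that the removing thread ``must have acquired lock on $n_2$ and $n_2.next$'' --- i.e., it takes for granted that the remover's \loct returns precisely the pair $(n_2, n_2.next)$ --- and concludes at once. You isolate exactly this identification ($pred' = n_2$) as the load-bearing step and propose to discharge it with a uniqueness-of-unmarked-predecessor invariant proved by induction over the pointer-modifying writes; you also handle the timing case split (whether the remover has already executed \lineref{rem5}) that the paper leaves implicit. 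This buys a more honest proof at the cost of one additional invariant. One wrinkle to repair when you state that invariant: between \lineref{add5} and \lineref{add6} of \add, both $n_1$ and the freshly created $n_3$ are unmarked and point to $n_2$, so ``every node has at most one unmarked predecessor'' is transiently false for not-yet-public nodes. Restrict the invariant to predecessors reachable from \head (equivalently, to nodes that \loct can actually return); since the fresh node is unreachable until \lineref{add6} executes, the restricted invariant still forces $pred' = n_2$ and the contradiction goes through.
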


\begin{proof}
 We prove the lemma by using induction on the return events of the $\loct$ method in $E^H$.\\[0.1cm]
\textbf{Base condition:} Initially, before the first return of the \loct, we know that ($\head.key$ $<$ $\tail.key$) and $\head.next$ is $\tail$ and $\tail.marked$ is set to $false$ and $(\head, \tail)$ $\in$ $\nodes{S}$. In this case, locate will return $\langle \head, \tail \rangle$ such that the successor of \tail does not exist.\\[0.1cm] 
\noindent \textbf{Induction Hypothesis:} Say, upto the first $k$ return events of \loct, the successor of $n_2$ (if it exists) is unmarked. \\[0.1cm]
\noindent \textbf{Induction Step:} So, by the observing the code, the $(k+1)^{st}$ event which can be the return of the \loct method can only be at \lineref{loc-ret}.\\
We prove by contradiction. Suppose when thread $T_1$ returns $\langle n_1, n_2 \rangle$ after invoking \emph{\loct} method in state $S$, $n_2.next$ is $marked$. By Observation \ref{obs:locate}, it is known that, $(n_1, n_2) \in \nodes{S}$, $n_1,n_2$ are locked, $n_1.next = n_2$ and $(n_1, n_2)$ are unmarked. Suppose another thread say $T_2$ is trying to remove the node $n_2.next$. From the \obsref{n-loct-marked}, it needs to invoke the \emph{\loct} method. Again, we know from the \obsref{locate} that when \emph{\loct} method returns, it must have acquired lock on $n_2$ and $n_2.next$. However, since $n_2$ is already locked, it cannot proceed until $T_1$ has released its lock on $n_2$. Hence the node $n_2.next$ cannot be marked. This contradicts our initial assumption. 
\end{proof}

\begin{observation}
\label{obs:consec-next}
Consider a global state $S$ which has two non-consecutive nodes $n_p$, $n_q$ where $n_p$ is unmarked and $n_q$ is marked. Then we have that in any future state $S'$, $n_p$ cannot point to $n_q$. Formally, $\langle \neg (S.n_p.marked) \land (S.n_q.marked) \land (S.n_p.next \neq n_q) \land (S \sqsubset S') \Longrightarrow (S'.n_p.next \neq S'.n_q) \rangle$. 
\end{observation}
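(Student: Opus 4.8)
The plan is to argue by contradiction, isolating the first atomic event after $S$ that could create the forbidden pointer. By our model only \emph{write} events on a $next$ field alter the shared state relevant here, and by hypothesis $S.n_p.next \neq n_q$ while we assume $S'.n_p.next = n_q$ for some $S'$ with $S \sqsubset S'$. Because all events of $E^H$ are totally ordered, there is then a unique first write event $e$ after $S$ that sets $n_p.next = n_q$. It suffices to show that no such $e$ can exist.

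First I would enumerate every event in the algorithms that writes a $next$ field: \lineref{add5} ($write(n_3.next, n_2)$) and \lineref{add6} ($write(n_1.next, n_3)$) in \add, and \lineref{rem5} ($write(n_1.next, n_2.next)$) in \rem. Since $e$ installs the value $n_q$ as $n_p.next$, in whichever of these roles it plays the successor being installed must equal $n_q$. The key claim I would establish is that in each of these three writes the installed successor is \emph{unmarked at the instant of the write}. For \lineref{add6} the installed node is the freshly allocated $n_3$, which is not yet reachable from any node (it gains its first incoming link only at \lineref{add6}), so no thread could have marked it. For \lineref{add5} the installed node is $n_2$, which \obsref{locate} certifies unmarked and locked when \loct returns; as the lock on $n_2$ is held continuously through \lineref{add5}, no concurrent \rem can mark it in between. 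For \lineref{rem5} the installed node is $n_2.next$, which \lemref{n2-next-marked} certifies unmarked when \loct returns, and since $n_2$ stays locked up to \lineref{rem5}, no concurrent \add or \rem can change or mark $n_2.next$ meanwhile.

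The contradiction then follows by combining this with Observation~\ref{lem:node-mark}: $n_q$ is marked in $S$, and since $e$ occurs after $S$ (so $S \sqsubset \poste{e}$), $n_q$ is still marked at the instant of $e$. But we just argued that the node installed by $e$, namely $n_q$, is unmarked at that very instant. This is impossible, so no first offending write $e$ exists, and hence $n_p.next \neq n_q$ persists in every future state $S'$.

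I expect the main obstacle to be the \rem case at \lineref{rem5}, because there the installed successor is an already-existing node $n_2.next$ rather than a fresh one, so its unmarkedness is not automatic: it rests on \lemref{n2-next-marked} together with a locking argument showing that $n_2.next$ neither changes nor becomes marked between \loct's return and \lineref{rem5}. The two \add cases are comparatively easy, since their installed successors are either brand new or certified unmarked-and-locked by \obsref{locate}. I would also remark that the hypothesis $S.n_p.next \neq n_q$ is exactly what licenses speaking of a \emph{first} offending write after $S$, whereas the unmarkedness of $n_p$ is not actually needed for this argument—had $n_p$ been marked, Observation~\ref{lem:node-marknext} would freeze $n_p.next$ and deliver the conclusion at once.
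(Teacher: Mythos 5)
Your argument is correct, and it is worth noting that the paper never actually proves this statement: it is asserted as an Observation with no proof in the compiled text (the only attempt is a muddled sketch left commented out in the source). Your proposal genuinely discharges it. The structure --- reduce to the non-existence of a first write event after $S$ that installs $n_q$ as $n_p.next$, exhaust the three $next$-field writes (\lineref{add5}, \lineref{add6}, \lineref{rem5}), show each installs a successor that is unmarked at the instant of the write, and contradict the permanence of $n_q$'s mark via Observation~\ref{lem:node-mark} --- is sound, and you correctly isolate \lineref{rem5} as the only delicate case: there the unmarkedness of the installed node $n_2.next$ rests on \lemref{n2-next-marked} together with the fact that marking a node requires first locking it and its predecessor (\obsref{n-loct-marked} with \obsref{locate2}), which the paper itself only asserts observationally, so you assume nothing beyond what the paper uses elsewhere. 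The paper's discarded sketch instead organizes the argument around the history of $n_q$: when $n_q$ was marked, its unique predecessor was locked by the remover, which excised $n_q$ before unlocking, so no fresh incoming link to $n_q$ is ever created afterwards. That is the same underlying invariant (no write ever installs a marked node as a successor) viewed from $n_q$'s side rather than from the side of the candidate writes; your version has the advantage of being mechanically checkable against the code's write events. Your closing remark that the hypothesis $\neg S.n_p.marked$ is dispensable is also correct --- your argument proves a strictly stronger statement than the one the paper records.
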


\begin{lemma}
\label{lem:next-next-unmarked}
In any global state $S$, consider three nodes p, q \& r such that p.next = q and q.next = r and only q is marked. Then in a future state $S'$ $(S \sqsubset S')$ where p.next = q and p is still unmarked, r will surely be unmarked.
\end{lemma}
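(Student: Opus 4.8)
The plan is to argue by contradiction: assume that in the future state $S'$ the node $r$ is marked, even though $p.next = q$ and $p$ is unmarked in $S'$. First I would pin down the fate of the pointer $q.next$. Since $q$ is marked in $S$, Observation~\ref{lem:node-marknext} gives $S^{*}.q.next = r$ for every $S^{*} \sqsupseteq S$; in particular $q$ keeps pointing to $r$ throughout the interval $[S,S']$. Thus the only way $r$ could become marked is through a concurrent $\rem(r.val)$ that succeeds in logically deleting it (only Line~\ref{lin:rem4} sets a marked bit).

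Next I would establish that the configuration $p \to q$ is stable across $[S,S']$. In both $S$ and $S'$ we have $p$ unmarked with $p.next = q$ and $q$ marked, so Observation~\ref{obs:unmarked-marked-lock} tells us that $p$ and $q$ are locked in these states, and indeed the lock on $p$ is held by the thread $T_q$ that is in the middle of removing $q$: it has performed the logical deletion at Line~\ref{lin:rem4} and is about to execute the physical unlink $write(n_1.next, n_2.next)$ at Line~\ref{lin:rem5}, which would set $p.next = q.next = r$. Because $T_q$ holds $p$'s lock, no other thread can modify $p.next$, and the only change $T_q$ itself can make is to redirect $p.next$ to $r$; moreover $p.next$ can never return to $q$ afterwards, since relinking an existing \emph{marked} node as a successor is impossible ($\add$ only inserts freshly created nodes at Line~\ref{lin:add4}, and $\rem$ only redirects past a successor). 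Hence, from $p.next = q$ in $S$ and in $S'$, I conclude that $p.next = q$ holds continuously on $[S,S']$, so $q$ remains the node linking to $r$ along the live chain for the whole interval.

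I would then show that $\rem(r.val)$ cannot succeed during $[S,S']$. Any such method first runs $\loct(r.val)$, which returns a pair $\langle n_1, n_2 \rangle$ with $n_2 = r$, $n_1.next = r$, and, by the validation at Line~\ref{lin:loc11} together with Observation~\ref{obs:locate}, with $n_1$ \emph{unmarked}. The crux is to argue that the only node linking to $r$ during this interval is $q$, which is marked: an $\add$ creating a new unmarked predecessor of $r$ would itself need $\loct$ to return $r$ preceded by an unmarked node already pointing to $r$, and a $\rem$ creating one would have to physically unlink $q$ and thereby change $p.next$ to $r$ --- both are ruled out by the stability established above, since $\loct(r.val)$ can then only return the pair $\langle q, r \rangle$, whose validation fails because $q.marked = true$. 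Consequently no $\rem(r.val)$ can pass validation and reach Line~\ref{lin:rem4}, so $r$ cannot be marked on $[S,S']$, contradicting the assumption; therefore $r$ is unmarked in $S'$.

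The main obstacle is precisely this uniqueness-of-predecessor claim: showing that while $q$ (marked) points to $r$ and has not yet been physically unlinked, no \emph{unmarked} node can simultaneously point to $r$. I expect to discharge it by an induction on the return events of $\loct$, mirroring the proof of Lemma~\ref{lem:n2-next-marked}, combined with Observation~\ref{obs:consec-next} to forbid an unmarked node from newly pointing into the marked-and-frozen region at $q$, and with the locking invariant of Observation~\ref{obs:unmarked-marked-lock} to block any competing thread from editing the locked links $p \to q \to r$.
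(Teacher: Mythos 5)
Your argument is sound and ends at the same contradiction as the paper, but it takes a genuinely more laborious route and, in doing so, actually fills a gap the paper glosses over. The paper's proof is two sentences: it observes that $q$ must have been marked by some \rem whose \loct returned $q$ as its second component, and then applies \lemref{n2-next-marked} to conclude that $q$'s successor $r$ was unmarked at that return event --- declaring this a contradiction. As written, that only establishes that $r$ was unmarked at an instant that \emph{precedes} $S$ (the return of the \loct that led to $q$'s marking), not at $S'$; the temporal step ``and nothing can mark $r$ between then and $S'$'' is left implicit. That step is precisely what you supply: you use \obsref{unmarked-marked-lock} and the hypothesis $p.next=q$ at both endpoints to argue that the remover of $q$ still holds the locks on $p$ and $q$ throughout $[S,S']$, so the physical unlink at \lineref{rem5} has not yet occurred, no unmarked predecessor of $r$ ever exists in the interval, and hence no $\rem(r.val)$ can pass \valid{} and reach the marking write at \lineref{rem4}. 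The one obligation you leave open --- that a marked $q$ with $q.next=r$ and no completed unlink implies $r$ has no unmarked predecessor --- is real but dischargeable without a fresh induction on \loct{} returns: by \lemref{mark-reach} every unmarked node is reachable from \head{}, and by \lemref{val-change} the reachable portion is a simple, strictly sorted path, so $r$ has at most one reachable predecessor, which is $q$ (reachable through the unmarked $p$) and is marked. Your version costs more machinery but is the argument one would actually need to certify the lemma for an arbitrary future state $S'$; the paper's version buys brevity by leaning on \lemref{n2-next-marked} and leaving the interval argument to the reader.
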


\begin{proof}
We prove the lemma by contradiction. Suppose in state $S'$, node $r$ is marked and $p.next = q$ and $q.next = r$. From Observation \ref{lem:node-mark}, we know that $q$ will remain marked. From the \obsref{n-loct-marked} we know that any node is marked only after invoking the \emph{\loct} method. Say, the node $q$ was marked by the thread $T_1$ by invoking the \emph{\rem} method. As we know from the \lemref{n2-next-marked} that when \emph{$T_1.$\loct} returns $\langle q, q.next = r \rangle$, the successor of $q$ (i.e. $r$) is unmarked, which contradicts our intial assumption. Hence the lemma holds.
\end{proof}

\begin{lemma}
\label{lem:val-change}
For any node $n$ in a global state $S$, we have that $\langle \forall n \in \nodes{S} \land n.next \neq null: S.n.val < S.n.next.val \rangle$. 
\end{lemma}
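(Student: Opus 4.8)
The plan is to prove this sorted-order invariant by induction on the totally ordered atomic events of the execution $E^H$, establishing that each event preserves the property $P(S) \equiv \langle \forall n \in \nodes{S} \land n.next \neq null: S.n.val < S.n.next.val \rangle$. First I would scan the pseudocode to isolate the only events that can falsify $P$: an event can break it only by creating a node or by writing to some node's $next$ field. Node creation at \lineref{add4} sets the new node's $next$ to $null$ through the constructor, so the guard $n.next \neq null$ fails for the fresh node and $P$ is vacuously preserved for it. Reads, lock operations, and the $marked$-write at \lineref{rem4} touch neither $next$ nor $val$. Hence the whole argument reduces to the three $next$-writes: $write(n_3.next, n_2)$ at \lineref{add5}, $write(n_1.next, n_3)$ at \lineref{add6}, and $write(n_1.next, n_2.next)$ at \lineref{rem5}. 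The base case is the initial state, where the only nodes are $\head$ and $\tail$ with $\head.next = \tail$, $\head.val = -\infty < +\infty = \tail.val$, and $\tail.next = null$, so $P$ holds.

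For the two \add-writes I would use that \add{} reaches \lineref{add5}--\lineref{add6} only through the $\loct$ call, so \lemref{loc-ret} gives $n_1.val < key \le n_2.val$ in the post-state of $\loct$, and the branch guard at \lineref{add3} (namely $read(n_2.val) \neq key$) strengthens this to $n_1.val < key < n_2.val$. Since $n_3.val = key$ and $n_3.next = n_2$ after \lineref{add5}, the write at \lineref{add5} yields $n_3.val = key < n_2.val = n_3.next.val$, preserving $P$ for the new node; and the write at \lineref{add6} replaces $n_1.next = n_2$ by $n_1.next = n_3$, where $n_1.val < key = n_3.val$, again preserving $P$ for $n_1$. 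Every other node keeps its $next$ and $val$ unchanged, so its clause follows directly from the induction hypothesis.

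For the \rem-write at \lineref{rem5} I would argue by transitivity rather than re-deriving bounds. In the pre-state of the write, \obsref{locate} gives $n_1.next = n_2$; writing $r = n_2.next$, the induction hypothesis applied to $n_1$ and to $n_2$ gives $n_1.val < n_2.val$ and, if $r \neq null$, also $n_2.val < r.val$, hence $n_1.val < r.val$. After the write $n_1.next$ becomes $r$, so $P$ is preserved for $n_1$; if instead $r = null$ the guard fails and there is nothing to check. All remaining clauses are untouched.

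The main obstacle is justifying that the relevant state at the instant of each write matches what $\loct$ established, despite interleaving by other threads. This is where I would lean on the locking and immutability facts already available: \obsref{locate} guarantees that $n_1, n_2$ are locked and unmarked with $n_1.next = n_2$ when $\loct$ returns, and \obsref{node-val} guarantees that $val$ fields never change once a node is created. Consequently no interleaved event of another thread can alter $n_1.next$, $n_1.val$, or $n_2.val$ between the return of $\loct$ and the write, so the inequalities carry over verbatim to the pre-state of the write event (and, for \lineref{rem5}, $n_2.next$ is likewise stable under the lock on $n_2$). With that bridge in place, the three cases above close the induction and establish $P(S)$ for every reachable global state $S$.
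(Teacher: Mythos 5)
Your proof is correct and follows essentially the same route as the paper's: an induction on the events that modify a node's $next$ field, with the same three cases (Lines \ref{lin:add5}, \ref{lin:add6}, \ref{lin:rem5}), the same use of \lemref{loc-ret} and the branch guards to pin down $n_1.val < key < n_2.val$ (or $= n_2.val$ for \rem), and the same transitivity-via-induction-hypothesis step for the \rem case. Your additional care about node creation setting $next$ to $null$ and about the lock-based stability of $n_1.next$, $n_2.next$ between \loct's return and the write is a welcome tightening of details the paper leaves implicit, but it does not change the argument.
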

\begin{proof}
We prove the lemma by inducting on all events in $E^H$ that change the $next$ field of a node $n$.\\[0.1cm]
\textbf{Base condition:} Initially, before the first event that changes the next field, we know that ($\head.key$ $<$ $\tail.key$) $\land$ $(\head, \tail)$ $\in$ $\nodes{S}$.\\[0.1cm] 
\textbf{Induction Hypothesis:} Say, in any state $S$ upto first $k$ events that change the $next$ field of any node, $\forall n \in \nodes{S} \land S.n.next \neq null:$ $S.n.val$ $<$ $S.n.next.val$.\\[0.1cm]
\textbf{Induction Step:}
So, by observing the code, the $(k+1)^{st}$ event which can change the $next$ field can be only one among the following: 
%(We assume $S_1$ be the state after \loct returns in the \lineref{add2} and \lineref{rem2} both in \add and \rem methods respectively. Similarly $S_2$ be the state after \lineref{add6} and \lineref{rem4} both in \add and \rem methods respectively).
%Let $S_1$ and $S_2$ be the states after the \lineref{add3} and \lineref{add6} respectively
\begin{enumerate}
\item \textbf{Line \ref{lin:add5} of $\add$ method:} Let $S_1$ be the state after the \lineref{add3}. We know that when $\loct$ (\lineref{add2}) returns by the \obsref{locate}, $S_1.n_1$ \& $S_1.n_2$ are not marked,  $S_1.n_1$ \& $S_1.n_2$ are locked, $S_1.n_1.next =S_1.n_2$. By the \lemref{loc-ret} we have $(S_1.n_1.val \leq S_1.n_2.val)$. Also we know from Observation \ref{obs:node-val} that node value does not change, once initialised. To reach Line \ref{lin:add5}, $n_2.val \neq key$ in the \lineref{add3} must evaluate to true. Therefore, $(S_1.n_1.val < key < S_1.n_2.val)$. So, a new node $n_3$ is created in the \lineref{add4} with the value $key$ and then a link is added between $n_3.next$ and $n_2$ in the \lineref{add5}. So this implies $n_3.val < n_2.val$ even after execution of line \ref{lin:add5} of $\add$ method.

%we notice that the Line \ref{lin:add5} (next field changing event) can be executed only after the $\loct$ method returns. Let $S_1$ be the state after the \lineref{add3}. Line \ref{lin:add4} of the $\add$ method creates a new node $n_3$ with value $key$. Line \ref{lin:add5} then sets $S_1.n_{3}.next$ $=$ $S_1.n_2$. Since this event does not change the next field of any node reachable from the $\head$ of the list, the lemma is not violated. 

\item \textbf{Line \ref{lin:add6} of $\add$ method:} Let $S_1$ and $S_2$ be the states after the \lineref{add2} and \lineref{add6} respectively. By observing the code, we notice that the Line \ref{lin:add6} (next field changing event) can be executed only after the $\loct$ method returns. From Lemma \ref{lem:loc-ret}, we know that when $\loct$ returns then $S_1.n_1.val$ $<$ key $\leq$ $S_1.n_2.val$. To reach Line \ref{lin:add6} of $\add$ method, Line \ref{lin:add3} should ensure that $S_1.n_2.val$ $\neq$ key. This implies that $S_1.n_1.val$ $<$ $key$ $<$ $S_1.n_2.val$. From Observation~\ref{obs:locate3}, we know that $S_1.n_1.next$ = $S_1.n_2$. Also, the atomic event at Line \ref{lin:add6} sets $S_2.n_1.next$ = $S_2.n_3$ where $S_2.n_3.val = key$. \\[0.1cm]
Thus from $S_2.n_1.val$ $<$ $(S_2.n_3.val=key)$ $<$ $S_2.n_2.val$ and $S_2.n_1.next$ = $S_2.n_3$, we get $S_2.n_1.val$ $<$ $S_2.n_1.next.val$. Since $(n_1, n_2)$ $\in$ $\nodes{S}$ and hence, $S.n_1.val$ $<$ $S.n_1.next.val$.

\item \textbf{Line \ref{lin:rem5} of $\rem$ method:} Let $S_1$ and $S_2$ be the states after the \lineref{rem2} and \lineref{rem4} respectively. By observing the code, we notice that the Line \ref{lin:rem5} (next field changing event) can be executed only after the $\loct$ method returns. From Lemma \ref{lem:loc-ret}, we know that when $\loct$ returns then $S_1.n_1.val$ $<$ $key$ $\leq$ $S_1.n_2.val$. To reach Line \ref{lin:rem5} of $\rem$ method, Line \ref{lin:rem3} should ensure that $S_1.n_2.val$ $=$ $key$. Also we know from Observation \ref{obs:node-val} that node value does not change, once initialised. This implies that $S_2.n_1.val$ $<$ ($key$ $=$ $S_2.n_2.val$). From Observation~\ref{obs:locate3}, we know that $S_2.n_1.next$ = $n_2$. Also, the atomic event at line \ref{lin:rem6} sets $S_2.n_1.next$ = $S_2.n_2.next$. \\[0.1cm]
We know from Induction hypothesis, $S_2.n_2.val < S_2.n_2.next.val$. Thus from $S_2.n_1.val$ $<$ $S_2.n_2.val$ and $S_2.n_1.next$ = $S_2.n_2.next$, we get $S_2.n_1.val$ $<$ $S_2.n_1.next.val$. Since $(n_1, n_2)$ $\in$ $\nodes{S}$ and hence, $S.n_1.val$ $<$ $S.n_1.next.val$. 
\end{enumerate}
\end{proof}

\begin{corollary}
\label{cor:same-node}
There cannot exist two $\node$s with the same key in $S.\abs$ of a particular global state $S$. 
\end{corollary}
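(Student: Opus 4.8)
The plan is to derive the corollary directly from \lemref{val-change}, which guarantees that the $val$ field strictly increases along every $next$ pointer. First I would lift this consecutive inequality to the reachability relation $\rightarrow^*$ used in \defref{abs}. Since each \node has a single $next$ pointer, the set of nodes reachable from $S.\head$ in a global state $S$ forms a single chain $S.\head = p_0 \rightarrow p_1 \rightarrow \dots \rightarrow p_k$; I would prove by induction on the number of hops that $S.p_0.val < S.p_1.val < \dots < S.p_k.val$, each step being an instance of \lemref{val-change} (noting $p_i.next \neq null$ for $i < k$). Consequently the value carried by a reachable node is a strictly increasing function of its position in the chain, and hence distinct reachable nodes carry distinct keys.

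With monotonicity in hand, I would argue by contradiction. Suppose two distinct nodes $n_a \neq n_b$ satisfy $n_a, n_b \in S.\abs$ with $S.n_a.val = S.n_b.val$. By \defref{abs} both are reachable from $S.\head$, so both lie on the chain $p_0, \dots, p_k$, say at positions $i$ and $j$ with $i \neq j$ (distinct nodes occupy distinct positions in a singly-linked chain). Strict monotonicity of the value along the chain then forces $S.p_i.val \neq S.p_j.val$, i.e. $S.n_a.val \neq S.n_b.val$, contradicting the assumption. Hence no two distinct members of $S.\abs$ share a key.

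The only mildly delicate point, and the one I would spell out carefully, is the passage from the consecutive bound of \lemref{val-change} to the claim that reachability from $S.\head$ yields a single linearly ordered chain with no repeated node. Strict monotonicity of values along $next$ already rules out cycles, since a cycle would force some value to strictly exceed itself; this is precisely what lets me treat $\rightarrow^*$ as inducing a strict order on the reachable set and thereby guarantees the injectivity of the key-to-node correspondence. Everything else reduces to the routine induction on hop count described above, so I do not anticipate any serious obstacle beyond stating the chain structure explicitly.
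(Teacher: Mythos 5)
Your proposal is correct and follows exactly the route the paper intends: the paper states this as an immediate corollary of \lemref{val-change} with no written proof, relying on precisely the observation that strict monotonicity of $val$ along $next$ pointers makes keys distinct (and acyclic) along the chain reachable from $\head$. Your explicit induction on hop count and the remark that monotonicity rules out cycles simply spell out the details the paper leaves implicit.
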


\ignore{
\begin{corollary}
\label{cor:next-unmarked}
Consider a node $n$ is in the global state $S$. If $n.next$ is unmarked then $n.next \in S.\abs$.
\end{corollary}
}

\ignore{
\begin{lemma}
\label{lem:val-ret}
Consider the global state $S$ which is the post-state of return event of the function $\loct(key)$ invoked in the $\add$ or $\rem$ methods. Suppose the $\loct$ method returns $\langle n_1, n_2 \rangle$. Then in the state $S$, we have that there are no other nodes, $n_p, n_q$ in $\nodes{S}$ such that $(S.n_1.val < S.n_p.val < key \leq S.n_q.val < S.n_2.val)$. 
\end{lemma}

\begin{proof}
We already know from Lemma \ref{lem:loc-ret} that when $\loct$ returns, $S.n_1.val < key \leq S.n_2.val$. Now lets suppose that there exists $n_p, n_q$ $\in$ $\nodes{S}$ such that $(S.n_1.val < S.n_p.val < key \leq S.n_q.val < S.n_2.val)$.
\begin{enumerate}
\item Assume $\exists$ $n_p$ $\in$ $\nodes{S}$ $|$ $n_1.next = n_p$ $\land$ $n_p.next = n_2$ $\land$ $(S.n_1.val < S.n_p.val \leq S.n_2.val)$. But as from Observation \ref{obs:locate3}, we know that when $\loct$ returns then $S.n_1.next = S.n_2$. But this contradicts the initial assumption. So we know that $\nexists$ $n_p$ in $\nodes{S}$ between $(n_1, n_2)$ such that ($S.n_1.val < S.n_p.val < key \leq S.n_2.val$).
\item Assume $\exists$ $n_p, n_q$ $\in$ $\nodes{S}$ and $\rightarrow^*$ be the intermediate states such that $\head$ $\rightarrow^*$ $n_p$ $\rightarrow^*$ $n_1$ $\rightarrow$ $n_2$ $\rightarrow^*$ $n_q$ $\rightarrow^*$ $\tail$ such that $S.n_1.val < S.n_p.val < key \leq  S.n_q.val < S.n_2.val$. Since the node values do not change in any state (from \obsref{node-val}), $(n_p.val > n_1.val) \land (n_p \rightarrow^* n_1)$. But from Lemma \ref{lem:val-change}, we know that $\forall n \in \nodes{S}: S.next$ $<$ $S.next.val$. This contradicts $(S.n_p \rightarrow^* S.n_1) \land (S.n_p.val < S.n_1.val)$. Similarly, contradiction for $n_q$.
\end{enumerate}
Hence $\nexists$ $n_p, n_q$ in $\nodes{S}$ such that $(S.n_1.val < S.n_p.val < key \leq S.n_q.val < S.n_2.val)$. 
\end{proof}
}
\begin{lemma}
\label{lem:mark-reach}
	In a global state $S$, any non-marked public node $n$ is reachable from $\head$. Formally, \\ $\langle \forall S, n: (n \in \nodes{S}) \land (\neg S.n.marked) \implies (S.\head \rightarrow^* S.n) \rangle$. 
\end{lemma}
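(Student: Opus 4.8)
The plan is to establish this reachability invariant by induction over the events of the execution $E^H$, restricting attention to those events that can actually change either the link structure or the marked status of a node. Inspecting the pseudocode, the only events that modify a \emph{next} field are \lineref{add6} of \add and \lineref{rem5} of \rem, and the only event that flips a \emph{marked} field from \emph{false} to \emph{true} is \lineref{rem4} of \rem; every other event leaves both the links and the marks untouched and therefore preserves the claim trivially. One point worth flagging up front: \lineref{add4} of \add creates a fresh \node{} that is unmarked but has no incoming link yet, so it is not \emph{public} and lies outside the statement until it is spliced in at \lineref{add6}. This is precisely why the lemma is phrased for public nodes rather than for every node of $\nodes{S}$.

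For the base case, the initial list is $\head \rightarrow \tail$ with both sentinels unmarked and public and $\tail$ reachable from $\head$, so the invariant holds. For the inductive step I assume the invariant in the pre-state $S$ of the $(k{+}1)$-th relevant event $e$ and argue it persists in the post-state $S'$. The marking event \lineref{rem4} is immediate: it only turns $n_2$ from unmarked to marked and alters no \emph{next} field, so every other public node keeps both its mark and its incoming path and stays reachable by the induction hypothesis, while $n_2$ is now exempt from the claim.

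The two substantive cases are the pointer changes. For the addition event \lineref{add6}, \obsref{locate} gives $n_1.next = n_2$ with $n_1$ unmarked and locked in $S$; since $n_1$ is then an unmarked public node, the induction hypothesis makes it reachable from $\head$. After $e$ sets $n_1.next = n_3$ (with $n_3.next = n_2$ already fixed at \lineref{add5}), the new node $n_3$ becomes public and reachable via the path to $n_1$, and any node formerly reached through $n_1 \rightarrow n_2$ is now reached through $n_1 \rightarrow n_3 \rightarrow n_2$; hence no unmarked public node loses reachability. For the physical-removal event \lineref{rem5}, note that \lineref{rem4} has already marked $n_2$ in $S$, and \obsref{locate} again gives $n_1.next = n_2$ with $n_1$ unmarked and locked. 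Writing $n_3 = n_2.next$, the event resets $n_1.next = n_3$; since $n_1$ is reachable by the induction hypothesis, $n_3$ and all its descendants stay reachable, and the only node dropped from the reachable set is the marked $n_2$, which the claim does not constrain.

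The delicate part — and the step I expect to be the main obstacle — is justifying at each pointer change that the predecessor $n_1$ is \emph{simultaneously} reachable and unmarked, so that re-routing through it genuinely preserves reachability. Reachability of $n_1$ is exactly the induction hypothesis applied to the pre-state $S$. Its unmarkedness follows from the locking discipline: the acting thread holds the lock on $n_1$ from the return of \loct{} (\obsref{locate}) until it releases the locks after the pointer change, and by \obsref{n-loct-marked} any thread that marks a node must first have locked it through \loct{}; hence no concurrent thread can mark $n_1$, and by Observation~\ref{lem:node-mark} marks never disappear, so $n_1$ is still unmarked when $e$ fires. The complementary worry in the removal case — that the relinked successor $n_3$ might itself be a marked node that would stall reachability — is ruled out by \lemref{n2-next-marked}. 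Assembling these observations closes the inductive step and yields the lemma.
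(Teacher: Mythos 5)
Your proof is correct and follows essentially the same route as the paper's: induction on the pointer-modifying events (\lineref{add4}/\lineref{add5}, \lineref{add6}, \lineref{rem5}) with the same case analysis, using the locking discipline to keep $n_1$ unmarked and hence reachable by the induction hypothesis. Your version is in fact slightly more careful than the paper's, since you explicitly treat the marking event at \lineref{rem4} as a (trivial) case and spell out why the predecessor cannot be concurrently marked, both of which the paper leaves implicit.
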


\begin{proof}
We prove by Induction on events that change the next field of the node (as these affect reachability), which are Line \ref{lin:add5} \& \ref{lin:add6} of $\add$ method and Line \ref{lin:rem5} of $\rem$ method. It can be seen by observing the code that $\loct$ and $\con$ method do not have any update events.\\[0.1cm]
\textbf{Base step:} Initially, before the first event that changes the next field of any node, we know that $\langle (\head, \tail)$ $\in$ $\nodes{S}$ $\land$ $\neg$($\head.marked$) $\land$ $\neg$($\tail.marked$) $\land$ $(\head$ $\rightarrow$ $\tail) \rangle$.\\[0.1cm]
\textbf{Induction Hypothesis:} Say, the first $k$ events that changed the next field of any node in the system did not make any unmarked node unreachable from the $\head$.\\[0.1cm]
\textbf{Induction Step:} As seen by observing the code, the $(k+1)^{st}$ event can be one of the following events that change the next field of a node:
\begin{enumerate}
\item \textbf{Line \ref{lin:add4} \& \ref{lin:add5} of $\add$ method:} Let $S_1$ be the state after the \lineref{add2}. Line \ref{lin:add4} of the $\add$ method creates a new node $n_3$ with value $key$. Line \ref{lin:add5} then sets $S_1.n_3.next$ $=$ $S_1.n_2$. Since this event does not change the next field of any node reachable from the $\head$ of the list, the lemma is not violated. 
\item \textbf{Line \ref{lin:add6} of $\add$ method:} By observing the code, we notice that the Line \ref{lin:add5} (next field changing event) can be executed only after the $\loct$ method returns. Let $S_1$ and $S_2$ be the states after the \lineref{add3} and \lineref{add6} respectively. From Observation \ref{obs:locate3}, we know that when $\loct$ returns then $S_1.n_1.marked = S_1.n_2.marked = false$. From Line \ref{lin:add4} \& \ref{lin:add5} of $\add$ method, $(S_1.n_1.next =S_1.n_3)$ $\land$ $(S_1.n_3.next = S_1.n_2)$ $\land$ ($\neg S_1.n_3.marked$). It is to be noted that (From Observation \ref{obs:locate2}), $n_1$ \& $n_2$ are locked, hence no other thread can change $S_1.n_1.marked$ and $S_1.n_2.marked$. Also from \obsref{node-val}, a node's key field does not change after initialization. Before executing Line \ref{lin:add6}, $S_1.n_1.marked = false$ and $S_1.n_1$ is reachable from $\head$. After Line \ref{lin:add6}, we know that from $S_2.n_1$, unmarked node $S_2.n_3$ is also reachable. Formally,	
$(S_2.\head \rightarrow^* S_2.n_1) \land \neg (S_2.n_1.marked) \land (S_2.n_1 \rightarrow S_2.n_3) \land \neg (S_2.n_3.marked) \implies (S_2.\head \rightarrow^* S_2.n_3)$.

\item \textbf{Line \ref{lin:rem5} of $\rem$ method:} Let $S_1$ and $S_2$ be the states after the execution of \lineref{rem3} and \lineref{rem5} respectively. By observing the code, we notice that the Line \ref{lin:rem5} (next field changing event) can be executed only after the $\loct$ method returns. From Observation \ref{obs:locate2}, we know that when $\loct$ returns then $S_1.n_1.marked = S_1.n_2.marked = false$. We know that $S_1.n_1$ is reachable from $\head$ and from Line \ref{lin:rem4} and \ref{lin:rem5} of $\rem$ method, $S_2.n_2.marked = true$ and later sets $S_2.n_1.next$ $=$ $S_2.n_2.next$. It is to be noted that (From Observation \ref{obs:locate2}), $S_1.n_1$ \& $S_1.n_2$ are locked, hence no other thread can change $S_1.n_1.marked$ and $S_1.n_2.marked$. This event does not affect reachability of any non-marked node. Also from \obsref{node-val}, a node's key does not change after initialization. And from Observation \ref{lem:node-mark}, a marked node continues to remain marked. If $S_2.n_2.next$ is unmarked (reachable), then it continues to remain unmarked \& reachable. So this event does not violate the lemma.
\end{enumerate}
%Hence eventually, $\langle \forall S, n: (n \in \nodes{S}) \land (\neg S.n.marked) \implies (S.\head \rightarrow^* S.n) \rangle$.
\end{proof}

\begin{lemma}
\label{lem:val-abs}
Consider the global state $S$ such that for any unmarked node $n$, if there exists a key strictly greater than $n.val$ and strictly smaller than $n.next.val$, then the node corresponding to the key does not belong to $S.\abs$. Formally, $\langle \forall S, n, key$ : $\neg$$(S.n.marked)$ $\land$ $(S.n.val < key < S.n.next.val)$ $\implies$ $\node(key)$ $\notin S.\abs \rangle$. 
\end{lemma}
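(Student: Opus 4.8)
The plan is to argue by contradiction, leaning on the sortedness invariant already established in \lemref{val-change} together with the definition of the abstract set in \defref{abs}. Suppose, contrary to the claim, that the node of value $key$ --- call it $m$, so that $m.val = key$ --- belongs to $S.\abs$. By \defref{abs} this means $m \in \nodes{S}$, $\neg(S.m.marked)$, and $S.\head \rightarrow^* S.m$. I also record that, since $n$ is unmarked, \lemref{mark-reach} gives $S.\head \rightarrow^* S.n$; hence $n$ too is a reachable (public) node.

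The heart of the argument is that the nodes reachable from $\head$ form a single chain along which values strictly increase. Because the $next$ field is functional (each node has exactly one successor), the set of nodes reachable from $\head$ via $\rightarrow^*$ is precisely the forward sequence $\head = u_0 \rightarrow u_1 \rightarrow u_2 \rightarrow \cdots$, where $u_{j+1} = S.u_j.next$. By \lemref{val-change} we have $S.u_j.val < S.u_j.next.val = S.u_{j+1}.val$ for every $j$, so the values along this chain are strictly increasing; this monotonicity also rules out cycles, so the chain is a simple path and value order coincides with path position. In particular both $n$ and $m$, being reachable, lie on this one chain.

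Now write $n = u_i$, so that $n.next = u_{i+1}$. The hypothesis $S.n.val < key < S.n.next.val$, combined with strict monotonicity, yields: every $u_j$ with $j \le i$ satisfies $S.u_j.val \le S.u_i.val = S.n.val < key$, while every $u_j$ with $j \ge i+1$ satisfies $S.u_j.val \ge S.u_{i+1}.val = S.n.next.val > key$. Hence no reachable node has value exactly $key$ (with \obsref{node-val} ensuring these values do not drift). This contradicts $m$ being reachable with $m.val = key$. Therefore no node of value $key$ can lie in $S.\abs$, i.e. $\node(key) \notin S.\abs$, as required.

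The step I expect to be the main obstacle is upgrading the purely local sortedness $S.n.val < S.n.next.val$ of \lemref{val-change} into the global structural fact that the reachable nodes form a value-sorted simple path on which $n$ and $n.next$ are consecutive. This rests on the functionality of $next$ (one successor per node) and on strict monotonicity forbidding a node from being revisited, so that $\rightarrow^*$ from $\head$ genuinely produces a single strictly increasing path; once that is in hand, the betweenness contradiction is immediate. If one prefers to avoid naming the entire chain, the same conclusion follows by inducting on the length of a witnessing path $\head \rightarrow^* m$: the predecessor $w$ of $m$ on this path has $S.w.val < S.w.next.val = key$, and tracing monotonicity back to $n$ forces $m$ to sit strictly between the consecutive nodes $n$ and $n.next$, which is impossible.
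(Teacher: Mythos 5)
Your proposal is correct and follows essentially the same route as the paper's own proof: a contradiction argument combining \defref{abs}, the reachability of unmarked nodes (\lemref{mark-reach}), the immutability of keys (\obsref{node-val}), and the local sortedness of \lemref{val-change} to conclude that no reachable node can carry a value strictly between $n.val$ and $n.next.val$. Your write-up is in fact tighter than the paper's, which asserts in passing that both $n$ and $n.next$ lie in $S.\abs$ (not warranted, since $n.next$ may be marked) and leaves the ``betweenness'' step implicit, whereas you derive it properly from the functionality of $next$ and the strict monotonicity of values along the single chain reachable from $\head$.
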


\begin{proof}
 We prove by contradiction. Suppose there exists a $key$ which is strictly greater than $n.val$ and strictly smaller than $n.next.val$ and then it belongs to $S.\abs$. From the \obsref{node-forever}, we know that node $n$ is unmarked in a global state $S$, so it is belongs to $\nodes{S}$. But we know from Lemma \ref{lem:mark-reach} that any unmarked node should be reachable from \head. Also, from Definition \ref{def:abs}, any unmarked node i.e. $n$ in this case, is reachable from \head and belongs to $S.\abs$. From the \obsref{node-val}, we know that the node's key value does not change after initialization. So both the nodes $n$ and $n.next$ belong to $S.\abs$. From the \lemref{val-change} we know that $n.val<n.next.val$. So node $n'$ can not be present in between $n$ and $n.next$. Which contradicts the initial assumption. Hence $\langle \forall S, n, key$ : $\neg(S.n.marked)$ $\land$ $(S.n.val < key < S.n.next.val)$ $\implies$ $\node(key)$ $\notin S.\abs \rangle$. 
 
 %By the \corref{next-unmarked} we have if $n.next$ is unmarked then $n.next \in S.\abs$. 
\end{proof}

\begin{lemma}
\label{lem:change-abs}
	Only the events $write(n_1.next, n_3)$ in \ref{lin:add6} of \add method and $write(n_2.marked, true)$ in \ref{lin:rem4} of \rem method can change the $\abs$.
\end{lemma}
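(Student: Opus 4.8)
The plan is to argue directly from \defref{abs}, which makes membership of a node $n$ in $S.\abs$ depend on exactly three predicates: $n \in \nodes{S}$, the reachability $S.\head \rightarrow^* S.n$, and $\neg S.n.marked$. An event can change $\abs$ only if executing it flips one of these three predicates for some node, so I would enumerate every event the code performs and check which predicate, if any, it can affect. By \obsref{node-forever} the set $\nodes{S}$ never loses members, and by \obsref{node-val} node values are immutable, so the only interesting cases are writes to $next$ fields (which affect reachability) and writes to $marked$ fields (which affect the last predicate); all read events, lock operations, and returns leave the shared list structure untouched and hence cannot change $\abs$.

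For the $marked$ predicate, a scan of the pseudocode shows the single write to any $marked$ field is \lineref{rem4}, $write(n_2.marked, true)$; by Observation~\ref{lem:node-mark} marking is irreversible, so no other event perturbs a $marked$ bit. This event does change $\abs$: by \obsref{locate} the node $n_2$ returned by \loct is unmarked and locked with $n_1.next = n_2$ and $n_1$ unmarked, so by \lemref{mark-reach} $n_1$ (hence $n_2$) is reachable, giving $n_2 \in \abs$ just before \lineref{rem4} and $n_2 \notin \abs$ just after. Node creation at \lineref{add4} adds a fresh $n_3$ to $\nodes{S}$ but with no incoming pointer from a reachable node, so $n_3$ is unreachable and outside $\abs$; thus creation alone is harmless.

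For reachability I would examine the three writes to $next$ fields, namely Lines \ref{lin:add5}, \ref{lin:add6}, and \ref{lin:rem5}. \lineref{add5} sets $n_3.next$ on the still-unreachable new node (since $n_1.next$ still equals $n_2$), changing no reachable node's out-pointer and leaving $n_3$ unreachable, so $\abs$ is unchanged. \lineref{add6}, $write(n_1.next, n_3)$, makes the unmarked $n_3$ reachable from $\head$ through the reachable, unmarked $n_1$ (\obsref{locate}), inserting $n_3$ into $\abs$, so this event does change it. The delicate case, and the main obstacle, is \lineref{rem5}, $write(n_1.next, n_2.next)$: here I must show that splicing $n_2$ out does not change $\abs$. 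The key points are that $n_2$ was already marked at \lineref{rem4} earlier in the same invocation, with $n_1, n_2$ held locked throughout (\obsref{locate}) so no concurrent thread intervenes, whence $n_2 \notin \abs$ both before and after; and that the successor $n_2.next$, reachable before through $n_2$, remains reachable afterward directly through $n_1$ with its $marked$ bit untouched. Hence the set of reachable unmarked nodes is identical across \lineref{rem5}, and $\abs$ is preserved. Collecting the cases, the only events that can change $\abs$ are \lineref{add6} and \lineref{rem4}, which is the claim.
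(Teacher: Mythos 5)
Your proposal is correct and follows essentially the same route as the paper: enumerate the write events of \add and \rem (Lines \ref{lin:add4}--\ref{lin:add5}, \ref{lin:add6}, \ref{lin:rem4}, \ref{lin:rem5}), observe that \loct and \con perform no updates, and check each write against Definition \ref{def:abs}. Your framing is slightly more systematic in first reducing the question to writes on $next$ and $marked$ fields, and your handling of Line \ref{lin:rem5} is a little more careful (noting explicitly that the successor stays reachable), but the decomposition and the conclusions for each case match the paper's proof.
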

\begin{proof}
It is to be noted that the $\loct$ and $\con$ methods do not have any update events. By observing the code, it appears that the following (write) events of the $\add$ and $\rem$ method can change the $\abs$:
\begin{enumerate}
\item \textbf{Line \ref{lin:add4} \& \ref{lin:add5} of $\add$ method:}  In Algorithm~\ref{alg:add}, let $S_1.\abs$ be the initial state of the $\abs$, such that we know from Line \ref{lin:add3} that $key$ $\notin$ $S_1.\abs$. Line \ref{lin:add4} of the $\add$ method creates a node $n_3$ with value $key$, i.e. $n_3.val = key$. Now, Line \ref{lin:add5} sets $S_1.n_3.next$ $=$ $S_1.n_2$. Since this event does not change the next field of any node reachable from the $\head$ of the list, hence from Definition \ref{def:abs}, $S_1.\abs$ remains unchanged after these events.

\item \textbf{Line \ref{lin:add6} of $\add$ method:}  Let $S_1$ and $S_2$ be the states after the \lineref{add3} and \lineref{add6} respectively. At line \ref{lin:add3}, $true$ evaluation of the condition leads to the execution of $S_1.n_1.next = S_1.n_3$ at Line \ref{lin:add6}. Also, $S_1.n_1$ and $S_1.n_2$ are locked, therefore from Observation \ref{obs:locate}, $\head$ $\rightarrow^*$ $S_1.n_1$. From line \ref{lin:add5} \& \ref{lin:add6} we get: $S_1.n_1$ $\rightarrow$ $S_1.n_3$ $\rightarrow$ $S_1.n_2$. Hence, $\head$ $\rightarrow$ $S_1.n_1$ $\rightarrow$ $S_1.n_3$ $\rightarrow$ $S_1.n_2$ follows. We have $\neg$ $(S_2.n_3.marked)$ $\land$ $(\head$ $\rightarrow$ $S_2.n_3)$. Thus from Definition \ref{def:abs}, $S_1.\abs$ changes to $S_2.\abs$ $=$ $S_1.\abs$ $\cup$ $n_3$.

\item \textbf{Line \ref{lin:rem4} of $\rem$ method:} Let $S_1$ be the state after the \lineref{rem4}. By observing the code, we notice that the state before execution of Line \ref{lin:rem4} satisfies that $key$ $\in$ $S.\abs$. After execution of line \ref{lin:rem4}, $\abs$ changes such that $key$ $\notin$ $S.\abs$. Note that this follows from Definition \ref{def:abs}.

\item \textbf{Line \ref{lin:rem5} of $\rem$ method:} Let $S_1$ be the state after the execution of \lineref{rem4}. Till line \ref{lin:rem4} of the $\rem$ method, $S.\abs$ has changed such that $S_1.n_2.val$ $\notin$ $S.\abs$. So even after the execution of Line \ref{lin:rem5} when $S_1.n_1.next$ is set to $S_1.n_2.next$, $S.\abs$ remains unchanged (from Definition \ref{def:abs}).
\end{enumerate}
Hence, only the events in Line \ref{lin:add6} of $\add$ method and in Line \ref{lin:rem4} of $\rem$ method can change the $\abs$. 
\end{proof}

\begin{corollary}
\label{cor:change-abs}
Both these events $write(n_1.next, n_3)$ in \ref{lin:add6} of \add method and $write(n_2.marked, true)$ in \ref{lin:rem4} of \rem method change the $\abs$ are in fact the Linearization Points(LPs) of the respective methods. 
\end{corollary}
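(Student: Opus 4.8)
The plan is to combine \lemref{change-abs} with \asmref{change-abds} and the list of linearization points declared earlier for the \lazy. \lemref{change-abs} already establishes that the only two events in the entire \lazy code that can alter the contents of $\abs$ are $write(n_1.next, n_3)$ at \lineref{add6} of \add and $write(n_2.marked, true)$ at \lineref{rem4} of \rem. So the substantive work is done; what remains is to match these two $\abs$-changing events against the \lp{s} that were specified in the list of linearization points.

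First I would invoke \asmref{change-abds}, which states that only \lp events can change the \abds (here, $\abs$). Read contrapositively, any event that does change $\abs$ must be an \lp event. Since \lemref{change-abs} shows that both named events change $\abs$, each of them must be the \lp event of the method in whose code it occurs.

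Next I would identify which method's \lp each event is. The event at \lineref{add6} lies on the branch taken when $read(n_2.val) \neq key$ holds at \lineref{add3}, i.e.\ exactly when \add is about to return $true$; this coincides with the declared \lp of $\add(key, true)$. Symmetrically, the event at \lineref{rem4} lies on the branch taken when $read(n_2.val) = key$ at \lineref{rem3}, i.e.\ when \rem returns $true$, coinciding with the declared \lp of $\rem(key, true)$. Agreement with the sequential specification in \tabref{seq-spe} is then immediate from the post-state computations already carried out in cases~2 and~3 of the proof of \lemref{change-abs}: the add event inserts $n_3$ into $\abs$ and the remove event deletes $n_2$ from $\abs$, matching the $true$-rows for $\add$ and $\rem$ respectively.

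Since the corollary is essentially a bookkeeping consequence of \lemref{change-abs}, I do not anticipate a genuine obstacle. The one subtlety worth flagging is that \asmref{change-abds} only guarantees the implication that changing $\abs$ forces an event to be an \lp, and not its converse, so I must argue only that these two particular $\abs$-changing events are \lp{s} — which is all the corollary asserts — and not that every \lp changes $\abs$; indeed the \lp{s} of the unsuccessful $\add$, the unsuccessful $\rem$, and of $\con$ deliberately do not change $\abs$.
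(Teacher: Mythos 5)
Your proposal is correct and matches the paper's (implicit) reasoning: the paper states this corollary with no separate proof, treating it as an immediate consequence of \lemref{change-abs} together with \asmref{change-abds} and the declared list of \lp{s} for $\add(key,true)$ and $\rem(key,true)$. Your explicit contrapositive reading of \asmref{change-abds} and your caveat that the implication only runs one way (so the unsuccessful \mth{s}' \lp{s} need not change $\abs$) are exactly the right bookkeeping.
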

\ignore{
\begin{figure}[!htbp]
\centerline{\scalebox{0.6}{\input{figs/lemma17.pdf_t}}}
\caption{LP of $\con(key, false)$ is at $read(n.val$ = $key)$ at Line 6}
%\label{fig:trans-tree}
\end{figure}
}

\begin{observation}
\label{obs:seq-spec}
Consider a sequential history $\spl{S}$. Let $S$ be a global state in $\spl{S}.allStates$ before the execution of the \mth and $S'$ be a global state just after the return of the \mth $(S \sqsubset S')$. Then we have the sequential specification of all methods as follows, 
	\begin{enumerate}[label=\ref{obs:seq-spec}.\arabic*]
    \item \label{obs:addT-seq}
    For a given key, suppose node(key) $\notin$ S.\abs. In this state, suppose \add(key) \mth is (sequentially) executed. Then the \add \mth will return true and node(key) will be present in $S'.\abs$.
    Formally, $\langle \forall S: (\node(key) \notin S.\abs) \xRightarrow[]{ seq\text{-}add} \spl{S}.\add(key, true) \land (S \sqsubset S') \land (\node(key) \in S'.\abs) \rangle$.
    
	\item \label{obs:addF-seq}
	For a given key, suppose node(key) $\in$ S.\abs. In this state, suppose \add(key) \mth is (sequentially) executed. Then the \add \mth will return false and node(key) will continue to be present in $S'.\abs$.  
	 Formally, $\langle \forall S: (\node(key) \in S.\abs) \xRightarrow[]{seq\text{-}add} \spl{S}.\add(key, false) \land (S \sqsubset S') \land (\node(key) \in S'.\abs)\rangle$.
    
    \item \label{obs:removeT-seq}
    For a given key, suppose node(key) $\in$ S.\abs. In this state, suppose \rem(key) \mth is (sequentially) executed. Then the \rem \mth will return true and node(key) will not be present in $S'.\abs$. 
     Formally, $\langle \forall S: (\node(key) \in S.\abs) \xRightarrow[]{seq\text{-}remove} \spl{S}.\rem(key, true) \land (S \sqsubset S') \land (\node(key) \notin S'.\abs) \rangle$.
    
    \item 	\label{obs:removeF-seq}
   For a given key, suppose node(key) $\notin$ S.\abs. In this state, suppose \rem(key) \mth is (sequentially) executed. Then the \rem \mth will return false and node(key) will continue to be not present in $S'.\abs$.
   Formally, $\langle \forall S: (\node(key) \notin S.\abs) \xRightarrow[]{seq\text{-}remove} \spl{S}.\rem(key, false) \land (S \sqsubset S') \land (\node(key) \notin S'.\abs)\rangle$.
    
    \item 	\label{obs:containT-seq}
   For a given key, suppose node(key) $\in$ S.\abs. In this state, suppose \con(key) \mth is (sequentially) executed. Then the \con \mth will return true and node(key) will continue to be present in $S'.\abs$. 
   Formally, $\langle \forall S: (\node(key) \in S.\abs) \xRightarrow[]{seq\text{-}contains} \spl{S}.\con(key, true) \land (S \sqsubset S') \land (\node(key) \in S'.\abs) \rangle$. 
    
\item \label{obs:containF-seq}
     For a given key, suppose node(key) $\notin$ S.\abs. In this state, suppose \con(key) \mth is (sequentially) executed. Then the \con \mth will return false and node(key) will continue to be not present in $S'.\abs$.  
     Formally, $\langle \forall S: (\node(key) \notin S.\abs) \xRightarrow[]{seq\text{-}contains} \spl{S}.\con(key, false) \land (S \sqsubset S') \land (\node(key) \notin S'.\abs) \rangle$. 
\end{enumerate} 
\end{observation}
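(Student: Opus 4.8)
The plan is to prove each of the six parts by tracing the relevant \mth through a single step of a sequential execution, starting from the pre-state $S$ and computing the post-state $S'$. Since the history is sequential, no other \mth executes concurrently, so every internal event of the \mth (the traversal in \loct and in \con, the lock acquisitions, the validation) sees a fixed shared state determined by $S$, and node values never change (\obsref{node-val}). The backbone I would isolate first is a single correspondence fact about \loct: when \loct$(key)$ is invoked in state $S$ and returns $\langle n_1, n_2 \rangle$, then $S.n_2.val = key$ if and only if $\node(key) \in S.\abs$. The ``$\Leftarrow$'' direction is the delicate one: by \lemref{loc-ret} we have $S.n_1.val < key \le S.n_2.val$, and by \obsref{locate} $n_1$ is unmarked with $S.n_1.next = S.n_2$; if $S.n_2.val > key$ then \lemref{val-abs} applied to $n_1$ forces $\node(key) \notin S.\abs$, contradicting the hypothesis, so $S.n_2.val = key$. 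The ``$\Rightarrow$'' direction is immediate from \defref{abs} (an $n_2$ with value $key$ that is unmarked and reachable is in $\abs$) together with \corref{same-node}, which makes ``$\node(key)$'' unambiguous.

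With this correspondence in hand, the four update cases reduce to reading off the branch taken at the equality test. For \obsref{addT-seq}, $\node(key) \notin S.\abs$ gives $S.n_2.val \neq key$, so the test at \lineref{add3} succeeds, a new node $n_3$ with value $key$ is linked in at \lineref{add6}, and the proof of \lemref{change-abs} shows $S'.\abs = S.\abs \cup \{\node(key)\}$; the \mth returns $true$. \obsref{addF-seq} is the complementary branch: $\node(key) \in S.\abs$ forces $S.n_2.val = key$, the test fails, no \emph{next} or \emph{marked} field is written, so by \lemref{change-abs} $\abs$ is unchanged and the \mth returns $false$. The \rem cases are symmetric: in \obsref{removeT-seq}, $S.n_2.val = key$ makes the test at \lineref{rem3} succeed, \lineref{rem4} marks $n_2$ and removes it from $\abs$ (\defref{abs}, \lemref{change-abs}), and \rem returns $true$; in \obsref{removeF-seq}, $S.n_2.val \neq key$ fails the test, leaving $\abs$ untouched and returning $false$.

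For the two \con cases I would argue directly from the traversal, since \con uses no locks and never calls \loct. In a sequential execution every marked node has already been physically unlinked (the mark at \lineref{rem4} and the unlink at \lineref{rem5} both complete inside the same \mth call), so at the method-boundary state $S$ every node reachable from \head is unmarked, and hence by \defref{abs} and \lemref{mark-reach} the reachable nodes are exactly $S.\abs$. The loop at \lineref{con3} stops at the first reachable node $n$ with $S.n.val \ge key$. If $\node(key) \in S.\abs$ then $n.val = key$ and $n$ is unmarked, so the test at \lineref{cont-tf} is $false$ and \con returns $true$ (\obsref{containT-seq}); if $\node(key) \notin S.\abs$ then $n.val > key$, the test is $true$, and \con returns $false$ (\obsref{containF-seq}). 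In every case $\abs$ is unchanged by \con, giving the stated post-state membership.

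The main obstacle, and the step I would spend the most care on, is the ``$\Leftarrow$'' direction of the \loct correspondence: ruling out that \loct returns an $n_2$ with $S.n_2.val > key$ even though $\node(key) \in S.\abs$. This is exactly where sortedness of reachable nodes (\lemref{val-change}), reachability of unmarked public nodes (\lemref{mark-reach}), and \lemref{val-abs} must be combined to show that such a key-node would have to lie strictly between $n_1$ and its successor $n_2$ and therefore cannot be in $\abs$. Once this is pinned down, the remaining work is a routine case analysis on the equality tests and the already-established effect of the update events on $\abs$.
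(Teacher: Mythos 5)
Your proposal is correct, but it takes a genuinely different route from the paper: the paper offers \emph{no proof at all} for \obsref{seq-spec} --- it is stated as an observation, implicitly taken to follow from the sequential specification tabulated in \tabref{seq-spe} and from the meaning of ``sequential execution.'' You instead \emph{derive} the specification from the pseudocode, which is more work but also more honest, since it verifies that the implementation actually realizes the behaviour the paper assumes of it. Your derivation is sound: the central correspondence --- that \loct$(key)$ returns $n_2$ with $S.n_2.val = key$ iff $\node(key) \in S.\abs$ --- is exactly the right pivot, and your use of \lemref{loc-ret}, \obsref{locate}, and \lemref{val-abs} for the ``$\Leftarrow$'' direction, together with \defref{abs}, \lemref{mark-reach}, and \corref{same-node} for ``$\Rightarrow$,'' assembles the same invariants the paper uses later in its concurrent-case lemmas (\lemref{addT-conc} through \lemref{containsF-conc}), just specialized to the sequential setting. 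Your observation that in a sequential execution every marked node is already physically unlinked at a method boundary (so the \con traversal only ever sees $S.\abs$) is the one fact that is genuinely specific to sequential histories and is not stated anywhere in the paper; it is correct and worth making explicit, since without it the unlocked \con traversal could in principle encounter a marked-but-still-reachable node. The only nit: you should also note that the internal events of \loct (lock acquisitions, local reads) do not change $\abs$, so the $\abs$ at \loct's return equals $S.\abs$ --- you gesture at this with ``sees a fixed shared state determined by $S$,'' but it deserves one explicit sentence since the correspondence is stated at the return of \loct while the hypothesis is about the pre-state of the method.
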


\begin{lemma}
\label{lem:addT-conc}
If some \add(key) method returns true in $E^H$ then,
	\begin{enumerate}[label=\ref{lem:addT-conc}.\arabic*]
    \item \label{lem:addT-conc-pre}
 The $\node(key)$ is not present in the pre-state of $LP$ event of the method. Formally, $\langle \add(key, true) \Longrightarrow (node(key) \notin (\prees{\add(key, true)}) \rangle$. 
 \item \label{lem:addT-conc-post}
  The $\node(key)$ is present in the post-state of $LP$ event of the method. Formally, \\ $\langle \add(key, true) \Longrightarrow (node(key) \in  (\postes{\add(key, true)}) \rangle$. 
 \end{enumerate}
\end{lemma}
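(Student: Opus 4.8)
The plan is to prove both parts by examining the configuration of the locked nodes $\langle n_1, n_2\rangle$ returned by $\loct$ around the \lp event of $\add(key,true)$, which by the stated \lp{s} is the write $write(n_1.next, n_3)$ at \lineref{add6}. The key leverage is that control reaches this line only if the guard at \lineref{add3} held, i.e. $read(n_2.val)\neq key$, and that $n_1, n_2$ remain locked throughout the interval from the return of $\loct$ to the \lp.

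For Part~\ref{lem:addT-conc-pre}, I would first apply \lemref{loc-ret} to get $n_1.val < key \le n_2.val$ when $\loct$ returns, and combine this with the guard $n_2.val \neq key$ to sharpen it to $n_1.val < key < n_2.val$. I then argue field-invariance across the interval before the \lp: by \obsref{locate} the nodes are locked, unmarked, and satisfy $n_1.next = n_2$; the only intervening events are \lineref{add4} and \lineref{add5}, which create $n_3$ and set $n_3.next = n_2$, leaving $n_1.next$ and $n_1.marked$ untouched, while \obsref{node-val} fixes the values of $n_1, n_2$. Thus in the pre-state of the \lp we have an unmarked node $n_1$ with $n_1.val < key < n_1.next.val$, so \lemref{val-abs} gives $\node(key) \notin \prees{\add(key, true)}$ directly.

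For Part~\ref{lem:addT-conc-post}, I would note that the \lp event installs $n_1.next = n_3$ where $n_3.val = key$ and $\neg(n_3.marked)$, since $n_3$ was freshly created at \lineref{add4}. As $n_1$ is unmarked and locked, it stays reachable from \head (by \lemref{mark-reach}), so after the write $\head \rightarrow^* n_1 \rightarrow n_3$ with $n_3$ unmarked, and \defref{abs} yields $n_3 \in \postes{\add(key, true)}$, i.e. $\node(key) \in \postes{\add(key, true)}$. This is exactly the \lineref{add6} case already worked out inside \lemref{change-abs}, which shows the set grows by precisely $n_3$ at this event, so Part~\ref{lem:addT-conc-post} can be read off from that analysis.

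The step I expect to be the main obstacle is the invariance argument underpinning Part~\ref{lem:addT-conc-pre}: the arithmetic bound $n_1.val < key < n_2.val$ is easy, but one must carefully justify that the witness configuration seen at the return of $\loct$ (namely $n_1$ unmarked with $n_1.next = n_2$) still holds at the exact pre-state of the \lp, so that \lemref{val-abs} applies. This hinges on the locking discipline of \obsref{locate} preventing any concurrent thread from marking $n_1$ or rewriting $n_1.next$, together with the fact that $\add$'s own intervening lines do not disturb those fields. Once that is secured, both parts are short applications of \lemref{val-abs} and \defref{abs}.
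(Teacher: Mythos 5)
Your proposal is correct and follows essentially the same route as the paper's proof: both parts rest on \lemref{loc-ret}, \obsref{locate}, the guard at \lineref{add3}, and \lemref{val-abs} for the pre-state, and on \lemref{change-abs} together with reachability of $n_1$ for the post-state. Your explicit invariance argument for carrying the configuration from the return of $\loct$ to the pre-state of the \lp is a slightly more careful rendering of a step the paper dispatches with a brief appeal to \obsref{node-val} and the locking of $n_1, n_2$.
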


\begin{proof}
\noindent
\begin{itemize}
\item \textbf{\ref{lem:addT-conc-pre}}: From Line \ref{lin:add2}, when $\loct$ returns in state $S_1$ we know that (from Observation \ref{obs:locate} \& Lemma \ref{lem:mark-reach}), nodes $n_1$ and $n_2$ are locked, ($n_1, n_2$) $\in$ $\nodes{S_1}$ and $n_1.next = n_2$. Also, $S_1.n_1.val$ $<$ $key$ $\leq$ $S_1.n_2.val$ from Lemma \ref{lem:loc-ret}. If this method is to return true, Line \ref{lin:add3}, $n_2.val$ $\neq$ $key$ must evaluate to true. Also from Lemma \ref{lem:val-abs}, we conclude that $\node(key)$ does not belong to $S_1.\abs$. And since from \obsref{node-val}, no node changes its key value after initialization, $\node(key)$ $\notin$ $S_2.\abs$, where $S_2$ is the pre-state of the $LP$ event of the method. Hence $\node(key)$ $\notin$ $(\prees{\add(key, true)})$. 
\item \textbf{\ref{lem:addT-conc-post}}: From the Lemma \ref{lem:addT-conc-pre} we get that \emph{\node(key)} is not present in the pre-state of the \lp event. From Lemma \ref{lem:change-abs}, it is known that only \lp event can change the $S.\abs$. Now after execution of the \lp event i.e. $write(n_1.next, n_3)$ in the Line \ref{lin:add6}, $\node(key)$ $\in$ $S'.\abs$, where $S'$ is the post-state of the \lp event of the method. Hence, $\langle \add(key, true) \Longrightarrow (\node(key) \in (\postes{\add(key, true)}) \rangle$.
\end{itemize}
\end{proof}

\begin{lemma}
\label{lem:addF-conc}
If some \add(key) method returns false in $E^H$, then 
\begin{enumerate}[label=\ref{lem:addF-conc}.\arabic*]
    \item \label{lem:addF-conc-pre}
    The $\node(key)$ is present in the pre-state of $LP$ event of the method. Formally, \\ $\langle \add(key, false) \Longrightarrow (node(key) \in (\prees{\add(key, false)}) \rangle$. 
     \item \label{lem:addF-conc-post}
    The $\node(key)$ is present in the post-state of $LP$ event of the method. Formally, \\ $\langle \add(key, false) \Longrightarrow (node(key) \in (\postes{\add(key, false)}) \rangle$. 
\end{enumerate}    
\end{lemma}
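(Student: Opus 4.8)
The plan is to mirror the structure of the proof of \lemref{addT-conc}, treating the two parts separately and exploiting the fact that the \lp of $\add(key, false)$ is the read event $read(n_2.val)$ at \lineref{add3}, which, being a read, cannot modify the $\abs$.

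First, for part \ref{lem:addF-conc-pre}, I would begin from the observation that for $\add(key)$ to return false the test $read(n_2.val) \neq key$ at \lineref{add3} must evaluate to false, i.e. $S.n_2.val = key$, where $S$ is the pre-state of this read (the \lp). I would then invoke \obsref{locate}: when the preceding $\loct$ call returns $\langle n_1, n_2 \rangle$, the nodes $n_1, n_2$ are locked and unmarked, and $S.n_1.next = S.n_2$. Because $n_2$ is locked by the executing thread, no other thread can mark it, so $n_2$ remains unmarked up to the \lp; by \lemref{mark-reach} every unmarked node is reachable from $\head$, hence $S.\head \rightarrow^* S.n_2$. Combining reachability with $\neg(S.n_2.marked)$ and \defref{abs} gives $n_2 \in S.\abs$, and since $S.n_2.val = key$ (node values never change, by \obsref{node-val}), this is exactly $\node(key) \in \prees{\add(key, false)}$.

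For part \ref{lem:addF-conc-post}, I would argue that the \lp is a read and therefore does not change the abstract set: by \lemref{change-abs}, only the write at \lineref{add6} of $\add$ and the write at \lineref{rem4} of $\rem$ can alter the $\abs$, and the \lp event $read(n_2.val)$ is neither. Hence the $\abs$ in the post-state of the \lp coincides with the $\abs$ in its pre-state, and part \ref{lem:addF-conc-post} follows immediately from part \ref{lem:addF-conc-pre}.

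I do not expect a serious obstacle, since this argument is essentially the dual of \lemref{addT-conc}. The one point that requires care is the claim that $n_2$ stays unmarked (and hence reachable) between the return of $\loct$ and the \lp read: this must be justified from the locking discipline, namely that the thread holds the lock on $n_2$ throughout, together with \obsref{locate} and \lemref{mark-reach}, rather than simply assumed. Once that invariant is secured, both parts reduce to direct applications of \defref{abs}.
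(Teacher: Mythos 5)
Your proposal is correct and follows essentially the same route as the paper's own proof: both parts rest on \obsref{locate}, \lemref{mark-reach}, \obsref{node-val} and \defref{abs} for the pre-state claim, and on the fact that the \lp of $\add(key,false)$ is a read (hence cannot alter the $\abs$) for the post-state claim. Your explicit justification that $n_2$ stays unmarked up to the \lp because the thread holds its lock is, if anything, slightly more careful than the paper's terse ``and the fact that it is locked.''
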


\begin{proof}
\noindent
\begin{itemize}
\item \textbf{\ref{lem:addF-conc-pre}}:
From Line \ref{lin:add2}, when $\loct$ returns in state $S_1$ we know that (from Observation \ref{obs:locate} \& Lemma \ref{lem:mark-reach}), nodes $n_1$ and $n_2$ are locked, ($n_1, n_2$) $\in$ $\nodes{S}$ and $n_1.next = n_2$. Also, $n_1.val$ $<$ $key$ $\leq$ $n_2.val$ from Lemma \ref{lem:loc-ret}.  If this method is to return false, Line \ref{lin:add3}, $n_2.val$ $\neq$ $key$ must evaluate to false. So $\node(key)$ which is $n_2$ belongs to $S_1.\abs$. 
And since from \obsref{node-val}, no node changes its key value after initialization and the fact that it is locked, $\node(key)$ $\in$ $S_2.\abs$, where $S_2$ is the pre-state of the $LP$ event of the method. Hence $\node(key)$ $\in$ $(\prees{\add(key, false)})$. 
\item \textbf{\ref{lem:addF-conc-post}}:
From the Lemma \ref{lem:addF-conc-pre} we get that \emph{\node(key)} is present in the pre-state of the \lp event. This \lp event  $n_2.val$ $\neq$ $key$ in Line \ref{lin:add3} does not change the $S.\abs$, Now after execution of the \lp event the $\node(key)$ also present in the $S'.\abs$, where $S'$ is the post-state of the \lp event of the method. Hence, $\langle \add(key, false) \Longrightarrow (\node(key) \in \\ (\postes{\add(key, false)}) \rangle$.

%From Line \ref{lin:add2}, when $\loct$ returns in state $S_1$ we know that (from Observation \ref{obs:locate} \& Lemma \ref{lem:mark-reach}), nodes $n_1$ and $n_2$ are locked, ($n_1, n_2$) $\in$ $\nodes{S}$ and $n_1.next = n_2$. Also, $n_1.val$ $<$ $key$ $\leq$ $n_2.val$ from Lemma \ref{lem:loc-ret}.  If this method is to return false, Line \ref{lin:add3}, $n_2.val$ $\neq$ $key$ must evaluate to false. So $\node(key)$ which is $n_2$ belongs to $S_1.\abs$. 
%And since from \obsref{node-val}, no node changes its key value after initialization and the fact that it is locked, $\node(key)$ $\in$ $S_2.\abs$, where $S_2$ is the pre-state of the $LP$ event of the method. Hence $\node(key)$ $\in$ $(\prees{\add(key, false)})$. 
\end{itemize}
%$n_2.val$ $=$ $key$ and $n_1.next = n_2$ and $n_2$ $\in$ $\nodes{S}$, we know $n_2$ $\in$ $S.\abs$, where $S$ is the pre-state of the $\lp$ event of the method. Hence $\node(key)$ $\in$ $(\prees{\add(key, false)}$. 
\end{proof}

\begin{lemma}
\label{lem:removeT-conc}
If some \rem(key) method returns true in $E^H$,  then 
\begin{enumerate}[label=\ref{lem:removeT-conc}.\arabic*]
    \item \label{lem:removeT-conc-pre}
    The $\node(key)$ is present in the pre-state of $LP$ event of the method. Formally, \\ $\langle \rem(key, true) \Longrightarrow (\node(key) \in (\prees{\rem(key, true)}) \rangle$. 
    \item \label{lem:removeT-conc-post}
     The $\node(key)$ is not present in the post-state of $LP$ event of the method. Formally, $\langle \rem(key, true) \Longrightarrow (\node(key) \notin (\postes{\rem(key, true)}) \rangle$. 
     \end{enumerate}
\end{lemma}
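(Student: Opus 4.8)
The plan is to mirror the structure of the proofs of Lemmas \ref{lem:addT-conc} and \ref{lem:addF-conc}, reasoning about the node $n_2$ returned by the \loct call at Line \ref{lin:rem2} and tracking whether $\node(key)$ belongs to $\abs$ across the \lp event $write(n_2.marked, true)$ at Line \ref{lin:rem4}. Both parts hinge on first pinning down the state $S_1$ in which \loct returns, and then propagating the relevant facts to the pre- and post-states of the \lp.

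For part \ref{lem:removeT-conc-pre}, I would invoke Observation \ref{obs:locate} together with Lemma \ref{lem:loc-ret}: in the state $S_1$ immediately after \loct returns $\langle n_1, n_2 \rangle$, both nodes are locked, lie in $\nodes{S_1}$, satisfy $S_1.n_1.next = S_1.n_2$, are unmarked, and obey $S_1.n_1.val < key \leq S_1.n_2.val$. Since the method returns true, the guard at Line \ref{lin:rem3}, $read(n_2.val) = key$, must evaluate to true, whence $S_1.n_2.val = key$. Because $n_2$ is unmarked, Lemma \ref{lem:mark-reach} makes it reachable from \head, so by Definition \ref{def:abs} we obtain $n_2 \in S_1.\abs$, i.e. $\node(key) \in S_1.\abs$. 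It then remains to carry this membership forward from $S_1$ to the pre-state $S_2$ of the \lp: here I rely on $n_2$ being held locked throughout Lines \ref{lin:rem3}--\ref{lin:rem4} (the releases occur only at Lines \ref{lin:rem6}, \ref{lin:rem7}, \emph{after} the \lp), so that no concurrent thread can mark $n_2$ or redirect the incoming pointer; combined with Observation \ref{obs:node-val} (the key field is immutable) and Lemma \ref{lem:change-abs} (only \lp events alter $\abs$), this yields $\node(key) \in S_2.\abs = \prees{\rem(key, true)}$.

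For part \ref{lem:removeT-conc-post}, I would start from $\node(key) = n_2 \in \prees{\rem(key, true)}$ just established. The \lp event is precisely $write(n_2.marked, true)$, which sets $n_2.marked = true$. By Definition \ref{def:abs} a marked node is excluded from $\abs$, so immediately after the \lp we get $n_2 \notin \postes{\rem(key, true)}$, and using Corollary \ref{cor:same-node} to rule out a second node with the same key surviving in $\abs$, we conclude $\node(key) \notin \postes{\rem(key, true)}$.

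The only genuinely delicate step — and the one I would treat most carefully — is the ``carry forward'' from \loct's return state $S_1$ to the \lp pre-state $S_2$ in part \ref{lem:removeT-conc-pre}: I must argue that nothing disturbs the presence of $n_2$ in $\abs$ between Line \ref{lin:rem3} and Line \ref{lin:rem4}. This is exactly where the locking discipline (locks acquired inside \loct and retained until after the \lp) does the work, guaranteeing that $n_2$ stays both unmarked and reachable right up to the moment the method itself marks it. Everything else is a routine application of the already-established invariants.
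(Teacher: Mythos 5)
Your proposal is correct and follows essentially the same route as the paper's own proof: both parts rest on Observation \ref{obs:locate}, Lemma \ref{lem:loc-ret}, the guard at Line \ref{lin:rem3} forcing $S_1.n_2.val = key$, reachability of the unmarked $n_2$ via Lemma \ref{lem:mark-reach}, and the fact that the \lp event $write(n_2.marked, true)$ removes $n_2$ from $\abs$ by Definition \ref{def:abs}. Your treatment is in fact slightly more careful than the paper's on the carry-forward from \loct's return state to the \lp pre-state (explicitly invoking the retained locks and Lemma \ref{lem:change-abs}), which the paper glosses over.
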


\begin{proof}
\noindent
\begin{itemize}
\item \textbf{\ref{lem:removeT-conc-pre}}:
From Line \ref{lin:rem2}, when $\loct$ returns in state $S_1$ we know that (from Observation \ref{obs:locate} \& Lemma \ref{lem:mark-reach}), nodes $n_1$ and $n_2$ are locked, ($n_1, n_2$) $\in$ $\nodes{S_1}$ and $n_1.next = n_2$. Also, $S_1.n_1.val$ $<$ $key$ $\leq$ $S_1.n_2.val$ from Lemma \ref{lem:loc-ret}. If this method is to return true, Line \ref{lin:rem3}, $n_2.val$ $=$ $key$ must evaluate to true. So we know that $\node(key)$ which is $n_2$ belongs to $S_1.\abs$. And since from \obsref{node-val}, no node changes its key value after initialization, $\node(key)$ $\in$ $S_2.\abs$, where $S_2$ is the pre-state of the $LP$ event of the method. Hence \\ $\node(key)$ $\notin$ $(\prees{\rem(key, true)})$. 
\item \textbf{\ref{lem:removeT-conc-post}}:
From the Lemma \ref{lem:removeT-conc-pre} we get that \emph{\node(key)} is present in the pre-state of the \lp event. This \lp event $write(n_2.marked, true)$ in the \lineref{rem4} changes the $S.\abs$. Now after execution of the \lp event the $\node(key)$ will not present in the $S'.\abs$, where $S'$ is the post-state of the \lp event of the method. Hence, $\langle \rem(key, true) \Longrightarrow \\ (\node(key) \notin (\postes{\rem(key, true)}) \rangle$.

%From Line \ref{lin:rem2}, when $\loct$ returns in state $S_1$ we know that (from Observation \ref{obs:locate} \& Lemma \ref{lem:mark-reach}), nodes $n_1$ and $n_2$ are locked, ($n_1, n_2$) $\in$ $\nodes{S_1}$ and $n_1.next = n_2$. Also, $S_1.n_1.val$ $<$ $key$ $\leq$ $S_1.n_2.val$ from Lemma \ref{lem:loc-ret}. If this method is to return true, Line \ref{lin:rem3}, $n_2.val$ $=$ $key$ must evaluate to true. So we know that $\node(key)$ which is $n_2$ belongs to $S_1.\abs$. And since from \obsref{node-val}, no node changes its key value after initialization, $\node(key)$ $\in$ $S_2.\abs$, where $S_2$ is the pre-state of the $LP$ event of the method. Hence  $\node(key)$ $\notin$ $(\prees{\rem(key, true)})$. 
\end{itemize}
%From Line \ref{lin:rem2}, when $\loct$ returns we know that (from Observation \ref{obs:locate} \& Lemma \ref{lem:mark-reach}), nodes $n_1$ and $n_2$ are locked and ($n_1, n_2$) $\in$ $\nodes{S}$. Also, $n_1.val$ $<$ $key$ $\leq$ $n_2.val$ from Lemma \ref{lem:loc-ret}. And after observing code, at Line \ref{lin:rem3} of $\rem$ method, $n_2.val$ = $key$. Also from $n_1.next$ = $n_2$ and $n_2$ $\in$ $\nodes{S}$, we know $n_2$ $\in$ $S.\abs$ where $S$ is the pre-state of the $\lp$ event of the method. Since Observation \ref{lem:node-val} tells, no node changes its key value after initialization. Hence \textit{\node(key)} $\in$ $(\prees{\rem(key, true)}$. 
\end{proof}

\begin{lemma}
\label{lem:removeF-conc}
If some \rem(key) method returns false in $E^H$,  then 
\begin{enumerate}[label=\ref{lem:removeF-conc}.\arabic*]
    \item \label{lem:removeF-conc-pre}
    The $\node(key)$ is not present in the pre-state of $LP$ event of the method. Formally, $\langle \rem(key, false) \Longrightarrow (\node(key) \notin  \prees{\rem(key, false)}) \rangle$. 
     \item \label{lem:removeF-conc-post}
    The $\node(key)$ is not present in the post-state of $LP$ event of the method. Formally, $\langle \rem(key, false) \Longrightarrow (\node(key) \notin  \postes{\rem(key, false)}) \rangle$. 
\end{enumerate}    

\end{lemma}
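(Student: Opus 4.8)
The plan is to follow the same template used for the preceding per-return-value lemmas, since $\rem(key,false)$ is the mirror image of $\add(key,true)$: in both, the guard at the \loct-bracketing step tells us that no node with value $key$ sits strictly between the returned neighbours $n_1,n_2$. First I would recall that the \lp of $\rem(key,false)$ is the read $read(n_2.val)$ at \lineref{rem3}, which is a read event and therefore, by \lemref{change-abs}, cannot alter $\abs$. This immediately reduces the post-state claim to the pre-state claim, because $\prees{\rem(key,false)}$ and $\postes{\rem(key,false)}$ then carry identical $\abs$; so the whole lemma collapses to establishing item \ref{lem:removeF-conc-pre}.

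For item \ref{lem:removeF-conc-pre}, I would argue exactly as in the proof of \lemref{addT-conc}. Let $S_1$ be the post-state of the return of \loct invoked at \lineref{rem2}, returning $\langle n_1,n_2\rangle$. By \obsref{locate} the nodes $n_1,n_2$ are locked, unmarked, reachable, with $n_1.next=n_2$, and by \lemref{loc-ret} we have $S_1.n_1.val < key \le S_1.n_2.val$. Since the method returns false, the guard $n_2.val=key$ at \lineref{rem3} must fail, so $n_2.val\neq key$; combined with $key\le n_2.val$ this yields the strict sandwich $S_1.n_1.val < key < S_1.n_2.val$ with $n_1$ unmarked. \lemref{val-abs} then gives $\node(key)\notin S_1.\abs$.

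The remaining step, and the only delicate one, is to carry $\node(key)\notin\abs$ forward from $S_1$ to the pre-state $S_2$ of the \lp (the read at \lineref{rem3}). Here I would use that between the return of \loct and the execution of \lineref{rem3} the invoking thread performs no $\abs$-changing event, and, crucially, that it still holds the locks on $n_1$ and $n_2$; hence no concurrent \add can link a fresh $\node(key)$ between $n_1$ and $n_2$ (such an insertion would itself have to acquire the lock on $n_1$), and by \obsref{node-val} no existing node can acquire value $key$. Thus $\node(key)\notin S_2.\abs = \prees{\rem(key,false)}$, and by the opening reduction $\node(key)\notin\postes{\rem(key,false)}$ as well. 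I expect this lock-based non-insertion argument to be the main obstacle, since the earlier proofs discharge the corresponding step only informally; everything else is a direct reuse of \lemref{loc-ret}, \lemref{val-abs}, and \lemref{change-abs}.
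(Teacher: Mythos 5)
Your proposal is correct and follows essentially the same route as the paper's proof: \obsref{locate} and \lemref{loc-ret} give the strict sandwich $n_1.val < key < n_2.val$ once the guard at \lineref{rem3} fails, \lemref{val-abs} yields $\node(key)\notin\abs$ at the \loct return state, and the read-only nature of the \lp handles the post-state. Your explicit lock-based argument for carrying the property from the \loct return state to the \lp pre-state is actually more careful than the paper, which discharges that step with only a reference to \obsref{node-val}.
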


\begin{proof}
\noindent
\begin{itemize}
\item \textbf{\ref{lem:removeF-conc-pre}}:
From Line \ref{lin:rem2}, when $\loct$ returns in state $S_1$ we know that (from Observation \ref{obs:locate} \& Lemma \ref{lem:mark-reach}), nodes $n_1$ and $n_2$ are locked, ($n_1, n_2$) $\in$ $\nodes{S_1}$ and $n_1.next = n_2$. Also, $S_1.n_1.val$ $<$ $key$ $\leq$ $S_1.n_2.val$ from Lemma \ref{lem:loc-ret}. If this method is to return false, Line \ref{lin:rem3}, $n_2.val$ $=$ $key$ must evaluate to false. Also from Lemma \ref{lem:val-abs}, we conclude that $\node(key)$ does not belong to $S_1.\abs$. And since from \obsref{node-val}, no node changes its key value after initialization, $\node(key)$ $\in$ $S_2.\abs$, where $S_2$ is the pre-state of the $LP$ event of the method. Hence $\node(key)$ $\notin$ $(\prees{\rem(key, false)})$.
\item \textbf{\ref{lem:removeF-conc-post}}:
From the Lemma \ref{lem:removeF-conc-pre} we get that \emph{\node(key)} is not present in the pre-state of the \lp event. This \lp event \\ $(read(n_2.val) = key)$ in the \lineref{rem3} does not change the $S.\abs$. Now after execution of the \lp event the $\node(key)$ will not present in the $S'.\abs$, where $S'$ is the post-state of the \lp event of the method. Hence, $\langle \rem(key, falase) \Longrightarrow (\node(key) \notin  (\postes{\rem(key, false)}) \rangle$.

\end{itemize}
\end{proof}

\ignore{
\begin{proof}
We prove the lemma by contradiction. Let us assume that global state $S$ has two non-consecutive nodes $n_p$ and $n_q$ such that $n_p$ is unmarked, $n_q$ is marked and $n_p.next$ = $n_q$. If $n_p.next$ = $n_q$, than $n_p$ and $n_q$ are two consecutive nodes, which contradicts to the initial assumption that $n_p$ and $n_q$ non-consecutive nodes. The node $n_q$ is marked in the state $S$, before marking it, both $n_p$ and $n_q$ are to be locked by the \emph{\loct} method returns (by the \obsref{locate}). And after marking and before releasing lock $n_p.next$ should point to the $n_q.next$ in the state $S'$, So $S'.n_p.next \neq S'.n_q$, which also contradicts to the assumption that $n_p.next$ = $n_q$.  Hence, $\langle \neg (S.n_p.marked) \land (S.n_q.marked) \land (S.n_p.next \neq n_q) \land (S \sqsubset S') \Longrightarrow (S'.n_p.next \neq S'.n_q) \rangle$. 
\end{proof}
}
\begin{lemma}
\label{lem:consec-mark}
Consider a global state $S$ which has two consecutive nodes $n_p$, $n_q$ which are marked. Then we say that marking event of $n_p$ happened before marking event of $n_q$. Formally, $\langle \forall S: (n_p, n_q \in \nodes{S}) \land (S.n_p.marked) \land (S.n_q.marked) \land (S.n_p.next = S.n_q) \Rightarrow (n_p.marked <_E n_q.marked) \rangle$.
\end{lemma}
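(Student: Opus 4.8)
The plan is to argue directly from the structure of the \rem method, exploiting the fact that the \emph{only} event that sets a node's $marked$ field to $true$ is \lineref{rem4}, where the node being marked is precisely the $n_2$ returned by that operation's internal \loct call. So I would fix the unique \rem operation, call it $op_p$, whose execution of \lineref{rem4} is the event $n_p.marked$, and let $S_1$ denote the global state immediately after the \loct of $op_p$ returns $\langle n_1, n_p\rangle$ (with $n_p$ playing the role of $n_2$). The whole proof then reduces to showing that $n_q$ is still unmarked at the instant $op_p$ executes \lineref{rem4}; once that is in hand, the conclusion is immediate.

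First I would pin down the successor of $n_p$ at the marking instant. By Observation~\ref{lem:node-marknext}, once $n_p$ is marked its $next$ pointer is frozen; since we are given $S.n_p.next = n_q$, this forces $n_p.next = n_q$ at the marking event itself. Moreover, throughout the window $[S_1,\, n_p.marked]$ the node $n_p$ is held locked by $op_p$ (\obsref{locate2}), and changing $n_p.next$ via \lineref{add6} or \lineref{rem5} would require another operation to lock $n_p$ as its own $n_1$; hence $n_p.next$ cannot change in this window and equals $n_q$ there as well. So $n_p.next = n_q$ at $S_1$ and at every point up to and including \lineref{rem4}.

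Next I would establish that $n_q$ is unmarked at the marking instant. \lemref{n2-next-marked} gives that in $S_1$ the successor of $n_2 = n_p$, namely $n_q$, is unmarked. I then carry this forward to \lineref{rem4}: to mark $n_q$, some \rem operation would have to lock $n_q$ together with the (unique, by the sortedness of \lemref{val-change} and \corref{same-node}) node pointing to it, which during this window is $n_p$ because $n_p.next = n_q$; but $n_p$ remains locked by $op_p$ until after \lineref{rem4}, so no operation can mark $n_q$ before $n_p$ is marked. Hence $n_q$ is still unmarked when the event $n_p.marked$ occurs. Since $n_q$ is marked in $S$ (by Observation~\ref{lem:node-mark} a marked node stays marked, so its marking event lies in the past of $S$), its marking event must occur strictly after $n_p.marked$, i.e.\ $n_p.marked <_E n_q.marked$, which is the claim.

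The main obstacle is the middle step: \lemref{n2-next-marked} is stated only at the \emph{return of} \loct (state $S_1$), whereas the marking happens slightly later at \lineref{rem4}. Bridging this gap cleanly is the crux, and it is exactly what the locking-window argument must secure — one needs both that $n_p.next$ stays equal to $n_q$ (so that $n_q$ really is the node whose non-markedness \lemref{n2-next-marked} guarantees) and that $n_q$ cannot be marked by a competitor in the interim, which in turn relies on $n_p$ being the unique locked predecessor of $n_q$ during the window. If one prefers to avoid reasoning about uniqueness of the predecessor, an alternative is a proof by contradiction using \obsref{consec-next}: assuming $n_q.marked <_E n_p.marked$ and examining the state just after $n_q$ is marked, one derives that $n_p$ (still unmarked then) can never come to point at the marked $n_q$, contradicting $S.n_p.next = n_q$; I would keep this as a fallback in case the locking-window bookkeeping proves awkward.
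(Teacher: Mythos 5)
Your primary argument takes a genuinely different route from the paper's. The paper proves the lemma by contradiction: it assumes $n_q$ was marked first, uses \obsref{consec-next} to force $n_p.next = n_q$ already in the post-state $S'$ of $n_q$'s marking, observes that the subsequent physical deletion of $n_q$ breaks that link, and invokes \obsref{consec-next} once more to conclude that the still-unmarked $n_p$ can never again come to point to the marked $n_q$, contradicting $S.n_p.next = S.n_q$. That is exactly the fallback you sketch in your last paragraph, so you already have the paper's proof in hand. Your preferred direct argument --- pin down the \rem operation $op_p$ that marks $n_p$, get that $n_q$ is unmarked at the return of \loct from \lemref{n2-next-marked}, and carry unmarkedness forward to \lineref{rem4} via the lock held on $n_p$ --- is attractive because it is constructive and localizes the reasoning to one operation's critical section, but it leans on a claim the paper never isolates: that $n_p$ is the \emph{only} unmarked node pointing to $n_q$ during the window, so that any competing \rem wishing to mark $n_q$ must contend for the lock on $n_p$. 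The delicate case is a competitor whose \loct returned \emph{before} your window opened, already holding locks on some other predecessor $n''$ with $n''.next = n_q$; ruling this out requires an invariant of the form ``at most one unmarked reachable node points to any given node,'' which is plausible from \lemref{val-change} and \lemref{mark-reach} but is not a stated lemma. The \obsref{consec-next} route buys you exactly the avoidance of that bookkeeping, at the cost of being indirect; if you keep the direct version, state and prove the unique-predecessor invariant explicitly rather than gesturing at sortedness.
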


\begin{proof}
We prove by contradiction. We assume that $n_q$ was marked before $n_p$. Let $S'$ be the post-state of marking of the node $n_q$. It can be seen as in Figure \ref{fig:lemma26} that the state $S$ follows $S'$, i.e., $S'$ $\sqsubset$ $S$. This is because in state $S$ both $n_p$ \& $n_q$ are marked. So we know that in $S'$, $n_p$ is unmarked and $n_q$ is marked.
\begin{figure}[H]
\captionsetup{font=scriptsize}
\centerline{\scalebox{0.65}{\input{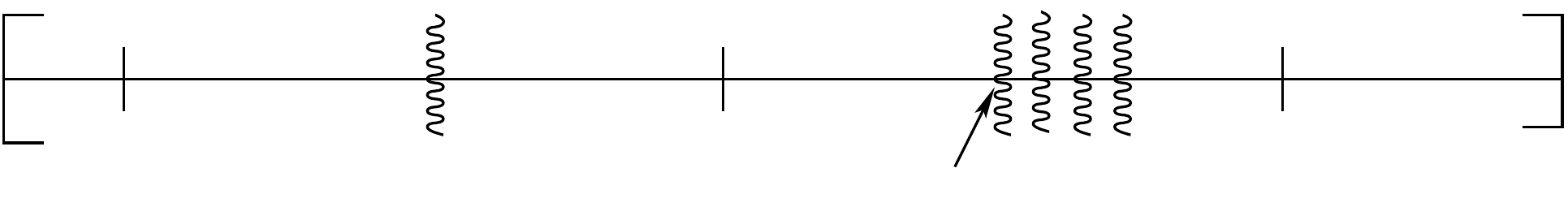_t}}}
\caption{Scenario when event $n_q$.marking happens before $n_p$.marking}
\label{fig:lemma26}
\end{figure}
\noindent Now suppose in $S'$: $(n_p.next$ $\neq$ $n_q$). So, $(S'.n_p.next$ $\neq$ $S'.n_q)$ $\land\\ (\neg S'.n_p.marked)$. Also in the state $S$, we have that $S.n_p.next = S.n_q$ and $n_p$ and $n_q$ are both marked. This contradicts the \obsref{consec-next} that $S'.n_p.next$ $\neq$ $S'.n_q$. Hence in $S'$: $n_p.next$ must point to $n_q$.\\
Consider some state $S''$ immediately before marking event of $n_q$. We know that $S''.n_p.next = S''.n_q$ (similar argument), and $n_p$, $n_q$ are both unmarked (from Observation \ref{obs:locate2}). Then in some state $R$ after $S'$ and before $S$, $n_p.next$ $\neq$ $n_q$. From \obsref{consec-next}, unmarked node cannot point to marked node. Hence in state $S$ also, we will have that $S.n_p.next$ $\neq$ $S.n_q$. This contradicts the given statement that $S.n_p.next = S.n_q$. Hence proved that in $S'$, $n_p$ was marked before $n_q$. 
\end{proof}

\begin{lemma}
\label{lem:containsT-conc}
If some \con(key) method returns true in $E^H$,  then
\begin{enumerate}[label=\ref{lem:containsT-conc}.\arabic*]
    \item \label{lem:containsT-conc-pre}
The $\node(key)$ is present in the pre-state of $LP$ event of the method. Formally,\\ $\langle \con(key, true) \Longrightarrow (\node(key) \in \prees{\con(key, true)}) \rangle$. 
 \item \label{lem:containsT-conc-post}
The $\node(key)$ is present in the post-state of $LP$ event of the method. Formally,\\ $\langle \con(key, true) \Longrightarrow (\node(key) \in \prees{\con(key, true)}) \rangle$.
\end{enumerate}
\end{lemma}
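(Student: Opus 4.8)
The plan is to analyze the unique \lp event of $\con(key, true)$, which by the stated \lp{s} is the atomic $read(n.marked)$ at \lineref{cont-tf}, and to show directly that $\node(key)$ satisfies the three membership conditions of \defref{abs} in the pre-state of this event. First I would observe that for $\con$ to return $true$, the branch condition $(read(n.val) \neq key) \vee (read(n.marked))$ at \lineref{cont-tf} must evaluate to $false$; hence at the \lp we have both $n.val = key$ and the read of $n.marked$ returning $false$. Let $S = \prees{\con(key, true)}$ be the pre-state of this read. Since an atomic read does not alter shared memory, $S.n.marked = false$, and by \obsref{node-val} the value of $n$ never changes, so $S.n.val = key$; moreover $n$ was reached by following $next$ pointers from $\head$ during the traversal loop (\lineref{con3}--\lineref{con4}), so $n \in \nodes{S}$ and $n$ is a public node.

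For part \ref{lem:containsT-conc-pre}, it then remains to establish reachability $S.\head \rightarrow^* S.n$. Here I would invoke \lemref{mark-reach}, which guarantees that every non-marked public node is reachable from $\head$; since $n$ is public and $S.n.marked = false$, we get $S.\head \rightarrow^* S.n$. Combining the three facts $n \in \nodes{S}$, $\neg S.n.marked$, and $S.\head \rightarrow^* S.n$ with \defref{abs} yields $\node(key) = n \in S.\abs$, which is exactly $\node(key) \in \prees{\con(key, true)}$.

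For part \ref{lem:containsT-conc-post}, the argument is short: the \lp is the $read(n.marked)$ event, and a read does not modify the global state and in particular cannot change $\abs$ (consistent with \lemref{change-abs}, which lists the only two events that can alter $\abs$, neither of which is a read). Hence if $S'$ denotes the post-state of the \lp, then $S'.\abs = S.\abs$, and from part \ref{lem:containsT-conc-pre} we conclude $\node(key) \in S'.\abs = \postes{\con(key, true)}$.

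The main obstacle I anticipate is justifying the reachability step rigorously: one must argue that the particular node $n$ at which the traversal halts is genuinely a public node at the \lp, so that \lemref{mark-reach} applies. The delicate point is that the traversal is lock-free, so the list may be mutated concurrently between the moment $n$ is visited and the \lp; the resolution is that ``public'' is a persistent property (a node that ever acquires an incoming link from the reachable portion of the list remains public), and that \lemref{mark-reach} is stated precisely to hold in the global state $S$ at the \lp for any unmarked public node, thereby decoupling it from when $n$ was first encountered during the scan.
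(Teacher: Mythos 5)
Your proposal is correct and follows essentially the same route as the paper's own proof: read off $n.val = key$ and $\neg n.marked$ from the false evaluation of the branch condition at the \lp, invoke \lemref{mark-reach} together with \obsref{node-val} and \defref{abs} to place $\node(key)$ in $\prees{\con(key, true)}$, and then note that the \lp is a read that cannot change $\abs$ to carry the conclusion to the post-state. Your closing remark about the persistence of the ``public'' property is a welcome extra precision that the paper leaves implicit, but it does not change the argument.
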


\begin{proof}
\noindent
\begin{itemize}
\item \textbf{\ref{lem:containsT-conc-pre}}:
By observing the code, we realize that at the end of while loop at Line \ref{lin:con5} of $\con$ method, $n.val$ $\geq$ $key$. To return $true$, $n.marked$ should be false in $(\prees{\con}$. But we know from Lemma \ref{lem:mark-reach} that any unmarked node should be reachable from head. Also, from Definition \ref{def:abs}, any unmarked nodes that are reachable belong to $\abs$ in that state. From the \obsref{node-val} we know that the node's key value does not change after initialization. Hence \\ $\node(key)$ $\in$ $(\prees{\con(key, true)}$.
\item \textbf{\ref{lem:containsT-conc-post}}:
From the Lemma \ref{lem:containsT-conc-pre} we get that \emph{\node(key)} is present in the pre-state of the \lp event. This \lp event $(read(n.val) \neq key) \vee (read(n.marked))$ in the \lineref{con6} does not change the $S.\abs$. Now after execution of the \lp event the $\node(key)$ will be present in the $S'.\abs$, where $S'$ is the post-state of the \lp event of the method. Hence, $\langle \con(key, true) \Longrightarrow (\node(key) \in (\postes{\con(key, true)}) \rangle$.

%By observing the code, we realize that at the end of while loop at Line \ref{lin:con5} of $\con$ method, $n.val$ $\geq$ $key$. To return $true$, $n.marked$ should be false in \\$(\prees{\con}$. But we know from Lemma \ref{lem:mark-reach} that any unmarked node should be reachable from head. Also, from Definition \ref{def:abs}, any unmarked nodes that are reachable belong to $\abs$ in that state. From the \obsref{node-val} we know that the node's key value does not change after initialization. Hence $\node(key)$ $\in$ $(\prees{\con(key, true)}$. 
\end{itemize}
\end{proof}

\begin{lemma}
\label{lem:con-rem-mark}
	Consider a global state $S$ which has a node $n$. If \con(key) method is running concurrently with a \rem(key) method and $\node(key) = n$ and $n$ is marked in the state $S$, then marking of $S.n$ happened only after \con(key) started. 
\end{lemma}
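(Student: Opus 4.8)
The plan is to argue by contradiction. Suppose, contrary to the claim, that the event marking $n$ occurs before the invocation of $\con(key)$. Since $n = \node(key)$ is marked and only Line~\ref{lin:rem4} of a $\rem$ method sets the $marked$ field, this marking is performed by the concurrent $\rem(key)$ at its $write(n_2.marked,true)$ event with $n_2 = n$; call its located predecessor $n_1$. By \obsref{locate} we then have $n_1 \to n$ with both nodes locked and unmarked when $\loct$ returned, and $S.n_1.val < key = S.n.val$ by \lemref{loc-ret}. Immediately after the marking, this $\rem$ executes Line~\ref{lin:rem5}, setting $n_1.next = n.next$ and thereby physically unlinking $n$.

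Next I would establish that once $n$ has been marked and unlinked it is permanently unreachable from $\head$ along unmarked nodes. The strictly increasing value ordering of \lemref{val-change} guarantees that at every state the live nodes form a simple sorted path from $\head$, so $n$ has a unique predecessor, namely $n_1$ before the unlink. After Line~\ref{lin:rem5} no node points to $n$, and by \obsref{consec-next} no unmarked node can later come to point to the (still marked, by Observation~\ref{lem:node-mark}) node $n$; combined with \lemref{mark-reach} this shows $n$ never returns to the $\head$-path. On the other hand, for $\con(key)$ to be affected by $n$ at state $S$ it must reach $n$ during its lock-free traversal (Lines~\ref{lin:con2}--\ref{lin:con6}), which requires reading $p.next = n$ for $n$'s current predecessor $p$; by uniqueness of predecessors this forces $p = n_1$ and places that read strictly before the unlink at Line~\ref{lin:rem5}.

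The delicate point — and the step I expect to be the main obstacle — is upgrading ``before the unlink'' to ``after $\con$ started'', i.e.\ ruling out the narrow window in which $n$ is already marked (Line~\ref{lin:rem4} has fired) but not yet unlinked (Line~\ref{lin:rem5} has not), during which a freshly started $\con(key)$ could still race down to $n_1$, read $n_1.next = n$, and observe the mark. To close this gap I would reason about the ordering of $\con$'s predecessor-read relative to the marking event: since the traversal bringing $\con$ to $n_1$ begins at $\head$ at $\con$'s start and $n_1$ is exactly the predecessor fixed by $\rem$'s $\loct$, I would show that the configuration $\con$ must observe to continue (an unmarked $n_1$ with $n_1.next = n$ and $n$ reachable) coincides with the pre-state of $\rem$'s marking event, so $\con$'s arrival at $n_1$, and hence its start, precedes the marking — contradicting the assumption. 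Formalising this final ordering cleanly, rather than hand-waving over the mark-then-unlink window, is where the real work lies; the remaining pieces are routine appeals to the reachability, locking, and value-ordering facts already proved in \obsref{locate}, \lemref{mark-reach}, and \lemref{val-change}.
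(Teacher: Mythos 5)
You have put your finger on exactly the right difficulty, but the way you propose to resolve it does not go through --- and in fact cannot, because the ``narrow window'' you describe is a genuine execution, not an artifact of your proof strategy. Concretely: let \rem(key) be invoked, run \loct, acquire its locks, and execute the marking at \lineref{rem4}; only then let \con(key) be invoked, before \rem performs the unlink at \lineref{rem5} or returns. The two methods overlap, so they are concurrent; $n$ is still physically linked ($n_1.next = n$), so \con traverses down to $n_1$, reads $n_1.next = n$, and finds $n$ marked at \lineref{con6} --- yet the marking strictly precedes \con's invocation. Your proposed repair is to argue that the configuration \con must observe in order to continue (``an unmarked $n_1$ with $n_1.next = n$ and $n$ reachable'') coincides with the pre-state of \rem's marking event. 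But the \con traversal (Lines \ref{lin:con3}--\ref{lin:con4}) reads only the $val$ and $next$ fields of intermediate nodes and never inspects their $marked$ bits; the only read of a $marked$ field is of the final node at \lineref{con6}. Nothing in the code forces \con's arrival at $n_1$ --- let alone its invocation --- to precede the marking of $n$. So the final ordering you defer as ``the real work'' is not merely delicate, it is unobtainable, and the statement as literally written is falsified by this interleaving.

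For what it is worth, there is no proof in the paper to compare against: the proof environment for this lemma is left empty, and the lemma is not invoked elsewhere (the \con(key, false) case is handled directly in Lemma \ref{lem:containsF-conc} by case analysis on concurrent adds and by backtracking over marked prefixes). The first two-thirds of your sketch --- attributing the marking to \lineref{rem4} of the concurrent \rem, and using \obsref{consec-next}, Observation \ref{lem:node-mark}, \lemref{mark-reach} and \lemref{val-change} to show that once marked and unlinked the node never again becomes reachable from \head --- is sound, and it is the right toolkit for a weakened, provable version of the claim, e.g.\ that if \con(key) reaches $n$ via its traversal and finds it marked, then the \emph{unlink} of $n$ at \lineref{rem5} did not precede \con's read of $n_1.next$; that is the fact the surrounding development actually needs.
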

\begin{proof}
 
\end{proof}
%\vspace{2mm}
\noindent \textbf{Notations used in Lemma \ref{lem:containsF-conc}:}\\
$\con(key)$ executes the while loop to find out location of the node $n_x$ where $n_x.val$ $\leq$ $key$ and $n_x$ $\in$ $\abs$. We denote execution of the last step $n_x$ = $read(n_{x-1}.next)$ which satisfies $n_x.val$ $\leq$ $key$. Also note that $n_{x-1}$ represents the execution of penultimate loop iteration in sequential scenario. \figref{lemma36notations} depicts the global state used in the \lemref{containsF-conc}.
%is the reference to the node $x$ in $\abs$.
\begin{enumerate}
\item {\textbf{$S_{x-1}$}: Global state after the execution of $n_{x-1}$ = $read(n_{x-2}.next)$ at Line \ref{lin:con4}.}
\item {\textbf{$S_{x-1}'$}: Global state after the execution of $read(n_{x-1}.val)$ at Line \ref{lin:con3}.}
\item {\textbf{$S_{x}$}: Global state after the execution of $read(n_{x-1}.next)$ at Line \ref{lin:con4}.}
\item {\textbf{$S_{x}'$}: Global state after the execution of $read(n_{x}.val)$ at Line \ref{lin:con6}.}
\item {\textbf{$S_{x}''$}: Global state after the execution of $read(n_{x}.marked)$ at Line \ref{lin:con6}.}
\end{enumerate}

\begin{figure*}[!htbp]
%\lable{fig:}
\captionsetup{font=scriptsize}
\centerline{\scalebox{0.55}{\input{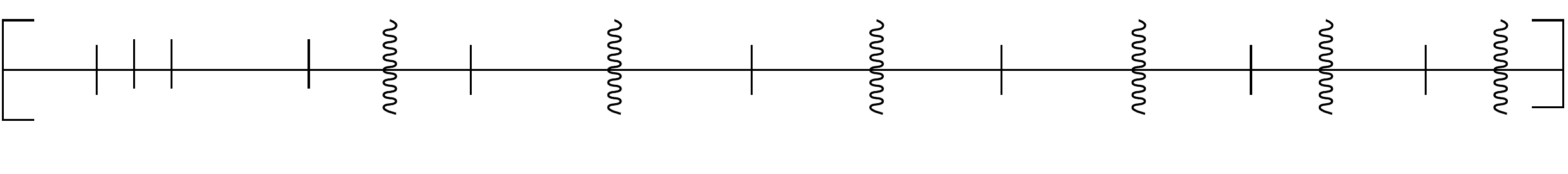_t}}}
\caption{The global state representation for \lemref{containsF-conc} }
\label{fig:lemma36notations}
\end{figure*}

\begin{lemma}
\label{lem:containsF-conc}
If some \con(key) method returns false in $E^H$,  then
\begin{enumerate}[label=\ref{lem:containsF-conc}.\arabic*]
    \item \label{lem:containsF-conc-pre}
The $\node(key)$ is not present in the pre-state of $LP$ event of the method. Formally, $\langle \con(key, false) \Longrightarrow (\node(key) \notin \prees{\con(key, false)}) \rangle$. 
 \item \label{lem:containsF-conc-post}
The $\node(key)$ is also not present in the post-state of $LP$ event of the method. Formally, $\langle \con(key, false) \Longrightarrow (\node(key) \notin \prees{\con(key, false)}) \rangle$.
\end{enumerate}
\end{lemma}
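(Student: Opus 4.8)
The plan is to prove both parts by a case analysis that mirrors the definition of the \lp of $\con(key, false)$ given above, which splits into the cases (a)/(b) where the \lp lies inside the \con method (no concurrently linking $\add(key, true)$) and the case (c) where the \lp is the dummy event placed just before the write of a concurrent $\add(key, true)$. In every case I will first establish the pre-state claim \ref{lem:containsF-conc-pre}, and then obtain the post-state claim \ref{lem:containsF-conc-post} almost for free: in cases (a) and (b) the \lp is a $read$ event, and in case (c) the \lp is a dummy event that precedes the $\add$'s linking write, so in none of the cases does the \lp event itself insert $\node(key)$ into $\abs$ (consistent with \lemref{change-abs}). Hence $\node(key)\notin\prees{\con(key, false)}$ immediately gives $\node(key)\notin\postes{\con(key, false)}$.

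For case (a), the \lp is $read(n.val)$ at \lineref{con6} with $n.val\neq key$; since the traversal exits the loop only when $n.val\geq key$, this forces $n.val>key$. Writing $n=n_x$ in the notation above, the penultimate read gives $n_{x-1}.next=n_x$ in state $S_x$, and, using \obsref{node-val} (values never change) together with the loop guard, $n_{x-1}.val<key<n_x.val$. If $n_{x-1}$ is unmarked in $S_x$, then \lemref{val-abs} applied to $n_{x-1}$ yields $\node(key)\notin S_x.\abs$; since there is no concurrent $\add(key, true)$ linking a new node, $\node(key)$ cannot enter $\abs$ between $S_x$ and the \lp, so $\node(key)\notin\prees{\con(key, false)}$. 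If instead $n_{x-1}$ is marked, I will invoke Observation~\ref{lem:node-marknext} to note that $n_{x-1}.next=n_x$ is frozen and argue via \lemref{val-change} and \corref{same-node} that no unmarked reachable node of value $key$ can coexist, again giving $\node(key)\notin\abs$.

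For case (b), the \lp is $read(n.marked)$ at \lineref{con6} with $n.val=key$ and $n.marked=true$, so the reached node $n$ is $\node(key)$ itself but marked, hence $n\notin\abs$ by \defref{abs}. It remains to rule out a second, unmarked, reachable node of value $key$: \corref{same-node} forbids two nodes of equal value in $\abs$, and the strict ordering of \lemref{val-change} together with \lemref{mark-reach} forbids two reachable nodes of value $key$. Here \lemref{con-rem-mark} is the key tool, since it tells us that the marking of $n$ occurred after $\con$ started; combined with the absence of a concurrent $\add(key, true)$, this shows nothing re-inserts an unmarked $\node(key)$, so $\node(key)\notin\prees{\con(key, false)}$. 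For case (c), the \lp is the dummy event immediately preceding the write $write(n_1.next, n_3)$ of the concurrent $\add(key, true)$; by \lemref{addT-conc} (its pre-state part) $\node(key)$ is absent from the pre-state of that write, and since the dummy event precedes the write, $\node(key)\notin\prees{\con(key, false)}$, with the post-state claim following because the linking write is strictly after the dummy \lp.

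The main obstacle I expect is the pre-state argument in cases (a) and (b) under concurrent $\rem(key)$ activity, specifically the marked-predecessor subcase of (a) and the uniqueness argument of (b): because \con holds no locks and the \lp can sit at a read whose target was already (or is being) physically unlinked, I must carefully connect the read-states $S_x, S_x', S_x''$ to the true \lp pre-state and ensure that strict sortedness (\lemref{val-change}), reachability of unmarked nodes (\lemref{mark-reach}), and the marking-order facts (\obsref{consec-next}, \lemref{consec-mark}, \lemref{con-rem-mark}) together exclude any unmarked reachable $\node(key)$. The case (c) argument, by contrast, should be routine once \lemref{addT-conc} is available.
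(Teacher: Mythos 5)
Your proposal follows essentially the same route as the paper's proof: the same case split driven by the \lp definition of $\con(key,false)$ (your case (c) simply merges the paper's Cases 3 and 4, which differ only in whether the missed node is absent or marked), the same key lemmas (\lemref{val-abs}, \lemref{val-change}, \obsref{consec-next}, \lemref{consec-mark}, \lemref{addT-conc}), and the same observation that the post-state claim is immediate because the \lp is a read or dummy event that cannot alter $\abs$. The marked-predecessor subcases that you flag as the main obstacle are indeed where the paper expends most of its effort --- it backtracks to the first unmarked node $n_i$, orders the marking events via \lemref{consec-mark}, and exhibits an intermediate state $S_k$ in which $n_{x-1}$ and $n_x$ are both unmarked with $n_{x-1}.next = n_x$ so that \lemref{val-abs} applies, then pushes absence forward using the no-concurrent-add hypothesis --- and the tools you name are exactly the ones that argument uses.
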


\begin{proof}
\begin{itemize}
\item \textbf{\ref{lem:containsF-conc-pre}}:
There are following cases:
\begin{enumerate}
\item \textbf{Case 1}: \textit{key is not present in the Pre-State of read($n_x.val$ $\neq$ key) event at Line \ref{lin:con6} of Contains method, which is the LP of contains(key, false). We assume that there is no concurrent add from $S_1$ until $S_x'$.}

\begin{figure*}[!htbp]
\captionsetup{font=scriptsize}
\centerline{\scalebox{0.6}{\input{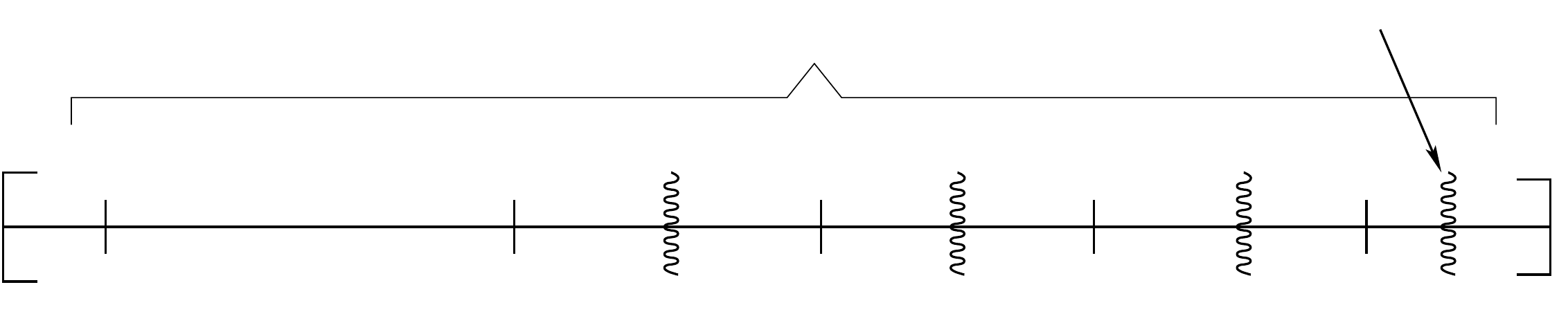_t}}}
%\centerline{\scalebox{0.6}{\input{figs/NOadd1.pdf_t}}}

\caption{LP of $\con(key, false)$ with no successful concurrent $\add$ is at $read(n_x.val$ $\neq$ $key)$ at Line \ref{lin:con6}.}
\label{fig:con:case1}
\end{figure*}

\begin{enumerate}
\item \textbf{Given}: $(S_{x-1}'.\head \rightarrow^* S_{x-1}'.n_{x-1}) \land$ $(S_{x-1}'.n_{x-1}.marked = false)$ \\
\textbf{To Prove}: $\node(key)$ $\notin$ $S_x'.\abs$
\begin{equation} 
\label{lab:1a.Contains.mFalse.2}
 S_{x-1} . n_{x-1} . val \geq key \text{\hspace{.25cm} (Line \ref{lin:con3} of the \con method)}
\end{equation} 

\begin{equation} 
\label{lab:1a.Contains.mFalse.1}
S_{x-1}' . n_{x-1} . val < key \text{\hspace{.25cm} (Line \ref{lin:con4} of the Contains method)}
\end{equation}

\begin{equation} 
\label{lab:1a.Contains.mFalse.3}
 S_{x}' . n_{x} . val > key \text{\hspace{.25cm} (Line \ref{lin:con6} of the Contains method)}
\end{equation} 
\begin{equation} 
\label{lab:1a.Contains.mFalse.4}
 S_{x-1}' . n_{x-1} . val <  S_{x-1}' . n_{x-1} . next . val \text{\hspace{.25cm} (from Lemma \ref{lem:val-change})}
\end{equation} 
\begin{equation} 
\label{lab:1a.Contains.mFalse.5}
 S_{x} . n_{x-1} . next = S_{x} . n_{x} \text{\hspace{.25cm} (Line \ref{lin:con4} of the contains method)}
\end{equation} 
\begin{equation} 
\begin{split}
\label{lab:1a.Contains.mFalse.6}
 S_{x-1}' . n_{x-1} . val < S_{x-1}' . n_{x}.val \text{\hspace{.25cm} (from Equation \ref{lab:1a.Contains.mFalse.4} \& \ref{lab:1a.Contains.mFalse.5} \&  \obsref{node-val})}
\end{split}
\end{equation} 
Combining the equations \ref{lab:1a.Contains.mFalse.1},\ref{lab:1a.Contains.mFalse.3} \& \ref{lab:1a.Contains.mFalse.6} we have,
\begin{equation} 
\begin{split}
\label{lab:1a.Contains.mFalse.7}
 (S_{x-1}' . n_{x-1} .val < key < S_{x-1}'.n_{x}.val)  \implies (\node(key) \notin S_{x-1}'.\abs)
\end{split}
\end{equation} 
Now since no concurrent add on $key$ happens between $S_1$ until $S_x'$ we have that,
\begin{equation} 
\label{lab:1a.Contains.mFalse.8}
\langle \node(key) \notin S_{x}'.\abs \rangle
\end{equation} 
%\textbf{Note:} Since the proof is independent of the value of $S_{x-1}'.n_x.marked$, we ignore its value. 

\item \textbf{Given}: $(S_{x-1}'.\head \rightarrow^* S_{x-1}'.n_{x-1})$ $\land$ $(S_{x-1}'.n_{x-1}.marked = true)$
%and $S_{x-1}'.n_{x}.marked = true$ 
\\
\textbf{To Prove}: $\node(key)$ $\notin$ $S_x'.\abs$ \\[0.2cm]
From given, we have that, 
\begin{equation}
\label{lab:1b.Contains.mFalse.9}
(n_{x-1} \notin S_{x-1}'.\abs ) 
%\land (n_{x} \notin S_{x-1}'.\abs) 
\end{equation}
Let $n_i$ be the first unmarked node belonging to $S_{i}'.\abs$ while traversing the linked list of $n_1$, \dots, $n_i$, $n_{i+1}$, $n_{i+2}$, $\dots$, $n_{x-1}$, $n_x$, \dots nodes. Therefore,
\begin{equation}
\label{lab:1b.Contains.mFalse.10}
n_i \in S_{i}'.\abs
\end{equation}
In the worst case, $n_i$ could be the $\head$ node $n_1$.
\begin{equation}
\label{lab:1b.Contains.mFalse.11}
\text{We know that, } ( n_{i+1} \text{ to } n_{x-1}) \notin  (S_{i+1}'.\abs \text{ to } S_{x-1}'.\abs)
\end{equation}

In the linked list of $n_1$, \dots, $n_i$, $n_{i+1}$, $n_{i+2}$, $\dots$, $n_{x-1}$, $n_x$, \dots nodes, where $n_{i+1}$, $n_{i+2}$, $\dots$, $n_{x-1}$ are marked and consecutive, we can conclude (from Lemma \ref{lem:consec-mark}) that,

\begin{equation} 
\begin{split}
\label{lab:1b.Contains.mFalse.12}
 (S_{i+2}'.n_{i+1}.next = S_{i+2}'.n_{i+2}) \land (S_{i+2}'.n_{i+1}.marked) \land \\ (S_{i+2}'.n_{i+2}.marked)  \implies (n_{i+1}.marking <_E  n_{i+2}.marking) 
 \end{split}
\end{equation} 

In state $S_i'$, we know that $n_i.next = n_{i+1}$. Depending upon the status of node $n_{i+1}$ in $S_i'$, we have two possible situations:

\begin{enumerate}
\item $S_i'.n_{i+1}.unmarked$ \\[0.2cm]
Since we know that in $S_{i+1}': n_{i+1}.marked$. Thus we have that,
\begin{equation}
\begin{split}
Contains.read(n_i) <_E Remove.marking(n_{i+1}) <_E \\ Remove.marking(n_{i+2})
\end{split}
\end{equation}

\item $S_i'.n_{i+1}.marked$ \\[0.2cm]
We know that in $S_{i+1}': n_{i+1}.next = n_{i+2}$.
From Equation \ref{lab:1b.Contains.mFalse.12}, we can conclude that in $S_i':$ $n_{i+2}$ is $unmarked$.
From \lemref{next-next-unmarked},
\begin{equation}
\begin{split}
Remove1.unlock(n_{i+1}) <_E \\ Remove2.lock(n_{i+2}) <_E Remove2.marking(n_{i+2})
\end{split}
\end{equation}

Hence we can conclude that,

\begin{equation}
\label{lab:1b.Contains.mFalse.13}
\con.read(n_i) <_E n_{i+1}.marking <_E n_{i+2}.marking
\end{equation}
\end{enumerate}

Now consider a state $S_k$ in which $n_{x-1}$ is unmarked. From the \lemref{consec-mark} we have 
\begin{equation} 
\label{lab:nx-marked}
 n_{x-1}.marked <_E n_x.marked
 \end{equation} 

From the \obsref{consec-next} and from the Equation \ref{lab:nx-marked} we have,

\begin{equation} 
\label{lab:1b.Contains.mFalse.15}
 \exists S_k: (S_k.n_{x-1}.marked = false) \xRightarrow[]{\obsref{consec-next}}  S_k.n_{x}.marked = false 
 \end{equation} 
 Let us call the state immediately after the marking of $n_{x-1}$ as $S_k'$ as below:
 \begin{figure*}[!htbp]
 \centering
 \captionsetup{font=scriptsize}
\centerline{\scalebox{0.65}{\input{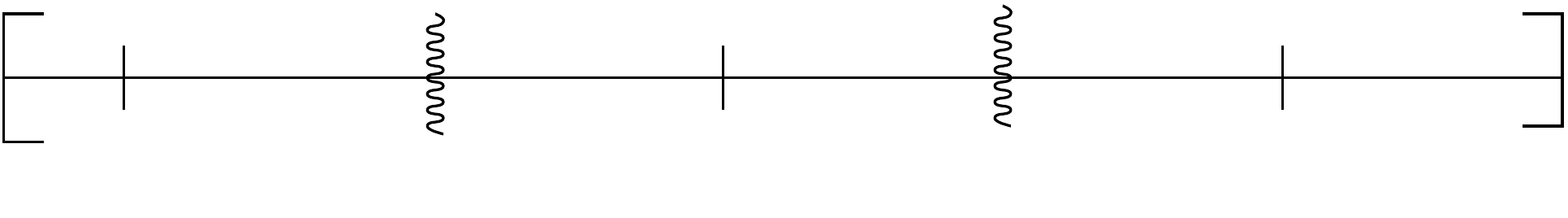_t}}}
\caption{$\con(key, false)$ with no successful concurrent $\add$ on key. $S_{x-1}'.n_{x-1}.marked = true$ \\ and $S_{x-1}'.n_{x}.marked = true$ and $\node(key)$ $\notin$ $S_x'.\abs$ at Line \ref{lin:con6}}
%\label{fig:trans-tree}
\end{figure*}
 
Combining Observation \ref{lem:node-marknext} and \ref{lem:node-mark}, we know that,
\begin{equation} 
\label{lab:1b.Contains.mFalse.16}
S_k'.n_{x-1}.next = S_k'.n_x
\end{equation} 
Also since $n_{x-1}$.marking is the only event between $S_k$ and $S_k'$, we can say that,
\begin{equation} 
\label{lab:1b.Contains.mFalse.16.1}
S_k.n_{x-1}.next = S_k.n_x
\end{equation} 

Also by observing the code of \con method, we have the following:
\begin{equation} 
\label{lab:1b.Contains.mFalse.17}
S_{x-1}' . n_{x-1} . val < key \text{\hspace{.25cm} (Line \ref{lin:con4} of the $\con$ method)}
\end{equation}
\begin{equation} 
\label{lab:1b.Contains.mFalse.18}
 S_{x} . n_{x} . val \geq key \text{\hspace{.25cm} (Line \ref{lin:con3} of the $\con$ method)}
\end{equation} 
\begin{equation} 
\label{lab:1b.Contains.mFalse.19}
 S_{x}' . n_{x} . val > key \text{\hspace{.25cm} (Line \ref{lin:con6} of the $\con$ method)}
\end{equation} 

\begin{equation} 
\label{lab:nx-nx-1-marked}
 (\neg S_{k}.n_{x-1}.marked) \land (\neg S_{k}.n_{x}.marked)
 \text{\hspace{.25cm} (by the \lemref{consec-mark})}
 \end{equation}

Combining the equations \ref{lab:1b.Contains.mFalse.15},\ref{lab:1b.Contains.mFalse.16.1}, \ref{lab:1b.Contains.mFalse.17} \& \ref{lab:1b.Contains.mFalse.19}, \ref{lab:nx-nx-1-marked} and \obsref{node-val} and \ref{lem:node-mark},

\begin{equation} 
\begin{split}
\label{lab:1b.Contains.mFalse.20}
 (S_{k} . n_{x-1} .val < key < S_{k}.n_{x}.val) \land (\neg S_{k}.n_{x-1}.marked) \land \\ (\neg S_{k}.n_{x}.marked) \land (S_{k}.n_{x-1}.next = S_k.n_x)  \xRightarrow[]{\lemref{val-abs}} \\ (\node(key) \notin S_{k}.\abs)
\end{split}
\end{equation} 
Now since no concurrent $\add$ happens between $S_1$ and $S_x'$ we have that,
\begin{equation} 
\label{lab:1b.Contains.mFalse.21}
 \node(key) \notin S_{x}'.\abs
\end{equation}

 \end{enumerate}

\item \textbf{Case 2}: \textit{key is present, but marked in the Pre-State of \emph{read(n.marked)} event at Line \ref{lin:con6} of \con method, which is the \lp of \con(key, false). We assume that there is no concurrent \add from $S_1$ until $S_x'$.}

\begin{figure*}[!htbp]
\captionsetup{font=scriptsize}
\centerline{\scalebox{0.55}{\input{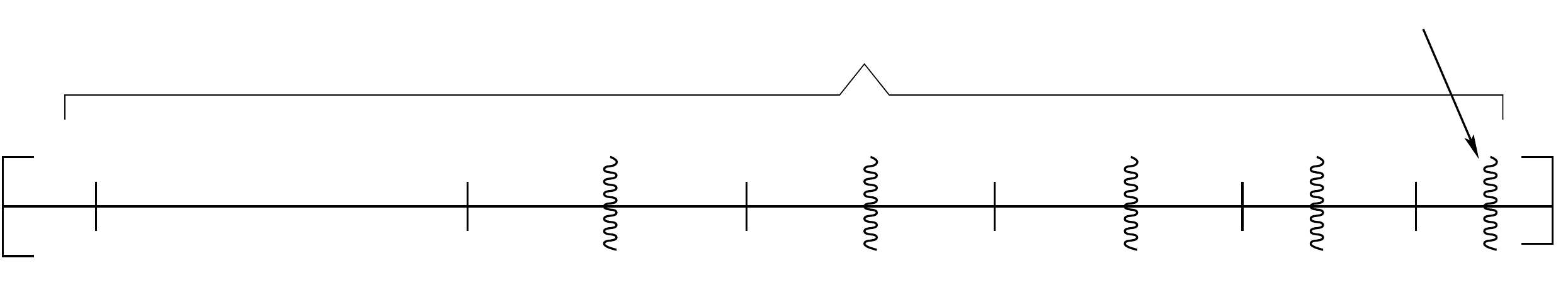_t}}}
\caption{LP of $\con(key, false)$ with no successful concurrent $\add$ is at $read(n.val$ = $key)$ at Line \ref{lin:con6}}
%\label{fig:trans-tree}
\end{figure*}

\begin{enumerate}
\item \textbf{Given:} $S_{x-1}'.n_{x-1}.marked = false \land S_{x-1}'.n_{x}.marked = true$\\
\textbf{To Prove}: $\node(key)$ $\notin$ $S_x''.\abs$
\begin{equation} 
\label{lab:2a.Contains.mFalse.1}
S_{x-1}' . n_{x-1} . val < key \text{\hspace{.25cm} (Line \ref{lin:con4} of the $\con$ method)}
\end{equation}
\begin{equation} 
\label{lab:2a.Contains.mFalse.2}
 S_{x} . n_{x} . val \geq key \text{\hspace{.25cm} (Line \ref{lin:con3} of the $\con$ method)}
\end{equation} 
\begin{equation} 
\label{lab:2a.Contains.mFalse.3}
 S_{x}' . n_{x} . val = key \text{\hspace{.25cm} (Line \ref{lin:con6} of the $\con$ method)}
\end{equation} 
\begin{equation} 
\label{lab:2a.Contains.mFalse.3.1}
 S_{x-1}' . n_{x} . marked = true \text{\hspace{.25cm} (Given)}
\end{equation} 
\begin{equation} 
\label{lab:2a.Contains.mFalse.3.2}
 S_{x}'' . n_{x} . marked = true \text{\hspace{.25cm} (From Observation \ref{lem:node-mark})}
\end{equation} 
\begin{equation} 
\label{lab:2a.Contains.mFalse.4}
 S_{x-1}' . n_{x-1} . val <  S_{x-1}' . n_{x-1} . next . val \text{\hspace{.25cm} (from Lemma \ref{lem:val-change})}
\end{equation} 
\begin{equation} 
\label{lab:2a.Contains.mFalse.5}
 S_{x-1}' . n_{x-1} . next = S_{x-1}' . n_{x} \text{\hspace{.15cm} (Line \ref{lin:con4} of the $\con$ method)}
\end{equation} 
\begin{equation} 
\label{lab:2a.Contains.mFalse.6}
 S_{x-1}' . n_{x-1} . val < S_{x-1}' . n_{x}.val \text{\hspace{.25cm} (from Equation \ref{lab:2a.Contains.mFalse.4} \& \ref{lab:2a.Contains.mFalse.5})}
\end{equation} 
Combining the equations \ref{lab:2a.Contains.mFalse.1},\ref{lab:2a.Contains.mFalse.3} \& \ref{lab:2a.Contains.mFalse.6} and \obsref{node-val},
\begin{equation} 
\begin{split}
\label{lab:2a.Contains.mFalse.7}
 (S_{x-1}' . n_{x-1} .val < (key = S_{x-1}'.n_{x}.val)) \land (key \neq S_{x-1}' . n_{x-1} .val) \\ \land (S_{x-1}' . n_{x} . marked)  \implies (\node(key) \notin S_{x-1}'.\abs)
\end{split}
\end{equation} 
Now since no concurrent $\add$ happens between $S_1$ and $S_x''$ we have that,
\begin{equation} 
\label{lab:2a.Contains.mFalse.8}
 \node(key) \notin S_{x}''.\abs
\end{equation}

\item \textbf{Given:} $S_{x-1}'.n_{x-1}.marked = true \land S_{x-1}'.n_{x}.marked = true$\\
\textbf{To Prove}: $\node(key)$ $\notin$ $S_x''.\abs$ \\
From given, we have that, 
\begin{equation}
\label{lab:2b.Contains.mFalse.1}
(n_{x-1} \notin S_{x-1}'.\abs ) \land (n_{x} \notin S_{x-1}'.\abs) 
\end{equation}
From $S_{x-1}'.n_{x-1}$ we backtrack the nodes until we find the first node $n_i$ belonging to $S_{x-1}'.\abs$. Therefore,
\begin{equation}
\label{lab:2b.Contains.mFalse.2}
n_i \in S_{x-1}'.\abs
\end{equation}
In the worst case, $S_{x-1}'.n_i$ could be the $Head$ node.

\begin{equation}
\label{lab:2b.Contains.mFalse.3}
\text{We know that, \hspace{1cm}   } ( n_{i+1} \text{ to } n_{x}) \notin  (S_{x-1}'.\abs)
\end{equation}

In the linked list of $n_1$, $n_{i+1}$, $n_{i+2}$, $\dots$, $n_{x-1}$, $n_x$ nodes, where $n_{i+1}$, $n_{i+2}$, $\dots$, $n_{x}$ are marked and consecutive, we can conclude (from Lemma \ref{lem:consec-mark}) that,
\begin{equation}
\begin{split}
\label{lab:2b.Contains.mFalse.5}
\con.read(n_1) <_E \con.read(n_i) <_E \\ \rem.unlock(n_{i+1}) <_E n_{i+2}.marking <_E \\ n_{i+3}.marking  \ldots <_E n_{x-1}.marking  <_E n_{x}.marking
\end{split}
\end{equation}

This implies that marking of $n_{i+1}$to $n_{x}$ completes after $\con(key, false)$ started.
\begin{equation} 
\label{lab:2b.Contains.mFalse.6}
 \con.read(n_1) <_E n_{x-1}.marking
\end{equation} 

Now consider a state $S_{k+1}$ in which $n_{x-1}$ was observed to be unmarked. Let us call the state immediately after the marking of $n_x$ as $S_{k+1}'$ as follows: 
 
 \begin{figure*}[!htbp]
 \centering
 \captionsetup{font=scriptsize}
\centerline{\scalebox{0.6}{\input{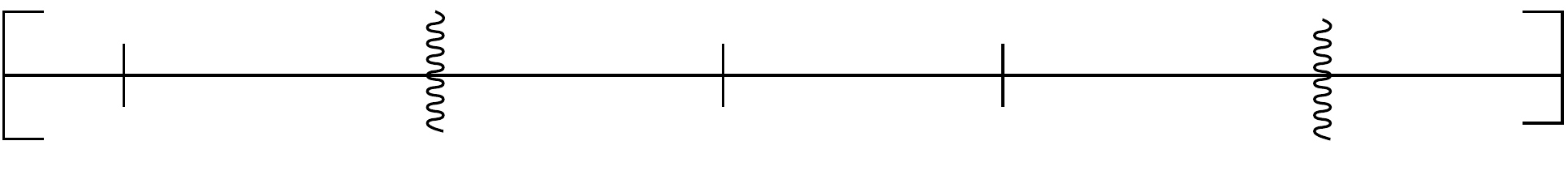_t}}}
\caption{$\con(key, false)$ with no successful concurrent $\add$. $S_{k+1}'.n_{x}.marked = true$,  $\node(key)$ $\notin$ $S_x'.\abs$ at Line \ref{lin:con6} \\LP of $\con(key, false)$ with no successful concurrent $\add$ is at $read(n.val$ = $key)$ at Line \ref{lin:con6}}
%\label{fig:trans-tree}
\end{figure*}

Since a marked node remains marked (from Observation \ref{lem:node-mark}),
\begin{equation} 
\label{lab:2b.Contains.mFalse.8}
 S_{k+1}'.n_{x}.marked \implies S_{x}''.n_{x}.marked 
 \end{equation} 
 
Also by observing the code of $\con$ method, we have the following:
\begin{equation} 
\label{lab:2b.Contains.mFalse.9}
S_{x-1}' . n_{x-1} . val < key \text{\hspace{.25cm} (Line \ref{lin:con4} of the \con method)}
\end{equation}
\begin{equation} 
\label{lab:2b.Contains.mFalse.10}
 S_{x} . n_{x} . val \geq key \text{\hspace{.25cm} (Line \ref{lin:con3} of the \con method)}
\end{equation} 
\begin{equation} 
\label{lab:2b.Contains.mFalse.10.5}
 S_{x}' . n_{x} . val = key \text{\hspace{.25cm} (Line \ref{lin:con6} of the \con method)}
\end{equation} 
Combining the equations \ref{lab:2b.Contains.mFalse.10.5},\ref{lab:2b.Contains.mFalse.8} \& and from Observation \ref{obs:node-val} \& \ref{lem:node-mark},
\begin{equation} 
\begin{split}
\label{lab:2b.Contains.mFalse.11}
 (S_{k+1}' . n_{x} .val = key) \land (S_{k+1}' . n_{x} . marked) \implies (\node(key) \notin S_{k+1}'.\abs)
\end{split}
\end{equation} 
Now since no concurrent $\add$ happens between $S_1$ and $S_x''$ we have that,
\begin{equation} 
\label{lab:2b.Contains.mFalse.12}
 \node(key) \notin S_{x}''.\abs
\end{equation} 
 
\end{enumerate}

\item \textbf{Case 3}: \textit{key is not present in the Pre-State of the \lp of \con(key, false) method. \lp is a dummy event inserted just before the \lp of the \add. We assume that there exists a concurrent \add from $S_1$ until $S_x'$.}

\begin{figure*}[!htbp]
\captionsetup{font=scriptsize}
 \centerline{\scalebox{0.6}{\input{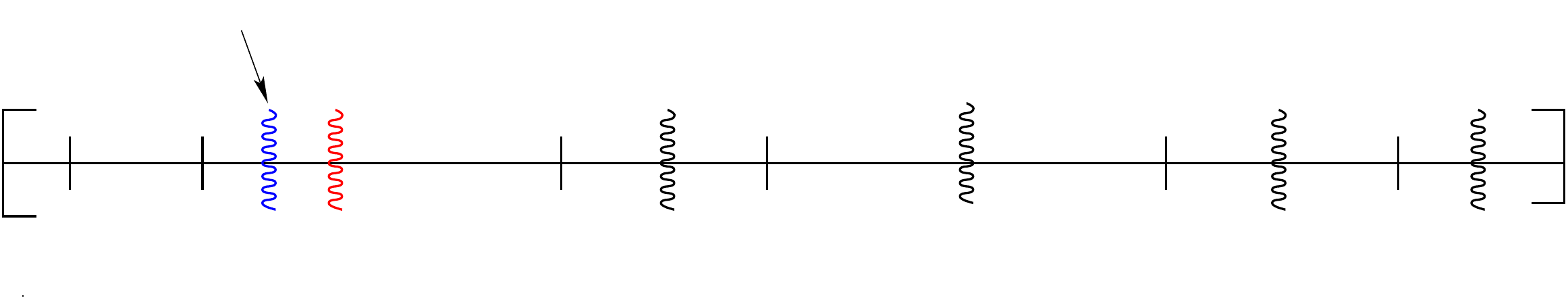_t}}}
\caption{\lp of $\con(key, false)$ with successful concurrent $\add$ is at $read(n.val$ = $key)$}
\label{case3}
\end{figure*}
\textbf{To prove:} $\node(key)$ $\notin$ $S_{dummy}.\abs$\\[0.1cm]
From Lemma \ref{lem:addT-conc}, we know that if add returns true, then $\node(key)$ does not belong to the $\abs$ in the pre-state of the $\lp$ of add method. We add a dummy event just before this $\lp$ event of add method as in Figure \ref{case3}. 

\begin{equation} 
\label{lab:3.Contains.mFalse}
 \node(key) \notin S_{dummy}.\abs 
\end{equation}

\item \textbf{Case 4}: \textit{key is present, but marked in the Pre-State of the \lp of \con(key, false) method. \lp is a dummy event inserted just before the \lp of the \add. We assume that there exists a concurrent \add from $S_1$ until $S_x'$.}

\begin{figure*}[!htbp]
\captionsetup{font=scriptsize}
\centerline{\scalebox{0.55}{\input{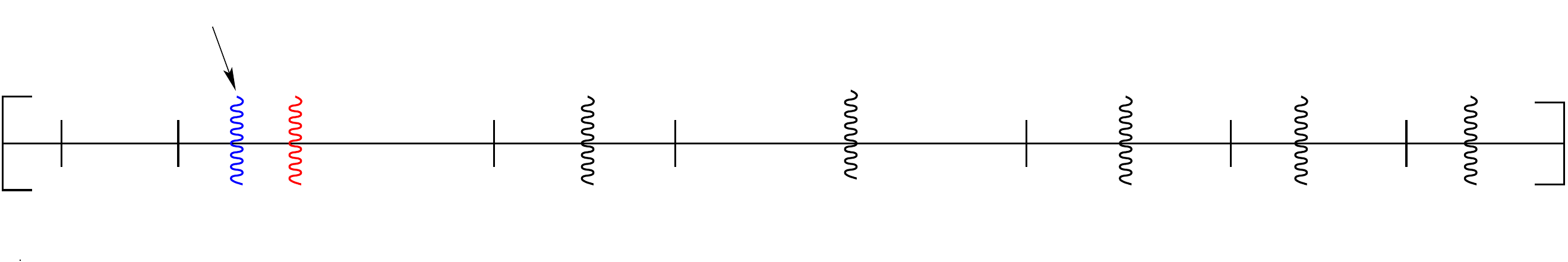_t}}}
\caption{\lp of $\con(key, false)$ with successful concurrent $\add$ is at $read(n.val$ = $key)$ at Line \ref{lin:con6}}
\label{case4}
\end{figure*}

\textbf{To prove:} $\node(key)$ $\notin$ $S_{dummy}.\abs$\\[0.1cm]
From Lemma \ref{lem:addT-conc}, we know that if \add returns true, then $\node(key)$ does not belong to the $\abs$ in the pre-state of the $\lp$ of \add method. We add a dummy event just before this $\lp$ event of \add method as in Figure \ref{case4}. 

\begin{equation} 
\label{lab:4.Contains.mFalse}
 \node(key) \notin S_{dummy}.\abs 
 \end{equation}
 
\end{enumerate} 
\item \textbf{\ref{lem:containsF-conc-post}}:
From the Lemma \ref{lem:containsF-conc-pre} we get that \emph{\node(key)} is not present in the pre-state of the \lp event. This \lp event $(read(n.val) \neq key) \vee (read(n.marked))$ in the \lineref{con6} does not change the $S.\abs$. Now after execution of the \lp event the $\node(key)$ will also not present in the $S'.\abs$, where $S'$ is the post-state of the \lp event of the method. Hence, $\langle \con(key, false) \Longrightarrow (\node(key) \notin (\postes{\con(key, flase)}) \rangle$.
\end{itemize}
\end{proof}

\begin{lemma}
\label{lem:pre-ret}
\textbf{lazy-list Specific Equivalence:}
Consider a concurrent history $H$ and a sequential history $\spl{S}$. Let $m_x, m_y$ be \mth{s} in $H$ and $\spl{S}$ respectively. Suppose the following are true (1) The \abs in the pre-state of $m_x$'s \lp in $H$ is the same as the \abs in the pre-state of $m_y$ in $\spl{S}$;  (2) The \inv events of $m_x$ and $m_y$ are the same. Then (1) the $\rsp$ event of $m_x$ in $H$ must be same as $\rsp$ event of $m_y$ in $\spl{S}$; (2) The \abs in the post-state of $m_x$'s \lp in $H$ must be the same as the \abs in the post-state of $m_y$ in $\spl{S}$. Formally, $\langle \forall m_x \in \mths{E^{H}}, \forall m_y \in \mths{E^{\spl{S}}}: (\prees{m_x} = \prems{m_y}) \wedge (\inves{m_x} = \invms{m_y}) \Longrightarrow \\ (\postes{m_x} = \postms{m_y}) \wedge \\ (\retes{m_x} = \retms{m_y}) \rangle$.

%Consider a concurrent history $H$ and a sequential history $\spl{S}$. Let $m_x, m_y$ be \mth{s} in $H$ and $\spl{S}$ respectively. Suppose the following are true (1) The \abs in the pre-state of $m_x$'s \lp in $H$ is the same as the \abs in the pre-state of $m_y$ in $\spl{S}$;  (2) The \inv events of $m_x$ and $m_y$ are the same. Then the $\rsp$ event of $m_x$ in $H$ must be same as $\rsp$ event of $m_y$ in $\spl{S}$. Formally, $\langle \forall m_x \in E^{H}, \forall m_y \in E^{\spl{S}}: (\prees{m_x} = \prems{m_y}) \wedge (\inves{m_x} = \invms{m_y}) \Longrightarrow (\retes{m_x} = \retms{m_y}) \rangle$.

%If the pre-state of a \lp event in the concurrent execution is the same as the pre-state of a method in sequential execution, then the \rsp of the concurrent execution must be same as \rsp of the method in sequential execution. Formally, $\langle (\prees{m_i} = \prems{m_i}) \Longrightarrow (\retes{m_i} = \retms{m_i}) \rangle$.
\end{lemma}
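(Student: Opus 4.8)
The plan is to prove this \llse{} (the lazy-list instantiation of the generic \cdse in \defref{pre-resp}) by reducing everything to the membership of the single node $\node(key)$ in the \abs, since that is the only piece of information on which the return value and the \abs-transition of each lazy-list \mth depend. Because the hypothesis supplies $\inves{m_x} = \invms{m_y}$, both $m_x$ and $m_y$ are invocations of the same \mth (one of \add, \rem, \con) with the same $key$; write $\mathcal{A}$ for their common pre-state \abs, i.e.\ $\prees{m_x} = \prems{m_y} = \mathcal{A}$. First I would strengthen the already-proved \mth-specific concurrent lemmas into biconditionals. For example, \lemref{addT-conc} and \lemref{addF-conc} say that $\add(key,true)$ forces $\node(key)\notin\mathcal{A}$ while $\add(key,false)$ forces $\node(key)\in\mathcal{A}$; since an \add \mth returns exactly one of $true$/$false$ and these two consequents are mutually exclusive, this yields ``$\add$ returns $true$ in $E^H$ iff $\node(key)\notin\mathcal{A}$''. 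Running the same contrapositive argument on \lemref{removeT-conc}/\lemref{removeF-conc} and on \lemref{containsT-conc}/\lemref{containsF-conc} pins down the concurrent response of $m_x$ purely as a function of whether $\node(key)\in\mathcal{A}$.

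Second, I would read off the sequential response of $m_y$ directly from the sequential specification \obsref{seq-spec}: each of its six clauses fixes the \rsp of the sequential \mth from the membership of $\node(key)$ in the pre-state \abs. Since $m_x$ and $m_y$ invoke the same \mth with the same $key$ and share the pre-state \abs $\mathcal{A}$, the test $\node(key)\in\mathcal{A}$ has the same verdict on both sides; matching the biconditional of the previous paragraph against the corresponding clause of \obsref{seq-spec} gives $\retes{m_x} = \retms{m_y}$, which is the first required conclusion.

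Third, for the post-state equality I would invoke \lemref{change-abs} (equivalently \asmref{change-abds}): the only event inside $m_x$ that can alter the \abs is its \lp, so $\postes{m_x}$ differs from $\mathcal{A}$ only through that one event. The \mth-specific post-state claims record this change at the level of $\node(key)$: a successful \add produces $\mathcal{A}\cup\{\node(key)\}$, a successful \rem produces $\mathcal{A}\setminus\{\node(key)\}$, and the remaining four cases (\add/\rem returning $false$ and both \con cases) leave the \abs equal to $\mathcal{A}$. On the sequential side the matching clauses of \obsref{seq-spec} prescribe exactly the same transformation of the pre-state set. As the responses already agree (hence the same one of the six cases applies on both sides) and both pre-states are $\mathcal{A}$, the post-states $\postes{m_x}$ and $\postms{m_y}$ coincide, giving the second conclusion; feeding this into \thmref{hist-lin} then yields \lbty of the lazy-list.

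The routine part is the six-way case enumeration over method type and return value. The one step requiring genuine care is the strengthening of the one-directional \mth-specific lemmas into biconditionals, which rests on totality (every \mth returns $true$ or $false$) together with the mutual exclusivity of $\node(key)\in\mathcal{A}$ and $\node(key)\notin\mathcal{A}$. The truly delicate reasoning --- notably the $\con(key,false)$ case whose \lp can lie outside the \mth --- has already been discharged inside \lemref{containsF-conc}, so here it is only cited as a black box.
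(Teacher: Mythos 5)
Your proposal is correct and follows essentially the same route as the paper: a six-way case split over method type and return value, combining the per-method concurrent pre-state lemmas (\lemref{addT-conc}, \lemref{addF-conc}, \lemref{removeT-conc}, \lemref{removeF-conc}, \lemref{containsT-conc}, \lemref{containsF-conc}) with the sequential specification (\obsref{seq-spec}) to force $\retes{m_x} = \retms{m_y}$. The only differences are that the paper phrases the argument by contradiction where you make the underlying biconditional explicit, and that you additionally spell out the post-state conclusion via \lemref{change-abs} and the post-state clauses of those same lemmas, a step the paper's written proof leaves implicit.
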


\begin{proof}
\ignore{
We prove by Induction on \lp events,\\[0.1cm]
\textbf{Base Step:} Initially, after the first LP event,

\begin{equation} 
\begin{split}
\label{lab:eq:inv-resp-lp}
 \langle (\prees{m_1} = \prems{m_1}) \wedge (\inves{m_1} = \invms{m_1}) \Longrightarrow\\ (\retes{m_1} = \retms{m_1}) \rangle
  \end{split}
\end{equation} 
 
\textbf{Induction Hypothesis:} \\[0.1cm]
We assume that $k^{th}$ event corresponding to the $k^{th}$ method is \lp and \\
\begin{equation} 
\begin{split}
\label{lab:eq:pre-resp-k}
 \langle(\prees{m_k} = \prems{m_k}) \wedge (\inves{m_k} = \invms{m_k}) \Longrightarrow\\ (\retes{m_k} = \retms{m_k})\rangle
  \end{split}
\end{equation} 
\textbf{Induction Step:} \\
We need to prove that $(k+1)^{th}$ event corresponding to the $(k+1)^{th}$ method satisfies:
\begin{equation} 
\begin{split}
\label{lab:eq:pre-resp-k+1}
 \langle(\prees{m_{k+1}} = \prems{m_{k+1}}) \wedge (\inves{m_{k+1}} = \invms{m_{k+1}}) \Longrightarrow\\ (\retes{m_{k+1}} = \retms{m_{k+1}})\rangle.
  \end{split}
\end{equation} 

We know from the hypothesis  
\begin{equation} 
\begin{split}
\label{lab:eq:pre-hypo}
 \langle \prees{m_{k}} = \prems{m_{k}} \rangle
  \end{split}
\end{equation} 
and from \asmref{resp-lp} we have,
\begin{equation} 
\begin{split}
\label{lab:eq:pre-resp-lp}
 \langle (\inveds{k} = \invmds{k}) \wedge (\reteds{k} = \retmds{k}) \xRightarrow[]{\text{From  \asmref{resp-lp}}} \\ (\lpeds{k} = \lpmds{k})  \rangle
 \end{split}
\end{equation} 
from the equation \ref{lab:eq:pre-hypo} and \ref{lab:eq:pre-resp-lp} we have,
\begin{equation} 
\label{lab:eq:pre-k-equal}
\langle \prees{m_{k}} = \prems{m_{k}} \rangle
\end{equation} 
 Now we have for $k^{th}$ method,
 \begin{equation} 
\label{lab:eq:ret-post-equal}
\langle (\retes{m_k} = \retms{m_k}) \Longrightarrow (\postes{m_{k}} = \postms{m_{k}}) \rangle
\end{equation} 
and from \lemref{seqe-post-pre} we have
 \begin{equation} 
\label{lab:eq:post-pre-k}
\langle \postes{m_{k}} = \prees{m_{k+1}} \rangle
\end{equation} 
and from \lemref{conce-post-pre} we have
 \begin{equation} 
\label{lab:eq:post-pre-mk}
\langle \postms{m_{k}} = \prems{m_{k+1}} \rangle
\end{equation} 

applying equations \ref{lab:eq:post-pre-k} and \ref{lab:eq:post-pre-mk} in the equation \ref{lab:eq:ret-post-equal} we have
\begin{equation} 
\begin{split}
\label{lab:eq:post-pre-equal}
\langle \postes{m_{k}} = \postms{m_{k}} \Longrightarrow \\ \prees{m_{k+1}} = \prems{m_{k+1}} \rangle
\end{split}
\end{equation} 
 }
 
 Let us prove by contradiction.
So we assume that, 
\begin{equation}
 \label{lab:eq:pre-inv-resp}
     \begin{split}
         \langle (\prees{m_x} = \prems{m_y}) \wedge \\ (\inves{m_x} =  \invms{m_y}) \Longrightarrow  (\retes{m_x} \neq \retms{m_y}) \rangle
     \end{split}
 \end{equation}

We have the following cases that $\inves{m_x}$ is invocation of either of these methods:

\begin{enumerate}
\item \textbf{$m_x.inv$ is \add(key) Method:}
\begin{itemize}
\item \textbf{$m_x.resp$ = true:} Given that the method $m_x.resp$ which is \emph{\add(key)} returns $true$, we know that from the Lemma \ref{lem:addT-conc}, \textit{\node(key)} $\notin$ $\prees{\add(key, true)}$. But since from assumption in equation \ref{lab:eq:pre-inv-resp}, $(\retes{m_x} \neq \retms{m_y})$, $\retms{m_y}$ is false. However, from the Observation \ref{obs:addT-seq}, if $\node(key)$ $\notin$ pre-state of $\lp$ of $\add$ method, then the $\add(key, true)$ method must return $true$ in $E^{\spl{S}}$. This is a contradiction.

\item \textbf{$m_x.resp$ = false:} Given that the method $m_x.resp$ which is \emph{\add(key)} returns $false$, we know that from the Lemma \ref{lem:addF-conc}, \textit{\node(key)} $\in$ $\prees{\add(key, false)}$. But since from assumption in equation \ref{lab:eq:pre-inv-resp}, $(\retes{m_x} \neq \retms{m_y})$, $\retms{m_y}$ is false. However, from the Observation \ref{obs:addF-seq}, if $\node(key)$ $\in$ pre-state of $\lp$ of $\add$ method, then the $\add(key, false)$ method must return $false$ in $E^{\spl{S}}$. This is a contradiction.

%\textbf{Return false}, Given that the \emph{\add} method returns $false$, we know that from the Lemma \ref{lem:addF-conc}, \textit{\node(key)} $\in$ $\prees{\add}$. So from the Observation \ref{obs:addF-seq}, if $\node(key)$ $\in$ pre-state of $\lp$ of $\add$ method, then the $\add(key, false)$ method must return $false$ in $E^{\spl{S}}$.

%then $\lp$ is the read at Line \ref{lin:add3} of $\add$ method. We need to prove that \textit{key} $\in$ $\prees{\add}$, which follows from Lemma \ref{lem:addF-conc}. We know from Observation \ref{obs:addF-seq} that if \node(key) $\in$ pre-state of $\lp$ of $\add$ method, then the $\add$ method must return $false$ in $E^{\spl{S}}$.
\end{itemize}

\item \textbf{$m_x.inv$ is \rem(key) Method:}
\begin{itemize}
\item \textbf{$m_x.resp$ = true:} Given that the method $m_x.resp$ which is \emph{\rem(key)} returns $true$, we know that from the Lemma \ref{lem:removeT-conc}, \textit{\node(key)} $\in$ $\prees{\rem(key, true)}$. But since from assumption in equation \ref{lab:eq:pre-inv-resp}, $(\retes{m_x} \neq \retms{m_y})$, $\retms{m_y}$ is false. However, from the Observation \ref{obs:removeT-seq}, if $\node(key)$ $\in$ pre-state of $\lp$ of $\rem$ method, then the $\rem(key, true)$ method must return $true$ in $E^{\spl{S}}$. This is a contradiction.

\item \textbf{$m_x.resp$ = false:} Given that the method $m_x.resp$ which is \emph{\rem(key)} returns $false$, we know that from the Lemma \ref{lem:removeF-conc}, \textit{\node(key)} $\notin$ $\prees{\rem(key, false)}$. But since from assumption in equation \ref{lab:eq:pre-inv-resp}, $(\retes{m_x} \neq \retms{m_y})$, $\retms{m_y}$ is false. However, from the Observation \ref{obs:removeF-seq}, if $\node(key)$ $\notin$ pre-state of $\lp$ of $\rem$ method, then the $\rem(key, false)$ method must return $false$ in $E^{\spl{S}}$. This is a contradiction.
\end{itemize}

\item \textbf{$m_x.inv$ is \con(key) Method:}
\begin{itemize}
\item \textbf{$m_x.resp$ = true:} Given that the method $m_x.resp$ which is \emph{\con(key)} returns $true$, we know that from the Lemma \ref{lem:containsT-conc}, \textit{\node(key)} $\in$ $\prees{\con(key, true)}$. But since from assumption in equation \ref{lab:eq:pre-inv-resp}, $(\retes{m_x} \neq \retms{m_y})$, $\retms{m_y}$ is false. However, from the Observation \ref{obs:containT-seq}, if $\node(key)$ $\in$ pre-state of $\lp$ of $\con$ method, then the $\con(key, true)$ method must return $true$ in $E^{\spl{S}}$. This is a contradiction.

%\textbf{Returns true}, Given that the \emph{\con} method returns $true$, we know that from the Lemma \ref{lem:containsT-conc}, \textit{\node(key)} $\in$ $\prees{\con}$. So, from the Observation \ref{obs:containT-seq}, if $\node(key)$ $\in$ pre-state of $\lp$ of $\con$ method, then the $\con$ method must return $true$ in $E^{\spl{S}}$.
%then $\lp$ is read of the marked field at Line \ref{lin:loc6} of $\con$ method. We need to prove that \textit{key} $\in$ $\prees{\con}$, which follows from Lemma \ref{lem:containsT-conc}. We know from Observation \ref{obs:containT-seq} that if \node(key) $\in$ pre-state of $\lp$ of $\con$ method, then the $\con$ method must return $true$ in $E^{\spl{S}}$.\\
 
\item \textbf{$m_x.resp$ = false:} Given that the method $m_x.resp$ which is \emph{\con(key)} returns $false$, we know that from the Lemma \ref{lem:containsF-conc}, \textit{\node(key)} $\notin$ \\ $\prees{\con(key, false)}$. But since from assumption in equation \ref{lab:eq:pre-inv-resp}, $(\retes{m_x} \neq \retms{m_y})$, $\retms{m_y}$ is false. However, from the Observation \ref{obs:containF-seq}, if $\node(key)$ $\notin$ pre-state of $\lp$ of $\con$ method, then the $\con(key, false)$ method must return $false$ in $E^{\spl{S}}$. This is a contradiction.

%\textbf{Returns false}, Given that the \emph{\con} method returns $false$, we know that from the Lemma \ref{lem:containsF-conc}, \textit{\node(key)} $\notin$ $\prees{\con}$. So, from the Observation \ref{obs:containF-seq}, if $\node(key)$ $\notin$ pre-state of $\lp$ of $\con$ method, then the $\con$ method must return $false$ in $E^{\spl{S}}$.

%then $\lp$ is as described in the previous section. We need to prove that \textit{key} $\notin$ $(\prees{\con}$, which follows from Lemma \ref{lem:containsF-conc}. We know from Observation \ref{obs:containF-seq} that if \node(key) $\notin$ pre-state of $\lp$ of $\con$ method, then the $\con$ method must return $false$ in $E^{\spl{S}}$. 

\end{itemize}
\end{enumerate}
Thus we conclude that the $\rsp$ event of $m_x$ in $H$ must be same as $\rsp$ event of $m_y$ in $\spl{S}$. Formally, $\langle \retes{m_x} = \retms{m_y} \rangle$. 
\end{proof}

\begin{lemma}
\label{lem:hist-lin-lazy}
All histories ${H}$ generated by the Lazy List  are \lble.
\end{lemma}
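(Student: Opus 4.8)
The plan is to obtain this lemma as an immediate instantiation of the generic \thmref{hist-lin}, which already establishes that any \cds satisfying the four assumptions of \subsecref{abds} and for which the \cdse (\defref{pre-resp}) holds is \lble. Thus the whole task reduces to (i) checking that the \lazy meets \asmref{mth-total}, \asmref{seq-legal}, \asmref{lp-evt} and \asmref{change-abds}, and (ii) observing that \lemref{pre-ret} is precisely \defref{pre-resp} specialized to the \lazy, with the generic $\abds$ read as the set $\abs$ of \defref{abs}. No new combinatorial reasoning about the list itself is needed here; the proof is purely a matter of assembling facts already proved in \subsecref{con-lazy-list}.

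First I would discharge the four assumptions one by one, citing the supporting results. Totality (\asmref{mth-total}) follows from the sequential specification in \tabref{seq-spe}, which assigns a well-defined return value to \add, \rem and \con from every reachable $\abs$; legality of every sequential history (\asmref{seq-legal}) is exactly \obsref{seq-spec}. Uniqueness of the \lp event (\asmref{lp-evt}) is witnessed by the explicit enumeration of the six \lp{s}, one per method-outcome pair, given for the \lazy in \subsecref{con-lazy-list}. Finally, the requirement that only \lp events alter the abstract set (\asmref{change-abds}) is the content of \lemref{change-abs} together with \corref{change-abs}, which show that the only state-changing writes are $write(n_1.next, n_3)$ at \lineref{add6} and $write(n_2.marked, true)$ at \lineref{rem4}, and that these coincide with the declared \lp{s} of a successful \add and \rem.

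With the assumptions in place I would then invoke \lemref{pre-ret}: for every \mth $m_x$ in a concurrent history $H$ and the corresponding \mth $m_y$ in the constructed sequential history $\sh{H}$ sharing the same pre-state of the \lp and the same \inv event, the response and the post-state agree. This is exactly the hypothesis \thmref{hist-lin} requires (it is what drives \lemref{conce-seqe-pre} and \lemref{ret-lin}). Applying that theorem to $d = \lazy$ yields that every complete $H$ is equivalent to its legal, real-time-respecting $\sh{H}$, hence \lble; the completion construction of \subsecref{csh} then extends the conclusion to incomplete histories. Therefore all histories generated by the \lazy are \lble.

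The step I expect to demand the most care is not any calculation but the matching of \asmref{lp-evt} to the \con(key, false) case, whose \lp may be a dummy event placed just before the \lp of a concurrent \add(key, true) (Cases~3 and~4 of \lemref{containsF-conc}) and can thus fall outside the textual interval of the \con method. One must argue that this dummy-event placement still yields a single, well-defined \lp per method instance and a consistent total order on all \lp{s}, so that the enumeration $m_1, m_2, \dots, m_n$ by \lp order is legitimate and \lemref{pre-ret} may be applied uniformly across all methods; once that is granted, the generic machinery of \secref{gen-proof} applies verbatim and the lemma follows.
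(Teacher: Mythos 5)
Your proposal is correct and follows essentially the same route as the paper: the paper's own proof simply cites \lemref{pre-ret} (the \lazy instantiation of the \cdse) together with \lemref{conce-seqe-pre} and \lemref{ret-lin}, which is exactly the assembly you describe via \thmref{hist-lin}. Your additional remarks on discharging the four assumptions and on the dummy-event \lp of $\con(key,false)$ are sound elaborations of what the paper leaves implicit, not a deviation in approach.
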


\begin{proof}
 Lemma follows based on the \lemref{pre-ret}, \lemref{conce-seqe-pre} and \lemref{ret-lin}.
\end{proof}

\subsection{Hand-over-Hand Locking List}
\label{subsec:con-hoh}
%\subsubsection{Overview}
%\label{subsec:con-hoh-list:overview}
%Table \ref{tab:struct:hnode}. The
In this section we define the fine list data structure. It is implemented as a collection of a set of nodes. This is a linked list of node shown in the $Node$ is a class and it has three fields, the $val$ field is the key value of the node. The nodes are sorted in order of the $val$ field. This helps efficiently detect when a \textit{\hnode} is absent in the list. The $next$ field is a reference to the next \textit{\hnode} in the list. The $lock$ field is for ensuring access to a shared \textit{\hnode} happens in a mutually exclusion manner. We say a thread  acquires a lock and releases the lock  when it executes a \textit{lock.acquire()} and \textit{lock.release()} method call respectively. Each thread acquires lock in a \emph{\hoh} order. We assume the $next$ field of the $\hnode$ is atomic.

%\begin{table*}[th]
%\centering
%\scriptsize
%\noindent\begin{tabular}{ m{6cm} } 
\begin{tcolorbox}
\begin{verbatim}
class Node{
    int val;     // actual key of node
    Node next;   // next Node in list
    Lock lock;   //synchronizes individual Node
    / *
    For the sentinel Node the @param key should be min or max 
    int value and for the usual Node @param key val be the
    actual element in list
    */
    Node(int key){  
        val = key;
        next = null;
        lock = new Lock();
    }
};
\end{verbatim}
\end{tcolorbox}
%\end{tabular}
%\caption{Structure of Node.}
%\label{tab:struct:hnode}
%\end{table*}
\ignore{
Given a global state $S$, we define a few structures and notations for the \hoh as follows: 
\begin{enumerate}
	\item We denote node, say $n$, as a $\hnode$ class object. 
    \item $\hnodes{S}$ as a set of $\hnode$ class objects that have been created so far in $S$. Its structure is defined above. Each $\hnode$ object $n$ in $\hnodes{S}$ is initialized with $key$, $next$ to $null$.
    
    \item $S.\hhead$ is a $\hnode$ class object (called sentinel head node), which is initialized with a val $\text{-}\infty$. This sentinel node is never removed from the list. Similarly, $S.\htail$ is a $\hnode$ class object (called sentinel tail  node), which is initialized with a val $\text{+}\infty$. This sentinel node is never removed from the list.
   
    \item The contents \& the status of a $\hnode$ $n$ keeps changing with different global states. For a global state $S$, we denote $S.n$ as its current state and the individual fields of $n$ in $S$ as $S.n.val, S.n.next, ...$ etc.  
    
\end{enumerate}
 }

\noindent We define concurrent set $S$, which is dynamically being modified by a fixed set of concurrent threads. In this setting, threads may perform insertion or deletion of nodes to the set. We used \hoh based set algorithm based on \cite[Chap 9]{MauriceNir}. We assume that all the nodes have unique identification key.

\begin{table}[!htbp]
 \centering
 \scriptsize
 \captionsetup{font=scriptsize}
  \captionof{table}{Sequential Specification of the \hoh } \label{tab:seq-spe-hoh} 
\begin{tabular}{ |m{2.5cm}|m{1cm}|m{2.5cm}|m{2.5cm}| } 

\hline
  \textbf{Method} & \textbf{Return Value} & \textbf{Pre-state}($S$: global state)  & \textbf{Post-state}( $S'$: future state of $S$ such that $S \sqsubset S'$)  \\ 
  
  \hline
	$\hadd(n)$  & $true$ &$ S: \langle n \notin S.\habs \rangle $ &$ S':\langle n \in S'.\habs \rangle$	 \\
\hline 
	$\hadd (n)$  & $false$ &$ S: \langle n \in S.\habs \rangle $ & $ S':\langle n \in S'.\habs \rangle$	
	\\
\hline 
	$\hrem(n)$  & $true$ &$ S: \langle n \in S.\habs \rangle $ & $ S':\langle n \notin S'.\habs \rangle$	
	\\
	\hline 
$\hrem(n)$  & $false$ &$ S: \langle n \notin S.\habs \rangle $ & $ S':\langle n \notin S'.\habs \rangle$
\\
\hline 
$\hcon(n)$  & $true$ &$ S: \langle n \in S.\habs \rangle$ &$ S':\langle n \in S'.\habs \rangle$	\\
	\hline 
	$\hcon(n)$  & $false$ &$ S: \langle n \notin S.\habs \rangle$ & $ S':\langle n \notin S'.\habs \rangle$	\\
	\hline 

\end{tabular}
\end{table}
\subsubsection{\textbf{Methods Exported \& Sequential Specification}}
\label{sec:con-hoh:methods}
\noindent \\ In this section, we describe the \mth{s} exported by the \hoh data structure.
\noindent 
\begin{enumerate}
\item The $\hadd(n)$ method adds a node $n$ to the list, returns $true$ if the node is not present in the list else it returns $false$. This follows directly from our assumption that all the nodes are assigned distinct keys. 

\item The $\hrem(n)$ method deletes a node $n$ from the list, if it is present and returns $true$. If the node is not in the list earlier, it returns $false$. 
\item The $\hcon(n)$ returns $true$, if the list contains the node $n$; otherwise returns $false$.
\end{enumerate}
 
  \noindent Table \ref{tab:seq-spe-hoh} shows the sequential specification, as the name suggests shows the behaviour of the list when all the \mth{s} are invoked sequentially. We defined
 each method formally in any given global state $S$ before the execution of the method and future state $S'$ after executing it sequentially. The \textit{Pre-state} is the shared state before $\inv$ event and the \textit{Post-state} is also the shared state just after the $\rsp$ event of a method, which is depicted in the Figure \ref{fig:exec/hist}. 
 %We assume that a typical application invokes significantly more $\hcon$ \mth{s} than the update \mth{s} (\textit{\hadd, \hrem}).
 
% \subsubsection{Working of \hoh Methods}
%\label{sec:working-con-hoh-methods}

%We describe the implementation of the \hoh algorithm based on \cite[Chap 9]{MauriceNir} and the working of the various \mth{s}.

\noindent
All the fields in the structure are declared atomic. This ensures that operations on these variables happen atomically. In the context of a particular application, the node structure can be easily modified to carry useful data (like weights etc).\\

\noindent \textbf{Notations used in PseudoCode:}\\[0.2cm]
$\downarrow$, $\uparrow$ denote input and output arguments to each method respectively. The shared memory is accessed only by invoking explicit \emph{read()} and \emph{write()} methods. The $flag$ is a local variable which returns the status of each operation. We use nodes $n_1$, $n_2$, $n$ to represent $\hnode$ references.

%-----------Pcode-hoh-------------------
%\begin{flushleft}
%\begin{minipage}[t]{6.5cm}
%\null 
 \begin{algorithm}[!htb]
 \captionsetup{font=scriptsize}
    \caption{\hloct Method: Takes $key$ as input and returns the corresponding pair of neighboring $\node$ $\langle n_1, n_2 \rangle$. Initially $n_1$ and $n_2$ are set to $null$.}
    \label{alg:hlocate}
	\begin{algorithmic}[1]
	\scriptsize
	\algrestore{con}
		\Procedure{\hloct ($key\downarrow, n_1\uparrow, n_2\uparrow$)}{}
		\State {$lock.acquire(\hhead)$;} \label{lin:hloc2}
    	\State{$\hnode$ $n_1 = \hhead;$} \label{lin:hloc3}
    	\State{$\hnode$ $n_2 = n_1.next;$}\label{lin:hloc4}
    	\State {$lock.acquire(n_2)$;}  \label{lin:hloc5}
    	\While{($read(n_2.val) < key)$} \label{lin:hloc6}
    	    \State {$lock.release(n_1)$;} \label{lin:hloc7}
    	    \State{$ n_1 \gets n_2;$} \label{lin:hloc8}
    	    \State{$n_2 \gets n_2.next$} \label{lin:hloc9}
    	    \State {$lock.acquire(n_2)$;}  \label{lin:hloc10}
    	\EndWhile\label{lin:hloc11}
    	%\State{ $return$;}
    	\EndProcedure
		\algstore{hlocate}
	\end{algorithmic}
  \end{algorithm}
%\end{minipage}%
%\end{flushleft}
%\hspace{0.5cm}
%\flushright
%\begin{minipage}[t]{6.5cm}
%\null
 \begin{algorithm}[!htb]
   \captionsetup{font=scriptsize}
	\caption{\hcon Method: Returns $true$ if $key$ is part of the set and returns $false$ otherwise.}
   		\label{alg:hcontains}
		\scriptsize
	\begin{algorithmic}[1]
	\algrestore{hlocate}
		\Procedure{\hcon ($key\downarrow, flag\uparrow$)}{}
		\State{$\hloct(key\downarrow, n_1 \uparrow, n_2\uparrow)$;}\label{lin:hcon2} 
		\If {$(read(n_2.val) = key) $}  \label{lin:hcon3}
    %	\State {$lock.release(n_1)$;}
	%	\State {$lock.release(n_2)$;}
	    \State {$flag$ $\gets$ $true$;}\label{lin:hcon4}
		\Else
	%	\State {$lock.release(n_1)$;}
	%	\State {$lock.release(n_2)$;}
	    \State {$flag$ $\gets$ $false$;} \label{lin:hcon5}
		\EndIf
		\State {$lock.release(n_1)$;}
    	\State {$lock.release(n_2)$;}
		\State {$return$;}
		\EndProcedure
		\algstore{hcon}
	\end{algorithmic}
  \end{algorithm}
%\end{minipage}
%\hspace{0.5cm}
%\flushleft

%\begin{minipage}[t]{6.5cm}
%\null
%\flushright
 \begin{algorithm}[!htb]
    \captionsetup{font=scriptsize}
	\caption{\hadd Method: $key$ gets added to the list if it is not already part of the list. Returns $true$ on successful add and returns $false$ otherwise.	}
		\label{alg:hadd}
   	\begin{algorithmic}[1]
    	\scriptsize
    	\algrestore{hcon}
    		\Procedure{\hadd ($key\downarrow, flag \uparrow$)}{}
    		\State{$\hloct(key\downarrow, n_1 \uparrow, n_2\uparrow)$;}   \label{lin:hadd2}  
    		\If {$(read(n_2.val) \neq key$)} \label{lin:hadd3}
    		\State {$write(n_3, \text{new \hnode}(key))$;} \label{lin:hadd4}
    		\State {$write(n_3.next, n_2)$;} \label{lin:hadd5}
    		\State {$write(n_1.next, n_3)$;} \label{lin:hadd6}
    	%\State {$lock.release(n_1)$;} \label{lin:hadd7}
    	%	\State {$lock.release(n_2)$;} \label{lin:hadd8}
    		\State {$flag$ $\gets$ $true$;}
    		\Else
    		\State {$flag$ $\gets$ $false$;}    
    		\EndIf
    		\State {$lock.release(n_1)$;}
    		\State {$lock.release(n_2)$;}
    		\State{$return$;}
    		\EndProcedure
    		\algstore{hadd}
    	\end{algorithmic}
  \end{algorithm}
%\end{minipage}
%\flushright
%\hspace{0.5cm}
%\begin{minipage}[t]{6.5cm}
%\null 
 \begin{algorithm}[!htb]
  \captionsetup{font=scriptsize}
	\caption{\hrem Method: $key$ gets removed from the list if it is already part of the list. Returns $true$ on successful remove otherwise returns $false$.}
    \label{alg:hremove}
	\begin{algorithmic}[1]
		\scriptsize
	\algrestore{hadd}
		\Procedure{\hrem ($key\downarrow, flag\uparrow$)}{}
		\State{$\hloct(key\downarrow, n_1 \uparrow, n_2\uparrow)$;}\label{lin:hrem2} 
		\If {$(read(n_2.val) = key)$} \label{lin:hrem3} 
		\State {$write(n_1.next, n_2.next)$;} \label{lin:hrem5}
	%	\State {$lock.release(n_1)$;}\label{lin:hrem6}
	%	\State {$lock.release(n_2)$;}\label{lin:hrem7}
		\State {$flag$ $\gets$ $true$;}\label{lin:hrem8}
		\Else
	%	\State {$lock.release(n_1)$;}
	%	\State {$lock.release(n_2)$;}
	    \State {$flag$ $\gets$ $false$;}
		\EndIf
		\State {$lock.release(n_1)$;} \label{lin:hrem6}
    	\State {$lock.release(n_2)$;}
		\State {$return$;}
		\EndProcedure
	%	\algstore{hrem}
	\end{algorithmic}
  \end{algorithm}
%\end{minipage}%

\subsubsection{\textbf{\emph{Working of the methods of \hoh}}} 
\noindent \\
We define all methods like \hoh  used in the \cite[Chap 9]{MauriceNir} with some modification. We add a new \emph{\hloct} method, which helps to locate the location of the \emph{key} in the list like lazy list defined in the \subsecref{con-lazy-list}. This \emph{\hloct} method takes $key$ as input and returns the corresponding pair of neighboring $\node$ $\langle n_1, n_2 \rangle$ and both these nodes are locked and reachable from the \hhead. Initially $n_1$ and $n_2$ are set to $null$. \\[0.1cm]
\noindent
\textbf{Working of the \hadd(key) method:}
 When a thread wants to add a node to the list, it invokes \hloct in the \lineref{hadd2}. The \hloct traverses the list from \emph{\hhead} by acquiring locks both predecessor and successor nodes until it finds a node with its key greater than or equal to $key$, say $ncurr$ and it's predecessor node, say $npred$. When \hloct method returns, both the nodes are locked. Then it checks if $read(ncurr.val) \neq key$ is $true$(\lineref{hadd3}), then the thread adds the new $\node(key)$ between $npred$ and $ncurr$ in the list from the Line \ref{lin:hadd4}-\ref{lin:hadd6} and returns true after unlocking the nodes. If the \emph{key} is already present in the list, it returns $false$ by unlocking the locked nodes. This is described in Algorithm \ref{alg:hadd}.\\[0.1cm]
%---------------------------------------------------
\noindent 
\textbf{Working of the \hrem(key) method:}
 When a thread wants to delete a node from the list, it invokes \hloct in the \lineref{hrem2}. The \hloct traverses the list from \emph{\hhead} by acquiring locks both predecessor and successor nodes until it finds a node with its key greater than or equal to $key$, say $ncurr$ and it's predecessor node, say $npred$. When \hloct method returns, both the nodes are locked. Then it checks if $(read(n_2.val) = key)$ is $true$(\lineref{hrem3}), if it is then the thread removes the $ncurr$ by changing the $next$ pointer of $npred$ to $ncurr.next$ in the \lineref{hrem5}. If the \emph{key} is not present in the list, it returns  $false$ by unlocking the locked nodes. This is described in Algorithm \ref{alg:hremove}.\\[0.1cm]
%When a thread wants to remove a node from the list, it traverses the list from \emph{\hhead} by acquiring locks both predecessor and successor nodes until it finds a node with its key greater than or equal to $key$, say $ncurr$ and it's predecessor \hnode, say $npred$. Both these nodes are locked and reachable from \hhead and then it checks if $read(ncurr.val) = key$ is $true$ then the thread removes the $ncurr$ from the list in Line \ref{lin:hrem5} and then it returns true after unlocking the nodes. If it is $false$, the thread unlock the locked nodes and returns. This is described in Algorithm \ref{alg:hremove}.\\[0.1cm]
%---------------------------------------------------
\noindent \textbf{Working of the \hcon() method:}
When a thread wants to search a node in the list, it invokes \hloct in the \lineref{hcon2}. The \hloct traverses the list from \emph{\hhead} by acquiring locks both predecessor and successor nodes until it finds a node with its key greater than or equal to $key$, say $ncurr$ and it's predecessor node, say $npred$. When \hloct method returns, both the nodes are locked. Then it checks if $(read(n_2.val) = key)$ is $true$(\lineref{hcon3}), if it is then the thread returns $true$ in the \lineref{hcon4}. If the $key$ is not present in the list, it returns $false$ in the Line \ref{lin:hcon5}. This is described in Algorithm \ref{alg:hcontains}.\\[0.1cm]
%\noindent The  $\con(key)$ method traverse the list by invoking \emph{\hloct} in the Line \ref{lin:hcon2}, it traverses the list from \emph{\hhead} by acquiring locks both predecessor and successor nodes until it finds a node with its key greater than or equal to $key$, say $ncurr$ and it's predecessor \hnode, say $npred$. Both these nodes are locked and reachable from \hhead and then it checks if $read(ncurr.val) = key$ to $true$ in the Line \ref{lin:hcon3}, means the $key$ present in the list and it return \emph{true} in the Line \ref{lin:hcon4}. On the other hand, if $key$ is not present in the list, it returns $false$ in the Line \ref{lin:hcon5}. This is described in Algorithm \ref{alg:hcontains}.
\subsubsection{\textbf{The LPs of the \hoh}}
\label{subsec:lps-hoh}
\noindent \\ Here, we list the linearization points (\lp{s}) of each method of \hoh. Each method of the list can return either $true$ or $false$. So, we define the $\lp$ for six methods:

\begin{enumerate}
\item $\hadd(key, true)$: $write(n_1.next, n_3)$ in Line \ref{lin:hadd6} of $\hadd$ method.
\item $\hadd(key, false)$: $read(n_2.val)$ in Line \ref{lin:hadd3} of $\hadd$ method. 
\item $\hrem(key, true)$: $write(n_1.next, n_2.next)$ in Line \ref{lin:hrem5} of $\hrem$ method.
\item $\hrem(key, false)$: $(read(n_2.val))$ in Line \ref{lin:hrem3} of $\hrem$ method.
\item $\hcon(key, true)$: $read(n.val)$ in Line \ref{lin:hcon3} of $\hcon$ method.
\item $\hcon(key, false)$:$read(n.val)$ in Line \ref{lin:hcon3} of  $\hcon$ method.
\end{enumerate}

\subsubsection{\textbf{HoH-Locking-List Proof}}
\noindent \\ In this subsection, we describe the lemmas to prove the correctness of concurrent \hoh structure. 

 \noindent Having defined a few notions on $S$, we now define the notion of an abstract set, $\habs$ for a global state $S$ which we will use for guiding us in correctness of our \mth{s} and it is defined below:

 \begin{definition}
\label{def:habs}
	$S.\habs \equiv \{n | (n \in \hnodes{S}) \land (S.\hhead \rightarrow^* S.n) \}$.
\end{definition}

\noindent This definition of $\habs$ captures the set of all nodes of $\habs$ for the global state $S$. It consists of all the $\hnode{s}$ that are reachable from $S.\hhead$. 

\begin{observation}
\label{lem:hoh-node-forever}
	Consider a global state $S$ which has a node $n$. Then in any future state $S'$ of $S$, $n$ is node in $S'$ as well. Formally, $\langle \forall S, S': (n \in \hnodes{S}) \land (S \sqsubset S') \Rightarrow (n \in \hnodes{S}) \rangle$. 
\end{observation}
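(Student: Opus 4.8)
The plan is to argue directly from the definition of $\hnodes{S}$ as the set of all \hnode{} objects allocated up to the global state $S$, combined with a short inspection of the HoH pseudocode (Algorithms \ref{alg:hlocate}--\ref{alg:hremove}) showing that no event ever removes an element from this set. I read the conclusion as $n \in \hnodes{S'}$, matching the companion \obsref{node-forever} for the \lazy.

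First I would note that a node enters $\hnodes{S}$ exactly when some thread executes the allocation event $write(n_3, \text{new \hnode}(key))$ in Line \ref{lin:hadd4} of \hadd{}, the unique node-creation event across all the HoH methods (\hloct{}, \hcon{}, and \hrem{} allocate nothing). Crucially, no HoH method contains an event that deallocates a node or otherwise deletes it from $\hnodes{S}$: structural removal in \hrem{} is realized solely by the pointer update $write(n_1.next, n_2.next)$ in Line \ref{lin:hrem5}, which unlinks $n_2$ from the reachable portion of the list but leaves the object $n_2$ in existence. Hence every event takes the system from a state $S_1$ to a state $S_2$ with $\hnodes{S_1} \subseteq \hnodes{S_2}$; that is, membership in $\hnodes{S}$ is monotone along the execution.

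Given this, the statement follows by a trivial induction on the finite sequence of events carrying the system from $S$ to $S'$ (which exists since $S \sqsubset S'$): the base case $S = S'$ is immediate, and each intermediate event preserves $n \in \hnodes{S}$ by the monotonicity just established. Thus $n \in \hnodes{S'}$.

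There is essentially no mathematical obstacle here; the only point requiring care is the modeling convention---identical to the one underlying \obsref{node-forever}---that ``removal'' in this list is logical/structural pointer surgery rather than physical deallocation, i.e.\ we ignore garbage collection. Once that convention is fixed, the observation is a one-line monotonicity argument on $\hnodes{S}$.
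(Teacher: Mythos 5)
Your argument is correct and matches the paper's treatment: the paper states this as an unproved observation, justified only by the remark that nodes once created are never deleted (ignoring garbage collection), which is exactly the modeling convention and monotonicity argument you make explicit. Your reading of the conclusion as $n \in \hnodes{S'}$ (fixing the evident typo in the formal statement) is also the intended one, matching \obsref{node-forever} for the \lazy.
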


\noindent With this observation, we assume that nodes once created do not get deleted (ignoring garbage collection). 

\begin{observation}
	\label{lem:hoh-node-val}
	Consider a global state $S$ which has a node $n$ and it is initialized with key $val$. Then in any future state $S'$ the value of $n$ does not change. Formally, $\langle \forall S, S': (n \in \hnodes{S}) \land (S \sqsubset S') \Rightarrow (n \in \hnodes{S}) \land (S.n.val = S'.n.val) \rangle$. 
\end{observation}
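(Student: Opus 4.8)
The plan is to prove this by direct inspection of the code of the \hoh methods, exactly mirroring the argument for the lazy-list analog (Observation~\ref{obs:node-val}). The claim has two conjuncts: that $n$ persists as a node in $\hnodes{S'}$ and that its value is unchanged. The first conjunct is immediate from the preceding Observation~\ref{lem:hoh-node-forever}, which already guarantees that $(n \in \hnodes{S}) \land (S \sqsubset S')$ implies $n \in \hnodes{S'}$; so it remains only to establish $S.n.val = S'.n.val$.

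First I would observe that the only writable fields of a \hnode are \emph{val}, \emph{next}, and the \emph{lock}, and that the \emph{val} field is assigned exactly once, inside the constructor \texttt{Node(int key)}, at the instant the node is allocated. To make this rigorous I would induct on the sequence of atomic events of $E^H$ lying strictly between $S$ and $S'$, maintaining the invariant that $n.val$ equals its value in $S$. The base case is vacuous. For the inductive step it suffices to check each event that can alter shared memory, i.e.\ each \texttt{write} event appearing in Algorithms~\ref{alg:hlocate}--\ref{alg:hremove}.

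Going through the methods: \hloct (Algorithm~\ref{alg:hlocate}) and \hcon (Algorithm~\ref{alg:hcontains}) perform only \texttt{read}, \texttt{lock.acquire}, and \texttt{lock.release} operations and contain no write to any node field. In \hadd (Algorithm~\ref{alg:hadd}) the writes occur at \lineref{hadd4}, \lineref{hadd5}, and \lineref{hadd6}: the first allocates a \emph{fresh} node $n_3$ and initializes its value through the constructor, while the latter two modify only \emph{next} pointers; similarly the single write at \lineref{hrem5} of \hrem (Algorithm~\ref{alg:hremove}) touches only a \emph{next} pointer. Hence no event ever writes the \emph{val} field of a node that already existed, and the inductive invariant is preserved.

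The one subtlety --- and the only place the argument needs care --- is the event at \lineref{hadd4}. This event does write a \emph{val} field, so I must argue that it cannot be the field of the tracked node $n$: since $n \in \hnodes{S}$ already exists and $n_3$ is obtained by a fresh allocation, the reference $n_3$ is necessarily distinct from $n$, so initializing $n_3.val$ leaves $n.val$ untouched. With this observed, the induction closes, the value of $n$ is invariant across all future states, and the Observation follows.
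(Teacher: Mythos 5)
Your proof is correct. The paper states this as an unproved observation, justified implicitly by inspection of the code, and your induction over the events of $E^H$ --- checking that the only write to a \emph{val} field occurs in the constructor of a freshly allocated node, which cannot coincide with the pre-existing node $n$ --- is precisely the rigorous form of that same code-inspection argument, so you are taking essentially the same route as the paper.
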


\begin{corollary}
\label{cor:hoh-same-node:vertex}
There cannot exist two $\node$s with the same key in the $S.\abs$ of a particular global state $S$. 
\end{corollary}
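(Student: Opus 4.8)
The plan is to reduce the claim to a \emph{strict sortedness} invariant for the \hoh{} list, exactly as Corollary~\ref{cor:same-node} was derived from Lemma~\ref{lem:val-change} in the lazy-list case. By Definition~\ref{def:habs}, $S.\habs$ is the set of nodes reachable from $S.\hhead$ along the $next$ pointers, and since every \hnode{} carries a single $next$ field, these reachable nodes form one simple chain $S.\hhead \rightarrow^* S.n$. Hence it suffices to show that keys strictly increase along this chain: if in every global state $S$ we have $S.n.val < S.n.next.val$ for every $n$ with $n.next \neq null$, then no key can repeat among reachable nodes, and two distinct nodes with the same key cannot both lie in $S.\habs$.

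First I would establish the locate guarantee for \hoh{}: when \hloct{} is invoked on $key$ and returns $\langle n_1, n_2\rangle$, both $n_1$ and $n_2$ are locked, $S.n_1.next = S.n_2$, and $S.n_1.val < key \leq S.n_2.val$. This is the analogue of Lemma~\ref{lem:loc-ret}, but the argument is simpler here because \hoh{} has no $marked$ field and no validate/retry loop: the hand-over-hand discipline (Lines~\ref{lin:hloc5}--\ref{lin:hloc10}) holds a lock on the current node before advancing, so the predecessor/successor pair is pinned, and the exit condition of the while loop at Line~\ref{lin:hloc6} together with value-immutability (Observation~\ref{lem:hoh-node-val}) yields the ordering on the returned keys.

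With that in hand I would prove the sortedness invariant by induction on the events that modify a $next$ field, which by inspection of the code are Line~\ref{lin:hadd5} and Line~\ref{lin:hadd6} of \hadd{} and Line~\ref{lin:hrem5} of \hrem{}. For the base case the sentinels satisfy $S.\hhead.val < S.\htail.val$. For \hadd{}, reaching Lines~\ref{lin:hadd4}--\ref{lin:hadd6} forces $read(n_2.val) \neq key$, so the locate guarantee sharpens to $S.n_1.val < key < S.n_2.val$; inserting the fresh $\node(key)$ with $n_3.next = n_2$ and then $n_1.next = n_3$ therefore keeps both new links strictly increasing. For \hrem{}, reaching Line~\ref{lin:hrem5} forces $n_2.val = key$, and splicing $n_1.next = n_2.next$ preserves order because the induction hypothesis gives $S.n_2.val < S.n_2.next.val$, whence $S.n_1.val < S.n_1.next.val$ in the resulting state. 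Value-immutability (Observation~\ref{lem:hoh-node-val}) and node-persistence (Observation~\ref{lem:hoh-node-forever}) guarantee the keys of all other nodes are unchanged, so the invariant survives every step; the corollary then follows immediately from the chain argument of the first paragraph.

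The main obstacle I anticipate is not the sortedness algebra but justifying that the locate guarantee is stable at the instant of the pointer write, i.e.\ that no concurrent thread can alter $n_1$, $n_2$, or the link between them between the return of \hloct{} and the update. Here the key leverage is that \hoh{} locks \emph{both} endpoints and never releases them until after the write, so any competing \hadd{} or \hrem{} touching the same region must block inside \hloct{}; I would spell out this mutual-exclusion argument carefully, since it is precisely what replaces the marking-based reasoning (Lemmas~\ref{lem:n2-next-marked}--\ref{lem:consec-mark}) that was needed for the lazy list.
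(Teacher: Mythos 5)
Your proposal is correct and matches the paper's intended derivation: the paper states this corollary without an explicit proof, but it restates the same claim as Corollary~\ref{cor:hoh-same-node} immediately after Lemma~\ref{lem:hoh-val-change}, whose proof is exactly the induction on next-field-changing events (Lines~\ref{lin:hadd5}, \ref{lin:hadd6}, \ref{lin:hrem5}) using the \hloct{} guarantee that you sketch. The only quirk is placement: as stated in the paper the corollary precedes the sortedness lemma it depends on, so your reduction implicitly supplies the forward reference the paper leaves unstated.
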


\begin{observation}
	\label{obs:hoh-locate}
 Consider a global state $S$ which is the post-state of return event of the method $\hloct(key)$ invoked in the $\hadd$ or $\hrem$ or $\hcon$ methods. Suppose the $\hloct$ method returns $\langle n_1, n_2\rangle$. Then in the state $S$, we have, 
 \begin{enumerate} [label=\ref{obs:hoh-locate}.\arabic*]
    \item \label{obs:hoh-locate1}
    $\langle (n_1, n_2 \in \hnodes{S} \rangle$. 
    \item \label{obs:hoh-locate2}
    $\langle (S.lock.acquire(n_1) = true) \wedge (S.lock.acquire(n_2) = true) \rangle$
    \item \label{obs:hoh-locate3}
    $\langle S.n_1.next = S.n_2 \rangle$ 
    
\end{enumerate} 
\end{observation}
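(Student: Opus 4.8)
The plan is to establish the three conjuncts simultaneously by induction on the number of iterations of the \textbf{while} loop (\lineref{hloc6}--\lineref{hloc11}) that \hloct executes before returning. The invariant I maintain is that, every time control reaches the guard test at \lineref{hloc6}, the current pair $\langle n_1, n_2 \rangle$ satisfies: (i)~$n_1, n_2 \in \hnodes{S}$; (ii)~both $n_1$ and $n_2$ are held locked by the executing thread; and (iii)~$n_1.next = n_2$. Since \hloct returns exactly when the guard fails and falls through \lineref{hloc11}, proving that this invariant holds at every guard evaluation immediately yields the three asserted properties in the post-state $S$ of the return event. Throughout I will use that created nodes are never destroyed (Observation~\ref{lem:hoh-node-forever}) and that a node's \emph{val} never changes (Observation~\ref{lem:hoh-node-val}).

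For the base case I examine the prologue \lineref{hloc2}--\lineref{hloc5}: the thread locks \hhead, assigns $n_1 \gets \hhead$, reads $n_2 \gets n_1.next$, and locks $n_2$. Hence at the first guard test $n_1 = \hhead$ and $n_2$ are both nodes and both locked, and $n_1.next = n_2$ holds by the assignment at \lineref{hloc4}; crucially this equality is still true at the guard because $\hhead$ was locked at \lineref{hloc2} before its \emph{next} field was read, so no concurrent method could have changed $n_1.next$ in the interim.

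For the inductive step I assume the invariant at the current guard test and suppose the guard succeeds, i.e.\ $n_2.val < key$, so the body executes. The body releases the old $n_1$ (\lineref{hloc7}), promotes $n_1 \gets n_2$ (\lineref{hloc8}), advances $n_2 \gets n_2.next$ (\lineref{hloc9}), and locks the new $n_2$ (\lineref{hloc10}). The new $n_1$ equals the old $n_2$, which was already locked and is not released in the body, so it remains locked; the new $n_2$ is locked at \lineref{hloc10}; both are nodes (the old $n_2$ was, and its \emph{next} points to a node, which exists since the guard $n_2.val < key$ rules out $n_2$ being the $+\infty$ sentinel); and $n_1.next = n_2$ holds because the new $n_2$ was read as (old $n_2$).next, i.e.\ as (new $n_1$).next, while the new $n_1$ was held locked. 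This re-establishes the invariant at the next guard test and closes the induction.

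The main obstacle, and the step where hand-over-hand locking is genuinely needed, is conjunct~(iii): the stability of $n_1.next = n_2$ once it has been read. Because \hloct has no validation step to appeal to (unlike the lazy list's \valid), I must argue directly that while the current thread holds the lock on $n_1$ no other thread can modify $n_1.next$. The cleanest route is to isolate this as a sublemma: every event that writes a node's \emph{next} field --- namely \lineref{hadd6} of \hadd and \lineref{hrem5} of \hrem --- is executed only after that thread's own \hloct has returned with the predecessor node locked, so mutual exclusion on the lock means the \emph{next} field of a locked node is frozen for the duration. Invoking this sublemma in both the base and inductive steps discharges conjunct~(iii); conjuncts~(i) and~(ii) then follow from the locking discipline and node-persistence as sketched above.
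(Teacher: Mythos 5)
Your proof is correct, but there is nothing in the paper to compare it against: the paper states this as an unproved \emph{Observation} and immediately leans on it (e.g.\ in \lemref{hoh-loc-ret} and \lemref{hoh-reach}), so you have supplied an argument the authors left implicit. Your route --- a loop invariant at the guard of \lineref{hloc6}, established by induction on iterations of \hloct, with a sublemma that a locked node's $next$ field is frozen --- is the natural one and matches in spirit how the paper proves its neighbouring claims (\lemref{hoh-val-change} and \lemref{hoh-reach} both induct on next-field-changing events). Two small points. First, your sublemma enumerates the next-field writes as \lineref{hadd6} and \lineref{hrem5} only; \lineref{hadd5} also writes a $next$ field, but of the freshly allocated private node $n_3$, which no other thread can yet reference --- worth one sentence so the case analysis is exhaustive. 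Second, the observation is about the post-state of the \emph{return} event rather than the last guard evaluation; your invariant carries over because no lock is released and no $next$ field of $n_1$ can change between the guard failing and the return, but you should say so explicitly since conjunct (iii) is exactly the kind of property that could silently decay between two program points. Neither point is a gap in the mathematics, only in the bookkeeping.
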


\begin{lemma}
\label{lem:hoh-loc-ret}
Consider the global state $S$ which is the post-state of return event of the method $\hloct(key)$ invoked in the $\hadd$ or $\hrem$ or \hcon methods. Suppose the $\hloct$ method returns references as $\langle n_1, n_2 \rangle$. Then in the state $S$, we have that $(S.n_1.val < key \leq S.n_2.val)$ for all nodes whose $next \neq null$.
\end{lemma}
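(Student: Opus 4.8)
The plan is to mirror the proof of \lemref{loc-ret} (the corresponding statement for the \lazy) and adapt it to the hand-over-hand traversal of Algorithm~\ref{alg:hlocate}. The only facts I expect to need are that node values never change once initialised (Observation~\ref{lem:hoh-node-val}) and that, on return, $\hloct$ hands back two locked nodes $n_1, n_2 \in \hnodes{S}$ with $S.n_1.next = S.n_2$ (Observation~\ref{obs:hoh-locate}, items \obsref{hoh-locate1} through \obsref{hoh-locate3}). Because the claim concerns only the immutable $val$ fields, the whole argument reduces to tracking the loop variables $n_1, n_2$ through Lines~\ref{lin:hloc3}--\ref{lin:hloc11}.

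First I would establish the upper bound $key \leq S.n_2.val$. The while loop at \lineref{hloc6} terminates only when its guard $read(n_2.val) < key$ evaluates to \emph{false}, i.e.\ when the node currently held as $n_2$ satisfies $n_2.val \geq key$. After the guard fails the procedure exits the loop without reassigning $n_1$ or $n_2$, so this same $n_2$ is the one returned; by Observation~\ref{lem:hoh-node-val} its value is unchanged in $S$, giving $key \leq S.n_2.val$.

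Next I would establish the lower bound $S.n_1.val < key$ by a case split on how many times the loop body executed. If the body never executed, then $n_1$ is still the sentinel $\hhead$ fixed at \lineref{hloc3}, whose value is $-\infty < key$. Otherwise, consider the final execution of the body: to reach it the guard at \lineref{hloc6} must have held, so the node then held as $n_2$ had $val < key$; \lineref{hloc8} copies exactly that node into $n_1$, and no later statement overwrites $n_1$ before the loop exits. Invoking immutability of values again (Observation~\ref{lem:hoh-node-val}) gives $S.n_1.val < key$. Combining the two bounds yields $S.n_1.val < key \leq S.n_2.val$, and since both nodes are locked on return (\obsref{hoh-locate2}) the inequalities persist in the post-state $S$. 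The qualifier ``for all nodes whose $next \neq null$'' is then handled by noting that $n_1$ is never the tail, since $S.n_1.next = S.n_2$ by \obsref{hoh-locate3}.

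The main obstacle, relative to the sequential reading, is justifying that the values read during the hand-over-hand walk still describe the nodes as they are in $S$, despite concurrent $\hadd$ and $\hrem$ operations. I expect this to be mild: the hand-over-hand discipline ensures that whenever the guard reads $n_2.val$ the thread holds the lock on $n_1$ and the lock on $n_2$ (acquired at Lines~\ref{lin:hloc5} and~\ref{lin:hloc10} before the read), and the quantities compared are $val$ fields, which are immutable by Observation~\ref{lem:hoh-node-val}. Hence no concurrent interference can alter any of the inequalities, and the comparison performed inside the loop coincides with the comparison that must hold in $S$.
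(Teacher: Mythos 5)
Your proposal is correct and follows essentially the same route as the paper's own proof: the upper bound from the failed loop guard, the lower bound from the final assignment $n_1 \gets n_2$ (or the sentinel $\hhead$ if the body never runs), immutability of $val$ via Observation~\ref{lem:hoh-node-val}, and the held locks to carry the inequalities to the post-state $S$. If anything, your case split for the lower bound is more careful than the paper's, which infers only $S.n_1.val < S.n_2.val$ from the last iteration and then asserts $S.n_1.val < key$ without spelling out that the guard held when the returned $n_1$ was assigned.
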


\begin{proof}
Line \ref{lin:hloc2} of $\hloct$ method locks the \emph{\hhead}, in Line \ref{lin:hloc3} initialises $S.n_1$ to $\hhead$ and $S.n_2$ $=$ $S.n_1.next$ in Line \ref{lin:hloc4}. In the last iteration of the while loop in the \lineref{hloc6} the $S.n_1.val$ $<$ $S.n_2.val$ and from the Observation \ref{lem:hoh-node-val} we know that the node key does not change. So, before execution of Line \ref{lin:hloc11}, the $S.n_2.val$ $\geq$ $key$ and $S.n_1.val$ $<$ $S.n_2.val$ and $S.n_1$, $S.n_2$ are locked. Both nodes are belongs to $\hnodes{S}$ and $S.n_1.val < key \leq S.n_2.val$. Also,from the Observations \ref{obs:hoh-locate2}, \ref{obs:hoh-locate3} and \ref{lem:hoh-node-val} the nodes $n_1$ and $n_2$ are locked (do not change), and both are reachable from $\hhead$, hence, the lemma holds even when $\hloct$ returns. 

%The last time in Line \ref{lin:hloc6} the while loop was executed, the $S.n_1.val$ $<$ $S.n_2.val$ and from the Observation \ref{lem:hoh-node-val} we know that the node key does not change. So, before execution of Line \ref{lin:hloc11}, the $S.n_2.val$ $\geq$ $key$ and $S.n_1.val$ $<$ $S.n_2.val$ and $S.n_1$, $S.n_2$ are locked. Both nodes are belongs to $\hnodes{S}$ and $S.n_1.val < key \leq S.n_2.val$. Also,from the Observations \ref{obs:hoh-locate2}, \ref{obs:hoh-locate3} and \ref{lem:hoh-node-val} the nodes $n_1$ and $n_2$ are locked (do not change), and both are reachable from $\hhead$, hence, the lemma holds even when $\hloct$ returns. 
\end{proof}

\begin{lemma}
\label{lem:hoh-val-change}
For a node $n$ in any global state $S$, we have that $\langle \forall n \in \hnodes{S} \land n.next \neq null: S.n.val < S.n.next.val \rangle$. 
\end{lemma}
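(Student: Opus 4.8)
The statement is the hand-over-hand analogue of \lemref{val-change}, so the plan is to reuse exactly that proof strategy: induct on the events of $E^H$ that modify the $next$ field of some node, since only these events can affect the truth of $S.n.val < S.n.next.val$. By inspecting the pseudocode of \Algoref{hlocate}--\ref{alg:hremove}, the only events that write a $next$ pointer are $write(n_3.next, n_2)$ at \lineref{hadd5} and $write(n_1.next, n_3)$ at \lineref{hadd6} of \hadd, and $write(n_1.next, n_2.next)$ at \lineref{hrem5} of \hrem; note that \hloct and \hcon perform no $next$-updates. For the base case, before the first such event the list is just $\hhead \rightarrow \htail$ with $\hhead.val = -\infty < +\infty = \htail.val$, so the claim holds. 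The induction hypothesis is that the ordering property holds in every state up to and including the $k$-th $next$-changing event, and I would then verify it is preserved by the $(k+1)$-st event in each of the three cases above.

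The three case analyses all proceed by the same template and rely on three earlier facts. First, \lemref{hoh-loc-ret} gives, in the post-state of the \hloct call that precedes each write, that $S.n_1.val < key \leq S.n_2.val$; combined with the branch condition $read(n_2.val) \neq key$ (at \lineref{hadd3}) for the \add cases, or $read(n_2.val) = key$ (at \lineref{hrem3}) for the remove case, this sharpens to $n_1.val < key < n_2.val$ or $n_1.val < key = n_2.val$ respectively. Second, \obsref{hoh-node-val} guarantees node values never change after initialization, so these inequalities persist to the moment of the write. Third, \obsref{hoh-locate} (parts \ref{obs:hoh-locate2} and \ref{obs:hoh-locate3}) gives that $n_1, n_2$ are both locked and $S.n_1.next = S.n_2$ on return of \hloct. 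Then: at \lineref{hadd5} the new node $n_3$ has $n_3.val = key < n_2.val = n_3.next.val$; at \lineref{hadd6} we get $n_1.val < key = n_3.val = n_1.next.val$; and at \lineref{hrem5}, using the induction hypothesis $n_2.val < n_2.next.val$ together with $n_1.val < n_2.val$, we obtain $n_1.val < n_2.val < n_2.next.val = n_1.next.val$. In every case the newly created edge respects the ordering and no other edge is disturbed, closing the induction.

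\textbf{Main obstacle.} The only real subtlety, and the point where the hand-over-hand discipline must be invoked rather than the lazy-list marking argument, is justifying that the structural facts returned by \hloct (namely $n_1.next = n_2$ and $n_1.val < key \le n_2.val$) still hold at the instant the write executes. Because \hloct releases the lock on each predecessor as it advances (\lineref{hloc7}), I must argue that from the return of \hloct through the write at \lineref{hadd6} or \lineref{hrem5}, the locks held on $n_1$ and $n_2$ (\obsref{hoh-locate2}) prevent any concurrent thread from changing $n_1.next$ or the values/linkage of $n_1, n_2$. Thus the pre-write state coincides with the \hloct post-state on the relevant fields, and the inequalities transport intact. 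This mutual-exclusion step is what replaces the reachability/marking machinery used in the lazy-list version and is the crux that makes the otherwise routine induction valid.
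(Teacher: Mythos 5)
Your proof is correct and takes essentially the same approach as the paper's: an induction on the $next$-field-changing events with exactly the same three cases (\lineref{hadd5}, \lineref{hadd6} of \hadd and \lineref{hrem5} of \hrem), relying on \lemref{hoh-loc-ret}, \obsref{hoh-node-val} and \obsref{hoh-locate} in the same way. Your explicit justification that the locks held on $n_1,n_2$ carry the \hloct{} post-conditions forward to the instant of the write is in fact slightly more careful than the paper's own treatment, which leaves that step implicit.
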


\begin{proof}
We prove by induction on all events in $E^H$ that change the $next$ field of the node.\\[0.1cm]
\textbf{Base condition:} Initially, before the first event that changes the next field, we know that ($\hhead.key$ $<$ $\htail.key$) $\land$ $(\hhead, \htail$ $\in$ $\hnodes{S})$.\\[0.1cm] 
\textbf{Induction Hypothesis:} Say, upto $k$ events that change the $next$ field of any node, $\forall n \in \hnodes{S} \land n.next \neq null:$ $S.n.val$ $<$ $S.n.next.val$.\\[0.1cm]
\textbf{Induction Step:}
So, by observation of the code, the $(k+1)^{st}$ event which can change the $next$ field can be only one of the following:
\begin{enumerate}
\item \textbf{Line \ref{lin:hadd5} of $\hadd$ method:}

 Let $S_1$ be the state after the \lineref{hadd3}. We know that when $\hloct$ (\lineref{hadd2}) returns by the \obsref{hoh-locate}, $S_1.n_1$ \& $S_1.n_2$ are locked, $S_1.n_1.next =S_1.n_2$. By the \lemref{hoh-loc-ret} we have $(S_1.n_1.val \leq S_1.n_2.val)$. Also we know from Observation \ref{lem:hoh-node-val} that node value does not change, once initialised. To reach Line \ref{lin:hadd5}, $n_2.val \neq key$ in the \lineref{hadd3} must evaluate to true. Therefore, $(S_1.n_1.val < key < S_1.n_2.val)$. So, a new node $n_3$ is created in the \lineref{hadd4} with the value $key$ and then a link is added between $n_3.next$ and $n_2$ in the \lineref{hadd5}. So this implies $n_3.val < n_2.val$ even after execution of line \ref{lin:hadd5} of $\hadd$ method.

 %Let $S_1$ be the states after the \lineref{hadd3}. We know that when $\hloct$ (\lineref{hadd2}) returns by the \obsref{hoh-locate}, $S_1.n_1$ \& $S_1.n_2$ are locked, $S_1.n_1.next =S_1.n_2$. By the \lemref{hoh-val-change} we have $(S_1.n_1.val \leq S_1.n_2.val$. If $(read(n_2.val) \neq key$ in the \lineref{hadd3} evaluates to true, means $key$ is not present in the \abs. And from the \lemref{hoh-loc-ret} we have $(S_1.n_1.val < key < S_1.n_2.val$. So, a new node $n_3$ is created in the \lineref{hadd4} with the value $key$ and then a link is added between $n_3.next$ and $n_2$ in the \lineref{hadd5}. This link neither change the \abs not be a $\lp$. Hence  $n_3.next = n_2$ event in the \lineref{hadd5} doesn't violates the Lemma.
%By observing the code, we notice that the Line \ref{lin:hadd5} (next field changing event) can be executed only after the $\hloct$ method returns. Line \ref{lin:hadd4} of the $\hadd$ method creates a new node $n_3$ with value $key$. Line \ref{lin:hadd5} then sets $n_{3}.next$ $=$ $n_2$. Since this event does not change the next field of any node reachable from the $\hhead$ of the list, the lemma is not violated. 

\item \textbf{Line \ref{lin:hadd6} of $\hadd$ method:} By observing the code, we notice that the Line \ref{lin:hadd6} (next field changing event) can be executed only after the $\hloct$ method returns. From \lemref{hoh-loc-ret}, we know that when $\hloct$ returns then $n_1.val$ $<$ key $\leq$ $n_2.val$. To reach Line \ref{lin:hadd6} of $\hadd$ method, Line \ref{lin:hadd3} should ensure that $n_2.val$ $\neq$ $key$. This implies that $n_1.val$ $<$ $key$ $<$ $n_2.val$. From \obsref{hoh-locate3}, we know that $n_1.next$ = $n_2$. Also, the atomic event at Line \ref{lin:hadd6} sets $n_1.next$ = $n_3$ where $n_3.val = key$. \\[0.1cm]
Thus from $n_1.val$ $<$ $n_3.val$ $<$ $n_2.val$ and $n_1.next$ = $n_3$, we get $n_1.val$ $<$ $n_1.next.val$. Since $(n_1, n_2)$ $\in$ $\hnodes{S}$ and hence, $S.n_1.val$ $<$ $S.n_1.next.val$.

\item \textbf{Line \ref{lin:hrem5} of $\hrem$ method:}

Let $S_1$ and $S_2$ be the states after the \lineref{hrem3} and \lineref{hrem5} respectively. By observing the code, we notice that the Line \ref{lin:hrem5} (next field changing event) can be executed only after the $\hloct$ method returns. From Lemma \ref{lem:hoh-loc-ret}, we know that when $\hloct$ returns then $S_1.n_1.val$ $<$ $key$ $\leq$ $S_1.n_2.val$. To reach Line \ref{lin:hrem5} of $\hrem$ method, Line \ref{lin:hrem3} should ensure that $S_1.n_2.val$ $=$ $key$. Also we know from Observation \ref{lem:hoh-node-val} that node value does not change, once initialised. This implies that $S_2.n_1.val$ $<$ ($key$ $=$ $S_2.n_2.val$). From Observation~\ref{obs:hoh-locate3}, we know that $S_2.n_1.next$ = $n_2$. Also, the atomic event at line \ref{lin:hrem5} sets $S_2.n_1.next$ = $S_2.n_2.next$. \\[0.1cm]
We know from Induction hypothesis, $S_2.n_2.val < S_2.n_2.next.val$. Thus from $S_2.n_1.val$ $<$ $S_2.n_2.val$ and $S_2.n_1.next$ = $S_2.n_2.next$, we get $S_2.n_1.val$ $<$ $S_2.n_1.next.val$. Since $(n_1, n_2$ $\in$ $\nodes{S})$ and hence, $S.n_1.val$ $<$ $S.n_1.next.val$.

%By observing the code, we notice that the Line \ref{lin:hrem5} (next field changing event) can be executed only after the $\hloct$ method returns. From Lemma \ref{lem:hoh-loc-ret}, we know that when $\hloct$ returns then $n_1.val$ $<$ $key$ $\leq$ $n_2.val$. To reach Line \ref{lin:hrem5} of $\hrem$ method, Line \ref{lin:hrem3} should ensure that $n_2.val$ $=$ $key$. This implies that $n_1.val$ $<$ ($key$ $=$ $n_2.val$). From Observation~\ref{obs:hoh-locate3}, we know that $n_1.next$ = $n_2$. Also, the atomic event at line \ref{lin:hrem6} sets $n_1.next$ = $n_2.next$. \\[0.1cm]
%We know from Induction hypothesis, $n_2.val < n_2.next.val$. Thus from $n_1.val$ $<$ $n_2.val$ and $n_1.next$ = $n_2.next$, we get $n_1.val$ $<$ $n_1.next.val$. Since $(n_1, n_2)$ $\in$ $\hnodes{S}$ and hence, $S.n_1.val$ $<$ $S.n_1.next.val$. 
\end{enumerate}
\end{proof}

\begin{corollary}
\label{cor:hoh-same-node}
There cannot exist two nodes with the same key in the $\abs$ of a particular global state $S$. 
\end{corollary}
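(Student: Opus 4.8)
The plan is to read this off directly from the strict sortedness established in \lemref{hoh-val-change}. Recall that $S.\habs$ (\defref{habs}) is defined to be exactly the set of $\hnode$s reachable from $\hhead$ along the $next$ pointers. Since every node carries a single $next$ field, the nodes reachable from $\hhead$ form a chain $\hhead = u_0 \rightarrow u_1 \rightarrow u_2 \rightarrow \cdots$, and I would first observe that this chain cannot close into a cycle: \lemref{hoh-val-change} gives $S.u_i.val < S.u_{i+1}.val$ at every link, so the values strictly increase along the chain and no node can recur. Thus $\habs$ is linearised into a single strictly increasing chain.

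Given this, I would argue by contradiction. Suppose $n_p, n_q \in S.\habs$ are two distinct nodes with $S.n_p.val = S.n_q.val$. Because both lie on the same simple chain reachable from $\hhead$, one precedes the other; without loss of generality $S.\hhead \rightarrow^* S.n_p \rightarrow^* S.n_q$ with at least one $next$ edge separating $n_p$ from $n_q$. A short induction on the number of edges in that sub-chain, using \lemref{hoh-val-change} as the inductive step $S.u.val < S.u.next.val$ together with transitivity of $<$, then yields $S.n_p.val < S.n_q.val$, contradicting $S.n_p.val = S.n_q.val$. Hence no two distinct nodes of $S.\habs$ can share a key.

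The only mildly delicate step — and the one I would state most carefully — is the claim that reachability from $\hhead$ arranges the nodes of $\habs$ into a single non-branching chain rather than a tree or a graph with a cycle; this rests on the single-$next$-pointer property of the $\hnode$ structure together with the acyclicity noted above (which itself comes for free from the strict monotonicity of values). Everything after that is a one-line consequence of the total order on keys, so the corollary follows immediately once \lemref{hoh-val-change} is in hand.
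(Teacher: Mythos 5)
Your proof is correct and follows the route the paper intends: the paper states this corollary without proof as an immediate consequence of the strict sortedness in \lemref{hoh-val-change}, and your argument (single-$next$-pointer chain, acyclicity from monotonicity, transitivity of $<$ along the chain) simply spells out the routine details of that derivation.
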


\begin{corollary}
\label{cor:hoh-val-abs}
Consider the global state $S$ such that for a node $n$, if there exists a key strictly greater than $n.val$ and strictly smaller than $n.next.val$, then the node corresponding to the key does not belong to $S.\abs$. Formally, $\langle \forall S, n, key$ : $\land$ $(S.n.val < key < S.n.next.val)$ $\implies$ $\node(key)$ $\notin S.\abs \rangle$. 
\end{corollary}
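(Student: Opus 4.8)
The plan is to prove this by contradiction, mirroring the argument for the lazy-list analog \lemref{val-abs} but in the simpler setting where the abstract set is characterized purely by reachability from $\hhead$ (\defref{habs}), with no marking bit to track. As in the lazy case, I read the statement with the standing assumption that $n$ is itself a node of the abstract set, i.e.\ $S.\hhead \rightarrow^* S.n$; this is the exact counterpart of the ``unmarked node'' hypothesis of \lemref{val-abs} and is genuinely needed, since a spliced-out (unreachable) node retains a stale $next$ pointer and a fresh node carrying an in-between key could legitimately be inserted elsewhere in the live list, which would defeat the claim.

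First I would suppose, for contradiction, that $\node(key) \in S.\abs$ while $S.n.val < key < S.n.next.val$. By \defref{habs} this means both $n$ and $\node(key)$ are reachable from $\hhead$. The key structural fact is \lemref{hoh-val-change}, which guarantees that every node with a successor satisfies $S.m.val < S.m.next.val$; hence values strictly increase along every chain of $next$ pointers, and in particular along the unique reachable chain from $\hhead$ to $\htail$. Combined with immutability of keys (Observation~\ref{lem:hoh-node-val}) and uniqueness of keys in the abstract set (\corref{hoh-same-node}), this makes the reachable portion of the list a sequence totally ordered by value.

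The crux of the argument is then to locate where $\node(key)$ could sit in this sorted sequence. Since $key > S.n.val$ and values increase monotonically along $next$, $\node(key)$ must occur strictly after $n$ in the chain; since $key < S.n.next.val$, it must occur strictly before $n.next$. But $n.next$ is the immediate successor of $n$ in the reachable chain, so there is no slot between $n$ and $n.next$ for $\node(key)$ to occupy. This yields the contradiction and forces $\node(key) \notin S.\abs$.

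I expect the main obstacle to be making precise the step ``reachability plus strict monotonicity forces $\node(key)$ to lie strictly between $n$ and its immediate successor.'' Concretely, I would formalize that along the reachable chain consecutive nodes are adjacent via a single $next$ link, so that no value can be realized strictly between $S.n.val$ and $S.n.next.val$ by any node reachable from $\hhead$; everything else (key immutability, the sorted invariant, and key-uniqueness) is already furnished by the preceding lemmas and plugs in directly.
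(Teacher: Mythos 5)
Your proposal is correct and matches what the paper does: the paper states this as an unproved corollary of the sorted-order invariant (\lemref{hoh-val-change}) together with reachability, and your expanded argument is essentially the paper's own proof of the lazy-list analog (\lemref{val-abs}) transplanted to the marking-free setting. Your reading of the garbled hypothesis (the dangling ``$\land$'') as requiring $S.\hhead \rightarrow^* S.n$ is also the right repair, mirroring the ``unmarked'' premise of \lemref{val-abs}.
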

\ignore{
\begin{lemma}
\label{lem:hoh-val-ret}
Consider the global state $S$ which is the post-state of return event of the method $\hloct(key)$ invoked in the $\hadd$ or $\hrem$ or $\hcon$ methods. Suppose the $\hloct$ method returns $\langle n_1, n_2\rangle$. Then in the state $S$, we have that there are no other nodes, $n_p, n_q$ in $\hnodes{S}$ such that $(S.n_1.val < S.n_p.val < key \leq S.n_q.val < S.n_2.val)$. 
\end{lemma}

\begin{proof}
We prove this lemma by contradiction, as we know from Lemma \ref{lem:hoh-loc-ret} that when $\hloct$ returns, $S.n_1.val < key \leq S.n_2.val$ and $\langle S.n_1$, $S.n_2\rangle$ are locked. Now lets suppose that $\exists$ $n_p, n_q$ $\in$ $\hnodes{S}$ such that $(S.n_1.val < S.n_p.val < key \leq S.n_q.val < S.n_2.val)$.
\begin{enumerate}
\item Assume $\exists$ $n_p$ $\in$ $\hnodes{S}$ $|$ $n_1.next = n_p$ $\land$ $n_p.next = n_2$ $\land$ $(S.n_1.val < S.n_p.val \leq S.n_2.val)$. But as from Observation \ref{obs:hoh-locate3}, we know that when $\hloct$ returns then $S.n_1.next = S.n_2$. But this contradicts our initial assumption. So we know that $\nexists$ $n_p$ in $\hnodes{S}$ between $(n_1, n_2)$ such that ($S.n_1.val < S.n_p.val < key \leq S.n_2.val$).
\item Assume $\exists$ $n_p, n_q$ $\in$ $\hnodes{S}$ such that $\hhead$ $\rightarrow^*$ $n_p$ $\rightarrow^*$ $n_1$ $\rightarrow$ $n_2$ $\rightarrow^*$ $n_q$ $\rightarrow^*$ $\tail$ such that $S.n_1.val < S.n_p.val < key \leq  S.n_q.val < S.n_2.val$. This means that, $(n_p.val > n_1.val) \land (n_p \rightarrow^* n_1)$. But from Lemma \ref{lem:hoh-val-change}, we know that $\forall n \in \hnodes{S}: S.next$ $<$ $S.next.val$. This contradicts $(S.n_p \rightarrow^* S.n_1) \land (S.n_p.val < S.n_1.val)$. Similarly, contradiction for $n_q$.
\end{enumerate}
Hence $\nexists$ $n_p, n_q$ in $\hnodes{S}$ such that $(S.n_1.val < S.n_p.val < key \leq S.n_q.val < S.n_2.val)$. 
\end{proof}
}
\begin{lemma}
\label{lem:hoh-reach}
	In a global state $S$, for any node $n$, if it is in the list, then $n$ is reachable from $\hhead$. Formally, $\langle \forall S, n: (n \in \hnodes{S}) \implies (S.\hhead \rightarrow^* S.n) \rangle$. 
\end{lemma}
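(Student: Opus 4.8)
The plan is to mirror the lazy-list reachability argument (\lemref{mark-reach}) and prove the claim by induction on those events of $E^H$ that modify a \textit{next} pointer, since only such events can alter which nodes are reachable from $\hhead$. Inspecting the pseudocode, the $\hloct$ and $\hcon$ methods perform no write to a \textit{next} field, so the only candidate events are the link write $write(n_1.next, n_3)$ at \lineref{hadd6} of $\hadd$ and the unlink write $write(n_1.next, n_2.next)$ at \lineref{hrem5} of $\hrem$. I would take as induction hypothesis that, in every state reached after the first $k$ such events, every node currently on a path from $\hhead$ stays reachable and no other structural invariant (locking, ordering) is disturbed.

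For the base case, before the first \textit{next}-modifying event the list is just $\hhead \rightarrow \htail$ with $\hhead.val < \htail.val$, so every node then present is trivially reachable from $\hhead$. For the inductive step I would split on the two event types. For the $\hadd$ link at \lineref{hadd6}: by \obsref{hoh-locate} the returned pair $\langle n_1, n_2 \rangle$ is locked with $n_1.next = n_2$, and $n_1$ is reachable by hypothesis; since \lineref{hadd5} already set $n_3.next = n_2$, executing $n_1.next = n_3$ yields $\hhead \rightarrow^* n_1 \rightarrow n_3 \rightarrow n_2$, so every previously reachable node remains reachable and $n_3$ becomes reachable. For the $\hrem$ unlink at \lineref{hrem5}: again $n_1, n_2$ are locked and reachable with $n_1.next = n_2$ by \obsref{hoh-locate}, and the write redirects $n_1.next$ to $n_2.next$; because the locks held on $n_1$ and $n_2$ block any concurrent change to these pointers, the only node detached from the reachable chain is $n_2$ itself, while every other previously reachable node still lies on the path $\hhead \rightarrow^* n_1 \rightarrow n_2.next \rightarrow^* \htail$.

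The main obstacle — and the point I would flag explicitly — is the tension between the \emph{formal} statement, quantified over all $n \in \hnodes{S}$, and the fact that nodes are never physically deleted (a removed node persists in $\hnodes{S}$) yet is plainly \emph{unreachable} after the unlink at \lineref{hrem5}. So the invariant that actually propagates through the induction is reachability of the \emph{currently linked} (public) nodes, i.e.\ those possessing an incoming edge, which is exactly the membership condition for $\habs$ in \defref{habs}; it cannot be reachability of every created node. I would therefore read ``$n$ is in the list'' as ``$n$ is public'' (equivalently, restate the lemma to range over public nodes), and prove that formulation.

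Under this reading the remove case closes cleanly: the unlink destroys precisely the reachability of $n_2$, which simultaneously ceases to be public, and of no other node, so the invariant is preserved; the add case already preserves publicness of all old nodes and makes $n_3$ public. Concluding the induction then gives that in any global state $S$ every public node is reachable from $\hhead$, which is the intended content of \lemref{hoh-reach} and the HoH analogue of \lemref{mark-reach} for the marked-field-free setting.
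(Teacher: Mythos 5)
Your proof follows essentially the same route as the paper's: induction on the events of $E^H$ that modify a \textit{next} field, with the same base case and the same case split on \lineref{hadd6} of $\hadd$ and \lineref{hrem5} of $\hrem$ (the paper additionally lists the write $n_3.next = n_2$ at \lineref{hadd5} as its own case, dismissing it because $n_3$ is not yet reachable; you fold that observation into the $\hadd$ case, which is fine). The one substantive point where you go beyond the paper is your explicit flag that the lemma as formally quantified --- over all $n \in \hnodes{S}$ --- cannot hold after a successful $\hrem$, since by \obsref{hoh-node-forever} the unlinked node $n_2$ remains in $\hnodes{S}$ yet is no longer reachable from $\hhead$; the paper's own proof silently elides this by only arguing that $n_2.next$ stays reachable. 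Your proposed repair (restricting the claim to public nodes, i.e.\ those with an incoming link, matching \defref{habs}) is the correct reading and is the invariant the induction actually sustains, so your version is, if anything, the more honest statement of what is proved.
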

\begin{proof}
We prove by Induction on events that change the next field of the node (as these affect reachability), which are Line \ref{lin:hadd5} \& \ref{lin:hadd6} of $\hadd$ method and Line \ref{lin:hrem5} of $\hrem$ method. It can be seen by observing the code that $\hloct$ and $\hcon$ method do not have any update events.\\[0.1cm]
\textbf{Base step:} Initially, before the first event that changes the next field of any node, we know that $\langle (\hhead, \htail$ $\in$ $\hnodes{S})$ $\land$  $(\hhead$ $\rightarrow^*$ $\htail) \rangle$.\\[0.1cm]
\textbf{Induction Hypothesis:} We assume that the $k^{th}$ event that changes the next field of some node reachable from the $\hhead$.\\[0.1cm]
\textbf{Induction Step:} By observing the code, the $(k+1)^{st}$ event can be one of the following events that change the next field of a node:
\begin{enumerate}
\item \textbf{Line \ref{lin:hadd4} \& \ref{lin:hadd5} of $\hadd$ method:} Let $S_1$ be the state after the \lineref{hadd2}. Line \ref{lin:hadd4} of the $\hadd$ method creates a new node $n_3$ with value $key$. Line \ref{lin:hadd5} then sets $S_1.n_3.next$ $=$ $S_1.n_2$. Since this event does not change the next field of any node reachable from the $\hhead$ of the list, the lemma is not violated. 
\item \textbf{Line \ref{lin:hadd6} of $\hadd$ method:} By observing the code, we notice that the Line \ref{lin:hadd5} (next field changing event) can be executed only after the $\hloct$ method returns.  Let $S_1$ and $S_2$ be the states after the \lineref{hadd3} and \lineref{hadd6} respectively. From Observation \ref{obs:hoh-locate3}, we know that when $\hloct$ returns then $S_1.n_1.next = S_1.n_2$. From Line \ref{lin:hadd4} \& \ref{lin:hadd5} of $\hadd$ method, $(S_1.n_1.next = S_1.n_3)$ $\land$ $(S_1.n_3.next = S_1.n_2)$. It is to be noted that (From Observation \ref{obs:hoh-locate2}), $S_1.n_1$ \& $S_1.n_2$ are locked, hence no other thread can change the $next$ field. Also from Observation \ref{lem:hoh-node-val}, a node's key field does not change after initialization. Before executing Line \ref{lin:hadd6}, $S_1.n_1$ is reachable from $\hhead$. After Line \ref{lin:hadd6}, node $S_2.n_3$ is also reachable from $S_1.n_1$. Thus, we know that $S_2.n_3$ is also reachable from $\hhead$. Formally,	
$(S_2.\hhead \rightarrow^* S_2.n_1) \land (S_2.n_1 \rightarrow S_2.n_3) \implies (S_2.\hhead \rightarrow^* S_2.n_3)$.
\item \textbf{Line \ref{lin:hrem5} of $\hrem$ method:} Let $S_1$ and $S_2$ be the states after the execution of \lineref{hrem3} and \lineref{hrem5} respectively. By observing the code, we notice that the Line \ref{lin:hrem5} (next field changing event) can be executed only after the $\hloct$ method returns. From Observation \ref{obs:hoh-locate2}, we know that when $\hloct$ returns then $S_1.n_1 \& S_1.n_2$ are locked and $S_1.n_1$ is reachable from $\hhead$ and from Line \ref{lin:hrem5} of $\hrem$ method $S_1.n_1.next$ $=$ $S_1.n_2.next$. As $S_1.n_1$ \& $S_1.n_2$ are locked, no other thread can change $S_2.n_1.next$ and $S_2.n_2.next$. Also from Observation \ref{lem:hoh-node-val}, a node's key does not change after initialization. If $S_2.n_2.next$ is reachable from \emph{\hhead}, then it continues to remain reachable. So this event does not violate the lemma.
\end{enumerate}
Hence eventually, $\langle \forall S_2, n: (n \in \nodes{S_2}) \implies (S_2.\hhead \rightarrow^* S_2.n) \rangle$. 
\end{proof}

\begin{lemma}
\label{lem:hoh-change-abs}
	Only the events $write(n_1.next, n_3)$ in \ref{lin:hadd6} of \hadd method and $write(n_1.next, n_2.next)$ in \ref{lin:hrem5} of \hrem method can change the $\abs$.
\end{lemma}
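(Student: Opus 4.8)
The plan is to mirror the structure of the analogous \lemref{change-abs} proved for the \lazy, adapting it to the fact that here $\habs$ (\defref{habs}) is defined \emph{purely} by reachability from $\hhead$ along $next$ pointers, with no marking involved. Consequently, the only events that can possibly alter $\habs$ are those that write to a $next$ field of some node; every read, lock acquire/release, and bare node allocation leaves the reachability relation—and hence $\habs$—untouched. First I would observe, exactly as in the \lazy case, that \hloct and \hcon contain no write to a $next$ field, so they cannot change $\habs$. This reduces the problem to inspecting three candidate events: Lines \lineref{hadd4}--\lineref{hadd5} and \lineref{hadd6} of \hadd, and \lineref{hrem5} of \hrem.

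For the node-creation step (Lines \lineref{hadd4}--\lineref{hadd5}), I would argue that the freshly allocated node $n_3$ has no incoming pointer from any node reachable from $\hhead$, so setting $n_3.next = n_2$ does not change which nodes are reachable and $\habs$ is unchanged (by \defref{habs}). For Line \lineref{hadd6}, I would invoke \obsref{hoh-locate} (both $n_1,n_2$ locked, with $S_1.n_1.next = S_1.n_2$) together with \lemref{hoh-reach} to conclude that $n_1$ is reachable from $\hhead$ in the pre-state; after the atomic write $n_1.next = n_3$, the node $n_3$ becomes reachable, giving $S_2.\habs = S_1.\habs \cup \{n_3\}$. Symmetrically, for Line \lineref{hrem5} I would show that redirecting $n_1.next = n_2.next$ severs the link to $n_2$ so that $n_2$ leaves the reachable set, giving $S_2.\habs = S_1.\habs \setminus \{n_2\}$. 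Collecting the three cases then yields that only Lines \lineref{hadd6} and \lineref{hrem5} can change $\habs$.

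The hard part will be the removal case (Line \lineref{hrem5}): to justify that $n_2$ genuinely leaves $\habs$, I must establish that $n_1$ was the \emph{unique} node pointing to $n_2$, so that redirecting $n_1.next$ destroys the only path to $n_2$. Unlike the \lazy, where physical unlinking did not change $\abs$ because the node had already been logically removed by marking, here the physical unlinking itself is the state-changing event, so uniqueness of the incoming edge is essential. I expect to obtain it from the structural fact that the data structure is maintained as a simple path from $\hhead$ (each insertion splices a node between two adjacent nodes and each removal splices one out, preserving the single-predecessor property), reinforced by the locking discipline of \obsref{hoh-locate}: since $n_1$ and $n_2$ are both locked across the write, no concurrent thread can create a second incoming pointer to $n_2$ in that window. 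The sortedness invariant \lemref{hoh-val-change} and \corref{hoh-same-node} rule out duplicate keys and help pin down the predecessor. Once uniqueness of the incoming edge is secured, the reachability argument closes just as in the \lazy proof, completing the lemma.
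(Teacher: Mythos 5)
Your proposal is correct and follows essentially the same route as the paper: a case analysis over the events that write a $next$ field (Lines \ref{lin:hadd4}--\ref{lin:hadd5} and \ref{lin:hadd6} of \hadd, Line \ref{lin:hrem5} of \hrem), arguing from \defref{habs} that the first pair leaves reachability from $\hhead$ unchanged while the latter two respectively add and remove a reachable node, with \hloct and \hcon dismissed as having no update events. The one difference is that you are more careful on the removal case than the paper, whose text there is garbled (it asserts both that $\abs$ changes and that it ``remains unchanged''): your point that one must know the locked predecessor $n_1$ is the \emph{unique} node pointing to $n_2$ is a real refinement, though for the stated ``only these events can change $\abs$'' direction that uniqueness is not actually needed --- it matters only for showing the remove \lp does change $\abs$, which is used later in \lemref{hoh-removeT-conc}.
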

\begin{proof}
It is to be noted that the $\hloct$ and $\hcon$ methods do not have any update events. By observing the code, it appears that the following (write) events of the $\hadd$ and $\hrem$ method can change the $\abs$:

%It is to be noted that the $\hloct$ and $\hcon$ methods do not have any update events. By observing the code, it appears that the following events of the $\hadd$ and $\hrem$ method can change the $\abs$:
\begin{enumerate}
\item \textbf{Line \ref{lin:hadd4} \& \ref{lin:hadd5} of $\hadd$ method:} 
In Algorithm~\ref{alg:hadd}, let $S_1.\abs$ be the initial state of the $\abs$, such that we know from Line \ref{lin:hadd3} that $key$ $\notin$ $S_1.\abs$. Line \ref{lin:hadd4} of the $\hadd$ method creates a node $n_3$ with value $key$, i.e. $n_3.val = key$. Now, Line \ref{lin:hadd5} sets $S_1.n_3.next$ $=$ $S_1.n_2$. Since this event does not change the next field of any node reachable from the $\hhead$ of the list, hence from Definition \ref{def:habs}, $S_1.\abs$ remains unchanged after these events.

%In Algorithm~\ref{alg:hadd}, let $S_1.\abs$ be the initial state of the $\abs$, such that we know from Line \ref{lin:hadd3} that $key$ $\notin$ $S_1.\abs$. Line \ref{lin:hadd4} of the $\hadd$ method creates a new node $n_3$ with value $key$, i.e. $n_3.val = key$. Now, Line \ref{lin:hadd5} sets $n_3.next$ $=$ $n_2$. Since this event does not change the next field of any node reachable from the $\hhead$ of the list, hence from Definition \ref{def:habs}, $S_1.\abs$ remains unchanged after these events.

\item \textbf{Line \ref{lin:hadd6} of $\hadd$ method:} 
Let $S_1$ and $S_2$ be the states after the \lineref{hadd3} and \lineref{hadd6} respectively. At line \ref{lin:hadd3}, $true$ evaluation of the condition leads to the execution of $S_1.n_1.next = S_1.n_3$ at Line \ref{lin:hadd6}. Also, $S_1.n_1$ and $S_1.n_2$ are locked, therefore from Observation \ref{obs:hoh-locate}, $\hhead$ $\rightarrow^*$ $S_1.n_1$. From line \ref{lin:hadd5} \& \ref{lin:hadd6} we get: $S_1.n_1$ $\rightarrow$ $S_1.n_3$ $\rightarrow$ $S_1.n_2$. Hence, $\head$ $\rightarrow$ $S_1.n_1$ $\rightarrow$ $S_1.n_3$ $\rightarrow$ $S_1.n_2$ follows. We have $(\head$ $\rightarrow$ $S_2.n_3)$. Thus from Definition \ref{def:habs}, $S_1.\abs$ changes to $S_2.\abs$ $=$ $S_1.\abs$ $\cup$ $n_3$.
%At line \ref{lin:hadd3}, $true$ evaluation of the condition leads to the execution of $n_1.next = n_3$ at Line \ref{lin:hadd6}. Also, $n_1$ and $n_2$ are locked, therefore from Observation \ref{obs:hoh-locate}, $\hhead$ $\rightarrow^*$ $n_1$. From Line \ref{lin:hadd5} \& \ref{lin:hadd6} we get: $n_1$ $\rightarrow$ $n_3$ $\rightarrow$ $n_2$. Hence, $\hhead$ $\rightarrow$ $n_1$ $\rightarrow$ $n_3$ $\rightarrow$ $n_2$ follows. We have $(\hhead$ $\rightarrow$ $n_3)$. Thus from Definition \ref{def:habs}, $S_1.\abs$ changes to $S_2.\abs$ $=$ $S_1.\abs$ $\cup$ $n_3$.

%\item \textbf{Line \ref{lin:hrem5} of $\hrem$ method:} Let $S_1$ be the state after the \lineref{hrem4}. By observing the code, we notice that the state before execution of Line \ref{lin:hrem4} satisfies that $key$ $\in$ $S.\abs$. After execution of line \ref{lin:hrem4}, $\abs$ changes such that $key$ $\notin$ $S.\abs$. Note that this follows from Definition \ref{def:habs}.

\item \textbf{Line \ref{lin:hrem5} of $\hrem$ method:} Let $S_1$ be the state after the \lineref{hrem5}. By observing the code, we notice that the state before execution of Line \ref{lin:hrem5} satisfies that $key$ $\in$ $S_1.\abs$. After execution of line \ref{lin:hrem5}, $\abs$ changes such that $key$ $\notin$ $S_1.\abs$. In Line \ref{lin:hrem5} $S_1.n_1.next$ is set to $S_1.n_2.next$, $S_1.\abs$ remains unchanged follows from Definition \ref{def:habs}.

\end{enumerate}
Hence, only the events $write(n_1.next, n_3)$ in \ref{lin:hadd6} of \hadd method and $write(n_1.next, n_2.next)$ in \ref{lin:hrem5} of \hrem method can change the $\abs$. 
\end{proof}

\begin{corollary}
\label{cor:hoh-change-abs}
Both these events $write(n_1.next, n_3)$ in \ref{lin:hadd6} of \hadd method and $write(n_1.next, n_2.next)$ in \ref{lin:hrem5} of \hrem method can change the $\abs$ are also be the Linearization Points(LPs) of the respective methods. 
\end{corollary}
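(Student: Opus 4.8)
The plan is to read off \corref{hoh-change-abs} as an almost immediate bookkeeping consequence of \lemref{hoh-change-abs} together with the explicit list of linearization points fixed in \subsecref{lps-hoh}. \lemref{hoh-change-abs} already isolates exactly two events in the whole \hoh code that are capable of altering the abstract set: the pointer write $write(n_1.next, n_3)$ at \lineref{hadd6} inside \hadd, and the pointer write $write(n_1.next, n_2.next)$ at \lineref{hrem5} inside \hrem. All that remains is to observe that these two events are precisely the ones declared to be the \lp{s} of the successful operations $\hadd(key, true)$ and $\hrem(key, true)$, respectively, in \subsecref{lps-hoh}.

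Concretely, I would proceed in two short steps. First, I would invoke \lemref{hoh-change-abs} to obtain the two candidate events, and by comparing their code locations with the \lp specification of \subsecref{lps-hoh} note that $write(n_1.next, n_3)$ at \lineref{hadd6} is named there as the \lp of $\hadd(key, true)$ and $write(n_1.next, n_2.next)$ at \lineref{hrem5} as the \lp of $\hrem(key, true)$; hence the two $\abs$-changing events coincide with these \lp events. Second, to confirm this assignment is legitimate as an \lp in the sense of \asmref{lp-evt}, I would check that each of these writes lies strictly between the invocation and the response of its enclosing method: both are reached only after the enclosing \hloct call of \lineref{hadd2} (resp. \lineref{hrem2}) has returned, hence after the invocation, and before the final lock releases and the method return, hence before the response. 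Since \hloct and \hcon, as well as the non-mutating branches of \hadd and \hrem, contain no next-field writes (again by \lemref{hoh-change-abs}), there is no other event inside either method that changes $\abs$.

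I do not expect a genuine obstacle here; the corollary is essentially the consistency check that the \lp choices of \subsecref{lps-hoh} are compatible with \asmref{change-abds}, which demands that only \lp events modify $\abs$. The only point requiring mild care is to make the matching exact, i.e.\ to verify that the $\abs$-changing event in \hrem is the single pointer write at \lineref{hrem5} (there is no separate logical-deletion step in \hoh, unlike the \lazy case where the marking at \lineref{rem4} is the remove \lp), so that the corollary cites the correct line numbers. Once that is checked, the statement follows at once from \lemref{hoh-change-abs} and the \lp list.
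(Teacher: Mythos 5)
Your proposal is correct and matches the paper's intent: the paper states \corref{hoh-change-abs} without proof, as an immediate consequence of \lemref{hoh-change-abs} combined with the \lp list of \subsecref{lps-hoh}, which is exactly the matching you carry out. Your additional checks (that each write lies between \inv and \rsp, and that \hoh has no separate marking step unlike \lazy) are sound and only make the implicit argument explicit.
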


\begin{observation}
\label{obs:hoh-seq-spec}
Consider a sequential history $\spl{S}$. Let $S$ be a global state in $\spl{S}.allStates$ before the execution of the \mth and $S'$ be a global state just after the return of the \mth $(S \sqsubset S')$. Then we have the sequential specification of all methods as follows, 

	\begin{enumerate}[label=\ref{obs:hoh-seq-spec}.\arabic*]
    \item \label{obs:hoh-addT-seq}
    For a given key, suppose node(key) $\notin$ S.\abs. In this state, suppose \hadd(key) \mth is (sequentially) executed. Then the \hadd \mth will return true and node(key) will be present in $S'.\abs$.
    Formally, $\langle \forall S: (\node(key) \notin S.\abs) \xRightarrow[]{ seq\text{-}add} \spl{S}.\hadd(key, true) \land (S \sqsubset S') \land (\node(key) \in S'.\abs) \rangle$.
    
 %Consider a sequential history $\spl{S}$. Let $S$ be a global state in $\spl{S}.allStates$. For a given $key$, suppose \hnode(key) $\notin$ S.\abs. In this state suppose \hadd \mth is (sequentially) executed. Then the \hadd \mth will return true and in any future state $S'$ of $S$, node $n$ is present in $S'$. 
% Formally, $\langle \forall S: (\hnode(key) \notin S.\abs) \xRightarrow[]{ seq\text{-}add} \spl{S}.\hadd(key, true) \land (S \sqsubset S') \land (n \in \nodes{S'}) \rangle$.

  %  If $\hnode(key)$ $\notin$ $S.\abs$ where $S$ is the pre-state of the $\hadd$ method, then the method must return $true$ in the sequential history $\spl{S}$. Formally, $\langle \forall S: (\hnode(key) \notin S.\abs) \xRightarrow[]{seq-add} \spl{S}.\hadd(key, true) \rangle$.
    
	\item \label{obs:hoh-addF-seq}
	For a given key, suppose node(key) $\in$ S.\abs. In this state, suppose \hadd(key) \mth is (sequentially) executed. Then the \hadd \mth will return false and node(key) will continue to be present in $S'.\abs$.  
	 Formally, $\langle \forall S: (\node(key) \in S.\abs) \xRightarrow[]{seq\text{-}add} \spl{S}.\hadd(key, false) \land (S \sqsubset S') \land (\node(key) \in S'.\abs)\rangle$.
	 
   %If $\hnode(key)$ $\in$ $S.\abs$ where $S$ is the pre-state of the $\hadd$ method, then the method must return $false$ in the sequential history $\spl{S}$. Formally, $\langle \forall S: (\hnode(key) \in S.\abs) \xRightarrow[]{seq-add} \spl{S}.\hadd(key, false) \rangle$.
    
    \item \label{obs:hoh-removeT-seq}
   For a given key, suppose node(key) $\in$ S.\abs. In this state, suppose \hrem(key) \mth is (sequentially) executed. Then the \hrem \mth will return true and node(key) will not be present in $S'.\abs$. 
     Formally, $\langle \forall S: (\node(key) \in S.\abs) \xRightarrow[]{seq\text{-}remove} \spl{S}.\hrem(key, true) \land (S \sqsubset S') \land (\node(key) \notin S'.\abs) \rangle$.
    
   %If $\hnode(key)$ $\in$ $S.\abs$ where $S$ is the pre-state of the $\hrem$ method, then the method must return $true$ in the sequential history $\spl{S}$. Formally, $\langle \forall S: (\hnode(key) \in S.\abs) \xRightarrow[]{seq-remove} \spl{S}.\hrem(key, true) \rangle$.
    
    \item 	\label{obs:hoh-removeF-seq}
  For a given key, suppose node(key) $\notin$ S.\abs. In this state, suppose \hrem(key) \mth is (sequentially) executed. Then the \hrem \mth will return false and node(key) will continue to be not present in $S'.\abs$.  Formally, $\langle \forall S: (\node(key) \notin S.\abs) \xRightarrow[]{seq\text{-}remove} \spl{S}.\hrem(key, false) \land (S \sqsubset S') \land (\node(key) \notin S'.\abs)\rangle$.
  %If $\hnode(key)$ $\notin$ $S.\abs$ where $S$ is the pre-state of the $\hrem$ method, then the method must return $false$ in the sequential history $\spl{S}$. Formally, $\langle \forall S: (\hnode(key) \notin S.\abs) \xRightarrow[]{seq-remove} \spl{S}.\hrem(key, false) \rangle$.
    
    \item 	\label{obs:hoh-containT-seq}
    For a given key, suppose node(key) $\in$ S.\abs. In this state, suppose \hcon(key) \mth is (sequentially) executed. Then the \hcon \mth will return true and node(key) will continue to be present in $S'.\abs$. 
   Formally, $\langle \forall S: (\node(key) \in S.\abs) \xRightarrow[]{seq\text{-}contains} \spl{S}.\hcon(key, true) \land (S \sqsubset S') \land (\node(key) \in S'.\abs) \rangle$. 
   %If $\hnode(key)$ $\in$ $S.\abs$ where $S$ is the pre-state of the $\hcon$ method, then the method must return $true$ in the sequential history $\spl{S}$. Formally, $\langle \forall S: (\hnode(key) \in S.\abs) \xRightarrow[]{seq-contains} \spl{S}.\hcon(key, true) \rangle$. 
    
\item \label{obs:hoh-containF-seq}
    For a given key, suppose node(key) $\notin$ S.\abs. In this state, suppose \hcon(key) \mth is (sequentially) executed. Then the \hcon \mth will return false and node(key) will continue to be not present in $S'.\abs$.  
     Formally, $\langle \forall S: (\node(key) \notin S.\abs) \xRightarrow[]{seq\text{-}contains} \spl{S}.\hcon(key, false) \land (S \sqsubset S') \land (\node(key) \notin S'.\abs) \rangle$. 
   %  If $\hnode(key)$ $\notin$ $S.\abs$ where $S$ is the pre-state of the $\hcon$ method, then the method must return $false$ in the sequential history $\spl{S}$. Formally, $\langle \forall S: (\hnode(key) \notin S.\abs) \xRightarrow[]{seq-contains} \spl{S}.\hcon(key, false) \rangle$. 
\end{enumerate} 

\end{observation}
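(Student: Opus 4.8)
The plan is to verify each of the six sequential-specification clauses directly from the pseudocode, exploiting the fact that in a sequential history there is no interleaving, so every method runs to completion against a fixed, quiescent list. The central simplification for \hoh is that, by \defref{habs}, $S.\abs$ is exactly the set of nodes reachable from \hhead, and by \lemref{hoh-reach} every node of $\hnodes{S}$ is reachable; since \hoh has no marking, membership in $S.\abs$ coincides with physical membership in the list. Throughout I would appeal to the guarantees of \hloct, namely \lemref{hoh-loc-ret} ($S.n_1.val < key \le S.n_2.val$) together with \obsref{hoh-locate} (both returned nodes locked, reachable, and $S.n_1.next = S.n_2$).

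First I would establish the single \emph{membership-decision} fact on which all six cases rest: when $\hloct(key)$ returns $\langle n_1, n_2 \rangle$, we have $\node(key) \in S.\abs$ if and only if $S.n_2.val = key$. The $\Leftarrow$ direction is immediate, since $n_2$ is reachable (\obsref{hoh-locate}, \lemref{hoh-reach}) and $S.n_2.val = key$. For $\Rightarrow$, \lemref{hoh-loc-ret} gives $S.n_1.val < key \le S.n_2.val$ and \obsref{hoh-locate3} gives $S.n_1.next = S.n_2$; combined with the sortedness invariant \lemref{hoh-val-change} and \corref{hoh-val-abs}, any node with value strictly between $S.n_1.val$ and $S.n_2.val$ is absent from $S.\abs$, so a present $\node(key)$ can only be $n_2$, forcing $S.n_2.val = key$ (uniqueness of keys is \corref{hoh-same-node}).

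With this fact the six clauses follow mechanically. For \ref{obs:hoh-addT-seq}, $\node(key) \notin S.\abs$ gives $S.n_2.val \neq key$, so the test at \lineref{hadd3} succeeds, the method returns $true$, and the write at \lineref{hadd6} makes the new node reachable from \hhead (exactly the reachability step already argued inside \lemref{hoh-reach}), whence $\node(key) \in S'.\abs$. For \ref{obs:hoh-addF-seq}, $\node(key) \in S.\abs$ gives $S.n_2.val = key$, the test fails, the method returns $false$, and by \lemref{hoh-change-abs} no $\abs$-changing event executes, so $\node(key)$ remains in $S'.\abs$. The remove clauses \ref{obs:hoh-removeT-seq} and \ref{obs:hoh-removeF-seq} are symmetric, using the test at \lineref{hrem3} and the unlink at \lineref{hrem5} (again invoking \lemref{hoh-change-abs} and \lemref{hoh-reach} for the post-state), while the contains clauses \ref{obs:hoh-containT-seq} and \ref{obs:hoh-containF-seq} are the read-only specialisation: the test at \lineref{hcon3} returns the answer and, since \hcon has no update events, $\abs$ is unchanged.

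The main obstacle is the $\Rightarrow$ direction of the membership-decision fact: making rigorous that \hloct cannot "overshoot" a present key, i.e.\ that the first reachable node with value $\ge key$ located by the traversal is the unique candidate. This is where the sortedness invariant (\lemref{hoh-val-change}), full reachability (\lemref{hoh-reach}), and key-uniqueness (\corref{hoh-same-node}) must be combined carefully; once that is in hand, every remaining step is a direct reading of the code together with \lemref{hoh-change-abs} for the post-state assertions.
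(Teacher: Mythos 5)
Your proposal is correct, but you should know that the paper itself offers no proof of this statement: it is presented as an \emph{observation} about purely sequential executions of the pseudocode and left entirely unjustified, exactly as the corresponding Observation \ref{obs:seq-spec} for the \lazy is. What you have written is therefore not an alternative route but the missing justification. Your central device --- the equivalence, at the return of $\hloct(key)$ in a quiescent state $S$, between $\node(key) \in S.\abs$ and $S.n_2.val = key$ --- is the right organizing fact: it is precisely what the paper's concurrent-case lemmas (\lemref{hoh-addT-conc} through \lemref{hoh-containsF-conc}) re-derive separately for each method and each return value, with the locks acquired by \hloct doing the work that quiescence gives you for free in the sequential setting. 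Both directions of your equivalence check out against the cited machinery (\lemref{hoh-loc-ret}, \obsref{hoh-locate}, \lemref{hoh-val-change}, \corref{hoh-val-abs}, \corref{hoh-same-node}), and the six clauses together with their post-state assertions do follow mechanically from it via \lemref{hoh-change-abs}. The one caution is that some of the supporting statements you lean on are themselves stated loosely in the paper (for instance \lemref{hoh-reach} asserts that \emph{every} node in $\hnodes{S}$ is reachable from \hhead, which read literally conflicts with removal combined with \obsref{hoh-node-forever}); your argument inherits whatever imprecision they carry, but relative to the paper's own standard of rigour this is not a gap in your proposal.
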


\begin{lemma}
\label{lem:hoh-addT-conc}
If some \hadd method returns true in $E^H$ then
	\begin{enumerate}[label=\ref{lem:hoh-addT-conc}.\arabic*]
    \item \label{lem:hoh-addT-conc-pre}
 The $\hnode(key)$ is not present in the pre-state of $LP$ event of the method. Formally, $\langle \hadd(key, true) \Longrightarrow (\hnode(key) \notin (\prees{\hadd(key, true)}) \rangle$. 
 \item \label{lem:hoh-addT-conc-post}
  The $\hnode(key)$ is present in the post-state of $LP$ event of the method. Formally, \\$\langle \hadd(key, true) \Longrightarrow (\hnode(key) \in  (\postes{\hadd(key, true)}) \rangle$. 
 \end{enumerate}
 
%If some \hadd method returns true in $E^H$, then $\node(key)$ is not present in the pre-state of $LP$ event of the method. Formally, $\langle \hadd(key, true) \Longrightarrow (node(key) \notin \\ (\prees{\hadd(key, true)}) \rangle$. 
\end{lemma}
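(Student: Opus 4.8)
The plan is to follow the template of \lemref{addT-conc} (its lazy-list counterpart), substituting the hand-over-hand machinery: here membership in the abstract set is governed purely by reachability from \hhead\ (\defref{habs}), there is no marking, and the relevant structural facts come from \obsref{hoh-locate} and \lemref{hoh-reach} rather than from the lazy-list observations. Both parts reduce to tracking whether $\hnode(key)$ is reachable from \hhead\ at the pre- and post-state of the \lp event on \lineref{hadd6}.

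First I would establish part~\ref{lem:hoh-addT-conc-pre}. Let $S_1$ be the post-state of the return of \hloct\ on \lineref{hadd2}. By \obsref{hoh-locate} together with \lemref{hoh-reach}, in $S_1$ the nodes $n_1$ and $n_2$ are locked, both lie in $\hnodes{S_1}$ and are reachable from \hhead, and $S_1.n_1.next = S_1.n_2$. By \lemref{hoh-loc-ret} we have $S_1.n_1.val < key \leq S_1.n_2.val$; and since \hadd\ returns true, the test $n_2.val \neq key$ on \lineref{hadd3} holds, so $S_1.n_1.val < key < S_1.n_2.val$. Applying \corref{hoh-val-abs} with $n = n_1$ then yields $\hnode(key) \notin S_1.\habs$. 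The substantive step is to propagate this non-membership from $S_1$ forward to the pre-state $S_2$ of the \lp event: throughout this interval $n_1$ and $n_2$ remain locked by the executing thread, and the only internal events of \hadd\ in between are \lineref{hadd4} and \lineref{hadd5}, which create $n_3$ and set $n_3.next = n_2$ without making $n_3$ reachable from \hhead. Since node keys are immutable (Observation~\ref{lem:hoh-node-val}), \defref{habs} gives that $\habs$ is unchanged with respect to $key$, whence $\hnode(key) \notin S_2.\habs = \prees{\hadd(key, true)}$.

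For part~\ref{lem:hoh-addT-conc-post} I would chain part~\ref{lem:hoh-addT-conc-pre} with \lemref{hoh-change-abs} (and \corref{hoh-change-abs}), which identify $write(n_1.next, n_3)$ on \lineref{hadd6} as the sole event of \hadd\ that can alter $\habs$ and as the \lp. Executing it establishes $n_1 \rightarrow n_3 \rightarrow n_2$ with $n_3.val = key$; since $n_1$ was reachable from \hhead\ and stays so, $n_3$ becomes reachable, so by \defref{habs} we obtain $\hnode(key) \in S'.\habs = \postes{\hadd(key, true)}$. The main obstacle is precisely the invariance argument over $[S_1, S_2]$: one must justify rigorously that the hand-over-hand locking discipline forbids any concurrent insertion of $key$ between $n_1$ and $n_2$, because such an insertion would have to acquire the lock on the predecessor $n_1$ that the current thread already holds, and hence cannot interleave before the \lp.
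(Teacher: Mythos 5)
Your proposal is correct and follows essentially the same route as the paper's proof: both parts rest on \obsref{hoh-locate}, \lemref{hoh-loc-ret}, \corref{hoh-val-abs} and \lemref{hoh-change-abs}, concluding non-membership at the pre-state of the \lp and membership after the $write(n_1.next, n_3)$ event. The only difference is that you spell out the propagation of non-membership from the return of \hloct{} to the pre-state of the \lp (via the held locks and the non-reachability of the freshly created $n_3$), a step the paper passes over more tersely; this is a welcome tightening rather than a different argument.
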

\begin{proof} 
\noindent
\begin{itemize}
\item \textbf{\ref{lem:hoh-addT-conc-pre}}: 
From Line \ref{lin:hadd2}, when $\hloct$ returns we know that from the Observation \ref{obs:hoh-locate}, nodes $n_1$ and $n_2$ are locked and ($n_1, n_2$ $\in$ $\hnodes{S})$. Also, $n_1.val$ $<$ $key$ $\leq$ $n_2.val$ from Lemma \ref{lem:hoh-loc-ret}. Now in Line \ref{lin:hadd3}, $n_2.val$ $\neq$ $key$ is evaluated to \emph{true}. Also from Corollary \ref{cor:hoh-val-abs}, we conclude that $\node(key)$ not in the state after \emph{\hloct} returns. And from Observation \ref{lem:hoh-node-val}, no node changes its key value after initialization. So, $\node(key)$ $\notin$ $S.\abs$, where $S$ is the pre-state of the \lp event of the method. Hence, $\langle \hadd(key, true) \Longrightarrow (\hnode(key) \notin \\ (\prees{\hadd(key, true)}) \rangle$.
\item \textbf{\ref{lem:hoh-addT-conc-post}}: From the Lemma \ref{lem:hoh-addT-conc-pre} we get that \emph{\node(key)} is not present in the pre-state of the \lp event. From Lemma \ref{lem:hoh-change-abs}, it is known that only \lp event can change the $S.\abs$. Now after execution of the \lp event i.e. $write(n_1.next, n_3)$ in the Line \ref{lin:hadd6}, $\node(key)$ $\in$ $S'.\abs$, where $S'$ is the post-state of the \lp event of the method. Hence, $\langle \hadd(key, true) \Longrightarrow (\node(key) \in (\postes{\hadd(key, true)}) \rangle$.

\end{itemize}
\end{proof}
\vspace{-3mm}
\begin{lemma}
\label{lem:hoh-addF-conc}
If some \hadd method returns false in $E^H$ then
	\begin{enumerate}[label=\ref{lem:hoh-addF-conc}.\arabic*]
    \item \label{lem:hoh-addF-conc-pre}
 The $\hnode(key)$ is present in the pre-state of $LP$ event of the method. Formally,\\ $\langle \hadd(key, false) \Longrightarrow (\hnode(key) \in  (\prees{\hadd(key, false)}) \rangle$. 
 \item \label{lem:hoh-addF-conc-post}
  The $\hnode(key)$ is present in the post-state of $LP$ event of the method. Formally,\\ $\langle \hadd(key, false) \Longrightarrow (\hnode(key) \in  (\postes{\hadd(key, false)}) \rangle$. 
 \end{enumerate}
\end{lemma}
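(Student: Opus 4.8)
The plan is to follow the template already established for the lazy list in \lemref{addF-conc}, adapting each step to the HoH-specific observations and exploiting the fact that for HoH the abstract set (\defref{habs}) is determined purely by reachability from \hhead, with no \emph{marked} field to track. I would prove the two parts \ref{lem:hoh-addF-conc-pre} and \ref{lem:hoh-addF-conc-post} in order, since the post-state claim follows almost immediately from the pre-state claim once we recall that the chosen \lp is a pure read.

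For part \ref{lem:hoh-addF-conc-pre}, I would start from the return of \hloct on \lineref{hadd2}: by \obsref{hoh-locate} the returned nodes $n_1, n_2$ are locked, belong to $\hnodes{S}$, and satisfy $S_1.n_1.next = S_1.n_2$, while \lemref{hoh-loc-ret} gives $S_1.n_1.val < key \leq S_1.n_2.val$. Since \hadd returns false, the test $read(n_2.val) \neq key$ on \lineref{hadd3} must evaluate to false, i.e.\ $S_1.n_2.val = key$, so $n_2$ is precisely $\node(key)$. Combining \lemref{hoh-reach} (every node of $\hnodes{S}$ is reachable from \hhead) with \defref{habs}, the node $n_2 = \node(key)$ lies in $S_1.\habs$. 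Because $n_1$ and $n_2$ are locked and node values never change (Observation~\ref{lem:hoh-node-val}), no concurrent method can remove $n_2$ or alter the reachability of the window $\langle n_1, n_2 \rangle$ between the return of \hloct and the \lp on \lineref{hadd3}; hence $\node(key)$ is still in the abstract set in the pre-state of the \lp, giving $\node(key) \in \prees{\hadd(key, false)}$.

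For part \ref{lem:hoh-addF-conc-post}, I would observe that the \lp of $\hadd(key, false)$ is the read event $read(n_2.val)$ on \lineref{hadd3}, which is not one of the two write events identified in \lemref{hoh-change-abs} as the only events capable of changing \habs. Therefore the abstract set is unchanged across the \lp, and part \ref{lem:hoh-addF-conc-pre} immediately yields $\node(key) \in \postes{\hadd(key, false)}$.

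The hard part will not be any calculation but rather stating carefully the one substantive point: that the pre-state of the \lp coincides, as far as \habs is concerned, with the state $S_1$ in which \hloct returns. This rests entirely on the locking discipline (\obsref{hoh-locate2}), which guarantees that $n_1$ and $n_2$ stay locked throughout, so the window cannot be disconnected from \hhead by an intervening \hrem, and hence $n_2 = \node(key)$ cannot leave \habs before the \lp fires. Everything else is a direct transcription of the lazy-list argument with the marked-node reasoning deleted, so the proof should be short.
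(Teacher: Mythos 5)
Your proposal is correct and follows essentially the same route as the paper's proof: use \obsref{hoh-locate} and \lemref{hoh-loc-ret} at the return of \hloct, note that the false evaluation of $read(n_2.val) \neq key$ on \lineref{hadd3} forces $n_2.val = key$ so $\node(key) \in \habs$ in the pre-state, and then conclude the post-state claim because the \lp is a read that cannot change \habs (\lemref{hoh-change-abs}). If anything, your write-up is cleaner than the paper's, which contains a stray (apparently copy-pasted) appeal to Corollary~\ref{cor:hoh-val-abs} claiming $\node(key)$ is \emph{not} in the state, and which does not explicitly justify, as you do via the locking of $n_1$ and $n_2$, that nothing changes between \hloct's return and the \lp event.
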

\begin{proof} 
\noindent
\begin{itemize}
\item \textbf{\ref{lem:hoh-addF-conc-pre}}: 
From Line \ref{lin:hadd2}, when $\hloct$ returns we know that from the Observation \ref{obs:hoh-locate}, nodes $n_1$ and $n_2$ are locked and ($n_1, n_2$ $\in$ $\hnodes{S})$. Also, $n_1.val$ $<$ $key$ $\leq$ $n_2.val$ from Lemma \ref{lem:hoh-loc-ret}. Now in Line \ref{lin:hadd3}, $n_2.val$ $\neq$ $key$ is evaluated to \emph{false}, means \hnode(key) present. Also from Corollary \ref{cor:hoh-val-abs}, we conclude that $\node(key)$ not in the state after \emph{\hloct} returns. And from Observation \ref{lem:hoh-node-val}, no node changes its key value after initialization. So, $\node(key)$ $\in$ $S.\abs$, where $S$ is the pre-state of the \lp event of the method. Hence, $\langle \hadd(key, false) \Longrightarrow (\hnode(key) \in  (\prees{\hadd(key, false)}) \rangle$.
\item \textbf{\ref{lem:hoh-addF-conc-post}}: 

From the Lemma \ref{lem:hoh-addF-conc-pre} we get that \emph{\node(key)} is present in the pre-state of the \lp event. This \lp event  $n_2.val$ $\neq$ $key$ in Line \ref{lin:hadd3} does not change the $S.\abs$, Now after execution of the \lp event the $\node(key)$ also present in the $S'.\abs$, where $S'$ is the post-state of the \lp event of the method. Hence, $\langle \hadd(key, false) \Longrightarrow (\node(key) \in \\ (\postes{\hadd(key, false)}) \rangle$.

%From the Lemma \ref{lem:hoh-addF-conc-pre} we know that \emph{\hnode(key)} is present in the pre-state of the \lp event. Now after execution of the \lp event in the Line \ref{lin:hadd3}(this \lp event doesn't change the $S.\abs$). And we know that from the Observation \ref{obs:hoh-locate}, nodes $n_1$ and $n_2$ are locked, ($n_1, n_2$ $\in$ $\nodes{S})$ and $n_1.next = n_2$. Also, $S.n_1.val$ $<$ $key$ $\leq$ $S.n_2.val$ from Lemma \ref{lem:hoh-loc-ret}. If this method is to return false in the Line \ref{lin:hadd3}, $n_2.val$ $\neq$ $key$ is evaluated to \emph{false}, means \hnode(key) is present. From the Observation \ref{lem:hoh-node-val} we know that no node can change its key value after initialization and remain same in the post-state of the \lp event of the method. So, $\node(key)$ $\in$ $S'.\abs$, where $S'$ is the post-state of the \lp event of the method. Hence, $\langle \hadd(key, false) \Longrightarrow (\hnode(key) \in  \\ (\postes{\hadd(key, false)}) \rangle$
\end{itemize}
\end{proof}
\vspace{-3mm}
\begin{lemma}
\label{lem:hoh-removeT-conc}
If some \hrem method returns true in $E^H$ then
	\begin{enumerate}[label=\ref{lem:hoh-removeT-conc}.\arabic*]
    \item \label{lem:hoh-removeT-conc-pre}
 The $\hnode(key)$ is present in the pre-state of $LP$ event of the method. Formally,\\ $\langle \hrem(key, true) \Longrightarrow (\hnode(key) \in (\prees{\hrem(key, true)}) \rangle$. 
 \item \label{lem:hoh-removeT-conc-post}
  The $\hnode(key)$ is not present in the post-state of $LP$ event of the method. Formally, $\langle \hadd(key, true) \Longrightarrow (\hnode(key) \notin  (\postes{\hadd(key, true)}) \rangle$. 
 \end{enumerate}
\end{lemma}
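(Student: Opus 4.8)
The plan is to mirror the structure of the lazy-list argument in \lemref{removeT-conc}, proving the two claims \ref{lem:hoh-removeT-conc-pre} and \ref{lem:hoh-removeT-conc-post} separately, but adapting it to the two features peculiar to \hoh: the $\lp$ of $\hrem(key,true)$ is the physical unlinking event $write(n_1.next, n_2.next)$ at \lineref{hrem5} (rather than a logical marking as in \lazy), and $\habs$ (\defref{habs}) is simply the set of nodes reachable from $\hhead$, with no marked-bit to track. I note in passing that the formal statement of \ref{lem:hoh-removeT-conc-post} writes $\hadd$ where it should read $\hrem$; I would correct this typo before carrying out the argument.

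For the pre-state claim \ref{lem:hoh-removeT-conc-pre}, I would start from the call to \hloct at \lineref{hrem2}. When \hloct returns in a state $S_1$, \obsref{hoh-locate} gives that $n_1$ and $n_2$ are locked, $(n_1,n_2)\in\hnodes{S_1}$ and $S_1.n_1.next = S_1.n_2$, while \lemref{hoh-loc-ret} gives $S_1.n_1.val < key \leq S_1.n_2.val$. Since the method returns true, the test at \lineref{hrem3} forces $S_1.n_2.val = key$, so $n_2$ is a node carrying $key$, and by \corref{hoh-same-node} it is the unique such node in $\habs$. Because $n_1$ is reachable from $\hhead$ (\lemref{hoh-reach}) and $S_1.n_1.next = S_1.n_2$, the node $n_2$ is reachable, hence $\node(key)\in S_1.\habs$ by \defref{habs}. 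The remaining point is that nothing relevant changes between $S_1$ and the pre-state $S_2$ of the $\lp$: $n_1$ and $n_2$ stay locked and their keys are immutable (Observation~\ref{lem:hoh-node-val}), so no other thread can rewrite $S.n_1.next$, and $n_2$ remains reachable, giving $\node(key)\in\prees{\hrem(key,true)}$.

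For the post-state claim \ref{lem:hoh-removeT-conc-post}, I would invoke \lemref{hoh-change-abs}, which already establishes that executing $write(n_1.next, n_2.next)$ at \lineref{hrem5} removes $key$ from the abstract set. Concretely, after the $\lp$ we have $n_1.next = n_2.next$, so $n_1$ no longer points to $n_2$; since $n_2$ was the unique node holding $key$ (\corref{hoh-same-node}) and, along the sorted chain, $n_1$ was its only predecessor, $n_2$ becomes unreachable from $\hhead$, so $\node(key)\notin\postes{\hrem(key,true)}$ by \defref{habs}.

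The main obstacle, exactly as in the \lazy case, is the reachability reasoning underlying the post-state claim: I must argue that severing the single edge $n_1\to n_2$ genuinely disconnects $n_2$ from $\hhead$, i.e. that $n_1$ was $n_2$'s only incoming link among reachable nodes. This rests on the sortedness invariant (\lemref{hoh-val-change}), which makes predecessors unique, together with the hand-over-hand locking discipline (\obsref{hoh-locate}), which prevents concurrent re-linking while $n_1,n_2$ are held. Since most of this structural work is already packaged inside \lemref{hoh-change-abs}, the residual effort is essentially bookkeeping: connecting that lemma's conclusion about $\habs$-membership to the pre- and post-state of the specific $\lp$ event named in the statement.
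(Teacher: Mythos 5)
Your proposal is correct and follows essentially the same route as the paper's own proof: establish via \obsref{hoh-locate}, \lemref{hoh-loc-ret} and the test at \lineref{hrem3} that $n_2$ is the (locked, reachable) node with $n_2.val=key$ in the pre-state of the \lp, and then appeal to \lemref{hoh-change-abs} to conclude that the $write(n_1.next,n_2.next)$ event removes $\node(key)$ from $\habs$ in the post-state. Your version is somewhat more explicit than the paper's about why $n_2$ becomes unreachable (uniqueness of the predecessor via sortedness and the locking discipline), and you correctly flag the $\hadd$/$\hrem$ typo in part \ref{lem:hoh-removeT-conc-post} of the statement, which the paper's own proof silently corrects.
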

\begin{proof} 
\noindent
\begin{itemize}
\item \textbf{\ref{lem:hoh-removeT-conc-pre}}: 
From Line \ref{lin:hrem2}, when $\hloct$ returns we know that from the Observation \ref{obs:hoh-locate}, nodes $n_1$ and $n_2$ are locked and ($n_1, n_2$ $\in$ $\hnodes{S})$. Also, $n_1.val$ $<$ $key$ $\leq$ $n_2.val$ from Lemma \ref{lem:hoh-loc-ret}. Now in Line \ref{lin:hrem3}, $n_2.val$ $=$ $key$ is evaluated to \emph{true}, means \hnode(key) is present. So, before execution of the \lp event $write(n_1.next, n_2.next)$ in the Line \ref{lin:hrem5} \hnode(key) is also present in the $S.\abs$ and from the Observation \ref{lem:hoh-node-val}), no node changes its key value after initialization. So, $\node(key)$ $\in$ $S.\abs$, where $S$ is the pre-state of the \lp event of the method. Hence, $\langle \hrem(key, true) \Longrightarrow (\hnode(key) \in \\ (\prees{\hrem(key, true)}) \rangle$.
\item \textbf{\ref{lem:hoh-removeT-conc-post}}:
From the Lemma \ref{lem:hoh-removeT-conc-pre} we get that \emph{\node(key)} is present in the pre-state of the \lp event. This \lp event $write(n_1\\.next, n_2.next)$ in the \lineref{hrem5} changes the $S.\abs$. Now after execution of the \lp event the $\node(key)$ will not present in the $S'.\abs$, where $S'$ is the post-state of the \lp event of the method. Hence, $\langle \hrem(key, true) \Longrightarrow \\ (\node(key) \notin (\postes{\hrem(key, true)}) \rangle$.
%From the Lemma \ref{lem:hoh-removeT-conc-pre} we know that \emph{\hnode(key)} is present in the pre-state of the \lp event. Now after execution of the \lp event $write(n_1.next, n_2.next)$ in the Line \ref{lin:hrem5}(this \lp event changes the $S.\abs$), the \hnode(key) will not longer present in the $S'.\abs$, where $S'$ is the post-state of the \lp event of the method. Hence, $\langle \hrem(key, true) \\ \Longrightarrow (\hnode(key) \notin (\postes{\hrem(key, true)}) \rangle$
\end{itemize}
\end{proof}

\begin{lemma}
\label{lem:hoh-removeF-conc}
If some \hrem method returns false in $E^H$ then
	\begin{enumerate}[label=\ref{lem:hoh-removeF-conc}.\arabic*]
    \item \label{lem:hoh-removeF-conc-pre}
 The $\hnode(key)$ is not present in the pre-state of $LP$ event of the method. Formally, $\langle \hrem(key, false) \Longrightarrow (\hnode(key) \notin  (\prees{\hrem(key, false)}) \rangle$. 
 \item \label{lem:hoh-removeF-conc-post}
  The $\hnode(key)$ is not present in the post-state of $LP$ event of the method. Formally, $\langle \hrem(key, false) \Longrightarrow (\hnode(key) \notin  (\postes{\hadd(key, false)}) \rangle$. 
 \end{enumerate}
\end{lemma}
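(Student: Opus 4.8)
The plan is to mirror the structure of the lazy-list argument in \lemref{removeF-conc}, but specialised to \hoh, where the proof is in fact simpler: since \hoh has no \emph{marked} field, $S.\abs$ (\defref{habs}) is just the set of nodes reachable from \hhead, so there is no logical-vs-physical deletion to reconcile. I would prove the pre-state claim \ref{lem:hoh-removeF-conc-pre} first, entirely from the postconditions of \hloct, and then obtain the post-state claim \ref{lem:hoh-removeF-conc-post} as an immediate corollary using the fact that the \lp of $\hrem(key, false)$ is a \emph{read} event.

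For \ref{lem:hoh-removeF-conc-pre}, I would start from the call $\hloct(key)$ at \lineref{hrem2}, which returns a pair $\langle n_1, n_2\rangle$ in some state $S_1$. By \obsref{hoh-locate} we have $n_1, n_2 \in \hnodes{S_1}$, both nodes locked, and $S_1.n_1.next = S_1.n_2$; by \lemref{hoh-loc-ret} we have $S_1.n_1.val < key \leq S_1.n_2.val$. Because the method returns \emph{false}, the test $(read(n_2.val) = key)$ at \lineref{hrem3} must fail, i.e.\ $S_1.n_2.val \neq key$, which combined with $key \leq S_1.n_2.val$ forces $S_1.n_1.val < key < S_1.n_2.val$ with $S_1.n_1.next = S_1.n_2$. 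Applying \corref{hoh-val-abs} then yields $\node(key) \notin S_1.\abs$. Finally I would argue that this absence persists up to the pre-state of the \lp: since the \lp of $\hrem(key, false)$ is precisely the $read(n_2.val)$ at \lineref{hrem3}, the pre-state of the \lp is $S_1$ (no \abs-changing event intervenes), and keys never change by Observation~\ref{lem:hoh-node-val}; hence $\node(key) \notin \prees{\hrem(key, false)}$.

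Claim \ref{lem:hoh-removeF-conc-post} then follows quickly. By \lemref{hoh-change-abs}, only the writes at \lineref{hadd6} of \hadd and \lineref{hrem5} of \hrem can change $\abs$; the \lp event here is a read, so it leaves $\abs$ unchanged. Thus $\node(key)$, absent in the pre-state of the \lp by \ref{lem:hoh-removeF-conc-pre}, remains absent in the post-state, giving $\node(key) \notin \postes{\hrem(key, false)}$.

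The main obstacle is the stability argument used to bridge from the state $S_1$ in which \hloct returns to the pre-state of the \lp, namely ruling out that a concurrent thread inserts a node with value $key$ strictly between $n_1$ and $n_2$ during that window. The crux is the hand-over-hand locking discipline: by \obsref{hoh-locate} the current thread holds locks on both $n_1$ and $n_2$, and any concurrent $\hadd(key, true)$ would, by the same observation applied to its own \hloct, have to acquire the lock on the predecessor $n_1$ before relinking it at \lineref{hadd6}; being blocked on $n_1$, it cannot change $S_1.n_1.next$, so no new node can appear between $n_1$ and $n_2$. Since the \lp coincides with the decision read at \lineref{hrem3}, this window is essentially empty, but I would still make the locking-based non-interference explicit, as it is the one place where concurrency (rather than a purely sequential reading of the code) genuinely enters the argument.
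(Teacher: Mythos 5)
Your proposal is correct and follows essentially the same route as the paper's own proof: the \hloct postconditions (\obsref{hoh-locate}, \lemref{hoh-loc-ret}) plus the failed equality test at \lineref{hrem3} give $n_1.val < key < n_2.val$ with $n_1.next = n_2$, hence $\node(key) \notin \abs$ in the pre-state via \corref{hoh-val-abs}, and the post-state claim follows because the \lp is a read that cannot change $\abs$ (\lemref{hoh-change-abs}). If anything, your write-up is more explicit than the paper's rather terse argument, particularly in spelling out the lock-based non-interference between the return of \hloct and the \lp, which the paper leaves implicit.
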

\begin{proof} 
\noindent
\begin{itemize}
\item \textbf{\ref{lem:hoh-removeF-conc-pre}}: 
From Line \ref{lin:hrem2}, when $\hloct$ returns we know that from the Observation \ref{obs:hoh-locate}, nodes $n_1$ and $n_2$ are locked and ($n_1, n_2$ $\in$ $\hnodes{S})$. Also, $n_1.val$ $<$ $key$ $\leq$ $n_2.val$ from Lemma \ref{lem:hoh-loc-ret}. Now in Line \ref{lin:hrem3}, $n_2.val$ $=$ $key$ ( the \lp event $read(n_2.val)$) is evaluated to \emph{false}, means \hnode(key) is not present. So, before execution of the \lp the \hnode(key) is not present in the $S.\abs$, where $S$ is the pre-state of the \lp event of the method. Hence, $\langle \hrem(key, false) \Longrightarrow (\hnode(key) \in (\prees{\hrem(key, false)}) \rangle$.
\item \textbf{\ref{lem:hoh-removeF-conc-post}}: 
From the Lemma \ref{lem:hoh-removeF-conc-pre} we get that \emph{\node(key)} is not present in the pre-state of the \lp event. This \lp event \\ $(read(n_2.val) = key)$ in the \lineref{rem3} does not change the $S.\abs$. Now after execution of the \lp event the $\node(key)$ will not present in the $S'.\abs$, where $S'$ is the post-state of the \lp event of the method. Hence, $\langle \hrem(key, falase) \Longrightarrow (\node(key) \notin  (\postes{\hrem(key, false)}) \rangle$.

%From the Lemma \ref{lem:hoh-removeF-conc-pre} we know that \emph{\hnode(key)} is not present in the pre-state of the \lp event. Now after execution of the \lp event $n_2.val$ $=$ $key$ is evaluated to \emph{false} in the Line \ref{lin:hrem3}, the \hnode(key) also not present in the $S'.\abs$, where $S'$ is the post-state of the \lp event of the method. Hence, $\langle \hrem(key, false) \Longrightarrow (\hnode(key) \notin \\ (\postes{\hrem(key, false)}) \rangle$
\end{itemize}
\end{proof}

\begin{lemma}
\label{lem:hoh-containsT-conc}
If some \hcon method returns true in $E^H$ then
	\begin{enumerate}[label=\ref{lem:hoh-containsT-conc}.\arabic*]
    \item \label{lem:hoh-containsT-conc-pre}
 The $\hnode(key)$ is present in the pre-state of $\lp$ event of the method. Formally,\\ $\langle \hcon(key, true) \Longrightarrow (\hnode(key) \in (\prees{\hcon(key, true)}) \rangle$. 
 \item \label{lem:hoh-containsT-conc-post}
  The $\hnode(key)$ is present in the post-state of $\lp$ event of the method. Formally, \\$\langle \hcon(key, true) \Longrightarrow (\hnode(key) \in  (\postes{\hcon(key, true)}) \rangle$. 
 \end{enumerate}
\end{lemma}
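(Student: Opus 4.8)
The plan is to mirror the structure of the two immediately preceding concurrent-membership lemmas, \lemref{hoh-addF-conc} and \lemref{hoh-removeT-conc}, because the ingredients are identical: the only new wrinkle is that the \lp of $\hcon(key,true)$ is the \emph{read} event $read(n_2.val)$ at \lineref{hcon3}, rather than a write. Consequently the post-state part will follow almost for free from \lemref{hoh-change-abs}, which tells us that only the two designated write events can alter the abstract set. So I would prove part~\ref{lem:hoh-containsT-conc-pre} carefully and then obtain part~\ref{lem:hoh-containsT-conc-post} as an immediate corollary.

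For part~\ref{lem:hoh-containsT-conc-pre}, first I would invoke \hloct at \lineref{hcon2} and read off its postconditions in the state $S_1$ in which it returns. By \obsref{hoh-locate}, the returned nodes $n_1, n_2$ are locked, belong to $\hnodes{S_1}$, and satisfy $S_1.n_1.next = S_1.n_2$; by \lemref{hoh-loc-ret} we also have $S_1.n_1.val < key \leq S_1.n_2.val$. Since the method returns $true$, the test at \lineref{hcon3} forces $read(n_2.val) = key$ to evaluate to true, i.e. $S_1.n_2.val = key$, so $n_2 = \hnode(key)$. Next I would use \lemref{hoh-reach} to conclude that $n_2$ is reachable from \hhead, and then \defref{habs} to place $n_2 \in S_1.\abs$. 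Because $n_2$ is held locked (\ref{obs:hoh-locate2}) in the whole interval between the return of \hloct and the \lp read, no concurrent \hrem can unlink it, and by \ref{lem:hoh-node-val} its key is immutable; hence $\hnode(key) \in S.\abs$ where $S$ is the pre-state of the \lp event.

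For part~\ref{lem:hoh-containsT-conc-post}, I would argue that the \lp is the read $read(n_2.val)$ at \lineref{hcon3}, which by \lemref{hoh-change-abs} does not change $\abs$; combining this with part~\ref{lem:hoh-containsT-conc-pre} gives $\hnode(key) \in S'.\abs$ for the post-state $S'$ of the \lp. The only genuinely delicate point — and thus the main obstacle — is justifying that $n_2$ is still a member of $S.\abs$ precisely at the \lp, i.e. that it remains both reachable and keyed by $key$ between \hloct's return and the read. This is exactly what the locking discipline buys us: $n_2$ stays locked (\ref{obs:hoh-locate2}), so its incoming link from $n_1$ cannot be redirected by another thread, and node keys never change (\ref{lem:hoh-node-val}). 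Everything else is a direct transcription of the established \hoh lemmas, and the post-state claim is trivial once the pre-state claim is in hand.

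\hfill $\Box$
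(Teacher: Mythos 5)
Your proof is correct and follows essentially the same route as the paper's: both parts rest on the postconditions of \hloct (Observation \ref{obs:hoh-locate}, \lemref{hoh-loc-ret}), the true evaluation of $read(n_2.val)=key$ at \lineref{hcon3}, key immutability, and the fact that the \lp read does not modify the abstract set. Your version is in fact slightly more careful than the paper's, since you explicitly route membership through \lemref{hoh-reach} and \defref{habs} and justify persistence up to the \lp via the locks, steps the paper's proof leaves implicit.
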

\begin{proof} 
\noindent
\begin{itemize}
\item \textbf{\ref{lem:hoh-containsT-conc-pre}}: 
From Line \ref{lin:hcon2}, when $\hloct$ returns we know from the Observation \ref{obs:hoh-locate} that, nodes $n_1$ and $n_2$ are locked and ($n_1, n_2$ $\in$ $\hnodes{S})$. Also, $n_1.val$ $<$ $key$ $\leq$ $n_2.val$ from Lemma \ref{lem:hoh-loc-ret}. Now in Line \ref{lin:hcon3}, $n_2.val$ $=$ $key$ (the \lp event $read(n_2.val)$) is evaluated to \emph{true} and this \lp event does not change the $S.\abs$. From Observation \ref{lem:hoh-node-val}, no node changes its key value after initialization. So, $\node(key)$ $\in$ $S.\abs$, where $S$ is the pre-state of the \lp event of the method. Hence, $\langle \hcon(key, true) \Longrightarrow (\hnode(key) \in \\ (\prees{\hcon(key, true)}) \rangle$.
\item \textbf{\ref{lem:hoh-containsT-conc-post}}:
From the Lemma \ref{lem:hoh-containsT-conc-pre} we get that \emph{\node(key)} is present in the pre-state of the \lp event. This \lp event $(read(n_2.val) = key)$ in the \lineref{hcon3} does not change the $S.\abs$. Now after execution of the \lp event the $\node(key)$ will be present in the $S'.\abs$, where $S'$ is the post-state of the \lp event of the method. Hence, $\langle \hcon(key, true) \Longrightarrow (\node(key) \notin (\postes{\hcon(key, true)}) \rangle$.

%From the Lemma \ref{lem:hoh-containsT-conc-pre} we know that \emph{\hnode(key)} is present in the pre-state of the \lp event. The execution of the \lp event is in the Line \ref{lin:hcon3} and this \lp event does not change the $S.\abs$. And from the Observation \ref{lem:hoh-node-val} we know that no node can change its key value after initialization and remain same in the post-state of the \lp event of the method. So, $\node(key)$ $\in$ $S'.\abs$, where $S'$ is the post-state of the \lp event of the method. Hence, $\langle \hcon(key, true) \Longrightarrow (\hnode(key) \in  (\postes{\hcon(key, true)}) \rangle$
\end{itemize}
\end{proof}

\begin{lemma}
\label{lem:hoh-containsF-conc}
If some \hcon method returns false in $E^H$ then
	\begin{enumerate}[label=\ref{lem:hoh-containsF-conc}.\arabic*]
    \item \label{lem:hoh-containsF-conc-pre}
 The $\hnode(key)$ is not present in the pre-state of $\lp$ event of the method. Formally, $\langle \hcon(key, false) \Longrightarrow (\hnode(key) \notin (\prees{\hcon(key, false)}) \rangle$. 
 \item \label{lem:hoh-containsF-conc-post}
  The $\hnode(key)$ is not present in the post-state of $\lp$ event of the method. Formally, $\langle \hcon(key, false) \Longrightarrow (\hnode(key) \notin  (\postes{\hcon(key, false)}) \rangle$. 
 \end{enumerate}
\end{lemma}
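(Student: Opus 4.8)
The plan is to exploit the crucial structural difference between \hoh and the lazy-list: the \hcon method of \hoh calls \hloct at \lineref{hcon2}, so at its \lp event $read(n_2.val)$ in \lineref{hcon3} the thread holds locks on both $n_1$ and $n_2$. Consequently the delicate multi-case analysis (with the dummy \lp events) that was needed for $\con(key, false)$ in the lazy-list collapses entirely, and the argument reduces to a direct analogue of \lemref{hoh-removeF-conc}, since $\hcon(key, false)$ and $\hrem(key, false)$ share the same \lp, namely the read that evaluates $n_2.val = key$ to false.

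For part \ref{lem:hoh-containsF-conc-pre}, I would first invoke the post-conditions of \hloct: by \obsref{hoh-locate} the returned nodes satisfy $n_1, n_2 \in \hnodes{S}$, both are locked, and $S.n_1.next = S.n_2$; by \lemref{hoh-loc-ret} we also have $S.n_1.val < key \leq S.n_2.val$. Since the method returns false, the \lp event forces $n_2.val \neq key$, which sharpens the inequality to $S.n_1.val < key < S.n_2.val = S.n_1.next.val$. Applying \corref{hoh-val-abs} with $n = n_1$ then yields $\node(key) \notin S.\abs$. Because $n_1$ and $n_2$ stay locked across the \lp and node values never change once initialised (Observation \ref{lem:hoh-node-val}), this absence holds exactly in the pre-state of the \lp, giving $\node(key) \notin \prees{\hcon(key, false)}$.

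For part \ref{lem:hoh-containsF-conc-post}, I would note that the \lp event $read(n_2.val)$ is a read and hence, by \lemref{hoh-change-abs}, cannot alter $\abs$. Thus the membership status established in part \ref{lem:hoh-containsF-conc-pre} is preserved, and $\node(key) \notin \postes{\hcon(key, false)}$.

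The only point requiring care is ruling out a reachable node carrying value $key$ lying strictly between $n_1$ and $n_2$; this is exactly where the sortedness invariant \lemref{hoh-val-change} (through \corref{hoh-val-abs}) together with the hand-over-hand locking of the $n_1 \to n_2$ link does the work. Because \hcon holds both locks at the \lp, no concurrent \hadd can splice such a node in during the \lp event --- the complication that forced the lazy-list proof of $\con(key, false)$ into its Cases 1--4 is simply absent here. I therefore expect no genuine obstacle beyond faithfully citing the \hloct post-conditions.
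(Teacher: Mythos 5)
Your proposal is correct and follows essentially the same route the paper intends: the paper's own proof of this lemma is just the one-line remark ``Similar argument as Lemma~\ref{lem:hoh-containsT-conc}'', and your expansion (post-conditions of \hloct, the failed equality test sharpening to $n_1.val < key < n_2.val$, \corref{hoh-val-abs} for absence from \abs, and \lemref{hoh-change-abs} for the post-state) matches exactly how the paper argues the analogous \hadd/\hrem cases such as \lemref{hoh-addT-conc} and \lemref{hoh-removeF-conc}. Your observation that the locks held by \hloct eliminate the lazy-list's dummy-\lp case analysis is also the correct reason the \hoh proof is so much simpler.
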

\begin{proof} 
\noindent
Similar argument as Lemma \ref{lem:hoh-containsT-conc}.
\end{proof}

\ignore{

\begin{lemma}
\label{lem:hoh-addF-conc}
If some \hadd method returns false in $E^H$, then $\node(key)$ is present in the pre-state of $LP$ event of the method. Formally, $\langle \hadd(key, false) \Longrightarrow (node(key) \in \\ (\prees{\hadd(key, false)}) \rangle$. 
\end{lemma}
\begin{proof}
From Line \ref{lin:hadd2}, when $\hloct$ returns we know that from Observation \ref{obs:hoh-locate}, nodes $n_1$ and $n_2$ are locked and ($n_1, n_2$) $\in$ $\hnodes{S}$. Also, $n_1.val$ $<$ $key$ $\leq$ $n_2.val$ from Lemma \ref{lem:hoh-loc-ret}. And after observing code, at Line \ref{lin:hadd3} of $\hadd$ method, $n_2.val$ $=$ $key$ and $n_1.next = n_2$ and $n_2$ $\in$ $\nodes{S}$, we know $n_2$ $\in$ $S.\abs$, where $S$ is the pre-state of the $\lp$ event of the method. Hence $\node(key)$ $\in$ \\ $(\prees{\hadd(key, false)}$.  
\end{proof}

\begin{lemma}
\label{lem:hoh-removeT-conc}
If some \hrem method returns true in $E^H$, then $\node(key)$ is present in the pre-state of $LP$ event of the method. Formally, $\langle \hrem(key, true) \Longrightarrow (\node(key) \in \\ (\prees{\hrem(key, true)}) \rangle$. 
\end{lemma}

\begin{proof}
From Line \ref{lin:hrem2}, when $\hloct$ returns we know that from Observation \ref{obs:locate} nodes $n_1$ and $n_2$ are locked and ($n_1, n_2$) $\in$ $\hnodes{S}$. Also, $n_1.val$ $<$ $key$ $\leq$ $n_2.val$ from Lemma \ref{lem:hoh-loc-ret}. And after observing code, at Line \ref{lin:hrem3} of $\rem$ method, $n_2.val$ = $key$. Also from $n_1.next$ = $n_2$ and $n_2$ $\in$ $\nodes{S}$, we know $n_2$ $\in$ $S.\abs$ where $S$ is the pre-state of the $\lp$ event of the method. Since Observation \ref{lem:hoh-node-val} tells, no node changes its key value after initialization. Hence \textit{\node(key)} $\in$ $(\prees{\hrem(key, true)}$. 
\end{proof}

\begin{lemma}
\label{lem:hoh-removeF-conc}
If some \hrem method returns false in $E^H$, then $\node(key)$ is not present in the pre-state of $LP$ event of the method. Formally, $\langle \hrem(key, false) \Longrightarrow (\node(key) \notin \\ \prees{\hrem(key, false)}) \rangle$. 
\end{lemma}

\begin{proof}
From Line \ref{lin:hrem2}, when $\hloct$ returns we know that from Observation \ref{obs:hoh-locate} nodes $n_1$ and $n_2$ are locked and ($n_1, n_2$) $\in$ $\hnodes{S}$. Also, $n_1.val$ $<$ $key$ $\leq$ $n_2.val$ from Lemma \ref{lem:hoh-loc-ret}. And after observing code, at Line \ref{lin:hrem3} of $\hrem$ method, $n_2.val$ $\neq$ $key$. Also from Corollary \ref{cor:hoh-val-abs}, we conclude that $\node(key)$ $\notin$ state $S$ after $\hloct$ method returns. From $n_1.next$ = $n_2$ and $n_2$ $\in$ $\nodes{S}$, we know $\node(key)$ $\notin$ $S.\abs$ where $S$ is the pre-state of the $LP$ event of the method. Hence \textit{key} $\notin$ $(\prees{\hrem(key, false)}$. 

\end{proof}

\begin{lemma}
\label{lem:hoh-containsT-conc}
If some \hcon method returns true in $E^H$, then $\node(key)$ is present in the pre-state of $LP$ event of the method. Formally, $\langle \hcon(key, true) \Longrightarrow (\node(key) \in \\ \prees{\hcon(key, true)}) \rangle$. 
\end{lemma}
\begin{proof}

From Line \ref{lin:hcon2}, when $\hloct$ returns we know that from Observation \ref{obs:hoh-locate}, nodes $n_1$ and $n_2$ are locked and ($n_1, n_2$) $\in$ $\hnodes{S}$. Also, $n_1.val$ $<$ $key$ $\leq$ $n_2.val$ from Lemma \ref{lem:hoh-loc-ret}. And after observing code, at Line \ref{lin:hcon3} of $\hcon$ method, $n_2.val$ $=$ $key$  which implies that $n_2$ $\in$ $S.\abs$, where $S$ is the pre-state of the $\lp$ event of the method. Hence $\node(key)$ $\in$ \\ $(\prees{\hcon(key, true)}$. 
\end{proof}

\begin{lemma}
\label{lem:hoh-containsF-conc}
If some \hcon method returns false in $E^H$, then $\node(key)$ is not present in the pre-state of $LP$ event of the method. Formally, $\langle \hcon(key, false) \Longrightarrow (\node(key) \notin \\ \prees{\hcon(key, false)}) \rangle$. 
\end{lemma}

\begin{proof}

From Line \ref{lin:hcon2}, when $\hloct$ returns we know that from Observation \ref{obs:hoh-locate}, nodes $n_1$ and $n_2$ are locked and ($n_1, n_2$) $\in$ $\hnodes{S}$. Also, $n_1.val$ $<$ $key$ $\leq$ $n_2.val$ from Lemma \ref{lem:hoh-loc-ret}. And after observing code, at Line \ref{lin:hcon3} of $\hcon$ method, $n_2.val$ $\neq$ $key$ which implies that $n_2$ $\notin$ $S.\abs$, where $S$ is the pre-state of the $\lp$ event of the method. Hence $\node(key)$ $\notin$ \\ $(\prees{\hcon(key, false)}$.  
\end{proof}
}
\begin{lemma}

\label{lem:hoh-pre-ret}
\textbf{hoh-locking-list Specific Equivalence:}
Consider a concurrent history $H$ and a sequential history $\spl{S}$. Let $m_x, m_y$ be \mth{s} in $H$ and $\spl{S}$ respectively. Suppose the following are true (1) The \abs in the pre-state of $m_x$'s \lp in $H$ is the same as the \abs in the pre-state of $m_y$ in $\spl{S}$;  (2) The \inv events of $m_x$ and $m_y$ are the same. Then (1) the $\rsp$ event of $m_x$ in $H$ must be same as $\rsp$ event of $m_y$ in $\spl{S}$; (2) The \abs in the post-state of $m_x$'s \lp in $H$ must be the same as the \abs in the post-state of $m_y$ in $\spl{S}$. Formally, $\langle \forall m_x \in \mths{E^{H}}, \forall m_y \in \mths{E^{\spl{S}}}: (\prees{x} = \prems{y}) \wedge (\inves{x} = \invms{y}) \Longrightarrow (\postes{x} = \\ \postms{y}) \wedge (\retes{x} = \retms{y}) \rangle$.

%Consider a concurrent history $H$ and a sequential history $\spl{S}$. Let $m_x, m_y$ be \mth{s} in $H$ and $\spl{S}$ respectively. Suppose the following are true (1) The \abs in the pre-state of $m_x$'s \lp in $H$ is the same as the \abs in the pre-state of $m_y$ in $\spl{S}$;  (2) The \inv events of $m_x$ and $m_y$ are the same. Then the $\rsp$ event of $m_x$ in $H$ must be same as $\rsp$ event of $m_y$ in $\spl{S}$. Formally, $\langle \forall m_x \in E^{H}, \forall m_y \in E^{\spl{S}}: (\prees{m_x} = \prems{m_y}) \wedge (\inves{m_x} = \invms{m_y}) \Longrightarrow (\retes{m_x} = \retms{m_y}) \rangle$.

%If the pre-state of a \lp event in the concurrent execution is the same as the pre-state of a method in sequential execution, then the \rsp of the concurrent execution must be same as \rsp of the method in sequential execution. Formally, $\langle (\prees{m_i} = \prems{m_i}) \Longrightarrow (\retes{m_i} = \retms{m_i}) \rangle$.
\end{lemma}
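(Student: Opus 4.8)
The target statement is the hoh-locking-list Specific Equivalence (Lemma~\ref{lem:hoh-pre-ret}), which is the \hoh instantiation of the generic \cdse (Definition~\ref{def:pre-resp}). The plan is to mirror exactly the structure of the proof of the analogous \lazy statement (Lemma~\ref{lem:pre-ret}), since both lemmas have identical logical shape: given that the abstract set in the pre-state of $m_x$'s \lp in $H$ equals the abstract set in the pre-state of $m_y$ in $\spl{S}$, and that their invocations match, we must conclude that both their responses and their post-state abstract sets agree.

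\textbf{Approach.} First I would prove the equality of responses, $\retes{x} = \retms{y}$, by contradiction, exactly as in Lemma~\ref{lem:pre-ret}. Assume the pre-states and invocations match but the responses differ. Since the invocations are identical, $m_x$ and $m_y$ are invocations of the same method ($\hadd$, $\hrem$, or $\hcon$) with the same $key$. I would split into the three method cases, and within each into the $true$/$false$ subcases, giving six cases total. In each case the argument is a two-step pincer: the concurrent-side lemmas (Lemmas~\ref{lem:hoh-addT-conc}, \ref{lem:hoh-addF-conc}, \ref{lem:hoh-removeT-conc}, \ref{lem:hoh-removeF-conc}, \ref{lem:hoh-containsT-conc}, \ref{lem:hoh-containsF-conc}) fix whether $\hnode(key)$ is present or absent in $\prees{m_x}$ according to the response of $m_x$, while the sequential specification (Observation~\ref{obs:hoh-seq-spec}, parts \ref{obs:hoh-addT-seq}--\ref{obs:hoh-containF-seq}) forces the response of $m_y$ from the membership status of $\hnode(key)$ in $\prems{m_y}$. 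Because $\prees{m_x} = \prems{m_y}$ by hypothesis, the membership status is the same on both sides, so the forced response of $m_y$ must coincide with that of $m_x$, contradicting the assumption that they differ. This closes the response-equality half.

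\textbf{Post-state equality.} For the second conclusion, $\postes{x} = \postms{y}$, I would argue directly rather than by contradiction. Having established $\retes{x} = \retms{y}$, both methods have the same invocation and same response, so they fall into the same one of the six categories. By Lemma~\ref{lem:hoh-change-abs} only the \lp events at Line~\ref{lin:hadd6} of \hadd and Line~\ref{lin:hrem5} of \hrem change $\abs$; the post-state halves of the concurrent lemmas (the \texttt{-post} clauses, e.g.\ \ref{lem:hoh-addT-conc-post}, \ref{lem:hoh-removeT-conc-post}) pin down the membership of $\hnode(key)$ in $\postes{m_x}$, while the sequential specification pins down membership in $\postms{m_y}$ from the same pre-state. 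Since $\abs$ over a set of keys is determined by the membership status of every key, and only $key$ can change between pre- and post-state of a single method, the two post-states agree. Concretely: for $\hadd(key,true)$ both post-states contain $\hnode(key)$ and agree on all other keys; for $\hrem(key,true)$ both exclude it; and for the four non-updating cases $\postes{m_x}=\prees{m_x}=\prems{m_y}=\postms{m_y}$ since the \lp is read-only.

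\textbf{Main obstacle.} The routine cases are mechanical, so the only real subtlety is the post-state argument for the \emph{non-$key$} elements of the abstract set. The concurrent lemmas only speak about the single key being operated on; to conclude that the \emph{entire} abstract set agrees, I must invoke that the \lp is the unique \abs-changing event (Assumption~\ref{asm:change-abds} together with Lemma~\ref{lem:hoh-change-abs}), so no element other than $key$ can differ between pre- and post-state. I expect this to be the part requiring the most care, whereas the \hoh-specific structural facts (reachability via Lemma~\ref{lem:hoh-reach}, sortedness via Lemma~\ref{lem:hoh-val-change}, and the \hloct postconditions in Observation~\ref{obs:hoh-locate}) have already been absorbed into the per-method concurrent lemmas and need not be reopened here. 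Finally I would note that, unlike \lazy, the \hcon \lp of \hoh lies inside the method itself, so no dummy-event case analysis is needed, making this proof strictly simpler than Lemma~\ref{lem:pre-ret}.
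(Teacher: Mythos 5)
Your proposal matches the paper's proof: the same contradiction argument on the response event, split into the same six cases (\hadd, \hrem, \hcon, each returning $true$ or $false$), each closed by pairing the concurrent pre-state lemmas (Lemmas~\ref{lem:hoh-addT-conc}--\ref{lem:hoh-containsF-conc}) with the corresponding clause of Observation~\ref{obs:hoh-seq-spec}. The only difference is that you also give an explicit argument for the post-state clause $\postes{x} = \postms{y}$ (via Lemma~\ref{lem:hoh-change-abs} and the post-state halves of the concurrent lemmas), which the paper's own proof silently omits even though the lemma asserts it, so your version is if anything more complete.
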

\begin{proof}
 Let us prove by contradiction.
So we assume that, 
\begin{equation}
 \label{lab:eq:pre-inv-resp-hoh}
     \begin{split}
         \langle (\prees{m_x} = \prems{m_y}) \wedge \\ (\inves{m_x} =  \invms{m_y}) \Longrightarrow  (\retes{m_x} \neq \retms{m_y}) \rangle
     \end{split}
 \end{equation}

We have the following cases that $\inves{m_x}$ is invocation of either of these methods:

\begin{enumerate}
\item \textbf{$m_x.inv$ is \hadd(key) Method:}
\begin{itemize}
\item \textbf{$m_x.resp$ = true:} Given that the method $m_x.resp$ which is \emph{\hadd(key)} returns $true$, we know that from the Lemma \ref{lem:hoh-addT-conc}, \textit{\node(key)} $\notin$ $\prees{\add(key, true)}$. But since from assumption equation \ref{lab:eq:pre-inv-resp-hoh}, $(\retes{m_x} \neq \retms{m_y})$, $\retms{m_y}$ is false. However, from the Observation \ref{obs:hoh-addT-seq}, if $\node(key)$ $\notin$ pre-state of $\lp$ of $\hadd$ method, then the $\hadd(key, true)$ method must return $true$ in $E^{\spl{S}}$. This is a contradiction.

\item \textbf{$m_x.resp$ = false:} Given that the method $m_x.resp$ which is \emph{\hadd(key)} returns $false$, we know that from the Lemma \ref{lem:hoh-addF-conc}, \textit{\node(key)} $\in$\\ $\prees{\hadd(key, false)}$. But since from assumption in equation \ref{lab:eq:pre-inv-resp-hoh}, $(\retes{m_x} \neq \retms{m_y})$, $\retms{m_y}$ is false. However, from the Observation \ref{obs:hoh-addF-seq}, if $\node(key)$ $\in$ pre-state of $\lp$ of $\hadd$ method, then the $\hadd(key, false)$ method must return $false$ in $E^{\spl{S}}$. This is a contradiction.

%\textbf{Return false}, Given that the \emph{\add} method returns $false$, we know that from the Lemma \ref{lem:addF-conc}, \textit{\node(key)} $\in$ $\prees{\add}$. So from the Observation \ref{obs:addF-seq}, if $\node(key)$ $\in$ pre-state of $\lp$ of $\add$ method, then the $\add(key, false)$ method must return $false$ in $E^{\spl{S}}$.

%then $\lp$ is the read at Line \ref{lin:add3} of $\add$ method. We need to prove that \textit{key} $\in$ $\prees{\add}$, which follows from Lemma \ref{lem:addF-conc}. We know from Observation \ref{obs:addF-seq} that if \node(key) $\in$ pre-state of $\lp$ of $\add$ method, then the $\add$ method must return $false$ in $E^{\spl{S}}$.
\end{itemize}

\item \textbf{$m_x.inv$ is \hrem(key) Method:}
\begin{itemize}
\item \textbf{$m_x.resp$ = true:} Given that the method $m_x.resp$ which is \emph{\hrem(key)} returns $true$, we know that from the Lemma \ref{lem:hoh-removeT-conc}, \textit{\node(key)} $\in$ \\ $\prees{\hrem(key, true)}$. But since from assumption in equation \ref{lab:eq:pre-inv-resp-hoh}, $(\retes{m_x} \neq \retms{m_y})$, $\retms{m_y}$ is false. However, from the Observation \ref{obs:hoh-removeT-seq}, if $\node(key)$ $\notin$ pre-state of $\lp$ of $\hrem$ method, then the $\hrem(key, true)$ method must return $true$ in $E^{\spl{S}}$. This is a contradiction.

\item \textbf{$m_x.resp$ = false:} Given that the method $m_x.resp$ which is \emph{\hrem(key)} returns $false$, we know that from the Lemma \ref{lem:hoh-removeF-conc}, \textit{\node(key)} $\notin$ \\ $\prees{\hrem(key, false)}$. But since from assumption in equation \ref{lab:eq:pre-inv-resp-hoh}, $(\retes{m_x} \neq \retms{m_y})$, $\retms{m_y}$ is false. However, from the Observation \ref{obs:hoh-removeF-seq}, if $\node(key)$ $\in$ pre-state of $\lp$ of $\hrem$ method, then the $\hrem(key, false)$ method must return $false$ in $E^{\spl{S}}$. This is a contradiction.
\end{itemize}

\item \textbf{$m_x.inv$ is \hcon(key) Method:}
\begin{itemize}
\item \textbf{$m_x.resp$ = true:} Given that the method $m_x.resp$ which is \emph{\hcon(key)} returns $true$, we know that from the Lemma \ref{lem:hoh-containsT-conc}, \textit{\node(key)} $\in$ \\ $\prees{\hcon(key, true)}$. But since from assumption in equation \ref{lab:eq:pre-inv-resp-hoh}, $(\retes{m_x} \neq \retms{m_y})$, $\retms{m_y}$ is false. However, from the Observation \ref{obs:hoh-containT-seq}, if $\node(key)$ $\in$ pre-state of $\lp$ of $\hcon$ method, then the $\hcon(key, true)$ method must return $true$ in $E^{\spl{S}}$. This is a contradiction.

\item \textbf{$m_x.resp$ = false:} Given that the method $m_x.resp$ which is \emph{\hcon(key)} returns $false$, we know that from the Lemma \ref{lem:hoh-containsF-conc}, \textit{\node(key)} $\notin$ \\ $\prees{\hcon(key, false)}$. But since from assumption in equation \ref{lab:eq:pre-inv-resp-hoh}, $(\retes{m_x} \neq \retms{m_y})$, $\retms{m_y}$ is false. However, from the Observation \ref{obs:hoh-containF-seq}, if $\node(key)$ $\notin$ pre-state of $\lp$ of $\hcon$ method, then the $\hcon(key, false)$ method must return $false$ in $E^{\spl{S}}$. This is a contradiction.

%\textbf{Returns false}, Given that the \emph{\con} method returns $false$, we know that from the Lemma \ref{lem:containsF-conc}, \textit{\node(key)} $\notin$ $\prees{\con}$. So, from the Observation \ref{obs:containF-seq}, if $\node(key)$ $\notin$ pre-state of $\lp$ of $\con$ method, then the $\con$ method must return $false$ in $E^{\spl{S}}$.

%then $\lp$ is as described in the previous section. We need to prove that \textit{key} $\notin$ $(\prees{\con}$, which follows from Lemma \ref{lem:containsF-conc}. We know from Observation \ref{obs:containF-seq} that if \node(key) $\notin$ pre-state of $\lp$ of $\con$ method, then the $\con$ method must return $false$ in $E^{\spl{S}}$. 

\end{itemize}
\end{enumerate}
Thus we conclude that the $\rsp$ event of $m_x$ in $H$ must be same as $\rsp$ event of $m_y$ in $\spl{S}$. Formally, $\langle \retes{m_x} = \retms{m_y} \rangle$. 

\end{proof}

\begin{lemma}
\label{lem:hist-lin-hoh}
All histories ${H}$ generated by the \hoh  are \lble.
\end{lemma}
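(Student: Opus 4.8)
The plan is to reduce this statement to the generic linearizability result \thmref{hist-lin}, instantiated for \hoh, mirroring the assembly used for the lazy-list in \lemref{hist-lin-lazy}. The observation driving the proof is that essentially all of the work has already been carried out in the preceding lemmas; what remains is to verify the hypotheses of the generic framework and chain the conclusions together.

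First I would check that \hoh meets the four standing assumptions required by the generic development: that it is deterministic with known \lp{s} (\asmref{determ-cds}), that its \mth{s} are total (\asmref{mth-total}), that every sequential history it generates is \legal (\asmref{seq-legal}), and that only \lp events change the abstract set (\asmref{change-abds}, together with the uniqueness requirement of \asmref{lp-evt}). The last of these is exactly the content of \lemref{hoh-change-abs} and \corref{hoh-change-abs}, which single out \lineref{hadd6} of \hadd and \lineref{hrem5} of \hrem as the unique \abs-changing events and confirm that they coincide with the declared \lp{s} of \subsecref{lps-hoh}.

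Next I would observe that \lemref{hoh-pre-ret} is precisely the \hoh instantiation of the \cdse (\defref{pre-resp}), taking the generic $\abds$ to be the \hoh abstract set \abs of \defref{habs}. Given this, the two generic lemmas apply verbatim: \lemref{conce-seqe-pre}, whose inductive proof invokes \defref{pre-resp}, yields that for every \mth $m_x$ the \abs in the pre-state of $m_x$'s \lp in $H$ equals the \abs in the pre-state of $m_x$ in $\sh{H}$; and \lemref{ret-lin} then gives that the return values of all \mth{s} agree in $H$ and $\sh{H}$, so that $H \approx \sh{H}$. Combining this equivalence with \lemref{sh-rt} (so that $\sh{H}$ respects the real-time order of $H$) and with \asmref{seq-legal} (so that $\sh{H}$ is \legal) then establishes that $H$ is \lble, which is exactly \thmref{hist-lin} specialized to \hoh.

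The main point to get right here is a matter of faithful instantiation rather than a genuine obstacle: one must ensure that the six per-return-value \lp{s} of \subsecref{lps-hoh} truly satisfy \asmref{lp-evt} and \asmref{change-abds} for the set \habs, so that the generic lemmas can be applied unchanged. This is where \hoh is in fact simpler than the lazy-list. Because \hloct holds locks on both neighbouring nodes when it returns, every \hoh \mth has its \lp fixed within its own code, and the $\con(key,\mathit{false})$ anomaly — where the \lp can migrate into a concurrently executing \add and a dummy event is required — does not arise; consequently \lemref{hoh-containsF-conc} carries no case analysis and the instantiation is direct. I would therefore state the proof as following immediately from \lemref{hoh-pre-ret}, \lemref{conce-seqe-pre} and \lemref{ret-lin}.
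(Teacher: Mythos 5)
Your proposal is correct and follows exactly the paper's route: the paper's proof of this lemma is literally the one-line citation of \lemref{hoh-pre-ret}, \lemref{conce-seqe-pre} and \lemref{ret-lin} that you arrive at, with your additional verification of the standing assumptions and the instantiation of \defref{pre-resp} being a faithful (and more explicit) unpacking of the same argument. No gaps.
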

\vspace{-3mm}
\begin{proof}
 Proof follows based on the  \lemref{hoh-pre-ret}, \lemref{conce-seqe-pre} and  \lemref{ret-lin}.
\end{proof}

\section{Discussion \& Conclusion}
\label{sec:conc}

\cds{s} offer great performance benefits over their sequential counterparts. But one of the greatest challenges with \cds{s} is developing correct structures and then proving their correctness either through automatic verification or through hand-written proofs \cite{Derrick-fm2011}. We believe that the techniques which help prove correctness of \cds{s} can also guide in developing new \cds{s}.

Several techniques have been proposed for proving \lbty - a \cc for concurrent objects. But \lp{s} continue to remain most popular way of illustrating correctness of \cds among practitioners since it is seems intuitive and constructive. One of the main challenges with the \lp based approach is to identify the correct \lp{s} of a \cds. Identifying the correct \lp{s} can be deceptively wrong in many cases. In fact in many cases, the \lp identified or even worse the \cds could be wrong. 

Considering the complexity of developing a \cds and verifying its correctness, we address the most basic problem of this domain in this paper: given the set of \lp{s} of a \cds, how to show its correctness? We assume that we are given a \cds and its \lp{s}. We have developed a hand-crafted technique of proving correctness of the \cds{s} by validating it \lp{s}. We believe that our technique can be applied to prove the correctness of several commonly used \cds{s} developed in literature such as Lock-free Linked based Sets \cite{Valoispodc1995}, \lazy \cite{Heller-PPL2007, MauriceNir}, Skiplists \cite{Levopodis2006} etc. Our technique will also work for \cds{s} in which the \lp{s} of a \mth might lie outside the \mth such as \lazy. To show the efficacy of this technique, we show the correctness of \lazy and hand-over-hand locking list (\emph{\hoh}) \cite{Bayerai1977, MauriceNir}. 

%As a part of our technique, we have identified a generic lemma proving which depends on construction of the \cds (\lemref{pre-resp}). We can use this to show the correctness of the \lp{s} of any \cds satisfying our \lp assumptions. We view this lemma like an abstract class in a language like C++. It specific to each \cds and has to be proved (like instantiation in C++). In \secref{ds-proofs}, we demonstrate this technique by giving a high-level overview of the correctness of \lazy and also of hand-over-hand locking list \cite{Bayerai1977, MauriceNir}. 

As a part of our technique, we have identified a \cdse (\defref{pre-resp}). We show that any \cds for which \cdse is true and satisfies our assumptions on the \lp{s}, is \lble. Thus, we would like to view \cdse as a generic template. It  has to be proved taking into account the properties and invariants of the underlying \cds. In \secref{ds-proofs}, we demonstrate this technique by giving the correctness proof of \cdse for \lazy and of \hoh. %hand-over-hand locking list \cite{Bayerai1977, MauriceNir}.

%Identifying the correct \lp{s} is very tricky and erroneous. But since our technique is hand-crafted, we believe that the process of proving correctness might provide insight to identify the correct \lp{s}, if the currently chosen \lp is incorrect. We also believe that this technique might also offer the programmer some insight to develop more efficient variants of the \cds. 

In \secref{gen-proof}, we postulated that the hand-crafted mechanism of proving the \cdse for a given \cds might bring out errors in the \lp{s} proposed if they are incorrect. Further, we also theorized that this technique might give new insights for designing new \cds{s}. But the actual details of these can be accomplished are still not clear. Ideally, a programmer should have a set of \emph{design patterns} using which s/he would be able to develop correct \cds which are also efficient. As observed earlier, this has been acknowledged as a very complicated problem. We believe that we have just scratched the surface of this problem in this paper. We plan to explore further in this direction as a part of future work.

To this end, Transactional Memory Systems \cite{HerlMoss:1993:SigArch} or TMs  have been proposed as an alternative to address this challenge of designing efficient concurrent structures. But the design of efficient \cds using TMs would again require the programmer to designate portions of code as transactions. Not doing this properly could again lead to loss in efficiency and/or correctness. Hence, we believe that the TMs can help with this objective although they may not be the final solution. As a part of our future work, we also plan to explore how TMs can help us achieve the objective. 

An important point to be noted with our approach: we assumed that only \lp events change the \abds (\asmref{change-abds}). Although this is true in case of many \cds{s} considered, this is not always true. As an example consider a shared array which has an lock for each entry and is modified by multiple threads concurrently. Threads wishing to update several entries in a \lble manner can obtain locks on the relevant entries of the array using two-phase locking (2PL) and then perform the updates. In this case, one can choose any event between the last locking and the first unlocking as the \lp. But then, the \lp event is not where all the updates to the shared entries of the array takes place. So with this kind of 2PL usage, our technique will not directly work. In that case, we believe that we have to consider the notion of \emph{Linearization Blocks} instead of Linearization Points. We plan to explore this notion in future. On the other hand, we believe that our technique will work for those \cds{s} which has at least one wait-free \mth (like the contains \mth in the case of lazy-list).

\bibliographystyle{plain}
%\begin{thebibliography}
\bibliography{biblio}
%\end{document}

%\newpage

%\newpage
%\appendix

\end{document}